\documentclass[letterpaper,11pt]{article}

\usepackage[margin=1in]{geometry}
\usepackage{amsfonts,amsmath,amssymb,amsthm,graphicx}
\usepackage{mathtools}
\usepackage{mathrsfs}
\usepackage{ifthen}
\usepackage{changepage}
\usepackage{enumerate}
\usepackage{scalerel}
\usepackage{accents}
\usepackage{enumitem}
\usepackage{subcaption} 
\usepackage{float}      

\usepackage[utf8]{inputenc} 
\usepackage[T1]{fontenc} 
\usepackage{csquotes}
\usepackage{comment}
\usepackage{bbm}
\usepackage{array}
\usepackage{makecell}
\usepackage{booktabs}
\usepackage{xfrac}

\usepackage[suppress]{color-edits}  
\addauthor{wt}{purple}
\addauthor{hz}{cyan}

\usepackage{tcolorbox}
\DeclareRobustCommand{\mybox}[2][gray!15]{
\begin{tcolorbox}[  
        left=0pt,
        right=0pt,
        top=0pt,
        bottom=0pt,
        colback=#1,
        colframe=#1,
        enlarge left by=0mm,
        boxsep=10pt,
        arc=2pt,outer arc=2pt,
        ]
        #2
\end{tcolorbox}
}

\usepackage{thm-restate}
\usepackage{tikz}
\usepackage{pgfplots}
\pgfplotsset{compat=1.17}
\usetikzlibrary{patterns}
\usepgfplotslibrary{fillbetween}
\usetikzlibrary{intersections}
\usetikzlibrary{positioning,fit,calc,arrows.meta,backgrounds}
\usepackage{fix-cm}
\usepackage[ruled,vlined,linesnumbered]{algorithm2e}
\allowdisplaybreaks
\usetikzlibrary{arrows.meta}

\allowdisplaybreaks
\theoremstyle{plain}
\newtheorem{theorem}{Theorem}[section]
\newtheorem{lemma}[theorem]{Lemma}

\newtheorem{proposition}[theorem]{Proposition}
\newtheorem{corollary}[theorem]{Corollary}

\theoremstyle{plain}
\newtheorem{definition}{Definition}[section] 
\newtheorem{example}[definition]{Example}

\newtheorem{remark}[definition]{Remark}

\theoremstyle{plain}

\usepackage[colorlinks=true, linkcolor=blue!70!black, citecolor=blue!70!black, urlcolor=blue, breaklinks=true]{hyperref}
\usepackage{cleveref}
\Crefname{algocf}{Algorithm}{Algorithms}
\crefname{claim}{claim}{claims}

\newcounter{relctr} 
\everydisplay\expandafter{\the\everydisplay\setcounter{relctr}{0}} 

\AtBeginDocument{} 

\newcommand{\squishlist}{
\begin{list}{{{\small{$\bullet$}}}}
{\setlength{\itemsep}{3pt}      \setlength{\parsep}{1pt}
\setlength{\topsep}{1pt}       \setlength{\partopsep}{0pt}
\setlength{\leftmargin}{1em} \setlength{\labelwidth}{1em}
\setlength{\labelsep}{0.5em} } }
\newcommand{\squishend}{  \end{list}}

\newcommand{\fea}{\omega}

\newcommand{\feanum}{n}
\newcommand{\feaspace}{\Omega}
\newcommand{\labelspace}{\mathcal{Y}}

\newcommand{\feaprob}{\lambda}

\newcommand{\bayesprob}{q}
\newcommand{\prediction}{p}

\newcommand{\labelrea}{Y}
\newcommand{\feasubset}{A}
\newcommand{\feasubsetprob}{\lambda}
\newcommand{\feasubsetlabel}{Y}
\newcommand{\feasubsetmean}{\mu}
\newcommand{\givenclassifier}{f}
\newcommand{\classifiernum}{k}
\newcommand{\baserate}{\bar{\mu}}

\newcommand{\cellnum}{m}

\newcommand{\blackwelldom}{\succ}
\newcommand{\blackwelldomeq}{\succeq}

\newcommand{\predrv}{p}
\newcommand{\reportset}{\mathcal{R}}
\newcommand{\reportvec}{\boldsymbol{\rho}}
\newcommand{\classifierfam}{\mathcal{G}}
\newcommand{\strictdomfam}{\classifierfam_{\succ}}

\newcommand{\strictdomfamdet}{\classifierfam_{\succ}^{\textsc{det}}}

\newcommand{\algo}{\mathcal{A}}
\newcommand{\detclassifierfam}{\classifierfam^{\textsc{det}}}

\newcommand{\predicrefine}{\textsc{Search-Aggregation}}
\newcommand{\predicrefineopt}{\textsc{OPT-Aggregation}}
\newcommand{\predicrefinelink}{\hyperref[prob:predictor-refine]{\predicrefine}}
\newcommand{\predicrefineoptlink}{\hyperref[prob:predictor-refine-opt]{\predicrefineopt}}

\newcommand{\detpredicrefine}{\textsc{Search-DetAgg}}
\newcommand{\detpredicrefineopt}{\textsc{OPT-DetAgg}}
\newcommand{\detpredicrefinelink}{\hyperref[prob:predictor-refine]{\detpredicrefine}}
\newcommand{\detpredicrefineoptlink}{\hyperref[prob:predictor-refine-opt]{\detpredicrefineopt}}

\newcommand{\outputpreind}{h}
\newcommand{\signalvec}{\vv}

\newcommand{\cellnew}{C}
\newcommand{\totalcells}{M}

\newcommand{\mymatrix}{\mathbf{A}}
\newcommand{\weightedbayes}{\mathbf{y}}
\newcommand{\weightedpredic}{\mathbf{b}}

\newcommand{\rowspace}{\textsc{Row}}
\newcommand{\myspan}{\textsc{span}}
\newcommand{\classifier}{g}
\newcommand{\classifiernew}{\classifier\primed}
\newcommand{\CDF}{F}
\newcommand{\SCDF}{I}

\newcommand{\obsspace}{\mathcal{S}}
\newcommand{\obscone}{\mathcal{K}}
\newcommand{\compindex}{\mathcal{C}}
\newcommand{\obslabel}{\widehat{Y}}

\newcommand{\subsetindi}{\mathbf{1}}
\newcommand{\classifierfamdet}{\classifierfam^{\textsc{det}}}

\newcommand{\findomfam}{\strictdomfam^\mathrm{ex}}
\newcommand{\finfam}{\classifierfam^\mathrm{fin}}
\newcommand{\rfinfam}{\classifierfam^\mathrm{ex}}
\newcommand{\predictions}{\mathcal{P}^\mathrm{fin}}
\newcommand{\numpredictions}{M^{\mathrm{fin}}}
\newcommand{\numrays}{R}
\newcommand{\preind}{u}
\newcommand{\rayind}{v}
\newcommand{\cellind}{a}
\newcommand{\claind}{j}
\newcommand{\extremerays}{\obscone^\mathrm{ex}}
\newcommand{\ray}{\mathbf{r}}
\newcommand{\weight}{\alpha}
\newcommand{\coeff}{\beta}
\newcommand{\preprob}{\feasubsetprob}
\newcommand{\targetclassifier}{{\givenclassifier_\targetindex}}
\newcommand{\maxint}{W}

\newcommand{\lexobj}{\text{\sf LexObj}}
\newcommand{\lexprog}{\text{\sc LexProg}}
\newcommand{\lvlind}{\ell}
\newcommand{\optpre}{\prediction^*}
\newcommand{\optprob}{\preprob^*}
\newcommand{\bound}{\bar{\prediction}}
\newcommand{\feasprog}{\text{\sc FeasProg}_{\ell, \bound}}
\newcommand{\suppsize}{\feanum + 2(\totalcells + 1)}
\newcommand{\polytope}{\mathcal{P}}
\newcommand{\erd}{\text{\sf ExRayDec}}
\newcommand{\finpre}{\prediction^\mathrm{fin}}
\newcommand{\absprog}{\text{\sc AbsProg}}
\newcommand{\exprog}{\text{\sc ExProg}}
\newcommand{\premaxint}{(\feanum\maxint)^{10\feanum^3}}
\newcommand{\algsearch}{Polynomial-time algorithm for \predicrefinelink.}
\newcommand{\integernew}{c}
\newcommand{\objval}{P}

\newcommand{\scalescalar}{\beta}
\newcommand{\targetindex}{\tau}
\newcommand{\targetmass}{\bar{\feaprob}}
\newcommand{\predicmass}{\bar{\feaprob}}
\newcommand{\loss}{L}
\newcommand{\exploss}{\loss}
\newcommand{\bayesrisk}{\loss}

\newcommand{\minorant}{\kappa}
\newcommand{\grid}{\mathcal{Q}}
\newcommand{\lossopt}{\mathrm{OPT}}
\newcommand{\lpprog}{\mathrm{LP}}
\newcommand{\algfptas}{{FPTAS for {\predicrefineoptlink}}}
\newcommand{\affslope}{\alpha}
\newcommand{\affintercept}{\beta}
\newcommand{\LPval}{P}
\newcommand{\eps}{\varepsilon}
\newcommand{\affnum}{S}
\newcommand{\affindex}{s}
\newcommand{\preindnew}{\preind}
\newcommand{\targetcompidx}{a}

\newcommand{\NP}{\mathrm{NP}}
\newcommand{\Poly}{\mathrm{P}}

\newcommand{\rss}{\textsc{RestrictedSubsetSum}}
\newcommand{\rsslink}{\hyperref[prob:det-subsetsum]{\rss}}
\newcommand{\hardground}{\feaspace_{\mathrm{hard}}}
\newcommand{\harditemnum}{m}
\newcommand{\hardblocksize}{L}
\newcommand{\hardtotal}{A}
\newcommand{\hardtarget}{B}
\newcommand{\hardposblock}{\feaspace_{+}}
\newcommand{\hardnegblock}{\feaspace_{-}}
\newcommand{\hardplus}{\mathsf{+}}
\newcommand{\hardminus}{\mathsf{-}}
\newcommand{\hardspace}{\mathcal{S}_{\mathrm{hard}}}
\newcommand{\hardbasis}{\mathbf{t}}
\newcommand{\hardeta}{\eta}
\newcommand{\harditemeps}{\delta}
\newcommand{\hardset}{T}
\newcommand{\hardcell}{C}
\newcommand{\yesinst}{\mathrm{YES}}
\newcommand{\noinst}{\mathrm{NO}}
\newcommand{\brierloss}{\loss_{\mathrm{Br}}}
\newcommand{\brierlossval}{\loss_{\mathrm{Br}}}
\newcommand{\detbrieropt}{\operatorname{OPT}^{\textsc{det}}_{\mathrm{Br}}}
\newcommand{\subsuminteger}{c}

\newcommand{\omt}[1]{}

\newcommand{\cc}[1]{\ensuremath{\mathsf{#1}}} 

\newcommand{\prob}[2][]{\mathbb{P}\ifthenelse{\not\equal{}{#1}}{_{#1}}{}\!\left[{\def\givenn{\middle|}#2}\right]}
\newcommand{\expect}[2][]{\mathbb{E}\ifthenelse{\not\equal{}{#1}}{_{#1}}{}\!\left[{\def\givenn{\middle|}#2}\right]}
\newcommand{\indicator}[2][]{\mathbf 1\ifthenelse{\not\equal{}{#1}}{_{#1}}{}\!\left[{\def\givenn{\middle|}#2}\right]}

\newcommand{\R}{\mathbb{R}}

\newcommand{\N}{\mathbb{N}}

\newcommand{\vv}{\mathbf{v}}

\newcommand{\supp}{\cc{supp}}

\newcommand{\vzero}{\mathbf{0}}
\newcommand{\vone}{\boldsymbol{1}}

\newcommand{\primed}{^{\dagger}}
\newcommand{\doubleprimed}{^{\ddagger}}

\newcommand{\dd}{{\mathrm d}}
\newcommand{\reals}{\R}
\newcommand{\integers}{\N}

\newcommand{\xhdr}[1]{\vspace{6pt}\noindent{\bf {#1.}}}
\usepackage[square,numbers]{natbib}
\newif\ifusealphabib
\usealphabibtrue

\ifusealphabib
    \usepackage[square,numbers]{natbib}
    \AtBeginDocument{\let\citet\citep}
    \AtBeginDocument{\let\citealp\citep}
\else
    \usepackage[round,authoryear]{natbib}
\fi

\begin{document}

\title{Algorithmic Expert Aggregation}
\author{
Wei Tang\thanks{Chinese University of Hong Kong. Email: {\tt weitang@cuhk.edu.hk}}
\and
Hanrui Zhang\thanks{Chinese University of Hong Kong. Email: {\tt hanrui@cse.cuhk.edu.hk}}
}
\date{}

\maketitle

\begin{abstract}

Forecast aggregation aims to combine information from multiple Bayesian experts' forecasts into an aggregate forecast. 
In much of this literature, however, the aggregate forecast is optimized for a particular loss or robustness criterion and need not itself be 
\wtedit{calibrated with respect to the outcome:}
the reported forecast need not equal the conditional probability of the outcome given the aggregate forecast itself. 
We introduce and study {\em expert aggregation}, where the goal is instead to aggregate Bayesian experts into a new expert that continues to provide calibrated forecasts. 
In particular, we consider a setting where each input expert reports calibrated predictions, and the aggregator observes the prior distribution over states\hzedit{,} and the input experts, but not the underlying Bayes probabilities of the states. 
We ask whether one can (i) construct a calibrated output expert that Blackwell refines a target expert and cannot be further Blackwell improved using the available information; and (ii) when a proper loss is specified, compute a nearly loss-optimal expert among all such refinements.

We formulate calibrated experts as reduced-form information structures and measure refinement by Blackwell dominance of the induced prediction distributions. 
We characterize the constructible output experts through observable linear information: the input experts generate a linear system whose row space determines which calibrated output predictions are identifiable, and a new expert is constructible exactly when its predictions lie in the associated observable nonnegative cone. 
We establish a sharp algorithmic picture. 
When randomized output experts are allowed, both the refinement-search question (i) and the proper-loss optimization question (ii) admit efficient algorithms.
In contrast, deterministic output experts are computationally intractable: deciding whether a deterministic calibrated refinement exists is $\NP$-hard even with two input experts where target expert is a constant base-rate expert, and deterministic proper-loss optimization admits no multiplicative PTAS unless $\Poly=\NP$, even for the Brier loss.

\end{abstract}

\newpage

\section{Introduction}







A decision maker is debating whether to take a certain action whose payoff depends on whether a certain event happens: For concreteness, the payoff is $1$ if the event happens, and $-1$ otherwise.  To assist, a team of experts make forecasts on the probability that the event happens.  Being professional, these experts always report their best-effort forecast, taking into consideration all the information they possess.  Nonetheless, different experts specialize in different aspects of the matter, leading to different forecasts all being honest from the respective expert's own point of view.  The decision maker then faces a problem: How to aggregate these different forecasts into a single forecast that best informs the decision?

This is the very research question addressed by the line of research on {\em forecast aggregation} (see, e.g., \citealp{S-61,BG-69}).  In an idealized world where the decision maker has all the relevant prior knowledge, the above boils down to Bayesian inference: Given the joint information structure over the event concerned and all experts' forecasts, it is (at least in principle) possible to infer the posterior probability of the event conditioning on all experts' forecasts.  On the other hand, oftentimes the decision maker cannot access the complete information structure (or form a belief about it in the Bayesian sense), in which case one may turn to {\em robust} forecast aggregation, which aims to optimize the aggregate forecast in the worst case over the uncertainty in the information structure \citep{ABS-18,LR-22,KWW-24,GHHKSY-25,FMNW-25}.  This has been a predominant methodology in forecast aggregation under practical considerations.

In this paper, we take another angle at forecast aggregation, and study a natural variant which we term {\em expert aggregation}.  In expert aggregation, instead of one-time forecasts, we directly deal with experts themselves, each as a possibly randomized mapping from possible observations to numerical forecasts.  The goal is to aggregate multiple input experts into a single output expert, which is superior as quantified by Blackwell's informativeness \citep{B-53}, a canonical notion that ranks experts by the ``overall usefulness''.  The distinguishing characteristic of expert aggregation, particularly in the context of robust forecast aggregation, is that we require the output expert to make {\em calibrated} forecasts, i.e., the forecast made by the expert is always exactly the posterior probability of the event conditioning on the forecast.\footnote{
    Note that expert aggregation does not necessarily ``depart'' from robust forecast aggregation --- technically, one may view the former as a more conservative approach to the latter, where by requiring the output expert to make calibrated forecasts, we guarantee exactly the same quality of forecasts even under uncertainty.  The robustness aspect of the problem therefore fades away.  This is discussed in more detail later.
}

Conceptually, expert aggregation corresponds to scenarios where we aim to build a {\em versatile} expert system that, instead of a specific one-time task, provides useful advice for indefinite downstream applications.  To be more concrete, imagine the following scenario: A hospital has collected profiles of years of medical cases, which in particular contain multiple medical experts' judgments as to the likelihood that each patient has a certain latent medical condition, such as subclinical diabetes.  In more technical terms, these profiles contain precisely the following information: the population distribution of patients' observable features, as well as the correlation structure between these observable features and each expert's forecast of the likelihood of the latent condition.\footnote{
    Presumably, different experts' forecasts are independent conditioning on the observable features of the patient.
}
Note that the latter is equivalent to the form of input expert systems discussed above, i.e., mappings from observations to numerical forecasts.  We want to aggregate these profiles into an expert system that assists future medical decisions.  Importantly, we would like not to overfit the expert system to any particular treatment, for the latent condition affects different potential treatments in different ways.  As we will soon discuss, our algorithms aggregate input profiles into a single expert system with the following desiderata, which essentially define the expert aggregation problem:
\begin{itemize}
    \item Calibration: The output expert always makes calibrated forecasts, which is essentially the only reasonable thing to do without specific knowledge of the downstream task, and also ensures that the forecasts are interpretable.  The latter is crucial in certain application domains, including medicine.
    \item Maximal informativeness: The output expert is (approximately) undominated, or Pareto optimal, in terms of Blackwell informativeness, among all experts that can be constructed based on information contained in the input.  In other words, the output expert maximally utilizes input information.
    \item Modularity: The output expert can be used for any downstream task, or even fed into subsequent aggregation algorithms in a blackbox manner.  Performance of the output expert is backed by calibration and maximal informativeness, which are largely agnostic to downstream applications.  Also, downstream applications do not require any access to the input experts of expert aggregation.  This provides a certain form of privacy guarantee, which is also of great value in medicine, as well as many other applications.
    \item Targeted refinement: In addition to the above, we can optionally require the output expert to refine a designated target input expert, meaning that the output expert is never less informative --- and usually in many ways more informative --- than the target input expert.  This is particularly helpful in applications where we wish to retain the full domain expertise of a certain input expert, while incorporating generalist knowledge from other input experts.
\end{itemize}

Summarizing the discussion above, our investigation of expert aggregation seeks to answer two cascading algorithmic questions, which reorganize the desiderata above into a form that makes more technical sense.  First:
\begin{displayquote} 
    {\em {\bf\em Refinement}: Is it possible to restructure the observable information contained in the input experts into a new expert that is more informative than a (possibly trivial) target expert in the Blackwell sense?}
\end{displayquote}
If such improvement is possible, we further ask how far one can push it: 
\begin{displayquote} 
    {\em {\bf\em Optimality}: Can the aggregator construct an {\em undominated} expert, and can it optimize over all constructible experts that dominate the target?}
\end{displayquote}

\subsection{Our Contributions}
We provide algorithmic answers to these questions.
Our results reveal a sharp computational dichotomy between randomized and deterministic expert aggregation: randomized output experts admit efficient search and optimization algorithms, whereas deterministic output experts lead to computational intractability even in highly restricted instances.

\xhdr{Algorithmic expert aggregation and Blackwell dominance}
We introduce and study the algorithmic expert aggregation problem. 
We consider a finite state space $\feaspace=\{\fea_i\}_{i\in[\feanum]}$, \wtedit{where each state represents a feature profile like patient condition, or more generally a profile of decision-relevant characteristics. }
A state $\fea_i$ is realized according to a publicly known prior distribution $\feaprob=(\feaprob_i)_{i\in[\feanum]}$ where $\feaprob_i$ denotes the prior probability for state $\fea_i$.
Conditional on the realized state, a binary outcome/event $\labelrea\in\{0,1\}$ is generated; for example, the outcome may indicate whether a patient has a disease, or whether a user clicks on an ad.
Each state $\fea_i$ has an associated Bayes probability $\bayesprob_i=\prob{\labelrea=1\mid \fea_i}$, namely the conditional probability that the outcome equals one in that state. 
These Bayes probabilities $(\bayesprob_i)_{i\in[\feanum]}$ are unknown to the aggregator.
We follow the literature \citep{ABS-18,DIL-21,GHHKSY-25} and model each expert as a {\em reduced-form information structure}. 
Formally, an expert $\givenclassifier$ is represented by a stochastic mapping $\givenclassifier:\feaspace\to\Delta([0,1])$: when the realized state is $\fea_i$, the expert reports a prediction $\predrv_{\givenclassifier}\sim \givenclassifier(\cdot\mid \fea_i)$. 
This reduced-form representation abstracts away from the expert's underlying signal space and records only the distribution of reports generated by the expert in each state. 
Each reported value is interpreted as the expert's posterior probability that the binary outcome equals one. 
\wtedit{Namely, the predictions are {\em posterior-consistent}, or {\em calibrated}: for every reported prediction value $\prediction$, we have
$\expect{\labelrea\mid \predrv_{\givenclassifier}=\prediction}=\prediction$}.\footnote{\wtedit{Throughout the paper, unless otherwise specified, ``posterior'' refers to the posterior probability of the outcome/event, rather than a posterior distribution over the underlying states.}}

The aggregator observes a collection of input experts $(\givenclassifier_\claind)_{\claind\in[\classifiernum]}$ and the prior distribution $\feaprob$, but does not observe the underlying Bayes probabilities $(\bayesprob_i)_{i\in[\feanum]}$. 
Thus, any output expert must be constructed only from the observable information contained in the input experts, and must remain posterior-consistent, equivalently, calibrated, for every latent environment compatible with those inputs.

To evaluate the informativeness of experts, we use Blackwell dominance \citep{B-53}: an expert is more informative than another if the distribution of its reported predictions is a {\em mean-preserving spread} of the other expert's reported predictions \citep{R-78}.\footnote{This order has the following decision-theoretic interpretation. 
Consider any downstream decision problem with action set $\mathcal A$ and utility $u(a,\labelrea)$, where the decision-maker observes the expert's report $\prediction$ and then chooses an action $a\in\mathcal A$. 
The induced utility from report $\prediction$ is 
$\max_{a\in\mathcal A}\{\prediction u(a,1)+(1-\prediction)u(a,0)\}$,
which is convex in $\prediction$. 
Thus, if expert $\givenclassifier$ Blackwell dominates expert $\givenclassifier\primed$, equivalently if the prediction distribution of $\givenclassifier$ is a mean-preserving spread of that of $\givenclassifier\primed$, then the decision-maker's ex ante optimal expected utility under $\givenclassifier$ is weakly higher than under $\givenclassifier\primed$.}
Formally, let $\CDF_{\givenclassifier}$ denote the CDF of the marginal distribution of the prediction reported by expert $\givenclassifier$, and define the integrated CDF by 
$\SCDF_{\givenclassifier}(t)\triangleq 
\int_0^t \CDF_{\givenclassifier}(\prediction)\,\dd \prediction$. 
We say that expert $\givenclassifier$ is  Blackwell more informative (or Blackwell dominates) expert  $\givenclassifier\primed$, denoted by  $\givenclassifier\blackwelldomeq \givenclassifier\primed$, iff $\SCDF_{\givenclassifier}(t)\ge \SCDF_{\givenclassifier\primed}(t)$ for every $t\in[0,1]$ (see \Cref{fig:blackwell-example} for an illustration). 
We write $\givenclassifier\blackwelldom \givenclassifier\primed$ if the dominance is strict for some $t\in(0,1)$. 

\begin{figure}[H]
\centering
\begin{tikzpicture}[
    x=8cm,
    y=7cm,
    >=stealth,
    axis/.style={->, thick},
    redline/.style={red, very thick},
    blackline/.style={black!60, line width=2.6pt},
    reddash/.style={red, thick, dashed},
    blackdash/.style={black, thick, dashed},
    tick/.style={thick}
]

\def\pOne{0.10}
\def\ptOne{0.30}
\def\pTwo{0.50}
\def\pThree{0.70}
\def\one{1.00}

\def\yRedAtTargetOne{0.0667}
\def\yAtTwo{0.1333}
\def\yAtThree{0.2667}
\def\yAtOne{0.5667}

\draw[axis] (-0.02,0) -- (1.08,0) node[right] {$\prediction$};
\draw[axis] (0,-0.02) -- (0,0.64);

\node[below left] at (0,0) {$0$};
\node[below] at (\one,0) {$1$};

\node[anchor=south west] at (0.02,0.645)
{\textcolor{red}{$\SCDF_{\classifier}(\prediction)$} and $\SCDF_{\givenclassifier_1}(\prediction)$};

\fill[red!8]
    (\pOne,0)
    -- (\ptOne,\yRedAtTargetOne)
    -- (\pTwo,\yAtTwo)
    -- (\ptOne,0)
    -- cycle;

\draw[reddash] (\pOne,0) -- (\pOne,0.60);
\draw[blackdash] (\ptOne,0) -- (\ptOne,0.60);
\draw[reddash] (\pTwo,0) -- (\pTwo,0.60);
\draw[blackdash] (\pThree,0) -- (\pThree,0.60);
\draw[blackdash] (\one,0) -- (\one,0.60);

\draw[blackline]
    (0,0)
    -- (\ptOne,0)
    -- (\pTwo,\yAtTwo)
    -- (\pThree,\yAtThree)
    -- (\one,\yAtOne);

\draw[redline]
    (0,0)
    -- (\pOne,0)
    -- (\ptOne,\yRedAtTargetOne)
    -- (\pTwo,\yAtTwo)
    -- (\pThree,\yAtThree)
    -- (\one,\yAtOne);

\draw[red, thick] (\pOne,-0.01) -- (\pOne,0.01);
\draw[black, thick] (\ptOne,-0.01) -- (\ptOne,0.01);
\draw[red, thick] (\pTwo,-0.01) -- (\pTwo,0.01);
\draw[black, thick] (\pThree,-0.01) -- (\pThree,0.01);

\node[red, below] at (\pOne,-0.025) {$0.1$};
\node[below] at (\ptOne,-0.025) {$0.3$};
\node[red, below] at (\pTwo,-0.025) {$0.5$};
\node[below] at (\pThree,-0.025) {$0.7$};

\end{tikzpicture}
\caption{Graphical illustration of Blackwell dominance in a three equally-realized states example. 
The expert $\classifier$ reports predictions $0.1, 0.5, 0.7$ on the three states, respectively. 
The expert $\givenclassifier_1$ reports prediction $0.3$ on the pooled states $\{\fea_1,\fea_2\}$ and $0.7$ on state $\fea_3$. 
The red curve $\SCDF_\classifier$ lies weakly above the gray curve $\SCDF_{\givenclassifier_1}$ pointwise, and strictly above it on a nonempty interval, so $\classifier\blackwelldom\givenclassifier_1$.}
\label{fig:blackwell-example}
\end{figure}

Under this comparison, we focus on a target expert $\targetclassifier$, which is the benchmark expert that the aggregator aims to improve using the information revealed by all input experts. 
In our main formulation, the target expert is one of the input experts, although the same framework also applies to an externally specified benchmark expert. 
We write $\classifierfam$ for the class of constructible experts. 
We call a constructible expert undominated if there is no other expert in $\classifierfam$ that strictly Blackwell dominates it. 
In other words, an undominated constructible expert is one that cannot be further improved, in the Blackwell sense, using the observable information available to the aggregator. 
We study two aggregation problems:
\begin{itemize}
    \item The search problem (see \predicrefinelink) outputs an undominated constructible expert that is strictly more informative than the target expert in the Blackwell sense. 
    If no such strict improvement exists, the problem returns the target expert unchanged.
    
    \item The optimization problem (see \predicrefineoptlink) asks, among all constructible experts that Blackwell dominate the target expert, which one minimizes expected proper loss.
\end{itemize}
We illustrate our problem using following example:
\begin{example}
Consider three equally realized states $\{\fea_1,\fea_2,\fea_3\}$. 
Consider the following two deterministic input experts:
$\givenclassifier_1:
    \{\{\fea_1,\fea_2\}\mapsto 0.3,\ \{\fea_3\}\mapsto 0.7\}$
and 
$\givenclassifier_2:
\{\{\fea_1\}\mapsto 0.1,\ \{\fea_2,\fea_3\}\mapsto 0.6\}$.
Together, these reports reveal enough information to construct an output expert $\classifier$ that reports predictions $0.1$, $0.5$, and $0.7$ on states $\fea_1$, $\fea_2$, and $\fea_3$, respectively; we explain this constructibility calculation in the next paragraph. 

Indeed, this output expert $\classifier$ strictly Blackwell dominates every input expert. 
For example, we know that $\givenclassifier_1$ pools states $\fea_1$ and $\fea_2$ and reports their average prediction $0.3$, while the output expert $\classifier$ separates them into predictions $0.1$ and $0.5$; both experts report $0.7$ on state $\fea_3$. 
Thus, the prediction distribution of $\classifier$ is a mean-preserving spread of that of $\givenclassifier_1$, as illustrated in \Cref{fig:blackwell-example}. 
In this example, $\classifier$ is also undominated, because it already reports the state-level Bayes probabilities.
\end{example}

\xhdr{Constructibility via observable linear information}
We next formalize what it means for an output expert to be constructible from the input experts. 
The basic idea is to record all information revealed by the input experts in a matrix $\mymatrix$. 
Each row of this matrix corresponds to one possible reported prediction of one input expert, and each column corresponds to one state. 
The entry in a row-column pair is the conditional probability that the corresponding expert reports that row's prediction when the realized state is the corresponding column's state. 
Thus, each row is a state-wise reporting-probability vector.

Posterior consistency turns these reporting-probability vectors into {\em linear information} about the unknown Bayes probabilities. 
To see this, suppose a row corresponds to an expert reporting prediction $\prediction$. 
The probability mass of this report is known from the prior distribution and the expert's reporting rule. 
Since the expert is posterior-consistent, the positive-label mass associated with this report must equal $\prediction$ times this probability mass.\footnote{Here the positive-label mass means the joint probability mass that this report is generated and the binary outcome satisfies $\labelrea=1$. For example, if a reporting component is represented by a nonnegative vector $\signalvec$, where $\signalvec_i$ is the probability that this component is selected in state $\fea_i$, then its probability mass is $\sum_i \feaprob_i\signalvec_i$, while its positive-label mass is $\sum_i \feaprob_i\bayesprob_i\signalvec_i$.} 
Equivalently, if $\weightedbayes_i=\feaprob_i\bayesprob_i$ denotes the unknown prior-weighted Bayes probability of state $\fea_i$, then all input experts together induce a linear system $\mymatrix\weightedbayes=\weightedpredic$, where both the matrix $\mymatrix$ and the right-hand side $\weightedpredic$ are determined by the observed input experts and the prior distribution.

The row space of this matrix is the key object in our characterization. 
We call it the {\em observable linear space $\obsspace$}. 
Intuitively, a state-wise vector belongs to $\obsspace$ exactly when its positive-label mass can be inferred from the posterior-consistency equations of the input experts. 
Since an output expert must assign nonnegative reporting probabilities in every state, the relevant feasible components lie in the observable cone $\obscone\triangleq\obsspace\cap\R_+^{\feanum}$. 
Our characterization shows that a constructible expert is precisely one that can be assembled from nonzero components in this cone: the output expert decomposes the all-one vector into components in $\obscone$, and each component reports the posterior prediction given by its identifiable positive-label mass divided by its probability mass.
Thus, constructibility can be checked and optimized over using observable linear information alone. 
This characterization is the foundation for all of our algorithmic results.


\begin{example}
To see how this representation constructs an output expert, consider an instance with three equally realized states $\{\fea_1,\fea_2,\fea_3\}$. 
Consider the instance in \Cref{fig:blackwell-example}.
The four possible reports of two input experts $\givenclassifier_1, \givenclassifier_2$ correspond to the four state-wise reporting-probability vectors
$(1,1,0), (0,0,1), (1,0,0), (0,1,1).$
Thus, the matrix $\mymatrix$ that records the observable reporting information is
\begin{align*}
    \mymatrix
    =
    \begin{pmatrix}
        1 & 1 & 0 \\
        0 & 0 & 1 \\
        1 & 0 & 0 \\
        0 & 1 & 1
    \end{pmatrix}.
\end{align*}
Let $\weightedbayes_i=\feaprob_i\bayesprob_i$ be the prior-weighted Bayes probability of state $\fea_i$. 
Since the states are equally likely, each state has prior mass $1/3$. 
Posterior consistency implies that the right-hand side vector $\weightedpredic$ in the observable linear system $\mymatrix\weightedbayes=\weightedpredic$ is
\begin{align*}
    \weightedpredic
    =
    \begin{pmatrix}
        0.3\cdot (2/3) \\
        0.7\cdot (1/3) \\
        0.1\cdot (1/3) \\
        0.6\cdot (2/3)
    \end{pmatrix}
    =
    \begin{pmatrix}
        1/5 \\
        7/30 \\
        1/30 \\
        2/5
    \end{pmatrix}~.
\end{align*}
For example, the first row corresponds to the report $\{\fea_1,\fea_2\} \mapsto 0.3$ in expert $\givenclassifier_1$. 
The associated component is $\signalvec=(1,1,0)$, whose probability mass, denoted by $\feaprob(\signalvec)$, is $2/3$, and whose observable positive-label mass, denoted by $\obslabel(\signalvec)$, is $\obslabel(\signalvec)=0.3\cdot \feaprob(\signalvec)=1/5$.

The observable linear space is the row space of $\mymatrix$:
$\obsspace
=
\myspan\{(1,1,0),(0,0,1),(1,0,0),(0,1,1)\}$.

The output expert $\classifier$ in \Cref{fig:blackwell-example} decomposes the all-one vector into the three observable components
$(1,0,0),
(0,1,0),
(0,0,1)$,
and reports predictions $0.1$, $0.5$, and $0.7$ on states $\fea_1$, $\fea_2$, and $\fea_3$, respectively. 
This output expert is constructible because each of its reporting components lies in the observable cone $\obscone=\obsspace\cap\R_+^3$. 
\end{example}

\xhdr{Efficient algorithms when aggregating to randomized expert}
We first study the computation complexity of the aggregator's problem when the output expert is allowed to report possibly randomized predictions.
We show that, under this natural randomized-output model, both {\predicrefinelink} and {\predicrefineoptlink} admit efficient algorithmic solutions.

\begin{theorem}[Informal]
With an arbitrary number $\classifiernum$ of input experts and any target expert,
there is a polynomial-time algorithm for solving {\predicrefinelink}. 
Moreover, with an arbitrary number $\classifiernum$ of input experts and any target expert, for every regular proper loss, there is an additive FPTAS for solving {\predicrefineoptlink}, namely, the algorithm outputs a constructible expert that Blackwell dominates the target expert, and its expected proper loss is within the desired additive error of the optimum.
\end{theorem}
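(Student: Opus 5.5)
The plan is to build everything on the characterization of constructible experts as decompositions of $\vone$ into components of the observable cone $\obscone=\obsspace\cap\R_+^{\feanum}$. Each component $\signalvec$ reports the prediction $\obslabel(\signalvec)/\feaprob(\signalvec)$, and $\obslabel$ is a linear functional on $\obsspace$ that we can compute from $\mymatrix$ and $\weightedpredic$. Merging components that report the same prediction does not change the expert, so a constructible expert is the same thing as a finite family of points of $\obscone$ that sum to $\vone$. By Carathéodory it suffices to consider families of polynomial size, roughly $\feanum+O(\totalcells)$ components.

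\textbf{Search.} Refining the target as far as possible means splitting $\vone$ into the \emph{finest} observable pieces. So I will compute the extreme rays of $\obscone$ that appear in some decomposition of $\vone$. Concretely, I will find a conic decomposition $\vone=\sum_\rayind \weight_\rayind\ray_\rayind$ into extreme rays. A vertex of the polytope $\{\weight\ge 0:\sum_\rayind\weight_\rayind\ray_\rayind=\vone\}$ gives one, and LP techniques find it in polynomial time without enumerating all rays. I would do this by iteratively testing, by LP, whether the current component can be written as a sum of two non-proportional cone elements, and splitting it if so.

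Each extreme-ray component reports its own posterior. Any further Blackwell refinement would have to split some component into observable pieces with different posteriors. That is impossible for an extreme ray, whose only decompositions are into multiples of itself. This gives undominatedness, provided I show the following: if $\classifier\blackwelldom \classifier'$ within $\classifierfam$, then some component of $\classifier'$ can be split into non-proportional cone pieces. I plan to prove this via a convexity argument on the joint decomposition matrices.

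To make the output refine the target, I will start from the target's own components, which lie in $\obscone$, and decompose each one separately into extreme rays. Splitting components is a mean-preserving spread of the prediction distribution, so the result dominates the target. Strictness holds exactly when some target component splits into rays with distinct predictions; otherwise I return the target unchanged. The main obstacle is the undominatedness argument: an arbitrary dominating expert need not be a refinement of ours at the component level. I expect to handle this by taking a maximal refinement and arguing through linearity of $\obslabel$ that any strictly dominating expert yields a cone element violating extremality.

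\textbf{Optimization.} I will discretize the prediction range with a grid $\grid$ of mesh $\eps$. Variables $\mathbf{x}_u\in\obscone$ represent the total mass reporting grid value $u$, subject to:
\begin{itemize}
\item $\sum_u \mathbf{x}_u=\vone$;
\item the linear calibration constraint $\obslabel(\mathbf{x}_u)=u\,\feaprob(\mathbf{x}_u)$;
\item Blackwell dominance of the target, expressed as linear inequalities on the integrated CDF at the grid points and at the target's finitely many prediction values (the integrated CDF is piecewise linear between support points).
\end{itemize}
Expected proper loss is linear in these masses, so the whole problem is an LP.

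The rounding error has two sources: the loss of snapping predictions to the grid, controlled by regularity (Lipschitz-type bounds on the Bayes risk), and the need to preserve dominance after snapping. For dominance, I will round each optimal component's prediction within its own decomposition, not independently: I will merge or split its pieces so that calibration is exact and the induced distribution remains a spread of the target. This gives an additive $O(\eps)$ error in time polynomial in $1/\eps$ and the input size.
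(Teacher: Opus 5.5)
\textbf{Search: the undominatedness argument fails.} The search half rests on a false claim. An expert all of whose components are extreme rays of $\obscone$ can still be strictly Blackwell-dominated inside $\classifierfam$, because $\vone$ generally has many extreme-ray decompositions and these can be strictly comparable.

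Concretely, take four states with uniform prior and $\bayesprob=(1,0,\tfrac34,\tfrac14)$. Let $\givenclassifier_1$ report $7/8$ on $\{\fea_1,\fea_3\}$ and $1/8$ on $\{\fea_2,\fea_4\}$, and let the target $\givenclassifier_2$ report $5/8$ on $\{\fea_1,\fea_4\}$ and $3/8$ on $\{\fea_2,\fea_3\}$. Then:
\begin{itemize}
\item $\obsspace=\{\signalvec:\signalvec_1+\signalvec_2=\signalvec_3+\signalvec_4\}$, and the extreme rays of $\obscone$ are the four vectors $\subsetindi_{\{i,j\}}$ with $i\in\{1,2\}$, $j\in\{3,4\}$.
\item The polytope $\{\weight\ge 0:\sum_\rayind\weight_\rayind\ray_\rayind=\vone\}$ is a segment whose two vertices are exactly the decompositions used by $\givenclassifier_1$ and $\givenclassifier_2$.
\item The target's components are already extreme rays, so your procedure returns $\givenclassifier_2$ unchanged, and the vertex-based variant may return it too.
\item Yet $\givenclassifier_1\in\classifierfam$ and $\givenclassifier_1\blackwelldom\givenclassifier_2$.
\end{itemize}
So the lemma you planned to prove (a strict improvement of $\classifier'$ must split some component of $\classifier'$ non-proportionally) fails for $\classifier'=\givenclassifier_2$.

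Undominatedness is a global property of the choice among exponentially many decompositions, not a local extremality property. The paper handles this by lexicographically maximizing the masses on the sorted finite prediction set $\predictions$. This works because a strict Blackwell improvement strictly increases $\lexobj$ (\Cref{lem:lex_obj_monotonicity}). It is made tractable by three ingredients: the $\suppsize$ support bound (\Cref{cor:small_support}), the bit-complexity bound on $\predictions$, and Stern--Brocot search driven by a feasibility LP.

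\textbf{Optimization: the grid-valued LP can be infeasible.} Your LP forces every output component to have mean exactly on the grid. But the means available in $\obscone$ are dictated by the instance. For example, let $\obsspace=\R^{2}$ and let the target report $\bayesprob_1\neq\bayesprob_2$ deterministically on the two states. The target is then fully informative, so every expert weakly dominating it has exactly its prediction distribution. Your LP is therefore infeasible whenever $\bayesprob_1$ or $\bayesprob_2$ is off the grid; using the target's values only as check points for the integrated CDF does not help.

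The repair you sketch is not yet an argument. Extreme-ray components cannot be split, and merging only contracts the distribution. At minimum you would need to add the target's values to the allowed set and prove an $O(\eps)$-loss contraction lemma inside each source component. Your error control also invokes Lipschitz bounds on $\bayesrisk$, which the paper's notion of regularity (a polynomial-size affine upper envelope) does not provide: log loss is regular, but its Bayes risk is not Lipschitz.

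The paper never fixes predictions in the LP. Each source-labelled component $\signalvec^{\targetcompidx,\affindex}$ is tagged by an affine piece and charged $\affslope_\affindex\obslabel(\signalvec^{\targetcompidx,\affindex})+\affintercept_\affindex\feaprob(\signalvec^{\targetcompidx,\affindex})$. This is a linear upper bound on $\feaprob(\signalvec^{\targetcompidx,\affindex})\,\bayesrisk\bigl(\obslabel(\signalvec^{\targetcompidx,\affindex})/\feaprob(\signalvec^{\targetcompidx,\affindex})\bigr)$. Dominance is imposed exactly by the source-mass and source-posterior-mean constraints. Completeness follows by giving each component of any feasible expert its best affine piece, so no rounding of predictions is needed and only the objective is approximated.
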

The first part of the theorem shows that, for the search problem, randomized aggregation can efficiently find a maximally informative posterior-consistent improvement of a target expert whenever such an improvement exists.
The second part shows that the same tractability extends to optimization problem: among all constructible experts that Blackwell dominates the target expert, one can efficiently find an expert whose expected proper loss is nearly optimal\hzedit{, for any given proper loss}. 
\hzedit{Conceptually, the second part of the result solves a better specified problem, where we still aim to construct a calibrated expert system that can provide persistent help in downstream tasks, but we {\em do} know the loss function that matters in these tasks.} 
Importantly, in both problems, the algorithms do not require access to the latent Bayes probabilities. 
They use only the prior distribution and the input experts, and the output expert remains exactly constructible and exactly \wtedit{calibrated}.

\xhdr{Hardness of deterministic aggregation} 
We then turn to deterministic output experts. 
Deterministic experts are natural in applications where each state, feature profile, or patient profile should receive a stable prediction. 
However, our results show that deterministic aggregation is computationally much harder than randomized aggregation.

\begin{theorem}[Informal]
The deterministic output of problem {\predicrefinelink} is $\NP$-hard, even with only two input experts, a uniform prior distribution, and one of the two input experts being a constant expert that reports the base rate and serves as the target expert.
Moreover, unless $\Poly=\NP$, deterministic output of problem {\predicrefineoptlink} admits no multiplicative PTAS, even for the Brier loss, even with only two input experts, a uniform prior distribution, and a constant target expert.
\end{theorem}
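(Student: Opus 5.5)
We reduce from a restricted variant of subset sum, the problem \rsslink{} that the macros anticipate. In this variant we are given positive integers $\subsuminteger_1,\dots,\subsuminteger_\harditemnum$ with total $\hardtotal$ and a target $\hardtarget$, and we must decide whether some subset sums exactly to $\hardtarget$. The variant is restricted, for instance to $\hardtarget=\hardtotal/2$ with additional balancing constraints, so that the instance embeds into a uniform prior. The whole plan rests on the constructibility characterization: a deterministic output expert corresponds to a partition of $\feaspace$ into cells $\hardcell$ whose indicator vectors $\subsetindi_{\hardcell}$ lie in the observable cone $\obscone$. Each such cell then reports $\obslabel(\subsetindi_\hardcell)/\feaprob(\subsetindi_\hardcell)$.

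\textbf{Encoding.} With a constant target expert $\givenclassifier_1$ reporting the base rate $\baserate$, its row of $\mymatrix$ is $\vone$. Hence $\obsspace=\myspan\{\vone,\text{rows of }\givenclassifier_2\}$. We take $\givenclassifier_2$ to be a single randomized (or deterministic) expert whose rows we design. Under a uniform prior on a ground set $\hardground$ of size polynomial in $\harditemnum$ and the $\subsuminteger_i$, blocks of size $\hardblocksize$ encode the item weights (say $\subsuminteger_i$ copies per item, split into positive and negative blocks $\hardposblock,\hardnegblock$). The aim is that a $0/1$ vector $\subsetindi_\hardcell$ lies in $\obsspace$ only when it is a union of item blocks whose signed weight satisfies a single linear constraint. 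We arrange $\obsspace$ to be spanned by $\vone$ and one vector $\hardbasis$, for example $\hardbasis=\subsetindi_{\hardposblock}-\subsetindi_{\hardnegblock}$ with perturbations $\harditemeps$ tying items together. Membership of $\subsetindi_\hardcell$ in $\obsspace$ then means $\subsetindi_\hardcell=a\vone+b\hardbasis$. The intended consequence is that a nontrivial cell $\hardcell\neq\emptyset,\feaspace$ exists iff some subset of items has weight exactly $\hardtarget$. The reports of $\givenclassifier_2$ must be valid posterior-consistent predictions for some Bayes vector $\bayesprob\in[0,1]^\feanum$, and these predictions must differ across cells. To ensure this we choose $\bayesprob$ affine in $\hardbasis$, namely $\bayesprob=\tfrac12\vone+\hardeta\hardbasis$. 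Then every component in $\obscone$ gets a well-defined prediction.

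\textbf{Correctness.} A deterministic constructible expert strictly dominating $\givenclassifier_1$ must have at least two cells with distinct predictions, since any single-cell expert reports $\baserate$ and ties. Conversely, given a YES subset $\hardset$, we build the partition into $\hardcell$ and its complement. Both indicators lie in $\obscone$, and their predictions differ because $\hardeta\neq0$, which gives strict Blackwell dominance. For the converse direction (a NO instance admits only the trivial partition) we need to show that no $0/1$ vector other than $\vzero,\vone$ lies in $\obsspace$. I expect this to be the main obstacle: ruling out spurious $0/1$ solutions of $\subsetindi_\hardcell=a\vone+b\hardbasis$ requires the gadget to force $a,b$ into a form that reads off exactly a subset sum. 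The design of $\hardbasis$ must make the coordinates of $\subsetindi_\hardcell$ take only two values, so that cells are exactly unions of blocks with matching signed weights. I would first try $\hardbasis$ with coordinate values $\pm$ item-dependent magnitudes normalized so that $\langle\hardbasis,\vone\rangle=0$ encodes $\sum_{i\in\hardset}\subsuminteger_i=\hardtarget$. If that fails, I would add a second row to pin the scale.

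\textbf{Inapproximability.} For the Brier part we compute $\brierloss$ of the trivial expert, $\baserate(1-\baserate)$, and of the YES-partition. The optimum in YES instances is then smaller by a gap $\Theta(\hardeta^2)$. In NO instances the only feasible expert is the constant one. Thus a multiplicative $(1+\eps)$-approximation with $\eps$ below the relative gap decides the instance, so no PTAS exists unless $\Poly=\NP$. We can scale the gap to be constant, for instance by choosing $\baserate$ small so that the relative improvement is bounded away from zero.
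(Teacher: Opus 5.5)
Your encoding step cannot work as designed. You aim for $\obsspace=\myspan\{\vone,\hardbasis\}$, possibly with one more row. But any $d$-dimensional subspace of $\R^{\feanum}$ contains at most $2^d$ binary vectors. To see this, pick $d$ coordinates on which projection of the subspace is injective; a binary vector in the subspace is then determined by its values there. For bounded $d$ you can enumerate these vectors in polynomial time. Concretely, $a\vone+b\hardbasis$ with $b\neq0$ is binary only if $\hardbasis$ takes exactly two values. In that case the only nontrivial binary vectors are its two level-set indicators, which you read off without solving anything. So every instance your construction produces is decidable in polynomial time, and no choice of $\hardbasis$ can encode subset sum.

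The failure is in the YES direction, not the NO direction you flag as the obstacle. A span fixed by the construction cannot contain $\subsetindi_{\hardset\cup\hardposblock}$ for whichever unknown witness $\hardset$ happens to exist. Likewise, $\langle\hardbasis,\vone\rangle$ is a fixed number and cannot express $\sum_{i\in\hardset}\subsuminteger_i=\hardtarget$. The same counting shows $\givenclassifier_2$ cannot be deterministic: then $\obsspace$ is spanned by disjoint cell indicators, and its binary vectors are just unions of cells. Separately, using $\subsuminteger_i$ copies of states per item makes the reduction pseudo-polynomial. That proves nothing, since subset sum is solvable in pseudo-polynomial time; the weights must enter through the numerical values of the routing probabilities, with $\feanum$ polynomial in $\harditemnum$.

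The missing idea is to make $\obsspace$ of dimension $\harditemnum+1$ and to describe it by constraints rather than by a few generators. The paper arranges $\obsspace=\hardspace$: the vectors that are constant on each of two blocks $\hardposblock,\hardnegblock$ of size $\harditemnum$ and satisfy the single equation $\sum_{i}\subsuminteger_i\signalvec_i=\hardtarget\signalvec_{\hardplus_1}+(\hardtotal-\hardtarget)\signalvec_{\hardminus_1}$. Item coordinates are then free up to one relation, and binary vectors are forced to be constant on the blocks. The four possible block patterns yield $\vzero$, $\vone$, or a subset-sum certificate. Realizing this requires a randomized $\givenclassifier_2$ with one pair of reports per item. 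Its routing vectors are $\frac1{\harditemnum}\bigl(\frac12\vone\pm\harditemeps_i\hardbasis^{(i)}\bigr)$, where $\hardbasis^{(i)}=\mathbf{e}_i+(\subsuminteger_i/\hardtarget)\subsetindi_{\hardposblock}$. The choice $\harditemeps_i=\hardeta^i/\subsuminteger_i$ keeps all $2\harditemnum$ prediction values distinct; otherwise reports merge and the rows of $\mymatrix$ are not these vectors.

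Two further steps are missing from your outline:
\begin{itemize}
\item For the search problem, you must show that an undominated strict improvement exists in YES instances, not just some strict improvement. The paper gets this from finiteness of $\classifierfamdet$ plus acyclicity of $\blackwelldom$.
\item For the Brier part, the constant gap comes from making the blocks a constant fraction of the states, with Bayes probabilities $3/4$ and $1/4$. The certificate cell contains all of $\hardposblock$ and none of $\hardnegblock$, so it reports $\frac12+\Theta(1)$. This gives $\detbrieropt\le 2/9$ versus $1/4$. Shrinking $\baserate$ is neither needed nor, as stated, an argument.
\end{itemize}
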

This result shows that the tractability of {\predicrefinelink} relies essentially on allowing randomized output experts. 
The hardness is not driven by a large number of experts, a complicated prior distribution, or a rich target expert: 
it already arises with only two input experts, one of which is the simplest possible expert: a constant base-rate expert. 
Thus, even deciding whether such a constant expert can be strictly improved by a deterministic constructible expert is computationally hard. 
The optimization hardness further shows that deterministic aggregation is not only hard for the search problem, but also hard for loss minimization {\predicrefineoptlink}. 
Even for the standard Brier loss, no efficient multiplicative approximation scheme exists unless $\Poly=\NP$.

\subsection{Our Techniques}
\label{subsec:tech}
\xhdr{Efficient algorithm for solving  {\predicrefinelink}}
We first explain the main idea behind our polynomial-time algorithm for \predicrefinelink. 
The \hzedit{first, straightforward} difficulty is that the class of constructible experts is infinite\hzedit{-dimensional}. 
A constructible expert may use an arbitrary nonzero observable component $\signalvec\in\obscone$, and the prediction attached to this component is the ratio between its observable positive-label mass and its probability mass,
\wtedit{namely $\obslabel(\signalvec)/\feaprob(\signalvec)$.}
Thus, even after constructibility is reduced to the observable cone $\obscone$, it is not clear how to search over all possible output experts, or how to certify that the output expert is undominated under Blackwell dominance.

Our first step is to reduce this infinite\hzedit{-dimensional} search space to a finite\hzedit{-dimensional} one. 
Since the observable cone $\obscone=\obsspace\cap\R_+^{\feanum}$ is a polyhedral cone, every observable component can be decomposed into a nonnegative combination of extreme rays of $\obscone$. 
Let $\extremerays$ denote the normalized set of extreme rays of $\obscone$, and define the finite prediction set
\begin{align*}
    \predictions
    \triangleq
    \left\{
        \frac{\obslabel(\ray)}{\feaprob(\ray)}:
        \ray\in\extremerays
    \right\}~.
\end{align*}
We show that every constructible expert can be weakly Blackwell-improved by an expert whose reported predictions all lie in $\predictions$. 
Therefore, when searching for an undominated expert, it is without loss to restrict attention to experts supported on this finite prediction set.

The second step is to \hzedit{reduce the selection of an undominated expert to a specific lexicographic optimization problem, which is at least superficially more tractable}. 
Order the finite prediction set as $\predictions=\{\finpre_1,\ldots,\finpre_{\numpredictions}\}$, where $\finpre_\preind<\finpre_{\preind+1}$. 
For an expert $\classifier$ supported on $\predictions$, let $\preprob_\preind(\classifier)$ denote the probability that $\classifier$ reports $\finpre_\preind$. 
We consider the lexicographic objective
\begin{align*}
    \lexobj(\classifier)
    =
    \left(
        \preprob_1(\classifier),
        \preprob_2(\classifier),
        \ldots,
        \preprob_{\numpredictions}(\classifier)
    \right),
\end{align*}
and maximize this objective over constructible experts that weakly Blackwell dominate the target expert $\targetclassifier$. 
The reason this objective is useful is that, once the prediction values are ordered, the probability masses, the CDF, and the integrated CDF determine one another. 
Thus, any strict Blackwell improvement strictly improves the lexicographic objective. 
Consequently, any lexicographically optimal feasible expert must be undominated.

\hzedit{Now that we have reduced our task to a well-formulated, finite-dimensional problem, t}he remaining challenge \hzedit{concerns the {\em size} of the search space}: although $\predictions$ is finite, it may be exponentially large. 
We overcome this challenge by proving two structural properties of the lexicographic optimum. 
First, there exists an optimal solution with at most $\feanum+2(M+1)$ positive-probability prediction values, where $M$ is the total number of prediction components across all input experts. 
This small-support property is proved through an auxiliary extreme-ray linear program, which is \hzedit{exponential-size and} used only as a proof device, together with a \hzedit{binding-constraint}-counting argument. 
Second, under the standard rational-input encoding, every prediction value in $\predictions$ has bounded rational complexity: if $W$ upper bounds the numerator and denominator of every rational number in the input, then each prediction value in $\predictions$ has numerator and denominator bounded by $(\feanum W)^{10\feanum^3}$. 
This numerical bound allows us to search for the relevant prediction values exactly.

Combining these properties yields the efficient algorithm. 
The algorithm identifies the positive-probability prediction values of an optimal lexicographic solution one by one. 
Suppose the first $\lvlind$ positive prediction values $\optpre_1,\ldots,\optpre_\lvlind$ and their corresponding probabilities have already been identified. 
For a candidate upper bound $\bound$, we solve a polynomial-size feasibility linear program (see \ref{eq:feasibility-program}). 
This program fixes the already identified lexicographic prefix and asks whether an optimal feasible expert can assign positive additional probability to some prediction value at most $\bound$, while preserving constructibility and the Blackwell dominance constraint $\classifier\blackwelldomeq\targetclassifier$.\footnote{\hzedit{We will discuss in more detail below how these constraints are implemented algorithmically.}} 
Its optimal value is positive if and only if the next positive-probability prediction value satisfies $\optpre_{\lvlind+1}\le \bound$. 
Thus, \ref{eq:feasibility-program} serves as a \hzedit{polynomial-time} comparison oracle for the unknown rational number $\optpre_{\lvlind+1}$.
\hzedit{Note that we cannot easily solve for $\optpre_{\lvlind+1}$ directly, because that would introduce nonlinearity.}

Since the unknown prediction value has bounded rational complexity, we can recover it exactly using {\em accelerated search in the Stern-Brocot tree} \hzedit{(conceptually similar to standard binary search over bounded integers)}. 
After the next prediction value is found, we solve the feasibility program again to obtain its optimal probability. 
Repeating this procedure for at most $\suppsize$ iterations identifies all positive prediction values in an optimal lexicographic solution. 
A final linear program then reconstructs the corresponding constructible expert. 
By the lexicographic optimality argument above, the recovered expert weakly Blackwell dominates the target expert and is undominated within the constructible class, which proves the polynomial-time solvability of \predicrefinelink.

\xhdr{FPTAS for solving {\predicrefineoptlink}}
We first explain the main idea behind our additive FPTAS for {\predicrefineoptlink}. 
The difficulty is that, although constructible experts admit a linear representation through observable components, the proper-loss objective is not directly convex in the natural decision variables. 
Indeed, a natural formulation would choose output components \wtedit{$\signalvec\in\obscone$} (where $\obscone$ is the nonnegative cone of observable state-wise components induced by the input experts), where each component induces its own reported prediction through the ratio between its observable outcome mass and its probability mass. 
The loss contribution of such a component is the probability mass of the component times the Bayes risk evaluated at this induced prediction. 
Since the Bayes-risk function of a proper loss is concave, this gives a concave-perspective objective, and minimizing it is generally non-convex. 
Equivalently, if we introduce the induced prediction as an explicit variable, the constraint tying this prediction to the observable component becomes bilinear. 
Thus, even though constructibility and Blackwell dominance can be expressed through linear constraints, the proper-loss minimization problem is still non-convex.

Our first step is to exploit the structure of proper losses. The Bayes-risk function of a proper loss is concave, and for the class of regular proper losses, it admits a polynomial-size piecewise-linear upper approximation. Replacing the Bayes-risk function by this upper approximation turns the nonlinear loss contribution of each prediction into a linear expression. This gives a linear objective that upper bounds the true expected proper loss, with only an additive approximation error.
The second step is to encode Blackwell dominance through a source-labelled decomposition. Instead of directly optimizing over an arbitrary output expert, we label each output prediction by the prediction component of the target expert from which it is split. For each target prediction value, the source-labelled predictions must preserve both its probability mass and its posterior mean. These two constraints exactly express a martingale coupling from the target expert’s prediction distribution to the output expert’s prediction distribution. By the martingale characterization of Blackwell dominance, this guarantees that the output expert weakly Blackwell dominates the target expert.

Combining these two ingredients gives us a polynomial-size linear program (see program \ref{eq:discretized-LP}): 
its variables are source-labelled observable components; its constraints enforce constructibility and Blackwell dominance exactly; and its objective minimizes the piecewise-linear upper approximation of the proper loss. 
After solving the LP, the algorithm converts each positive-mass component into an output prediction by assigning it the posterior value induced by that component. 
The resulting expert is exactly constructible, exactly posterior-consistent, and exactly Blackwell dominates the target expert; only the objective value is approximated. The soundness and completeness of the LP then show that the output loss is within the desired additive error of the optimum.

\xhdr{Hardness via observable binary vectors}
We next explain the main idea behind our deterministic hardness results. 
The key observation is that deterministic output experts impose an integrality constraint that is absent in the randomized model. 
A deterministic expert partitions the state space into prediction cells. If such an expert is constructible, then each cell of this partition must be {\em observable}: its binary indicator vector must lie in the observable linear space generated by the input experts. Thus, deterministic aggregation is equivalent to asking whether the observable linear space contains useful nontrivial binary vectors.

Our hardness construction reduces this binary-vector requirement from the SubsetSum problem. 
Starting from a restricted subset-sum instance, we construct an aggregation instance with only two input experts: one expert is the constant base-rate expert, which also serves as the target; 
the other expert is a carefully designed randomized auxiliary expert. 
Together, these two experts generate an observable linear space with the following property: every observable binary vector must be constant on two amplified blocks of states, and once the values on these two blocks are fixed, the remaining feasibility condition becomes exactly a subset-sum equation.

This construction creates a direct equivalence. A subset-sum solution exists if and only if the observable linear space contains a nontrivial binary vector. Such a binary vector defines a nonconstant deterministic constructible expert. Since the target expert is the constant base-rate expert, any nonconstant calibrated expert with the same mean strictly Blackwell dominates it. Therefore, deciding whether the constant target expert admits a deterministic constructible improvement is already $\NP$-hard, even with only two input experts and a uniform prior.

For the optimization hardness, we use the same construction but exploit the amplified blocks to create a constant Brier-loss gap. 
On NO instances, the observable linear space contains no nontrivial binary vector, so every deterministic constructible expert must be constant and has Brier loss equal to the base-rate loss. 
On YES instances, the subset-sum certificate yields a two-cell deterministic expert whose predictions are separated away from the base rate, and the amplified blocks guarantee a strictly smaller Brier loss. 
The gap is large enough that any multiplicative PTAS for deterministic {\predicrefineoptlink} would distinguish YES instances from NO instances, implying $\Poly = \NP$ Thus, deterministic aggregation is hard not only for search, but also for loss minimization, even for the Brier loss and even in a highly restricted two-expert instance.

\subsection{Further Related Work}
\label{subsec:related}

\xhdr{Forecast and information aggregation}
Forecast aggregation, and its closely related formulations have been studied across several communities, including machine learning, statistics, economics, and theoretical computer science.
Early foundational work includes opinion pooling and forecast combination \citep{S-61,BG-69}, as well as classical reviews of forecast combination \citep{C-89}.
The broader literature has since developed along different methodological directions.

In machine learning, aggregation is often studied through ensemble methods, where multiple predictors/classifiers are combined to improve predictive performance. 
Classical examples include stacking, bagging, boosting, and random forests \citep{W-92,B-96,S-90,F-95,FS-97,FHT-00,B-01,D-00}. 
A related online-learning perspective studies prediction with expert advice, where multiplicative-weights-type algorithms combine experts' predictions while achieving low regret relative to the best fixed expert
\citep{LW-94,CL-06}.

In statistics, forecast aggregation has been studied both axiomatically and probabilistically.
The axiomatic literature imposes desiderata on aggregation rules, such as unanimity preservation and variants of independence, and characterizes rules such as linear or externally Bayesian pooling under corresponding assumptions \citep{AW-80,G-84,DL-16}.
The probabilistic and Bayesian literature instead models how
forecasts are generated from underlying signals and then derives aggregation rules by Bayesian updating or parametric estimation; see, for example, \citet{SBFMTU-14,FCK-15,EPSU-16,SPU-16,RG-10}.

A common feature of much of this literature is that the reported forecasts are interpreted as probabilistic beliefs induced by the forecasters' information \citep{B-82, GBR-07, SPU-16,ABS-18,GHHKSY-25}. Under this interpretation, a forecast is calibrated with respect to the forecaster's signal: it represents the posterior mean of the target conditional on the information available to that forecaster. 
In our setting, we also require that the constructed forecasts are calibrated which also aligns with this standard belief-updating view. 


Another related line studies agreement-based information aggregation: whether agents who exchange beliefs or predictions and reach agreement aggregate the information held across agents, and under what conditions the resulting agreement coincides with, or approximates, the posterior belief based on pooled information \citep{Aum-76,A-05,KS-23,FNW-23}. 
More recent work also studies computationally tractable agreement and collaborative-prediction protocols, where agents or models iteratively exchange predictions or feedback to improve accuracy while keeping their underlying information private \citep{CGGR-25,CGGGRS-26}. 
Our work is complementary. Rather than modeling an interactive belief-exchange or prediction-exchange process among agents, we take calibrated experts and their reporting rules as inputs, and ask which posterior-consistent output experts can be constructed and optimized using only the observable information contained in those inputs.

\xhdr{Robust forecast aggregation}
Our work concerns the problem of information aggregation.
First introduced in \citet{ABS-18}, 
a recent line of literature takes a robust perspective and asks for aggregators that perform well across large classes of possible
information structures, especially when the correlation structure among experts' signals is
unknown or misspecified.
Recent works study a variety of extensions and variants of this robust information aggregation problem \citep{DIL-21,LR-22,NR-22,KWW-24,GHHKSY-25,GK-25,FMNW-25,CPT-26}.
A complementary strand leverages second-order or higher-order information, such as agents' beliefs about others' answers, to improve aggregation in finite populations \citep{P-04,PSM-17,PS-19,WLC-21,WMH-22,PCK-24,APSTX-25}.

\wtedit{Our work differs from these lines in both its objective and its information constraint. 
Rather than designing a rule that maps several reported forecasts into a single aggregate forecast, or evaluating such a rule under a loss or regret criterion, we study when and how forecasts can be refined using only the observable information contained in the input experts' forecasts, without access to the underlying Bayes probabilities. 
Another important difference is that the output forecast of a robust aggregation rule need not itself be a Bayesian posterior belief. 
By contrast, every forecast constructed in our framework is required to be posterior-consistent: each reported prediction is a valid Bayesian belief under every latent Bayes-probability vector consistent with the input forecasts. 
This leads to a distinct constructibility problem: the goal is to move from forecast aggregation to information aggregation, and to characterize and compute maximally informative constructible output experts under a Blackwell dominance relation.
We discuss the connections in more detail in \Cref{rmk:relation-to-robust}.}

\xhdr{Optimization over information structures}
Our work is also connected to information design and Bayesian persuasion, where a designer chooses an information structure to optimize an objective subject to Bayes plausibility constraint \citep{KG-11,DX-16,BM-19}. 
A growing algorithmic literature studies computational aspects of optimizing over information structures in persuasion and related problems. 
Similar to this literature, we view forecasts as information structures and optimize over feasible posterior distributions.
However, the constraint in our problem is essentially a reverse analogue of Bayesian plausibility: instead of optimizing over all Bayes-plausible information structures, we start from observed forecasts and ask which more informative forecasts are constructible from their observable linear information alone.

Conceptually, our work is also related to recent studies on the comparison and interaction of information structures. \citet{CW-16} study informational substitutes and complements, focusing on how the marginal value of one signal depends on the availability of other signals, while \citet{BFK-22,BFK-24} study when information hierarchies or signal comparisons are robustly valid across decision environments and auxiliary information. 
These works compare given information sources or characterize robust orderings among them; in contrast, our problem is constructive: given several observed forecasts, we ask which more informative forecast can be generated and certified using only their observable implications. 
Recent work \cite{CHJL-26} on calibeating also studies how to post-process external forecasts to obtain calibration, informativeness, and proper-loss guarantees, but in an online learning setting.

\section{Preliminary}
\label{sec:prelim}

\newcommand{\signal}{\sigma}

We consider a stochastic environment that randomly generates a \emph{state} from a finite state space $\feaspace = \{\fea_i\}_{i\in[\feanum]}$.
We denote by $\feaprob_i \in [0, 1]$ the prior probability of the state $\fea_i\in\feaspace$, satisfying $\sum_{i \in [\feanum]} \feaprob_i = 1$. 
Conditional on the realized state, a binary outcome is generated: for each state $\fea_i$, the outcome $\labelrea \in \labelspace \triangleq \{0, 1\}$ is drawn from a Bernoulli distribution with mean $\bayesprob_i \in [0, 1]$.  
We often refer to $\bayesprob_i$ as the Bayes probability of state $\fea_i$.

A (possibly randomized) expert $\givenclassifier$ is identified with the reduced-form information structure\footnote{\wtedit{This reduced-form representation follows the standard signal-based formulation of Bayesian experts (see, e.g., \citealp{ABS-18,GHHKSY-25}).
In a signal-based model, an expert observes a signal $\signal$ generated from an information structure and reports the posterior probability of the positive outcome, $\prediction(\signal) = \prob{\labelrea = 1\mid \signal}$. Conditional on each state $\fea$, the random signal $\signal$ induces a distribution over reported posterior probabilities $\prediction(\signal)\in[0,1]$. The mapping $\givenclassifier(\cdot\mid \fea)$ records exactly this conditional distribution. 
Thus, instead of explicitly modeling the expert's signal space, we work directly with the induced distribution of the expert's reported probability.}} induced by that expert, represented by a stochastic mapping $\givenclassifier:\feaspace\to\Delta([0,1])$.\footnote{With slight abuse of terminology, throughout the paper we use ``expert'' to refer to the reduced-form information structure associated with this expert. Formally, expert $\givenclassifier$ is represented by the stochastic mapping $\givenclassifier:\feaspace\rightarrow\Delta([0, 1])$, which specifies, for each realized state $\fea$, the distribution of the expert's reported posterior probability.}
That is, given a state realization $\fea$, the expert reports a prediction $\predrv_\givenclassifier\sim\givenclassifier(\cdot\mid \fea)$.
This prediction is interpreted as the expert's reported posterior probability that the binary outcome equals one, or equivalently, the expert's reported posterior mean of outcome $\labelrea$.
Under this Bayesian interpretation, the report is calibrated: conditional on the expert reporting value $\prediction$, the average realized outcome is $\prediction$. That is, for every reported prediction value $\prediction$,
\begin{align*}
    \expect{\labelrea\mid \predrv_\givenclassifier = \prediction }
    =
    \prediction~.
\end{align*}
A deterministic expert is the special case in which $\givenclassifier(\cdot\mid \fea)$ is a point mass for every state $\fea\in\feaspace$; in this case, we identify $\givenclassifier$ with a function $\givenclassifier:\feaspace\to[0,1]$.
Together with the prior distribution $\feaprob = (\feaprob_i)_{i\in[\feanum]}$ and the corresponding Bayes probabilities $(\bayesprob_i)_{i\in[\feanum]}$, 
the stochastic mapping associated with expert $\givenclassifier$ induces a joint distribution over predictions and outcomes, denoted by $\Gamma_\givenclassifier\in\Delta([0,1]\times\labelspace)$.
Finally, let $\CDF_\givenclassifier$ denote the cumulative distribution function (CDF) of the marginal distribution of the expert's reported prediction $\predrv_\givenclassifier$ under $\Gamma_\givenclassifier$.

We next define the dominance relation used to compare experts.

\xhdr{Blackwell dominance}
Blackwell dominance compares experts through the marginal distributions of their reported predictions.
Given any expert $\givenclassifier$, we define the integrated CDF $\SCDF_\givenclassifier:[0,1]\to\reals_+$ by
\begin{align*}
    \SCDF_\givenclassifier(t)
    \triangleq
    \expect[\prediction\sim \CDF_\givenclassifier]{(t-\prediction)_+}
    =
    \int_0^t \CDF_\givenclassifier(s)\,\dd s,
    \quad
    t\in[0,1]~.
\end{align*}

\begin{definition}[Blackwell dominance]
We say that an expert $\givenclassifier$ weakly Blackwell dominates another expert $\givenclassifier\primed$, denoted by $\givenclassifier\blackwelldomeq\givenclassifier\primed$, if
\begin{align*}
    \SCDF_\givenclassifier(t)
    \ge
    \SCDF_{\givenclassifier\primed}(t)
    \quad
    \text{for all }t\in[0,1]~.
\end{align*}
We say that $\givenclassifier$ strictly Blackwell dominates $\givenclassifier\primed$, denoted by $\givenclassifier\blackwelldom\givenclassifier\primed$, if $\givenclassifier\blackwelldomeq\givenclassifier\primed$ and there exists $t\in(0,1)$ such that $\SCDF_\givenclassifier(t)
>
\SCDF_{\givenclassifier\primed}(t)$.
\end{definition}

Since every prediction generated by an expert is a Bayesian posterior mean, all experts have the same mean prediction, equal to the base rate, $\prob{\labelrea=1} 
= \expect{\labelrea}
=
\sum\nolimits_{i\in[\feanum]}\feaprob_i\bayesprob_i$.
Therefore, $\givenclassifier\blackwelldomeq\givenclassifier\primed$ is equivalent to saying that the prediction distribution of $\CDF_\givenclassifier$ is a mean-preserving spread of the prediction distribution of $\CDF_{\givenclassifier\primed}$.
In this sense, we say that expert $\givenclassifier$ is Blackwell more informative than expert $\givenclassifier\primed$.

\xhdr{Constructible experts}
Fix the state marginal distribution $\feaprob\in\Delta(\feaspace)$ and fix $\classifiernum$ input experts $(\givenclassifier_j)_{j\in[\classifiernum]}$.
The aggregator observes only the prior distribution $\feaprob$ and the input experts $(\givenclassifier_j)_{j\in[\classifiernum]}$, and does not observe the underlying Bayes probabilities $(\bayesprob_i)_{i\in[\feanum]}$.
Thus, any output expert produced by the aggregator must be justified solely by the information contained in $\feaprob$ and $(\givenclassifier_j)_{j\in[\classifiernum]}$.

Formally, an expert $\classifier$ is constructible from $\feaprob$ and $(\givenclassifier_j)_{j\in[\classifiernum]}$ if there exists a mapping $\algo$ such that
$\classifier=\algo\bigl(\feaprob,(\givenclassifier_j)_{j\in[\classifiernum]}\bigr)$,
and this construction is valid uniformly over all Bayes-probability vectors consistent with the input experts.
That is, for every vector $(\bayesprob_i)_{i\in[\feanum]}$ under which the predictions generated by the input experts $(\givenclassifier_j)_{j\in[\classifiernum]}$ are Bayesian posterior means, the output expert
$\classifier=\algo\bigl(\feaprob,(\givenclassifier_j)_{j\in[\classifiernum]}\bigr)$
must also generate predictions that are Bayesian posterior means under the same data distribution.
We write $\classifierfam$ for the set of all such constructible experts.
When there is no ambiguity, we suppress the dependence of $\classifierfam$ on $\feaprob$ and $(\givenclassifier_j)_{j\in[\classifiernum]}$ in the notation.
We also denote by $\detclassifierfam\subseteq\classifierfam$ the set of all constructible deterministic experts.

We next formalize two versions of the expert-aggregation problem: one as a search problem and another as the optimization problem.

\xhdr{The expert-aggregation search problem}
We first introduce the notion of undominated experts:
\begin{definition}[Undominated experts]
Fix a prior distribution $\feaprob=(\feaprob_i)_{i\in[\feanum]}$ and fix $\classifiernum$ input experts $(\givenclassifier_j)_{j\in[\classifiernum]}$.
Let $\strictdomfam\subseteq\classifierfam$ be the set of strictly undominated constructible experts:
\begin{align*}
    \strictdomfam
    \triangleq
    \{\classifier\in\classifierfam:\ 
    \text{there does not exist }\classifier\primed\in\classifierfam
    \text{ such that }\classifier\primed\blackwelldom\classifier\}~.
\end{align*}
\end{definition}

In words, $\strictdomfam$ consists of constructible  experts that cannot be strictly improved under Blackwell dominance by another constructible expert.

\phantomsection
\label{prob:predictor-refine}
\mybox{
$\predicrefine$:\\
\textbf{Input:} a prior distribution $\feaprob=(\feaprob_i)_{i\in[\feanum]}$ and $\classifiernum$ input experts $(\givenclassifier_j)_{j\in[\classifiernum]}$, and target index $\targetindex\in[\classifiernum]$. \\
\textbf{Question:} 
output an expert $\classifier$ such that 
\begin{itemize}
    \item 
    if $\{\classifier\primed\in \strictdomfam: \classifier\primed \blackwelldom\givenclassifier_\targetindex\} \neq \emptyset$, then $\classifier \in \{\classifier\primed\in \strictdomfam: \classifier\primed \blackwelldom\givenclassifier_\targetindex\}$;
    \item
    otherwise $\classifier=\givenclassifier_\targetindex$.
\end{itemize}
}

Equivalently, the problem {\predicrefinelink} is a target-wise aggregation problem. Given a target expert $\givenclassifier_\targetindex$, it asks the aggregator to return a constructible  expert that is both strictly Blackwell more informative than $\givenclassifier_\targetindex$ and undominated within the constructible class $\classifierfam$, whenever such an expert exists. If no such expert exists, the aggregator returns the original target expert $\givenclassifier_\targetindex$ unchanged. Importantly, the aggregator observes only the prior distribution $\feaprob$ and the input experts $(\givenclassifier_j)_{j\in[\classifiernum]}$, and does not observe the underlying Bayes probabilities $(\bayesprob_i)_{i\in[\feanum]}$. Thus, any returned expert must be constructed and certified using only the observable information contained in $\feaprob$ and $(\givenclassifier_j)_{j\in[\classifiernum]}$.

\xhdr{The expert-aggregation optimization problem}
Let $\loss:[0,1]\times\{0,1\}\to\R$ be a proper loss.
For a report $\prediction\in[0,1]$ and a true label probability $\bayesprob\in[0,1]$, we slightly overload notation and extend $\loss$ linearly in the second argument by defining
\begin{align*}
    \exploss(\prediction,\bayesprob)
    \triangleq
    \bayesprob\loss(\prediction,1)
    +
    (1-\bayesprob)\loss(\prediction,0)~.
\end{align*}
The associated Bayes risk is the univariate function
$\bayesrisk(\bayesprob)
\triangleq
\exploss(\bayesprob,\bayesprob)
=
\bayesprob\loss(\bayesprob,1)
+
(1-\bayesprob)\loss(\bayesprob,0)$.
The loss function is proper if $\exploss(\bayesprob,\bayesprob)\le \exploss(\prediction,\bayesprob)$ for every $\prediction,\bayesprob\in[0,1]$.
Thus, the Bayes-risk function $\bayesrisk$ is concave as the pointwise infimum of affine functions of $\bayesprob$.
For an expert $\classifier$, we define its expected proper loss as 
$\expect{\loss(\predrv_\classifier,\labelrea)} = \expect[\prediction\sim \CDF_\classifier]{\bayesrisk(\prediction)}$.
More formally, we study the following expert-aggregation optimization problem:
\phantomsection
\label{prob:predictor-refine-opt}
\mybox{
$\predicrefineopt$:\\
\textbf{Input:} a proper loss $\loss$, a prior distribution $\feaprob=(\feaprob_i)_{i\in[\feanum]}$ and $\classifiernum$ input calibrated experts $(\givenclassifier_j)_{j\in[\classifiernum]}$, and target index $\targetindex\in[\classifiernum]$. \\
\textbf{Question:} output an expert
$\classifier\in \arg\inf
\left\{
\expect[\prediction\sim \CDF_\classifier]{\bayesrisk(\prediction)}:
\classifier\in\classifierfam
\text{ and }
\classifier\blackwelldomeq\givenclassifier_\targetindex
\right\}$.
}
When the output expert $\classifier$ is required to be deterministic, we refer to the corresponding deterministic-output variants of the search problem {\predicrefinelink} and the optimization problem {\predicrefineoptlink}  as {\detpredicrefinelink} and {\detpredicrefineoptlink}, respectively.
In other words, 
in {\detpredicrefinelink}, we replace the class of undominated experts  $\strictdomfam$ with
\begin{align*}
    \strictdomfamdet
    \triangleq
    \left\{
        \classifier\in\classifierfamdet:
        \text{there does not exist }\classifiernew\in\classifierfamdet
        \text{ such that }\classifiernew\blackwelldom\classifier
    \right\}
\end{align*}
which includes the experts that are strictly undominated within the deterministic constructible class.
In  {\detpredicrefineoptlink}, we replace the constructible class $\classifierfam$ by the deterministic constructible class $\classifierfamdet$.





\wtedit{
We formulate the aggregation problems with a single target expert $\givenclassifier_{\targetindex}$, where $\targetindex\in[\classifiernum]$, only for presentation simplicity.
All of our results and analysis extend directly to the setting in which the output expert is required to weakly Blackwell dominate a collection of target experts (see \Cref{rmk:multiple-dominance} for more details).
In addition, the target expert need not be one of the input experts used to define the feasibility set of the constructible experts, 
it can be any finite-support benchmark expert whose prediction distribution is given as part of the instance. 
The assumption that the target is one of the input experts is made only to streamline notation.\footnote{This target formulation also subsumes the unconstrained refinement version, in which the aggregator simply seeks a maximally informative constructible expert without requiring improvement over a particular input expert. 
Indeed, let $\baserate\triangleq\prob{\labelrea=1}$ denote the base rate. 
Although the Bayes probabilities are unknown, this base rate can be inferred from any input expert.
If we set the target expert to be the trivial base-rate expert $\givenclassifier^{\mathrm{base}}\equiv\baserate$, then the constraint $\classifier\blackwelldomeq f^{\mathrm{base}}$ is vacuous as every constructible expert is a mean-preserving spread of the point mass at $\baserate$. 
Thus, choosing the trivial base-rate expert as the target recovers the version in which the aggregator searches or optimizes over all constructible experts.}
}

\xhdr{Additional notations}
For any nonzero vector $\signalvec\in\R_+^{\feanum}$, we slightly abuse notation and define
\begin{align*}
    \feasubsetprob(\signalvec)
    \triangleq
    \sum\nolimits_{i\in[\feanum]}\feaprob_i\signalvec_i~,
    \quad
    \feasubsetlabel(\signalvec)
    \triangleq
    \sum\nolimits_{i\in[\feanum]}\feaprob_i\bayesprob_i\signalvec_i~,
    \quad
    \feasubsetmean(\signalvec)
    \triangleq
    \frac{\feasubsetlabel(\signalvec)}{\feasubsetprob(\signalvec)}~.
\end{align*}
For a nonempty subset $\feasubset\subseteq[\feanum]$, we write $\subsetindi_{\feasubset}$ for its indicator vector.
We will often use the following finite-support representation.
For an input expert $\givenclassifier_j$ for some $j\in[\classifiernum]$, let its finite support be $\reportset_j = \{\prediction_{j,1},\dots,\prediction_{j,\cellnum_j}\}$ with the prediction values sorted in an increasing order, and $\cellnum_j = |\supp(\CDF_{\givenclassifier_j})|$.
For each prediction value $\prediction_{j, \cellind}\in\reportset_j$, we define its state-wise prediction-probability vector $\reportvec^{j, \cellind}\in[0,1]^{\feanum}$ by
\begin{align}
    \label{eq:state-wise-prediction-prob}
    \reportvec^{j, \cellind}_i
    \triangleq
    \prob{\predrv_{\givenclassifier_j}=\prediction_{j, \cellind}\mid \fea_i}
    \quad
    \text{for every }i\in[\feanum]~.
\end{align}
Then $\sum\nolimits_{a\in[\cellnum_j]}\reportvec^{j, \cellind}=\vone$.
The total probability mass of prediction $\prediction_{j, \cellind}$ is $\feasubsetprob(\reportvec^{j, \cellind})$.
Perfect calibration implies
\begin{align*}
    \feasubsetlabel(\reportvec^{j, \cellind})
    =
    \prediction_{j, \cellind}\cdot
    \feasubsetprob(\reportvec^{j, \cellind})
    \quad
    \text{for every }a\in[\cellnum_j]~.
\end{align*}
Thus, although the Bayes probabilities $(\bayesprob_i)_{i\in[\feanum]}$ are unknown, the outcome mass associated with each prediction value of an input expert is observable from the reported prediction value and its probability mass.

\section{Observable Linear Information}

In this section, we make precise the observable linear information contained in the prior distribution $\feaprob$ and the input experts $(\givenclassifier_j)_{j\in[\classifiernum]}$.
This linear representation is the main object used by the algorithms in the sequel.

For each input expert $\givenclassifier_j$, let its finite prediction support be
$\reportset_j \triangleq \reportset_{\givenclassifier_j}
=\{\prediction_{j,1},\ldots,\prediction_{j,\cellnum_j}\}$.
For each $a\in[\cellnum_j]$, recall that the state-wise prediction-probability vector $\reportvec^{j,a}\in[0,1]^{\feanum}$ is defined by $\reportvec^{j,a}_i=\prob{\predrv_{\givenclassifier_j}=\prediction_{j,a}\mid \fea_i}$.
Since $\feaprob_i>0$ for all $i$, if $\feasubsetprob(\reportvec^{j,a})=0$, then $\reportvec^{j,a}=\vzero$.
As each reported prediction is the expert's Bayesian posterior mean, the component $\reportvec^{j,a}$ satisfies the following posterior-mean equation:
$\feasubsetlabel(\reportvec^{j,a})
=
\prediction_{j,a}\feasubsetprob(\reportvec^{j,a})$.
Let $\compindex
\triangleq
\{(j,a):j\in[\classifiernum],\ a\in[\cellnum_j]\}$
be the set of all prediction components of the input experts, and let $\totalcells\triangleq|\compindex|$.
Define the signal-state matrix $\mymatrix\in\reals_+^{\totalcells\times\feanum}$ by
\begin{align}
    \label{eq:signal-state-matrix}
    \mymatrix_{(j,a),i}
    \triangleq
    \reportvec^{j,a}_i
    \quad
    \text{for every }(j,a)\in\compindex\text{ and }i\in[\feanum]~.
\end{align}
Let $\weightedbayes\in\R^{\feanum}$ be the weighted Bayes vector with coordinates
$\weightedbayes_i\triangleq \feaprob_i\bayesprob_i$.
The observable label-mass vector $\weightedpredic\in\R^{\totalcells}$ is defined by
$\weightedpredic_{j,a}
\triangleq
\prediction_{j,a}\feasubsetprob(\reportvec^{j,a})$.
Then the observed input experts induce the linear system
\begin{align*}
    \mymatrix\weightedbayes
    =
    \weightedpredic~.
\end{align*}
Indeed, the $(j,a)$-th coordinate of $\mymatrix\weightedbayes$ is
$\sum\nolimits_{i\in[\feanum]}\reportvec^{j,a}_i\feaprob_i\bayesprob_i
=\feasubsetlabel(\reportvec^{j,a})$.

\xhdr{Observable vectors and the observable cone}
The linear system 
$\mymatrix\weightedbayes
=
\weightedpredic$ identifies exactly those state-wise linear combinations whose outcome masses can be recovered from the input experts; we therefore collect these identifiable vectors in an observable linear space and its nonnegative cone.
\begin{definition}
\label{defn:observable-vectors}
Fix the prior distribution $\feaprob$, and the input experts $(\givenclassifier_j)_{j\in[\classifiernum]}$, we define their observable linear space
\begin{align*}
    \obsspace
    \triangleq
    \rowspace(\mymatrix)
    =
    \myspan\{\reportvec^{j,a}:(j,a)\in\compindex\}
    \subseteq
    \R^{\feanum},
\end{align*}
where state-wise prediction-probability vectors $(\reportvec^{j,a})_{j ,a}$ are defined as in Eqn.~\eqref{eq:state-wise-prediction-prob}, and the signal-state matrix $\mymatrix$ is defined as in Eqn.~\eqref{eq:signal-state-matrix}.
We also define its observable nonnegative cone
\begin{align*}
    \obscone
    \triangleq
    \obsspace\cap\R^{\feanum}_+~.
\end{align*}
\end{definition}
The cone $\obscone$ consists of nonnegative state-wise routing vectors whose outcome masses are identifiable by linear consequences of the posterior-mean equations induced by the input experts.

We first make precise the sense in which $\obsspace$ exactly characterizes the state-wise vectors whose outcome masses are linearly identifiable from the input experts.

\begin{definition}
\label{def:linearly-observable-vector}
A vector $\signalvec\in\R^{\feanum}$ is called {\em linearly observable} from the input experts if the value
$\signalvec^\top\weightedbayes$ is uniquely determined by the linear observation
$\mymatrix\weightedbayes=\weightedpredic$.
\end{definition}

\begin{proposition}
\label{prop:observable-iff-rowspace}
A vector $\signalvec\in\R^{\feanum}$ is linearly observable if and only if $\signalvec\in\obsspace$.
\end{proposition}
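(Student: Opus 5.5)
The proof is the standard linear-algebra fact that a linear functional is determined on the solution set of a consistent system exactly when it lies in the row space; I will prove the two directions separately. Throughout I use that the system $\mymatrix\weightedbayes=\weightedpredic$ is consistent: the true weighted Bayes vector $\weightedbayes^\star$ with $\weightedbayes^\star_i=\feaprob_i\bayesprob_i$ solves it. Hence its solution set is the affine subspace $\weightedbayes^\star+\ker(\mymatrix)$. I read ``uniquely determined'' as constancy of $\signalvec^\top\weightedbayes$ over this solution set.

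For sufficiency, take $\signalvec\in\obsspace=\rowspace(\mymatrix)$ and write $\signalvec=\mymatrix^\top\mathbf{c}$ for some $\mathbf{c}\in\R^{\totalcells}$. Then every solution satisfies $\signalvec^\top\weightedbayes=\mathbf{c}^\top\mymatrix\weightedbayes=\mathbf{c}^\top\weightedpredic$. This value depends only on the observed data, so $\signalvec$ is linearly observable. Concretely, it is $\sum_{(j,a)}c_{j,a}\,\prediction_{j,a}\feasubsetprob(\reportvec^{j,a})$.

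For necessity, suppose $\signalvec\notin\obsspace$. Use the orthogonal decomposition $\R^{\feanum}=\rowspace(\mymatrix)\oplus\ker(\mymatrix)$, i.e.\ $\rowspace(\mymatrix)^\perp=\ker(\mymatrix)$. Write $\signalvec=\signalvec_{\parallel}+\mathbf{z}$ with $\signalvec_\parallel\in\obsspace$ and $\mathbf{z}\in\ker(\mymatrix)$. Since $\signalvec\notin\obsspace$, we have $\mathbf{z}\neq\vzero$. For any $t\in\R$, the vector $\weightedbayes^\star+t\mathbf{z}$ is still a solution. However, $\signalvec^\top(\weightedbayes^\star+t\mathbf{z})=\signalvec^\top\weightedbayes^\star+t\|\mathbf{z}\|^2$, which is not constant in $t$. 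So $\signalvec^\top\weightedbayes$ is not uniquely determined.

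The main obstacle is interpretive rather than technical: what ``uniquely determined'' ranges over. If one insists that alternative solutions correspond to genuine Bayes vectors, i.e.\ $\weightedbayes_i=\feaprob_i\bayesprob_i$ with $\bayesprob_i\in[0,1]$, then the perturbation $t\mathbf{z}$ must stay in the box $\prod_i[0,\feaprob_i]$. I would adopt the purely linear reading, which Definition~\ref{def:linearly-observable-vector} suggests by speaking only of the linear observation. I would also remark that under the box reading the argument still goes through whenever $\weightedbayes^\star$ lies in the interior of the box, i.e.\ $\bayesprob_i\in(0,1)$, by taking $|t|$ small. Under this interior condition both readings agree, so the characterization via $\obsspace$ is robust.
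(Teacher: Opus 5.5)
Your proof is correct and follows the paper's argument. For sufficiency you write $\signalvec=\mymatrix^\top\mathbf{c}$ to get $\signalvec^\top\weightedbayes=\mathbf{c}^\top\weightedpredic$, and for necessity you perturb a solution along a kernel direction $\mathbf{z}$ with $\signalvec^\top\mathbf{z}\neq 0$, obtained explicitly as the orthogonal component of $\signalvec$ (so $\signalvec^\top\mathbf{z}=\|\mathbf{z}\|^2$), where the paper simply invokes $\rowspace(\mymatrix)=(\ker\mymatrix)^\perp$; your remark on the box-constrained reading goes beyond the paper, which, like you, uses only the purely linear reading.
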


\begin{proof}
Suppose first that $\signalvec\in\obsspace$.
Then there exists $\xi\in\R^{\totalcells}$ such that $\signalvec=\mymatrix^\top\xi$.
For any $\weightedbayes$ satisfying $\mymatrix\weightedbayes=\weightedpredic$, we have
\begin{align*}
    \signalvec^\top\weightedbayes
    =
    \xi^\top \mymatrix\weightedbayes
    =
    \xi^\top\weightedpredic~.
\end{align*}
Thus $\signalvec^\top\weightedbayes$ is uniquely determined by $\weightedpredic$.

Conversely, suppose $\signalvec\notin\obsspace$.
Since $\obsspace=\rowspace(\mymatrix)=(\ker \mymatrix)^\perp$, there exists $z\in\ker\mymatrix$ such that $\signalvec^\top z\neq0$.
Fix any vector $\weightedbayes^{(0)}$ satisfying $\mymatrix\weightedbayes^{(0)}=\weightedpredic$.
For every scalar $\scalescalar\in\R$,
\begin{align*}
    \mymatrix(\weightedbayes^{(0)}+\scalescalar z)
    =
    \mymatrix\weightedbayes^{(0)}+\scalescalar \mymatrix z
    =
    \weightedpredic~.
\end{align*}
However,
$\signalvec^\top(\weightedbayes^{(0)}+\scalescalar z)
=
\signalvec^\top\weightedbayes^{(0)}+\scalescalar\signalvec^\top z$ depends nontrivially on $\scalescalar$.
Therefore $\signalvec^\top\weightedbayes$ is not uniquely determined by the linear observation.
\end{proof}

\xhdr{Observable label functional}
For each input component, its observable outcome mass is
$\obslabel(\reportvec^{j,a})\triangleq\prediction_{j,a}\feasubsetprob(\reportvec^{j,a})$.
We extend this quantity linearly to all of $\obsspace$.
Equivalently, if $\signalvec\in\obsspace$ and $\signalvec=\mymatrix^\top\xi$ for some $\xi\in\R^{\totalcells}$, we define
\begin{align*}
    \obslabel(\signalvec)
    \triangleq
    \xi^\top\weightedpredic~.
\end{align*}

\begin{lemma}
\label{lem:observable-label-functional}
The functional $\obslabel:\obsspace\to\R$ is well-defined.
Moreover, for every $\signalvec\in\obsspace$ and every Bayes-probability vector $(\bayesprob_i)_{i\in[\feanum]}$ under which the input experts report Bayesian posterior means, we have,
$\obslabel(\signalvec)
=
\feasubsetlabel(\signalvec)
=
\sum\nolimits_{i\in[\feanum]}\feaprob_i\bayesprob_i\signalvec_i$.
\end{lemma}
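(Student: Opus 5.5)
The proof has two parts: well-definedness of $\obslabel$, and its agreement with $\feasubsetlabel$. Both reduce to the fact that any Bayes vector consistent with the input experts solves the linear system $\mymatrix\weightedbayes=\weightedpredic$.

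\emph{Well-definedness.} Fix $\signalvec\in\obsspace$ and suppose $\signalvec=\mymatrix^\top\xi=\mymatrix^\top\xi'$. We must show $\xi^\top\weightedpredic=\xi'^\top\weightedpredic$. Set $\eta=\xi-\xi'$, so $\mymatrix^\top\eta=\vzero$, and it suffices to show $\eta^\top\weightedpredic=0$. Since the input experts are assumed calibrated, there is at least one Bayes-probability vector $(\bayesprob_i)$ consistent with them. Its weighted vector $\weightedbayes^{(0)}$, with $\weightedbayes^{(0)}_i=\feaprob_i\bayesprob_i$, therefore satisfies $\mymatrix\weightedbayes^{(0)}=\weightedpredic$. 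This holds coordinatewise because the $(j,a)$ row of $\mymatrix\weightedbayes^{(0)}$ equals $\feasubsetlabel(\reportvec^{j,a})$, which equals $\prediction_{j,a}\feasubsetprob(\reportvec^{j,a})$ by calibration. Hence
\[
\eta^\top\weightedpredic=\eta^\top\mymatrix\weightedbayes^{(0)}=(\mymatrix^\top\eta)^\top\weightedbayes^{(0)}=0 .
\]
In words, $\weightedpredic$ lies in the column space of $\mymatrix$, so it is orthogonal to $\ker\mymatrix^\top$.

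\emph{Agreement with $\feasubsetlabel$.} Take any consistent Bayes vector $(\bayesprob_i)$ with weighted vector $\weightedbayes$. As above, $\mymatrix\weightedbayes=\weightedpredic$. Writing $\signalvec=\mymatrix^\top\xi$, we get
\[
\obslabel(\signalvec)=\xi^\top\weightedpredic=\xi^\top\mymatrix\weightedbayes=\signalvec^\top\weightedbayes=\sum\nolimits_i\feaprob_i\bayesprob_i\signalvec_i=\feasubsetlabel(\signalvec).
\]
This is the same computation as in the first half of the proof of \Cref{prop:observable-iff-rowspace}.

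The only delicate step is the first: it needs the system $\mymatrix\weightedbayes=\weightedpredic$ to be consistent. Without that, $\weightedpredic$ need not lie in the column space of $\mymatrix$, and $\obslabel$ could depend on the choice of $\xi$. Consistency follows from the standing assumption that the input experts are calibrated, so some Bayes-probability vector exists under which they report posterior means. I would state this assumption explicitly at the start of the proof.
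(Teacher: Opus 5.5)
Your proof is correct and follows the paper's argument essentially verbatim: you get well-definedness from $(\xi-\xi')^\top\weightedpredic=(\xi-\xi')^\top\mymatrix\weightedbayes=0$, and then $\obslabel(\signalvec)=\xi^\top\mymatrix\weightedbayes=\signalvec^\top\weightedbayes$. Your explicit remark that this needs at least one calibrating Bayes vector, so that $\mymatrix\weightedbayes=\weightedpredic$ is consistent, states openly an assumption the paper uses only implicitly.
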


\begin{proof}
Suppose $\mymatrix^\top\xi=\mymatrix^\top\xi'$.
Then $\mymatrix^\top(\xi-\xi')=\vzero$.
Using $\weightedpredic=\mymatrix\weightedbayes$, we obtain
\begin{align*}
    (\xi-\xi')^\top\weightedpredic
    =
    (\xi-\xi')^\top\mymatrix\weightedbayes
    =
    \left(\mymatrix^\top(\xi-\xi')\right)^\top\weightedbayes
    =
    0~.
\end{align*}
Therefore $\xi^\top\weightedpredic=\xi'^\top\weightedpredic$, so $\obslabel(\signalvec)$ is independent of the representation $\signalvec=\mymatrix^\top\xi$.

Now fix any $\signalvec\in\obsspace$ and write $\signalvec=\mymatrix^\top\xi$.
Then
\begin{align*}
    \obslabel(\signalvec)
    =
    \xi^\top\weightedpredic
    =
    \xi^\top\mymatrix\weightedbayes
    =
    \left(\mymatrix^\top\xi\right)^\top\weightedbayes
    =
    \signalvec^\top\weightedbayes
    =
    \sum\nolimits_{i\in[\feanum]}\feaprob_i\bayesprob_i\signalvec_i
    =
    \feasubsetlabel(\signalvec)~.
\end{align*}
The proof completes.
\end{proof}

The next observation ensures that every nonzero vector in the observable cone defines a valid posterior-mean prediction value.
\begin{lemma}
\label{lem:observable-cone-valid-probability}
For every nonzero $\signalvec\in\obscone$, we have $\feasubsetprob(\signalvec)>0$ and
$\frac{\obslabel(\signalvec)}{\feasubsetprob(\signalvec)} \in [0, 1]$.
\end{lemma}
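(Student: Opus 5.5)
The plan is to combine positivity of the prior with Lemma~\ref{lem:observable-label-functional}, which identifies $\obslabel$ with the true label mass $\feasubsetlabel$ under any consistent Bayes vector.

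\emph{Positivity of the mass.} Take a nonzero $\signalvec\in\obscone=\obsspace\cap\R^{\feanum}_+$. Every coordinate of $\signalvec$ is nonnegative and at least one, say $\signalvec_{i^*}$, is strictly positive. As in the discussion preceding Definition~\ref{defn:observable-vectors}, we use $\feaprob_i>0$ for all $i$. Then $\feasubsetprob(\signalvec)=\sum_i\feaprob_i\signalvec_i\ge \feaprob_{i^*}\signalvec_{i^*}>0$, so the ratio is well defined.

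\emph{The ratio lies in $[0,1]$.} Fix any Bayes-probability vector $(\bayesprob_i)_{i\in[\feanum]}\in[0,1]^{\feanum}$ under which the input experts report Bayesian posterior means. Such a vector exists, since the input experts are assumed calibrated with respect to the true environment. Because $\signalvec\in\obsspace$, Lemma~\ref{lem:observable-label-functional} gives
\[
\obslabel(\signalvec)=\feasubsetlabel(\signalvec)=\sum\nolimits_i\feaprob_i\bayesprob_i\signalvec_i .
\]
Each summand satisfies $0\le \feaprob_i\bayesprob_i\signalvec_i\le\feaprob_i\signalvec_i$, because $\signalvec_i\ge0$, $\feaprob_i\ge0$, and $\bayesprob_i\in[0,1]$. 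Summing over $i$ yields $0\le\obslabel(\signalvec)\le\feasubsetprob(\signalvec)$. Dividing by $\feasubsetprob(\signalvec)>0$ gives $\obslabel(\signalvec)/\feasubsetprob(\signalvec)\in[0,1]$.

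\emph{Main obstacle.} The only delicate point is that $\obslabel$ is defined purely from observable data, through $\xi^\top\weightedpredic$. We must therefore know that at least one genuine Bayes vector in $[0,1]^{\feanum}$ is consistent with the inputs, so that the comparison with $\feasubsetlabel$ is meaningful. This is exactly the standing calibration assumption on the input experts: the true environment supplies such a vector. Once that is invoked, the argument is a direct termwise bound.
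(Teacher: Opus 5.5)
Your proposal is correct and follows the same route as the paper. Positivity of $\feasubsetprob(\signalvec)$ comes from $\feaprob_i>0$; then \Cref{lem:observable-label-functional} rewrites $\obslabel(\signalvec)$ as $\sum_i\feaprob_i\bayesprob_i\signalvec_i$, which you bound termwise using $\bayesprob_i\in[0,1]$. Your remark that some consistent Bayes vector must exist, supplied by the true environment, spells out an assumption the paper uses only implicitly.
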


\begin{proof}
Since $\signalvec\in\R_+^{\feanum}$ is nonzero and $\feaprob_i>0$ for every $i\in[\feanum]$, we have
$\feasubsetprob(\signalvec)=\sum\nolimits_i\feaprob_i\signalvec_i>0$.
By \Cref{lem:observable-label-functional}, we have 
$\obslabel(\signalvec)
=
\sum\nolimits_{i\in[\feanum]}\feaprob_i\bayesprob_i\signalvec_i$.
Since $0\le \bayesprob_i\le1$ and $\signalvec_i\ge0$ for every $i$, it follows that
$0\le \obslabel(\signalvec)\le \sum\nolimits_i\feaprob_i\signalvec_i=\feasubsetprob(\signalvec)$.
\end{proof}

\xhdr{Constructible experts}
We now define the class of (possibly randomized) experts that can be constructed from the observable linear information.

\begin{definition}[Constructible experts]
\label{def:linearly-constructible-predictors}
An expert $\classifier$ is constructible from $\feaprob$ and $(\givenclassifier_j)_{j\in[\classifiernum]}$ if there exist nonzero vectors
$\signalvec^{(1)},\ldots,\signalvec^{(\cellnum_{\classifier})}\in\obscone$ where $\cellnum_{\classifier} = |\supp(\CDF_\classifier)|$ such that
$\sum\nolimits_{\outputpreind\in[\cellnum_{\classifier}]}\signalvec^{(\outputpreind)}=\vone$, and, conditional on state $\fea_i$, expert $\classifier$ outputs prediction
$\prediction_\outputpreind\triangleq \obslabel(\signalvec^{(\outputpreind)})/\feasubsetprob(\signalvec^{(\outputpreind)})$ with probability $\signalvec^{(\outputpreind)}_i$.
Equivalently,\footnote{Throughout this work, we use $\delta_{(A)}$ to denote the Dirac point mass at $A$.}
\begin{align*}
    \classifier(\cdot\mid \fea_i)
    =
    \sum\nolimits_{\outputpreind\in[\cellnum_{\classifier}]}
    \signalvec^{(\outputpreind)}_i
    \delta_{(\prediction_\outputpreind)}
    \quad
    \text{for every }i\in[\feanum]~.
\end{align*}
\end{definition}

The requirement $\signalvec^{(\outputpreind)}\in\obscone$ ensures that the outcome mass of each output prediction can be computed from the observed calibration equations of the input experts.
The requirement $\sum\nolimits_\outputpreind\signalvec^{(\outputpreind)}=\vone$ ensures that the routing probabilities define a valid stochastic kernel from states to predictions.

\begin{lemma}
\label{lem:linear-constructible-calibrated}
Every constructible expert $\classifier\in\classifierfam$ generates Bayesian posterior-mean predictions for every Bayes-probability vector under which the input experts report Bayesian posterior-mean predictions.
\end{lemma}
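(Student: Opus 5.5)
To prove the lemma, fix any Bayes-probability vector $(\bayesprob_i)_{i\in[\feanum]}$ under which the input experts report Bayesian posterior means, and fix a constructible expert $\classifier$ with components $\signalvec^{(1)},\ldots,\signalvec^{(\cellnum_\classifier)}\in\obscone$ as in \Cref{def:linearly-constructible-predictors}. We must show that, for each reported value $\prediction$ in the support of $\classifier$, we have $\expect{\labelrea\mid \predrv_\classifier=\prediction}=\prediction$. The plan is to compute the joint probability of the report and of the outcome directly from the stochastic kernel, and then to identify the outcome mass using \Cref{lem:observable-label-functional}.

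\textbf{Case of distinct predictions.} Suppose first that the predictions $\prediction_\outputpreind$ are pairwise distinct. Conditional on $\fea_i$, expert $\classifier$ reports $\prediction_\outputpreind$ with probability $\signalvec^{(\outputpreind)}_i$. We assume, as in the standard model, that the report is drawn conditionally independently of $\labelrea$ given the state. Then
\[
\prob{\predrv_\classifier=\prediction_\outputpreind}=\sum_i\feaprob_i\signalvec^{(\outputpreind)}_i=\feasubsetprob(\signalvec^{(\outputpreind)}),
\qquad
\prob{\predrv_\classifier=\prediction_\outputpreind,\labelrea=1}=\sum_i\feaprob_i\bayesprob_i\signalvec^{(\outputpreind)}_i=\feasubsetlabel(\signalvec^{(\outputpreind)}).
\]
Since $\signalvec^{(\outputpreind)}\in\obsspace$, \Cref{lem:observable-label-functional} gives $\feasubsetlabel(\signalvec^{(\outputpreind)})=\obslabel(\signalvec^{(\outputpreind)})$. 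By \Cref{lem:observable-cone-valid-probability}, $\feasubsetprob(\signalvec^{(\outputpreind)})>0$, so we may divide. This yields $\expect{\labelrea\mid\predrv_\classifier=\prediction_\outputpreind}=\obslabel(\signalvec^{(\outputpreind)})/\feasubsetprob(\signalvec^{(\outputpreind)})=\prediction_\outputpreind$. The same lemma also confirms that $\prediction_\outputpreind\in[0,1]$, so the predictions are valid.

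\textbf{Repeated prediction values.} The only subtle point, and the main obstacle, is that several components might share the same prediction value. In that case the event $\{\predrv_\classifier=\prediction\}$ is the union over the set $H_\prediction=\{\outputpreind:\prediction_\outputpreind=\prediction\}$. For each $\outputpreind\in H_\prediction$ we have $\feasubsetlabel(\signalvec^{(\outputpreind)})=\prediction\,\feasubsetprob(\signalvec^{(\outputpreind)})$. Summing this identity over $H_\prediction$ gives a positive-label mass equal to $\prediction$ times the total mass, so calibration still holds. Equivalently, merging these components into their sum keeps the merged component in $\obscone$, because $\obscone$ is a convex cone, and leaves the prediction value unchanged. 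Hence \Cref{def:linearly-constructible-predictors} (with $\cellnum_\classifier=|\supp(\CDF_\classifier)|$) is consistent in either reading.

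Since the fixed Bayes vector was arbitrary among those consistent with the inputs, and $\obslabel$ does not depend on it, $\classifier$ is calibrated uniformly over all such vectors. This completes the plan.
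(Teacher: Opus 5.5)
Your proposal is correct and is essentially the paper's argument. You write the report mass and outcome mass of each component as $\feasubsetprob(\signalvec^{(\outputpreind)})$ and $\feasubsetlabel(\signalvec^{(\outputpreind)})$, replace the latter by $\obslabel(\signalvec^{(\outputpreind)})=\prediction_\outputpreind\feasubsetprob(\signalvec^{(\outputpreind)})$ via \Cref{lem:observable-label-functional}, and sum over components sharing a prediction value. The paper does the distinct and repeated cases in a single summation and leaves implicit the conditional independence of the report and $\labelrea$ given the state, which you state explicitly.
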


\begin{proof}
Let expert $\classifier\in\classifierfam$ be represented by nonzero vectors
$\signalvec^{(1)},\ldots,\signalvec^{(\cellnum_{\classifier})}\in\obscone$ with $\sum\nolimits_\outputpreind\signalvec^{(\outputpreind)}=\vone$ where $\cellnum_{\classifier} = |\supp(\CDF_\classifier)|$.
For each output prediction index $\outputpreind$, we define
$\prediction_\outputpreind\triangleq\obslabel(\signalvec^{(\outputpreind)})/\feasubsetprob(\signalvec^{(\outputpreind)})$.
By \Cref{lem:observable-cone-valid-probability}, we know $\prediction_\outputpreind\in[0,1]$.
The probability that $\classifier$ reports $\prediction$ is $\sum\nolimits_{\outputpreind\in[\cellnum_{\classifier}]:\prediction_\outputpreind=\prediction}\feasubsetprob(\signalvec^{(\outputpreind)})$.
The corresponding total outcome mass, by \Cref{lem:observable-label-functional}, is, 
\begin{align*}
    \sum\nolimits_{\outputpreind\in[\cellnum_{\classifier}]:\prediction_\outputpreind=\prediction}\obslabel(\signalvec^{(\outputpreind)})
    =
    \sum\nolimits_{\outputpreind\in[\cellnum_{\classifier}]:\prediction_\outputpreind=\prediction}\prediction_\outputpreind\feasubsetprob(\signalvec^{(\outputpreind)})
    =
    p
    \sum\nolimits_{\outputpreind\in[\cellnum_{\classifier}]:\prediction_\outputpreind=\prediction}\feasubsetprob(\signalvec^{(\outputpreind)})~.
\end{align*}
Thus, we have 
$\expect{\labelrea\mid \predrv_\classifier=\prediction}
= \prediction$.
Since this holds for every reported prediction value $\prediction$, every prediction generated by the expert $\classifier$ is a Bayesian posterior mean.
\end{proof}

We conclude this section with the following remark, which discusses the connection between our expert aggregation framework and the prior literature on robust forecast aggregation.
\begin{remark}
\label{rmk:relation-to-robust}
\wtedit{Robust forecast aggregation \citep{ABS-18,GHHKSY-25,FMNW-25} studies numerical aggregation rules that perform well under uncertainty about the experts' joint information structure. 
Our model can be viewed as imposing an additional posterior-consistency requirement on such aggregation rules: the aggregate report must itself be interpretable as a Bayesian posterior belief for every latent information structure consistent with the input experts. 
Under this requirement, from the analysis above, we know that any admissible output expert must be constructed from the observable linear information contained in the input experts alone; otherwise, two compatible information structures could induce the same aggregate report but different conditional outcome means. Thus, the robustness over unknown  (joint) information structures in robust forecast aggregation formulation is absorbed into a constructibility constraint, and our objective becomes to characterize and optimize over the resulting constructible experts.}
\end{remark}



\section{Efficient Algorithm to \texorpdfstring{\predicrefinelink}{Search-Refine}}
\label{subsec:efficient}

\newcommand{\targetexpertset}{\mathcal{T}}

In this section, we focus on the search problem {\predicrefinelink}, i.e., given a collection of $\classifiernum$ input experts $(\givenclassifier_\claind)_{\claind \in [\classifiernum]}$ and a target index $\targetindex \in [\classifiernum]$, we want to find an undominated expert $\classifier \in \strictdomfam$ among all constructible experts such that $\classifier$ weakly dominates the target expert $\targetclassifier$, i.e., $\classifier \blackwelldomeq \targetclassifier$.

\begin{theorem}
\label{thm:efficient}
    There is a polynomial-time algorithm (\Cref{alg:search}) which solves the {\predicrefinelink} problem.
\end{theorem}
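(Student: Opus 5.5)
The plan follows the roadmap in \Cref{subsec:tech}: reduce to a finite prediction set, reduce undominatedness to lexicographic optimality, and then recover the lexicographic optimum with linear-programming oracles.

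\textbf{Step 1: discretization.} Since $\obscone=\obsspace\cap\R_+^{\feanum}$ is a pointed polyhedral cone, every nonzero $\signalvec\in\obscone$ decomposes as $\sum_\rayind \coeff_\rayind \ray_\rayind$ with $\coeff_\rayind\ge0$ and $\ray_\rayind\in\extremerays$. Splitting each output component $\signalvec^{(\outputpreind)}$ of a constructible $\classifier$ into its extreme-ray pieces gives another constructible expert, by \Cref{def:linearly-constructible-predictors}, whose predictions lie in $\predictions$. Its prediction distribution is a mean-preserving spread of that of $\classifier$: $\obslabel$ and $\feasubsetprob$ are linear, so the piece predictions average back to $\prediction_\outputpreind$ with weights $\feasubsetprob(\coeff_\rayind\ray_\rayind)$. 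Hence an undominated expert dominating $\targetclassifier$, if one exists, can be sought among experts supported on $\predictions$.

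\textbf{Step 2: lexicographic reduction.} Order $\predictions$ increasingly and maximize $\lexobj(\classifier)$ over constructible $\classifier$ supported on $\predictions$ with $\classifier\blackwelldomeq\targetclassifier$. The feasible set is compact, since the masses live in a bounded polytope, so an optimum exists. On a common finite grid, the masses, $\CDF$ and $\SCDF$ determine each other. Suppose $\classifier\primed\blackwelldom\classifier$ and both are supported on $\predictions$ (using Step 1 to move $\classifier\primed$ there). Take the first grid point where the masses differ. Before it, $\SCDF$ agrees, and the dominance $\SCDF_{\classifier\primed}\ge\SCDF_\classifier$ just past that point forces $\classifier\primed$ to put strictly more mass there. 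So $\classifier\primed$ is lexicographically larger, and therefore a lex-optimum is undominated.

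If the optimum equals the target distribution, we return $\targetclassifier$. This is correct because any strict improvement would be lexicographically larger, by the same argument applied against $\targetclassifier$.

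The constraint $\classifier\blackwelldomeq\targetclassifier$ becomes linear by checking $\SCDF$ at the union of breakpoints, or equivalently through a martingale coupling indexed by the target components. Constructibility is linear in the component variables $\signalvec^{(\outputpreind)}\in\obscone$, with $\obsspace$ given by $\mymatrix^\top\xi$.

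\textbf{Step 3: polynomial support and bit complexity.} I need two facts.
\begin{itemize}
\item \emph{Small support.} Formulate an exponential-size LP over the weights of the extreme rays and apply vertex and binding-constraint counting at each lexicographic stage. This should show that some lex-optimum uses at most $\suppsize$ prediction values.
\item \emph{Bounded rationals.} Each $\ray\in\extremerays$ solves a linear system of $\feanum-1$ tight constraints with rational data of size at most $\maxint$. Cramer's rule then gives each value $\obslabel(\ray)/\feasubsetprob(\ray)$ a numerator and denominator at most $\premaxint$.
\end{itemize}

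\textbf{Step 4: the algorithm.} Suppose the first $\lvlind$ support points and their masses are known. For a candidate $\bound$, solve a polynomial LP, \ref{eq:feasibility-program}, over components in $\obscone$. It fixes the known prefix and asks whether positive mass can be placed on predictions at most $\bound$. Predictions are not variables; the constraint becomes ``label mass at most $\bound$ times probability mass'' on an aggregated residual component, which is linear. Its value is positive exactly when $\optpre_{\lvlind+1}\le\bound$, which gives a comparison oracle. Stern--Brocot search with exponential steps then pins down the rational $\optpre_{\lvlind+1}$ in time polynomial in $\feanum^3\log(\feanum\maxint)$. One more LP gives its maximal mass. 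After at most $\suppsize$ rounds, a final LP reconstructs the components.

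\textbf{Main obstacle.} The hardest part is the correctness of the comparison oracle. The issue is that a set of components with aggregate ratio at most $\bound$ need not split into optimal-support atoms. I would show that mass on an aggregate with ratio at most $\bound$ can be refined into extreme rays, which only moves mass to predictions at most $\bound$, and that this is consistent with the fixed prefix by lexicographic maximality. The support-counting argument is the second delicate point.
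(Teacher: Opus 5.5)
Your outline matches the paper's proof step for step. It uses extreme-ray decomposition onto $\predictions$, then lexicographic maximality $\Rightarrow$ undominatedness, a support bound via $\exprog$, a Cramer-type size bound, and $\feasprog$ as a comparison oracle inside Stern--Brocot search. Your explicit check that returns $\targetclassifier$ when the optimum has the target's distribution is a detail the paper glosses over.

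The gap is in the step you single out as the main obstacle. The mechanism you propose, that refining a component of ratio at most $\bound$ into extreme rays ``only moves mass to predictions at most $\bound$'', is false. An extreme-ray decomposition is a mean-preserving spread and generally sends part of the mass above $\bound$. Nothing a priori controls how the prefix atoms, or the unconstrained atoms of a $\feasprog$ solution, refine either.

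What works is to compare CDFs rather than supports. Let $\classifier^*$ be a lexicographic optimum, let $\classifier$ be the expert of a $\feasprog$ solution with value $\objval>0$, and suppose $\bound<\optpre_{\lvlind+1}$.
\begin{itemize}
\item The prefix atoms alone give $\CDF_\classifier(t)\ge\CDF_{\classifier^*}(t)$ for all $t<\optpre_{\lvlind+1}$; every other atom, whatever its ratio, only adds mass.
\item The level-$(\lvlind+1)$ atoms add a further $\objval$ on $[\bound,\optpre_{\lvlind+1})$.
\item Integrating, $\SCDF_\classifier\ge\SCDF_{\classifier^*}$ on $[0,\optpre_{\lvlind+1}]$, strictly on $(\bound,\optpre_{\lvlind+1}]$.
\item Because $\erd(\classifier)\blackwelldomeq\classifier\blackwelldomeq\targetclassifier$, the same holds for $\erd(\classifier)$, which is feasible for $\absprog$.
\item At the first grid point where its masses differ from those of $\classifier^*$ (necessarily below $\optpre_{\lvlind+1}$), it must carry more mass. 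So it is lexicographically larger, a contradiction.
\end{itemize}
The same comparison shows that the second solve returns exactly $\optprob_{\lvlind+1}$.

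For the support bound, ``vertex and binding-constraint counting'' on its own yields only $\feanum$ plus the number of tight Blackwell constraints. That number can be as large as $|\predictions|$, i.e.\ exponential. The idea you need is pruning inside each interval $[\prediction_{\targetindex,\cellind},\prediction_{\targetindex,\cellind+1})$ between consecutive target predictions. There $\SCDF_{\targetclassifier}$ is linear while $\SCDF_\classifier$ is convex. So if the Blackwell constraints at two grid points of the interval are tight, every extreme-ray weight on the grid points from the first up to (but excluding) the second vanishes. Each intermediate Blackwell constraint is then tight and implied by the two outer ones together with these tight nonnegativity constraints. Hence at most $2(\cellnum_\targetindex+1)$ Blackwell constraints enter the rank count. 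A vertex of $\exprog$, where a lexicographic optimum can be taken, therefore has at most $\suppsize$ positive weights.
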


\begin{figure}[t]
    \centering
    \resizebox{0.9\linewidth}{!}{

\begin{tikzpicture}[
    box/.style={
        draw=black,
        very thick,
        rectangle,
        align=center,
        inner sep=7pt,
        minimum height=1.15cm
    },
    bigbox/.style={
        draw=black,
        very thick,
        rectangle,
        align=center,
        inner sep=9pt,
        minimum height=1.8cm
    },
    groupbox/.style={
        draw=black!80,
        very thick,
        dashed,
        rectangle,
        inner sep=16pt
    },
    arrow/.style={
        -{Stealth[length=3.2mm,width=2.3mm]},
        line width=1.25pt,
        draw=blue!55!black
    },
    doublearrow/.style={
        {Stealth[length=3.2mm,width=2.3mm]}-{Stealth[length=3.2mm,width=2.3mm]},
        line width=1.25pt,
        draw=blue!55!black
    }
]

\node[box, text width=2.75cm, minimum height=1.85cm] (abs) at (0,0)
{
    {\Large \hyperref[eq:abstract-program]{$\absprog$}}\\[4pt]
    {\normalsize
    \Cref{lem:lex_obj_monotonicity},\\
    \Cref{cor:lexopt_vs_dom}}
};

\node[box, text width=2.55cm, minimum height=1.25cm] (ex) at (5.20,-4.05)
{
    {\Large \hyperref[eq:extreme-program]{$\exprog$}}\\[4pt]
    {\normalsize \Cref{lem:exprog_properties}}
};

\node[box, text width=3.45cm, minimum height=1.25cm] (support) at (9.40,-4.05)
{
    {\Large Small support}\\[4pt]
    {\normalsize \Cref{cor:small_support}}
};

\begin{scope}[on background layer]
    \node[groupbox, fit=(ex)(support)] (structgroup) {};
\end{scope}

\node[box, text width=3.0cm, minimum height=1.15cm] (lex) at ($(structgroup.north)+(0,4.75)$)
{
    {\Large \hyperref[eq:lexicographical-program]{$\lexprog$}}
};

\node[bigbox, text width=5.4cm, minimum height=1.95cm] (feas) at ($(lex.east)+(6.0,0)$)
{
    {\Large \hyperref[eq:feasibility-program]{\ref{eq:feasibility-program}}}\\[-1pt]
    {\Large $+$}\\[-1pt]
    {\Large Stern-Brocot Search}\\[4pt]
    {\normalsize
    \Cref{lem:numerical_complexity,lem:stern-brocot,lem:feasprog}}
};

\node[box, text width=2.75cm, minimum height=1.35cm] (alg) at ($(feas |- support)$)
{
    {\Large \Cref{alg:search}}
};

\node[
    font=\Large,
    inner sep=2pt
] at ($(ex.east)!0.5!(support.west)$) {$+$};

\draw[doublearrow] (abs.east) -- (lex.west);
\draw[doublearrow] (abs.east) -- (ex.west);

\draw[arrow] (lex.east) -- (feas.west);

\draw[arrow] (structgroup.north) -- (feas.west);

\draw[arrow] (feas.south) -- (alg.north);

\end{tikzpicture}
    }
    \caption{
        Proof flow for \Cref{thm:efficient}.
        The abstract lexicographical program \hyperref[eq:abstract-program]{$\absprog$}
        identifies the canonical undominated output expert.
        The source-labelled prototype \hyperref[eq:lexicographical-program]{$\lexprog$}
        motivates the feasibility oracle, while
        \hyperref[eq:extreme-program]{$\exprog$} and \Cref{cor:small_support}
        provide the structural sparsity of prediction levels needed to make the algorithm polynomial-time.
    }
    \label{fig:search-proof-flow}
\end{figure}

\subsection{Algorithm Idea}
\label{subsec:search-algorithm-idea}
We now describe the high-level proof idea behind \Cref{thm:efficient}.
The main difficulty is that the constructible class $\classifierfam$ is infinite: a constructible expert may use arbitrary nonzero atoms $\signalvec \in \obscone$, and the prediction attached to such an atom is $\feasubsetmean(\signalvec) = \obslabel(\signalvec) / \feasubsetprob(\signalvec)$.
The first step is to reduce this infinite search space to a finite space.
Since $\obscone = \obsspace \cap \reals_+^\feanum$ is a polyhedral cone, every atom $\signalvec \in \obscone$ can be decomposed into a nonnegative combination of extreme rays of $\obscone$.
Let $\extremerays$ denote the set of these extreme rays, normalized to have $\ell_1$-norm one, and define the finite prediction set
$\predictions
\triangleq
\{\feasubsetmean(\ray): \ray \in \extremerays\}$.
The key structural observation is that every constructible expert $\classifier \in \classifierfam$ can be weakly Blackwell-improved by an extreme-ray expert $\erd(\classifier)$ whose prediction values all lie in $\predictions$.
Therefore, in searching for an undominated output expert, it is without loss to restrict attention to experts whose reported predictions are supported on $\predictions$.
This reduction is formalized in \Cref{lem:finite_extreme_rays,lem:extreme_ray_decomposition}.

The second step is to choose a canonical undominated expert through lexicographical optimization.
To this end, let the set $\predictions = \{\finpre_1,\ldots,\finpre_{\numpredictions}\}$ with $\finpre_\preind < \finpre_{\preind+1}$.
For any expert $\classifier$ whose reported predictions are supported on $\predictions$, let $\preprob_\preind(\classifier)=\prob{\prediction_\classifier=\finpre_\preind}$. We define the following lexicographical objective
\begin{align*}
    \lexobj(\classifier)
    =
    \left(
    \preprob_1(\classifier),
    \preprob_2(\classifier),
    \ldots,
    \preprob_{\numpredictions}(\classifier)
    \right)~.
\end{align*}
We then consider the abstract lexicographical program (formulated in \ref{eq:abstract-program}), which maximizes $\lexobj(\classifier)$ over all experts $\classifier \in \finfam$ satisfying $\classifier \blackwelldomeq \targetclassifier$.
The reason this objective is useful is that, on the finite prediction set $\predictions$, the probability vector $(\preprob_\preind(\classifier))_\preind$, the CDF $\CDF_\classifier$, and the integrated CDF $\SCDF_\classifier$ all have a one-to-one correspondence w.r.t.\ each other.
Consequently, if another expert strictly Blackwell-dominates expert $\classifier$, then it strictly improves the lexicographical objective.
Thus any optimal solution to \ref{eq:abstract-program} is undominated and weakly dominates the target expert.
This is made formal in \Cref{lem:lex_obj_monotonicity} and \Cref{cor:lexopt_vs_dom}.

The remaining challenge is computational.
Although $\predictions$ is finite, it may be exponentially large, so \ref{eq:abstract-program} cannot be solved by enumerating all prediction values.
We overcome this using two additional structural properties.
First, there exists an optimal solution whose lexicographical objective has at most $\suppsize$ positive-probability prediction values.
To prove this property, we introduce an auxiliary exponential-size program (see \ref{eq:extreme-program}), which optimizes over extreme-ray decompositions directly. 
This program is not solved by the algorithm; it is only a proof device. 
Since \ref{eq:abstract-program} admits an optimal solution in the extreme-ray class, \ref{eq:extreme-program} and \ref{eq:abstract-program} have the same optimal lexicographical value. Thus, any sparsity property proved for an optimal solution of \ref{eq:extreme-program} transfers back to \ref{eq:abstract-program}.
Applying a vertex-counting argument to $\exprog$: the feasibility equations contribute at most $\feanum$ essential binding constraints, while the Blackwell constraints contribute at most two essential binding constraints on each interval between consecutive prediction values of the target expert $\targetclassifier$.
Thus, after accounting for the nonnegativity constraints, some optimal vertex solution uses at most $\suppsize$ positive extreme-ray variables, and hence 
the optimal lexicographical objective has at most $\suppsize$ positive-probability prediction values.
Second, every prediction value in $\predictions$ is a rational number whose numerator and denominator are bounded by $\premaxint$.
These two facts are proved in \Cref{cor:small_support} and \Cref{lem:numerical_complexity}.

\wtedit{
Before turning these structural properties into an efficient algorithm, it is useful to introduce an intermediate source-labelled formulation of the lexicographical problem (see \ref{eq:lexicographical-program}). 
If the finite prediction set $\predictions$ could be enumerated, one could solve the lexicographical problem through the program \ref{eq:lexicographical-program}. In this program, the variables $\signalvec^{\cellind,\preind}\in\obscone$ represent the portion of the target component $\cellind\in[\cellnum_{\targetindex}]$ that is split into the output prediction value $\finpre_{\preind}\in\predictions$. The source-mass constraint
$\sum\nolimits_{\preind\in[\numpredictions]}
    \feaprob(\signalvec^{\cellind,\preind})
    =
    \feaprob(\reportvec^{\targetindex,\cellind})$
and the source-posterior-mean constraint
$\sum\nolimits_{\preind\in[\numpredictions]}
\obslabel(\signalvec^{\cellind,\preind})
=
\prediction_{\targetindex,\cellind}
\feaprob(\reportvec^{\targetindex,\cellind})$
encode a martingale coupling from the prediction distribution of the target expert $\targetclassifier$ to the output prediction distribution, and therefore enforce the Blackwell constraint $\classifier\blackwelldomeq\targetclassifier$ linearly. The program \ref{eq:lexicographical-program} is still not efficient because it has one level for every prediction value in $\predictions$, which may be exponentially large. Nevertheless, it provides the source-labelled perspective from which our final polynomial-size feasibility programs are derived.}

The algorithm, detailed in \Cref{alg:search}, identifies the prediction values that have positive probability in an optimal lexicographical solution one by one. 
Suppose that the first $\lvlind$ such prediction values and their probabilities, $\optpre_1,\ldots,\optpre_\lvlind$ and $\optprob_1,\ldots,\optprob_\lvlind$, have already been identified. 
For a candidate prediction upper bound $\bound$, we solve a polynomial-size feasibility linear program (see \ref{eq:feasibility-program}). 
This program fixes the already identified lexicographical prefix and asks whether the optimal expert can assign positive additional probability to some prediction value at most $\bound$, while preserving constructibility and the Blackwell dominance constraint $\classifier \blackwelldomeq \targetclassifier$. 
\wtedit{Its optimal value is positive if and only if the next positive-probability prediction value satisfies $\optpre_{\lvlind+1} \le \bound$. }
Thus the program \ref{eq:feasibility-program} serves as a comparison oracle for the unknown rational number $\optpre_{\lvlind+1}$.

Since $\optpre_{\lvlind+1}$ belongs to $\predictions$ and therefore has numerator and denominator bounded by $\premaxint$, we can find it exactly using an accelerated search in the Stern-Brocot tree.
After the next prediction value is found, we solve \ref{eq:feasibility-program} once more with $\bound=\optpre_{\lvlind+1}$ to obtain its optimal probability $\optprob_{\lvlind+1}$.
Repeating this procedure for at most $\suppsize$ iterations would identify all positive levels of an optimal solution to \ref{eq:abstract-program}.
The final linear program then reconstructs the corresponding constructible expert.
By the lexicographical optimality argument above, the recovered expert is undominated and weakly Blackwell-dominates $\targetclassifier$, which proves \Cref{thm:efficient}.
A graph illustration of the proof flow is provided in \Cref{fig:search-proof-flow}.

\subsection{Finite Lexicographic Reduction}

\xhdr{Restricting to finite-support experts}
We first establish a seemingly weak but essential structural property of undominated experts $\strictdomfam$: every constructible expert $\classifier \in \classifierfam$ is weakly dominated by a finite-support expert $\classifier\primed$, where $\classifier\primed$ only reports predictions that belong to a certain finite set, \wtedit{denoted by} $\predictions$.
This is established through the following chain of argument: We already know that each prediction comes from an atom $\signalvec$ in the nonnegative cone $\obscone$, where the latter is the intersection of the subspace $\obsspace$ spanned by $\totalcells$ state-wise prediction-probability vectors, and the nonnegative orthant $\reals_+^\feanum$ \wtedit{(see \Cref{lem:linear-constructible-calibrated})}.
Now view $\obscone$ as the collection of nonnegative linear combinations of extreme rays.
Suppose $\totalcells \le n$ (otherwise we can replace $\totalcells$ with the dimensionality of the subspace $\obsspace$).
Then, each extreme ray is uniquely determined by no more than $\totalcells - 1$ ``walls'' of the nonnegative orthant (where each wall is a hyperplane defined by setting a particular coordinate to $0$).
As a result, the number of such extreme rays is finite.
We denote the collection of extreme rays by $\extremerays$ (each normalized such that the $\ell_1$-norm is $1$), formally specified in the proof of the lemma below.
To make the above formal:
\begin{lemma}
\label{lem:finite_extreme_rays}
    There exists a collection of vectors (namely, the extreme rays of $\obscone$) $\extremerays \subseteq \obscone$ such that (1) any vector $\signalvec \in \obscone$ can be written as a nonnegative linear combination
    of vectors in $\extremerays$, and (2) $|\extremerays| \le 2^\feanum$.
\end{lemma}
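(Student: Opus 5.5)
Since $\obscone=\obsspace\cap\R_+^{\feanum}$ is the intersection of a linear subspace with the nonnegative orthant, it is a polyhedral cone. It is also pointed, because it lies in $\R_+^{\feanum}$ and hence contains no line. The plan is to apply the Minkowski--Weyl theorem for pointed polyhedral cones and then bound the number of extreme rays by a support-counting argument.

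\textbf{Defining the rays.} For each subset $S\subseteq[\feanum]$, consider the face $F_S\triangleq\{\signalvec\in\obscone:\signalvec_i=0\ \forall i\notin S\}$. Call a nonzero $\signalvec\in\obscone$ \emph{minimal-support} if no nonzero $\signalvec'\in\obscone$ has $\supp(\signalvec')\subsetneq\supp(\signalvec)$. The key claim is that for each $S$, there is at most one minimal-support vector with support exactly $S$, up to positive scaling.

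To see this, suppose $\signalvec,\signalvec'$ are two such vectors that are not proportional. Take $t^*=\min_{i\in S}\signalvec_i/\signalvec'_i>0$. Then $\signalvec-t^*\signalvec'$ lies in $\obsspace$, is nonnegative, and is nonzero by non-proportionality, but its support is strictly smaller than $S$, contradicting minimality. Normalizing each minimal-support vector to $\ell_1$-norm one and collecting them defines $\extremerays$. Since there is at most one element per subset $S$, we get $|\extremerays|\le 2^{\feanum}$, which proves (2).

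\textbf{Decomposition.} For (1), I would induct on $|\supp(\signalvec)|$ for nonzero $\signalvec\in\obscone$; the zero vector is trivially the empty combination.
\begin{itemize}
\item If $\signalvec$ is minimal-support, it is a positive multiple of an element of $\extremerays$, and we are done.
\item Otherwise, pick a nonzero $\signalvec'\in\obscone$ with strictly smaller support. By descending further, we may take $\signalvec'$ to be minimal-support, so it is a positive multiple of some $\ray\in\extremerays$. Let $t^*=\min_{i\in\supp(\ray)}\signalvec_i/\ray_i$. Then $\signalvec-t^*\ray\in\obsspace\cap\R_+^{\feanum}$ and has strictly smaller support than $\signalvec$. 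The inductive hypothesis decomposes $\signalvec-t^*\ray$, and adding back $t^*\ray$ gives the decomposition of $\signalvec$.
\end{itemize}
This is the standard elementary proof of the conic Carath\'eodory/Minkowski theorem and avoids citing it.

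\textbf{Main obstacle.} The only delicate point is the uniqueness-per-support step, which requires $t^*>0$ and the nonzero-ness of $\signalvec-t^*\signalvec'$. Both follow because the two vectors share the support $S$ and are not proportional. One could alternatively identify these vectors with the genuine extreme rays of $\obscone$, but that identification is not needed for the statement. It suffices that $\extremerays\subseteq\obscone$, which holds by construction, and that every element of $\obscone$ is a nonnegative combination of elements of $\extremerays$.
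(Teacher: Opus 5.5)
Your proof is correct, but it takes a different route from the paper's. The paper passes to the bounded slice $\polytope=\{\signalvec\in\obsspace:\signalvec\ge\vzero,\ \|\signalvec\|_1\le 1\}$ and defines $\extremerays$ as its nonzero vertices. It gets (1) by writing $\signalvec/\|\signalvec\|_1$ as a convex combination of vertices of $\polytope$. It gets (2) by observing that a nonzero vertex must have the $\ell_1$ constraint tight, together with $d-1$ tight coordinate constraints, where $d=\dim\obsspace$; this gives at most $\binom{\feanum}{d-1}\le 2^{\feanum}$ such vertices.

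You instead work with minimal-support vectors and use one ratio-subtraction trick twice. The first use shows uniqueness per support, which gives the count by enumerating supports. The second drives a greedy induction on support size, which gives the decomposition.

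What each approach buys:
\begin{itemize}
\item Yours is elementary, constructive and self-contained. It needs neither the vertex representation of polytopes nor the correspondence between vertices and linearly independent binding constraints.
\item The paper's gives the finer count $\binom{\feanum}{d-1}$. More usefully, it exhibits each ray as the unique solution of $\|\ray\|_1=1$ plus coordinate equations $\ray_i=0$, which is exactly the description invoked later in \Cref{lem:numerical_complexity}.
\end{itemize}

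The identification with genuine extreme rays, which you set aside, follows in a few lines from your own two steps. Suppose $\signalvec$ is minimal-support and $\signalvec=a+b$ with nonzero $a,b\in\obscone$. Then $\supp(a)\subseteq\supp(\signalvec)$, so minimality forces equality, and $a$ is itself minimal-support; by your uniqueness claim $a$ (and likewise $b$) is a positive multiple of $\signalvec$. Conversely, if $\signalvec$ is not minimal-support, your split $\signalvec=t^*\ray+(\signalvec-t^*\ray)$ writes it as a sum of two nonzero elements of $\obscone$ with strictly smaller support, so it is not extreme. Hence your $\extremerays$ coincides with the paper's, and the downstream uses go through unchanged.
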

\begin{proof}
    Let
    \[
        \polytope = \{\signalvec \in \obsspace \mid \signalvec \ge \vzero \text{ and } \|\signalvec\|_1 \le 1\}~.
    \]
    Note that $\polytope$ is a bounded, finite polytope.
    Let $\extremerays$ be the non-zero vertices of $\polytope$.
    Below we argue that $\extremerays$ satisfies both conditions in the lemma.
    For each $\signalvec \in \obscone$, there exists $\signalvec\primed \in \polytope$ such that $\signalvec = \signalvec\primed \cdot \|\signalvec\|_1$, and $\signalvec\primed$ can be written as a convex combination of vectors in $\extremerays$, so $\signalvec$ can be written as a nonnegative linear
    combination of the same vectors.
    To bound the size of $\extremerays$, observe that $\polytope$ is $d$-dimensional (here, $d$ is the dimensionality of $\obsspace$) where $d \le \feanum$, and is defined by $\feanum + 1$ constraints.
    So, each non-zero vertex $\signalvec$ of $\polytope$ is uniquely determined by $d \le \feanum$ binding constraints: the constraint that $\|\signalvec\|_1 \le 1$, together with $d - 1$ constraints, each specifying a dimension $i$ and requiring that $\signalvec_i \ge 0$. 
    There are at most $\binom{\feanum}{d - 1} \le \binom{\feanum}{\feanum / 2} \le 2^\feanum$ ways to choose these constraints, which means $|\extremerays| \le 2^\feanum$.
\end{proof}

We let $\numrays = |\extremerays| \le 2^\feanum$.
Here, we do not try to optimize the bound on $\numrays$, since anything finite suffices for our purposes.
Now we can formally define the finite set of predictions $\predictions$ that we care about: $\predictions = \{\feasubsetmean(\ray) \mid \ray \in \extremerays\}$.
We immediately have:
\begin{corollary}
    $|\predictions| \le |\extremerays| = \numrays \le 2^\feanum$.
\end{corollary}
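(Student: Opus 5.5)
The corollary follows in two short steps from \Cref{lem:finite_extreme_rays} and the definition of $\predictions$.

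For the first inequality, note that $\predictions$ is by definition the image of $\extremerays$ under the map $\ray \mapsto \feasubsetmean(\ray)$. The image of a finite set under any function has cardinality at most that of the domain, so $|\predictions| \le |\extremerays|$. Collisions, where two distinct extreme rays induce the same posterior value, only make the inequality strict.

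One point needs checking: that this map is well-defined on all of $\extremerays$. Each $\ray \in \extremerays$ is a nonzero vector in $\obscone$, because the lemma takes the nonzero vertices of $\polytope$. By \Cref{lem:observable-cone-valid-probability}, $\feasubsetprob(\ray) > 0$. Hence $\feasubsetmean(\ray) = \obslabel(\ray)/\feasubsetprob(\ray)$ is a well-defined number in $[0,1]$.

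Here $\feasubsetmean$ should be read with the observable label functional $\obslabel$, which agrees with $\feasubsetlabel$ by \Cref{lem:observable-label-functional}. This reading makes $\predictions$ depend only on observable data, though this does not affect the counting.

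For the remaining part, the equality $|\extremerays| = \numrays$ is the definition of $\numrays$. The bound $\numrays \le 2^\feanum$ is conclusion (2) of \Cref{lem:finite_extreme_rays}. Chaining the three relations gives the claim.

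There is no real obstacle. The only step requiring care is the well-definedness of $\feasubsetmean$ on $\extremerays$, which the earlier positivity lemma already supplies.
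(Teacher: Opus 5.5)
Your proof is correct and follows the paper's argument: $\predictions$ is the image of $\extremerays$ under $\ray\mapsto\feasubsetmean(\ray)$, so $|\predictions|\le|\extremerays|=\numrays$, and the bound $\numrays\le 2^\feanum$ comes directly from \Cref{lem:finite_extreme_rays}. Your check that $\feasubsetprob(\ray)>0$ for every nonzero extreme ray, which makes $\feasubsetmean$ well-defined on $\extremerays$, is a small addition that the paper leaves implicit.
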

\begin{proof}
    Every prediction $\finpre \in \predictions$ can be identified with an extreme ray $\ray \in \extremerays$ (though multiple extreme rays may lead to the same prediction), which means the number of distinct predictions in $\predictions$ is no larger than $|\extremerays|$.
\end{proof}
We let $\numpredictions = |\predictions| \le \numrays$.
For notational simplicity, suppose $\predictions = \{\finpre_1, \dots, \finpre_{\numpredictions}\}$, where, without loss of generality, $\finpre_\preind < \finpre_{\preind + 1}$ for each $\preind \in [\numpredictions - 1]$.
Moreover, for each $\preind \in [\numpredictions]$, let $\numrays_\preind$ be the number of rays $\ray \in \extremerays$ such that $\feasubsetmean(\ray) = \finpre_\preind$, and $\extremerays_\preind = \{\ray_{\preind, 1}, \dots, \ray_{\preind, \numrays_\preind}\}$ be the collection of rays $\ray \in \extremerays$ such that $\feasubsetmean(\ray) = \finpre_\preind$.

Now consider any expert $\givenclassifier$.
To construct a finite-support expert that weakly dominates $\givenclassifier$, we perform an ``extreme ray decomposition'' of $\givenclassifier$.
We visit each possible prediction $\prediction$ reported by $\givenclassifier$, take its corresponding state-wise prediction-probability vector $\signalvec$, and replace it with a nonnegative linear combination of vectors in $\extremerays$ that produces $\signalvec$.
As a result, we obtain a ``weak spread'' of $\givenclassifier$, which by definition weakly dominates $\givenclassifier$.
Moreover, every possible prediction reported by this new expert lies within $\predictions$, making it finite-support.

To make the above formal, we properly define several essential restricted classes of constructible experts:
\begin{itemize}
    \item The extreme class $\rfinfam$: An expert $\classifier = \{\wtedit{\signalvec^{(1)}, \dots, \signalvec^{(\cellnum_\classifier)}}\}$ is in $\rfinfam$ iff for every $\outputpreind \in [\cellnum_\classifier]$, there exists $\preind(\outputpreind) \in [\numpredictions]$ such that $\prediction_{\classifier, \outputpreind} = \feasubsetmean(\wtedit{\signalvec^{(\outputpreind)}}) = \finpre_{\preind(\outputpreind)} \in \predictions$, and moreover, there exist coefficients $(\coeff_{\rayind})_{\rayind \in [\numrays_{\preind(\outputpreind)}]} \ge \vzero$ such that $\wtedit{\signalvec^{(\outputpreind)}} = \sum_{\rayind \in [\numrays_{\preind(\outputpreind)}]} \coeff_\rayind \cdot \ray_{\preind(\outputpreind), \rayind}$.
    In words, $\rfinfam$ contains all experts $\classifier$ where each reported prediction $\prediction_{\classifier, \outputpreind}$ lies within $\predictions$, and corresponds to a nonnegative linear combination of extreme rays all leading to the same prediction.
    \item The finite-support class $\finfam$: An expert $\classifier = \{\wtedit{\signalvec^{(1)}, \dots, \signalvec^{(\cellnum_\classifier)}}\}$ is in $\finfam$ iff for every $\outputpreind \in [\cellnum_\classifier]$, $\feasubsetmean(\wtedit{\signalvec^{(\outputpreind)}}) \in \predictions$.
    In words, $\finfam$ contains all experts whose reported predictions always lie within $\predictions$.
    \item The undominated extreme class $\findomfam$: $\findomfam = \rfinfam \cap \strictdomfam = \{\classifier \in \rfinfam \mid \not\exists \classifier\primed \in \classifierfam:\, \classifier\primed \blackwelldom \classifier\}$.
\end{itemize}

Note that by definition, $\findomfam \subseteq \rfinfam \subseteq \finfam$.
Our algorithm works directly on the relatively clean class of $\finfam$, but we also need to operate with the other two classes to establish essential technical properties required by the algorithm.
To begin with, we will show that $\findomfam$ is non-empty.
To this end, we first establish the following claim through extreme ray decompositions sketched above:
\begin{lemma}
\label{lem:extreme_ray_decomposition}
    For each expert $\classifier \in \classifierfam$, there exists $\classifier\primed \in \rfinfam$ (denoted by $\classifier\primed = \erd(\classifier)$) such that $\classifier\primed \blackwelldomeq \classifier$.
\end{lemma}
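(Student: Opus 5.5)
Fix $\classifier \in \classifierfam$ with atoms $\signalvec^{(1)},\ldots,\signalvec^{(\cellnum_\classifier)} \in \obscone$ summing to $\vone$, and predictions $\prediction_\outputpreind = \feasubsetmean(\signalvec^{(\outputpreind)})$. By \Cref{lem:finite_extreme_rays}, I will write each atom as $\signalvec^{(\outputpreind)} = \sum_{\ray \in \extremerays} c_{\outputpreind,\ray}\,\ray$ with $c_{\outputpreind,\ray} \ge 0$. Then I will aggregate across atoms and group by prediction level: for each $\preind \in [\numpredictions]$, set $\mathbf{w}^{(\preind)} \triangleq \sum_{\outputpreind} \sum_{\rayind \in [\numrays_\preind]} c_{\outputpreind,\ray_{\preind,\rayind}}\,\ray_{\preind,\rayind}$. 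Discarding zero vectors, these lie in $\obscone$ (it is a convex cone) and sum to $\sum_\outputpreind \signalvec^{(\outputpreind)} = \vone$. Since $\obslabel$ and $\feasubsetprob$ are linear on $\obsspace$ and every ray in $\extremerays_\preind$ has ratio $\finpre_\preind$, each nonzero $\mathbf{w}^{(\preind)}$ has $\feasubsetmean(\mathbf{w}^{(\preind)}) = \finpre_\preind$ and is by construction a nonnegative combination of rays in $\extremerays_\preind$. So the expert $\classifier\primed$ defined by $\{\mathbf{w}^{(\preind)}\}$ (one atom per distinct prediction, matching the support-size convention of \Cref{def:linearly-constructible-predictors}) is constructible and lies in $\rfinfam$.

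To show $\classifier\primed \blackwelldomeq \classifier$, I will exhibit a martingale coupling, i.e.\ verify the integrated-CDF inequality directly. For any $t$, $(t-\cdot)_+$ is convex, and for each $\outputpreind$ with $\feasubsetprob(\signalvec^{(\outputpreind)})>0$ the weights $\theta_{\outputpreind,\ray} \triangleq c_{\outputpreind,\ray}\feasubsetprob(\ray)/\feasubsetprob(\signalvec^{(\outputpreind)})$ form a probability vector whose $\theta$-average of $\feasubsetmean(\ray)$ equals $\obslabel(\signalvec^{(\outputpreind)})/\feasubsetprob(\signalvec^{(\outputpreind)}) = \prediction_\outputpreind$, again by linearity of $\obslabel$ and $\feasubsetprob$. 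Jensen then gives $\feasubsetprob(\signalvec^{(\outputpreind)})(t-\prediction_\outputpreind)_+ \le \sum_\ray c_{\outputpreind,\ray}\feasubsetprob(\ray)(t-\feasubsetmean(\ray))_+$. Summing over $\outputpreind$ and regrouping rays by level $\preind$ gives exactly $\SCDF_\classifier(t) \le \SCDF_{\classifier\primed}(t)$, since $\SCDF_{\classifier\primed}(t) = \sum_\preind \feasubsetprob(\mathbf{w}^{(\preind)})(t-\finpre_\preind)_+$, and similarly for $\classifier$, after merging atoms sharing a prediction.

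The main point needing care is that $\feasubsetmean$ of a merged combination equals the common level, and that rays with $\feasubsetprob(\ray)=0$ cannot occur. The second holds because $\ray \ge \vzero$ is nonzero and $\feaprob_i>0$, so $\feasubsetprob(\ray)>0$ by \Cref{lem:observable-cone-valid-probability}. This makes the ratios and the probability weights well defined. Everything else is bookkeeping.
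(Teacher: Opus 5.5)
Your proposal is correct and follows the paper's own construction: you decompose each atom into extreme rays, pool the rays by prediction level into one atom per value in $\predictions$, and observe that the result is a mean-preserving spread of $\classifier$. The only difference is that your Jensen argument on the hinge functions $(t-\cdot)_+$ spells out the dominance step, which the paper asserts in a single line.
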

\begin{proof}
    Suppose $\classifier = \{\wtedit{\signalvec^{(1)}, \dots, \signalvec^{(\cellnum_\classifier)}}\}$.
    Consider $\classifier\primed$ constructed as follows:
    For each $\outputpreind \in [\cellnum_\classifier]$, let $\{\coeff_{\outputpreind, \preind, \rayind}\}_{\preind \in [\numpredictions], \rayind \in [\numrays_\preind]}$ be nonnegative coefficients such that $\wtedit{\signalvec^{(\outputpreind)}} = \sum_{\preind, \rayind} \coeff_{\outputpreind, \preind, \rayind} \cdot \ray_{\preind, \rayind}$.
    The existence of these coefficients is guaranteed by \Cref{lem:finite_extreme_rays}.
    Let
    \[
        \classifier\primed = \left\{\sum_{\outputpreind \in [\cellnum_\classifier], \rayind \in [\numrays_\preind]} \coeff_{\outputpreind, \preind, \rayind} \cdot \ray_{\preind, \rayind}\right\}_{\preind\wtedit{\in[\numpredictions]}}.
    \]
    Note that $\classifier\primed$ reports at most $\numpredictions$ different prediction values by construction.
    Also, $\classifier\primed \blackwelldomeq \classifier$ because the former is obtained by replacing each reported-prediction atom of the latter with a mean-preserving spread.
    In particular, $\{\coeff_{\outputpreind, \preind, \rayind} \cdot \ray_{\preind, \rayind}\}_{\preind, \rayind}$ spreads $\wtedit{\signalvec^{(\outputpreind)}}$ for each $\outputpreind$.
\end{proof}

As a corollary, we show that there exist undominated experts that are finite-support (and in fact extreme-ray-based):
\begin{corollary}
    $\findomfam = \{\classifier \in \rfinfam \mid \not\exists \classifier\primed \in \rfinfam:\, \classifier\primed \blackwelldom \classifier\}$.
    As a result, $\findomfam \ne \emptyset$.
\end{corollary}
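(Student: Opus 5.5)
The corollary has two parts: the set equality, and non-emptiness. I will prove each inclusion of the equality directly, then obtain non-emptiness by a compactness argument inside $\rfinfam$.

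For the inclusion $\subseteq$, take $\classifier\in\findomfam=\rfinfam\cap\strictdomfam$. No $\classifier\primed\in\classifierfam$ satisfies $\classifier\primed\blackwelldom\classifier$. Since $\rfinfam\subseteq\classifierfam$, in particular no such $\classifier\primed$ lies in $\rfinfam$.

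For the reverse inclusion, take $\classifier\in\rfinfam$ that is not strictly dominated within $\rfinfam$. Suppose some $\classifier\primed\in\classifierfam$ had $\classifier\primed\blackwelldom\classifier$. By \Cref{lem:extreme_ray_decomposition}, $\erd(\classifier\primed)\in\rfinfam$ and $\erd(\classifier\primed)\blackwelldomeq\classifier\primed$. Pointwise inequalities of integrated CDFs are transitive, so $\SCDF_{\erd(\classifier\primed)}\ge\SCDF_{\classifier\primed}\ge\SCDF_\classifier$ everywhere. At the point $t$ where $\SCDF_{\classifier\primed}(t)>\SCDF_\classifier(t)$ the inequality remains strict. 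Hence $\erd(\classifier\primed)\blackwelldom\classifier$, a contradiction. This proves the equality.

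For non-emptiness, it suffices by the equality to find an element of $\rfinfam$ that is maximal for strict dominance within $\rfinfam$. Every expert in $\rfinfam$ reports predictions in the finite set $\predictions$. So its prediction distribution is a vector $(\preprob_\preind)_{\preind\in[\numpredictions]}$ in the simplex, and $\SCDF$ is a linear function of this vector. I will maximize the continuous functional $\Phi(\classifier)=\int_0^1\SCDF_\classifier(t)\,\dd t$ over $\rfinfam$. If $\classifier\primed\blackwelldom\classifier$, then $\Phi(\classifier\primed)>\Phi(\classifier)$: the integrated CDFs are continuous (even Lipschitz), so a strict gap at one point persists on an interval. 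A maximizer of $\Phi$ is therefore undominated in $\rfinfam$.

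The main obstacle is showing that a maximizer exists, i.e.\ compactness of the attainable set of distribution vectors. I would realize this set as the projection of a polytope. The variables are $\coeff_{\preind,\rayind}\ge0$, subject to $\sum_{\preind,\rayind}\coeff_{\preind,\rayind}\ray_{\preind,\rayind}=\vone$. The map sends $\coeff$ to $\preprob_\preind=\sum_\rayind\coeff_{\preind,\rayind}\feasubsetprob(\ray_{\preind,\rayind})$. This polytope is bounded, because each ray is nonzero and nonnegative and $\vone$ bounds every coordinate. It is non-empty, because applying $\erd$ to the trivial expert with the single atom $\vone\in\obscone$ gives a member. Grouping levels with equal prediction matches the definition of $\rfinfam$, and levels with $\preprob_\preind=0$ are simply dropped. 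Then $\Phi$ is a linear functional over a non-empty polytope, so it attains its maximum. That maximizer lies in $\findomfam$, which is therefore non-empty.
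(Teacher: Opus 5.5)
Your proof is correct. The set equality is argued exactly as in the paper: pass a dominating $\classifier\primed\in\classifierfam$ through $\erd$ and use transitivity. You are in fact slightly more careful than the paper, since you check that strictness survives the composition $\erd(\classifier\primed)\blackwelldomeq\classifier\primed\blackwelldom\classifier$.

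Where you genuinely differ is non-emptiness. The paper simply asserts that it follows because $\blackwelldom$ is a partial order and $\rfinfam\neq\emptyset$. But $\rfinfam$ is a continuum family, and a strict partial order on an infinite set need not have maximal elements, so some compactness input is needed. Your argument supplies exactly that:
\begin{itemize}
\item The prediction distributions of $\rfinfam$ are the linear image of the bounded, non-empty polytope $\{\coeff\ge 0:\sum\coeff_{\preind,\rayind}\ray_{\preind,\rayind}=\vone\}$. This is the feasible region of \ref{eq:extreme-program} without the dominance constraints.
\item $\Phi$ is a linear functional that strictly increases along $\blackwelldom$, so a maximizer exists and is undominated within $\rfinfam$.
\end{itemize}
Note that $\Phi(\classifier)=\tfrac12\expect{(1-\predrv_\classifier)^2}$. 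Since the mean is fixed, you are just maximizing $\expect{\predrv_\classifier^2}$, i.e., minimizing Brier loss. This is a single-objective analogue of the paper's later lexicographic device (\Cref{lem:lex_obj_monotonicity} together with \ref{eq:extreme-program}), which could also close this step but only appears afterwards.

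One direction you leave implicit: every member of $\rfinfam$ must arise from some feasible $\coeff$, so that the maximum over the polytope image is a maximum over all of $\rfinfam$. This follows by reading off the ray coefficients of each atom, as in the first bullet of \Cref{lem:exprog_properties}. Also, $\vone\in\obscone$ holds because each input expert's components sum to $\vone$, which justifies your non-emptiness witness.
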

\begin{proof}
    We only need to show that for any $\classifier \in \rfinfam$, $\classifier \in \strictdomfam$ if there exists no $\classifier\primed \in \rfinfam$ that strictly dominates $\classifier$.
    Assume the latter condition, and suppose towards a contradiction that there exists $\classifier\primed \in \classifierfam \setminus \rfinfam$ that dominates $\classifier$.
    Then, by \Cref{lem:extreme_ray_decomposition}, there exists $\classifier\doubleprimed = \erd(\classifier\primed) \in \rfinfam$ such that $\classifier\doubleprimed \blackwelldomeq \classifier\primed \blackwelldom \classifier$, a contradiction.
    Now we can write $\findomfam = \{\classifier \in \rfinfam \mid \not\exists \classifier\primed \in \rfinfam:\, \classifier\primed \blackwelldom \classifier\}$, which by definition is non-empty since $\blackwelldom$ is a partial order and $\rfinfam$ is non-empty.
\end{proof}

Note that in the above, the numbers of extreme rays and possible predictions can be exponential in $\feanum$ and / or $\totalcells$.
As a result, the above property does not imply efficient algorithms in any straightforward way (e.g., one might be tempted to set up a linear program based on extreme rays, which would involve exponentially many variables).
Nonetheless, given this property, we can focus on the sub-family of undominated experts whose reported predictions lie in the finite set $\predictions$, i.e., $\finfam \cap \strictdomfam$ (we do not focus on $\findomfam$ directly for technical reasons that will become clear later).
Our problem thus becomes: find an expert $\classifier \in \finfam \cap \strictdomfam$ such that $\classifier \blackwelldomeq \targetclassifier$.

\xhdr{A lexicographical objective}
In light of the above, we focus our attention to the family of experts whose reported predictions lie in $\predictions$ (i.e., $\finfam$, which subsumes $\findomfam$ as a sub-family).
Restricted to this family, we formulate a constrained lexicographical optimization problem, whose solution necessarily satisfies the conditions of our search problem.
Suppose we want to optimize $\classifier$.
We first maximize the probability that $\classifier$ reports the prediction $\finpre_1$; then, among all candidates that report $\finpre_1 \in \predictions$ with the same probability, we further maximize the probability that $\classifier$ reports the prediction $\finpre_2 \in \predictions$, and so on, so forth.
All this is subject to the constraint that $\classifier \blackwelldomeq \targetclassifier$.
Formally, we define the lexicographical objective $\lexobj$, such that for $\classifier \in \finfam$,
\[
    \lexobj(\classifier) = (\preprob_1(\classifier), \preprob_2(\classifier), \dots, \preprob_{\numpredictions}(\classifier))~.
\]
where, as a shorthand and with a slight abuse of notation, we let $\preprob_\preind(\classifier) = \prob{\prediction_\classifier = \finpre_\preind}$ for each $\preind \in [\numpredictions]$.
We omit the dependency on $\classifier$ whenever it is clear from the context.
We formulate the abstract (in the sense that we will make it more concrete later) program below:
\begin{equation}
\tag{$\absprog$}
\label{eq:abstract-program}
\begin{aligned}
    \max \quad
    &
    \lexobj(\classifier) &&
    \\
    \text{s.t.}\quad
    &
    \classifier \in \finfam \text{ and } \classifier \blackwelldomeq \targetclassifier~.
    && 
    \\
\end{aligned}
\end{equation}

Here, $\max$ denotes lexicographical maximization.
This is a well-defined optimization problem.
Moreover, any optimal solution to this lexicographical optimization problem must also be a solution to our search problem {\predicrefinelink}.
This is because of the following property of $\lexobj$:
\begin{lemma}
\label{lem:lex_obj_monotonicity}
    For any \wtedit{pair of experts} $\classifier$ and $\classifier\primed$ in $\finfam$, if $\classifier\primed \blackwelldomeq \classifier$ (resp.\ $\classifier\primed \blackwelldom \classifier$), then $\lexobj(\classifier\primed) \ge \lexobj(\classifier)$ (resp.\ $\lexobj(\classifier\primed) > \lexobj(\classifier)$).
\end{lemma}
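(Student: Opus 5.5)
Both $\classifier$ and $\classifier\primed$ lie in $\finfam$, so their prediction distributions are supported on the finite ordered set $\predictions=\{\finpre_1<\dots<\finpre_{\numpredictions}\}$. On such a grid the integrated CDF is piecewise linear with breakpoints only at the $\finpre_\preind$. Its values at the grid points are determined by the masses through
\[
\SCDF_\classifier(\finpre_{\preind+1})-\SCDF_\classifier(\finpre_{\preind})=(\finpre_{\preind+1}-\finpre_\preind)\,\CDF_\classifier(\finpre_\preind),\qquad \CDF_\classifier(\finpre_\preind)=\sum_{\preind'\le\preind}\preprob_{\preind'}(\classifier).
\]
The plan is to compare the lexicographic vectors by looking at the first index where they differ. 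Since both $\lexobj(\classifier)$ and $\lexobj(\classifier\primed)$ sum to one, they are either equal or differ somewhere.

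For the weak statement, suppose $\classifier\primed\blackwelldomeq\classifier$ and $\lexobj(\classifier\primed)\neq\lexobj(\classifier)$. Let $\preind^*$ be the first index with $\preprob_{\preind^*}(\classifier\primed)\neq\preprob_{\preind^*}(\classifier)$. Then $\CDF_{\classifier\primed}$ and $\CDF_\classifier$ agree on $[0,\finpre_{\preind^*})$, and hence so do $\SCDF_{\classifier\primed}$ and $\SCDF_\classifier$ on $[0,\finpre_{\preind^*}]$. If $\preprob_{\preind^*}(\classifier\primed)<\preprob_{\preind^*}(\classifier)$, then $\CDF_{\classifier\primed}<\CDF_\classifier$ on $[\finpre_{\preind^*},\finpre_{\preind^*+1})$. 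Integrating over this interval gives $\SCDF_{\classifier\primed}(t)<\SCDF_\classifier(t)$ for $t$ slightly above $\finpre_{\preind^*}$, contradicting dominance. This step needs $\preind^*<\numpredictions$. That holds automatically: if all masses agree up to $\numpredictions-1$, the last one also agrees because both vectors sum to one. Therefore $\preprob_{\preind^*}(\classifier\primed)>\preprob_{\preind^*}(\classifier)$, i.e.\ $\lexobj(\classifier\primed)>\lexobj(\classifier)$. Combined with the equality case, this gives $\lexobj(\classifier\primed)\ge\lexobj(\classifier)$.

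For the strict statement, it suffices to rule out $\lexobj(\classifier\primed)=\lexobj(\classifier)$ when $\classifier\primed\blackwelldom\classifier$. Equal lexicographic vectors mean identical prediction distributions on $\predictions$, hence identical CDFs and integrated CDFs. That contradicts the existence of some $t\in(0,1)$ with $\SCDF_{\classifier\primed}(t)>\SCDF_\classifier(t)$, so the weak case then yields strict inequality.

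The only delicate point is the right-neighborhood argument at $\finpre_{\preind^*}$, including the boundary case $\preind^*=\numpredictions$ handled above. Note also that the argument uses only $\classifier\primed\blackwelldomeq\classifier$, so the common-mean property is not needed.
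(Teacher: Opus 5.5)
Your proof is correct and is essentially the paper's argument. Both rest on a first-difference index together with the grid identity $\SCDF(\finpre_{\preind+1})-\SCDF(\finpre_\preind)=(\finpre_{\preind+1}-\finpre_\preind)\,\CDF(\finpre_\preind)$; the paper starts from the first grid point where $\SCDF_{\classifier\primed}$ strictly exceeds $\SCDF_\classifier$ and reads off the masses, while you start from the first differing mass and rule out the wrong sign by integrating over a right neighborhood, which is the same step run in reverse.
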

\begin{proof}
    Observe that for each expert $\classifier \in \finfam$, the CDF $\CDF_\classifier$ of its reported-prediction distribution is obtained by taking the prefix sum of $\preprob_\preind(\classifier)$ over $\preind$.
    That is, for each $\preind$, $\CDF_\classifier(\finpre_\preind) = \sum_{\preind' \le \preind} \preprob_{\preind'}(\classifier)$.
    So there is a one-to-one correspondence between 
    \wtedit{the prediction probabilities}$ (\preprob_\preind(\classifier))_\preind)$, 
    \wtedit{the CDF} $\CDF_\classifier$, and 
    \wtedit{the integrated CDF} $\SCDF_\classifier$.
    Fix any $\classifier$ and $\classifier\primed$ in $\finfam$ where $\classifier\primed \blackwelldomeq \classifier$.
    By definition, for each $\preind \in [\numpredictions]$, $\SCDF_{\classifier\primed}(\finpre_\preind) \ge \SCDF_{\classifier}(\finpre_\preind)$.
    If equality holds for all $\numpredictions$ possible predictions, then it must be the case that $\preprob_\preind(\classifier) = \preprob_\preind(\classifier\primed)$.
    Otherwise (i.e., when $\classifier\primed \blackwelldom \classifier$), let $\preind > 1$ be the smallest index such that $\SCDF_{\classifier\primed}(\finpre_\preind) > \SCDF_{\classifier}(\finpre_\preind)$.
    It must be the case that (1) $\CDF_{\classifier\primed}(\finpre_{\preind - 1}) > \CDF_{\classifier}(\finpre_{\preind - 1})$, and (2) for each $\preind' < \preind - 1$, $\CDF_{\classifier\primed}(\finpre_{\preind'}) = \CDF_{\classifier}(\finpre_{\preind'})$.
    As a result, it must be the case that (1) for each $\preind' < \preind - 1$, $\preprob_{\preind'}(\classifier\primed) = \preprob_{\preind'}(\classifier)$, and (2) $\preprob_{\preind - 1}(\classifier\primed) > \preprob_{\preind - 1}(\classifier)$.
    This is equivalent to $\lexobj(\classifier\primed) > \lexobj(\classifier)$.
    The second case alone gives the strict version of the claim, and both cases combined give the weak version.
\end{proof}

As a corollary:
\begin{corollary}
\label{cor:lexopt_vs_dom}
    The following are true:
    \begin{itemize}
        \item Any optimal solution $\classifier^*$ to \ref{eq:abstract-program} satisfies: $\classifier^* \in \strictdomfam \cap \finfam$ and $\classifier^* \blackwelldomeq \targetclassifier$.
        \item There exists an optimal solution $\classifier^*$ to \ref{eq:abstract-program} that lies within $\findomfam$.
    \end{itemize}
\end{corollary}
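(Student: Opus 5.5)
The first bullet follows from \Cref{lem:lex_obj_monotonicity} once we pass between $\finfam$ and the full class $\classifierfam$ using \Cref{lem:extreme_ray_decomposition}. The second bullet follows by applying $\erd$ to an optimal solution. Throughout, we assume \ref{eq:abstract-program} attains its optimum. Its feasible set is nonempty: $\erd(\targetclassifier)\in\rfinfam\subseteq\finfam$ and $\erd(\targetclassifier)\blackwelldomeq\targetclassifier$. The objective vectors $\lexobj(\classifier)$ range over a compact set, namely the distributions on the finite set $\predictions$ that are realizable with the linear constraints from \ref{eq:lexicographical-program}. Lexicographic maximization over a compact set is attained by maximizing coordinates one at a time, and each stage is again a maximization over a compact set.

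For the first bullet, let $\classifier^*$ be optimal. Feasibility already gives $\classifier^*\in\finfam$ and $\classifier^*\blackwelldomeq\targetclassifier$. Suppose, for contradiction, that some $\classifier\primed\in\classifierfam$ satisfies $\classifier\primed\blackwelldom\classifier^*$. Set $\classifier\doubleprimed=\erd(\classifier\primed)\in\rfinfam\subseteq\finfam$; by \Cref{lem:extreme_ray_decomposition}, $\classifier\doubleprimed\blackwelldomeq\classifier\primed$.

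We need two facts. First, $\blackwelldomeq$ is transitive, which is immediate from the pointwise order on integrated CDFs. Second, combining a weak and a strict relation yields a strict one: $\SCDF_{\classifier\doubleprimed}\ge\SCDF_{\classifier\primed}\ge\SCDF_{\classifier^*}$ everywhere, with strict inequality at some $t\in(0,1)$. Hence $\classifier\doubleprimed\blackwelldom\classifier^*$, and by transitivity $\classifier\doubleprimed\blackwelldomeq\targetclassifier$, so $\classifier\doubleprimed$ is feasible. Now \Cref{lem:lex_obj_monotonicity} (strict version, both experts in $\finfam$) gives $\lexobj(\classifier\doubleprimed)>\lexobj(\classifier^*)$, contradicting optimality. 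Therefore $\classifier^*\in\strictdomfam$.

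For the second bullet, take an optimal $\classifier^*$ and let $\hat\classifier=\erd(\classifier^*)\in\rfinfam$. We have $\hat\classifier\blackwelldomeq\classifier^*\blackwelldomeq\targetclassifier$, so $\hat\classifier$ is feasible. By the weak version of \Cref{lem:lex_obj_monotonicity}, $\lexobj(\hat\classifier)\ge\lexobj(\classifier^*)$, so $\hat\classifier$ is also optimal. Applying the first bullet to $\hat\classifier$ gives $\hat\classifier\in\strictdomfam$, and hence $\hat\classifier\in\rfinfam\cap\strictdomfam=\findomfam$.

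The main obstacle is the attainment of the optimum, which the paper asserts as well-defined; if needed, I would justify it by the compactness argument above. The rest is bookkeeping, with one point of care: the strict dominance $\classifier\primed\blackwelldom\classifier^*$ may come from an expert outside $\finfam$, so it must first be transferred through $\erd$ before \Cref{lem:lex_obj_monotonicity} can be invoked.
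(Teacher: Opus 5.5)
Your proof is correct and matches the paper's argument: for the first bullet you push a strict improvement through $\erd$ and apply the strict case of \Cref{lem:lex_obj_monotonicity}, and for the second you apply $\erd$ to an optimal solution, then use the weak case and the first bullet. Two of your steps are left implicit in the paper: the check (by transitivity) that $\erd(\classifier\primed)$ is still feasible, and the compactness argument showing the lexicographic optimum is attained.
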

\begin{proof}
    For the first property, we only need to prove that there is no $\classifier \in \classifierfam$ such that $\classifier \blackwelldom \classifier^*$.
    Suppose towards a contradiction that such a $\classifier$ exists.
    Consider $\classifier\primed = \erd(\classifier) \in \rfinfam \subseteq \finfam$.
    By \Cref{lem:extreme_ray_decomposition}, we have $\classifier\primed \blackwelldomeq \classifier \blackwelldom \classifier^*$.
    In particular, $\lexobj$ is well-defined for $\classifier\primed$.
    By \Cref{lem:lex_obj_monotonicity}, $\lexobj(\classifier\primed) > \lexobj(\classifier^*)$, which contradicts the optimality of $\classifier^*$ with respect to $\lexobj$.

    For the second property, let $\classifier^* \in \finfam$ be an arbitrary optimal solution to \ref{eq:abstract-program}, and consider $\classifier = \erd(\classifier^*) \in \rfinfam$.
    By \Cref{lem:extreme_ray_decomposition}, $\classifier \blackwelldomeq \classifier^* \blackwelldomeq \targetclassifier$.
    Moreover, by \Cref{lem:lex_obj_monotonicity}, $\lexobj(\classifier) \ge \lexobj(\classifier^*)$, which means $\classifier \in \rfinfam$ is also an optimal solution to \ref{eq:abstract-program}.
    Together with the first property, we must have $\classifier \in \rfinfam \cap \strictdomfam = \findomfam$, giving the second property.
\end{proof}

The first property further reduces our search problem to solving \wtedit{the program} \ref{eq:abstract-program}, and we will later need the second property for a technical reason.
In particular, note that the second property is not vacuously true: Superficially, it could be the case that all optimal solutions to \ref{eq:abstract-program} lie in $\finfam \setminus \rfinfam$.
This is ruled out by the second property above.
At a high level, it remains to {\em efficiently} solve the lexicographical optimization problem (whose abstract form is \ref{eq:abstract-program}), which takes a few more steps.

\subsection{Polynomial-Size Supports Suffice}
\xhdr{Optimal solutions have small supports}
Next we argue that optimal solutions to the lexicographical optimization problem make only a very manageable number of predictions.
To this end, we examine a specific representation of the extreme family $\rfinfam$ as an exponentially sized polytope.
The representation does not appear directly in our algorithm.
Instead, it serves as a means to establish the small-support property that is crucial for our algorithm.

Consider the polytope defined by nonnegative weights $(\weight_{\preind, \rayind})_{\preind \in [\numpredictions], \rayind \in [\numrays_\preind]}$ where $\sum_{\preind, \rayind} \weight_{\preind, \rayind} \cdot \ray_{\preind, \rayind} = \vone$.
One can check that this polytope is bounded and gives the family of experts whose state-wise prediction-probability vectors are all scaled extreme rays, namely the extreme family $\rfinfam$.
Recall that $\findomfam \subseteq \rfinfam \subseteq \finfam$, which implies that optimizing over $\finfam$ is equivalent to optimizing over $\rfinfam$, because there exist optimal solutions in $\findomfam$, a common subset of both $\finfam$ and $\rfinfam$.

Restricted to experts $\classifier \in \rfinfam$, we still need to enforce the constraint that $\classifier \blackwelldomeq \targetclassifier$, where $\classifier$ is parametrized by $(\weight_{\preind, \rayind})_{\preind \in [\numpredictions], \rayind \in [\numrays_\preind]}$.
To this end, consider again the CDF and integrated CDF.
We need to ensure that for each $x \in [0, 1]$, $\SCDF_\classifier(x) \ge \SCDF_{\targetclassifier}(x)$.
Note that $\SCDF_\classifier$ and $\SCDF_{\targetclassifier}$ are both piecewise linear and convex.
As a result, we only need to enforce the constraint at possible turning points of $\SCDF_\classifier$, which correspond to possible predictions reported by $\classifier$, i.e., $\predictions$.
Recall that for each $\preind \in [\numpredictions]$, $\preprob_\preind = \prob{\prediction_\classifier = \finpre_\preind} = \sum_{\rayind \in [\numrays_\preind]} \weight_{\preind, \rayind} \cdot \feasubsetprob(\ray_{\preind, \rayind})$.
Note that each $\preprob_\preind$ is linear in $(\weight_{\preind, \rayind})$.
For each $\preind \in [\numpredictions]$, observe that $\SCDF_\targetclassifier(\finpre_\preind)$ is a constant, and $\SCDF_\classifier(\finpre_\preind) = \sum_{\preind' < \preind} \preprob_{\preind'} \cdot (\finpre_\preind - \finpre_{\preind'})$, which is linear in $(\preprob_\preind)$, and therefore in $(\weight_{\preind, \rayind})$.
As such, $\classifier \blackwelldomeq \targetclassifier$ becomes $\numpredictions$ linear constraints: For each $\preind \in [\numpredictions]$, $\SCDF_\classifier(\finpre_\preind) \ge \SCDF_\targetclassifier(\finpre_\preind)$.

Now consider the objective under this representation.
The lexicographical objective becomes: Maximize $\preprob_1$, and then $\preprob_2$, etc.
Note that each level of the objective is linear in $(\weight_{\preind, \rayind})$.
As a result, the objective is optimized at some vertex of the above polytope defined over $(\weight_{\preind, \rayind})$.
We will examine this vertex (or in fact, all vertices of the polytope) more closely to establish the small-support property needed later.

Formally, we consider the following linear program where the decision variables are $(\weight_{\preind, \rayind})$, which corresponds to maximizing $\lexobj$ on $\rfinfam$ rather than $\finfam$:
\begin{equation}
\tag{$\exprog$}
\label{eq:extreme-program}
\begin{aligned}
    \max \quad
    &
    \left(\sum_{\rayind \in [\numrays_1]} \feasubsetprob(\ray_{1, \rayind}) \cdot \weight_{1, \rayind},\ \sum_{\rayind \in [\numrays_2]} \feasubsetprob(\ray_{2, \rayind}) \cdot \weight_{2, \rayind},\ \dots,\ \sum_{\rayind \in [\numrays_{\numpredictions}]} \feasubsetprob(\ray_{\numpredictions, \rayind}) \cdot \weight_{\numpredictions, \rayind}\right) &&
    \\
    \text{s.t.}\quad
    &
    \sum\nolimits_{\preind, \rayind} \weight_{\preind, \rayind} \cdot \ray_{\preind, \rayind} = \vone~,
    && 
    \\
    &
    \sum\nolimits_{\preind' < \preind, \rayind \in [\numrays_{\preind'}]} (\finpre_\preind - \finpre_{\preind'}) \cdot \feasubsetprob(\ray_{\preind', \rayind}) \cdot \weight_{\preind', \rayind} \ge \SCDF_\targetclassifier(\finpre_\preind)
    &&
    \preind \in [\numpredictions]~,
    \\
    &
    \weight_{\preind, \rayind} \ge 0
    &&
    \preind \in [\numpredictions], \rayind \in [\numrays_{\preind}]~.
\end{aligned}
\end{equation}

Note that here, once again, the $\max$ denotes lexicographical maximization.
We prove the following properties regarding \ref{eq:extreme-program}:
\begin{lemma}
\label{lem:exprog_properties}
    The following are true:
    \begin{itemize}
        \item Weights $(\weight_{\preind, \rayind})$ satisfy the constraints of \ref{eq:extreme-program} iff the expert $\classifier_\weight$ defined by
        \[
            \classifier_\weight = \left\{\sum\nolimits_{\rayind \in [\numrays_\preind]} 
            \wtedit{\ray_{\preind, \rayind}}
            \cdot \weight_{\preind, \rayind}\right\}_{\preind \in [\numpredictions]}
        \]
        satisfies $\classifier_\weight \in \rfinfam$.
        \item Any optimal solution $\classifier_{\weight^*}$ (induced by $(\weight^*_{\preind, \rayind})$) to \ref{eq:extreme-program} maximizes $\lexobj$ on $\rfinfam$.
        \item There exists one optimal solution to \ref{eq:extreme-program} where at most $\suppsize$ decision variables are strictly positive.
    \end{itemize}
\end{lemma}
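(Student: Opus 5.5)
The plan is to prove the three items in order. The first two are essentially bookkeeping; the third is a vertex-counting argument.

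\emph{First item.} Given feasible weights, set
\[
\signalvec^{(\preind)}=\sum_{\rayind}\weight_{\preind,\rayind}\ray_{\preind,\rayind}.
\]
Each nonzero $\signalvec^{(\preind)}$ lies in $\obscone$. By linearity of $\obslabel$ and $\feasubsetprob$, its mean is $\feasubsetmean(\signalvec^{(\preind)})=\finpre_\preind$, because every ray in $\extremerays_\preind$ has mean $\finpre_\preind$. The first constraint gives $\sum_\preind\signalvec^{(\preind)}=\vone$, so $\classifier_\weight$ is a constructible expert of the extreme-class form. Its reported masses are $\preprob_\preind=\sum_\rayind\feasubsetprob(\ray_{\preind,\rayind})\weight_{\preind,\rayind}$.

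Both $\SCDF_{\classifier_\weight}$ and $\SCDF_\targetclassifier$ are convex and piecewise linear. I will check dominance at the kinks of $\SCDF_{\classifier_\weight}$, which lie in $\predictions$, and at $t=0$ and $t=1$. At $t=0$ both sides are $0$. At $t=1$ both sides equal $1-\baserate$, since the means agree. Between consecutive kinks $\SCDF_{\classifier_\weight}$ is linear and $\SCDF_\targetclassifier$ is convex, so dominance at the kinks propagates to the whole interval. Hence the $\numpredictions$ linear constraints are equivalent to $\classifier_\weight\blackwelldomeq\targetclassifier$.

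I read the first item as asserting $\classifier_\weight\in\rfinfam$ together with dominance of the target. For the converse, any $\classifier\in\rfinfam$ with $\classifier\blackwelldomeq\targetclassifier$ has, by definition, coefficients $\coeff$; merging atoms that share a level gives feasible weights.

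\emph{Second item.} The objective of $\exprog$ is exactly $\lexobj(\classifier_\weight)$. Combined with the bijection from the first item, lexicographic optima coincide.

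\emph{Third item (main step).} Lexicographic optimization over a polytope is attained at a vertex: optimize level $1$, fix its value as an equality, optimize level $2$, and so on. The final optimal face is a nonempty face of the feasible polytope, which is bounded because the equality $\sum\weight\ray=\vone$ forces bounded weights. That face therefore contains a vertex $\weight^*$.

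At a vertex, the number of positive variables is at most the rank of the binding non-sign constraints. The equality system contributes rank at most $\feanum$. The difficulty is the Blackwell constraints: there are $\numpredictions$ of them, possibly exponentially many, so I cannot count them naively.

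To handle this, I will reparametrize. Split $[0,1]$ by the $\cellnum_\targetindex$ support points of $\targetclassifier$ into at most $\cellnum_\targetindex+1\le \totalcells+1$ intervals. On each interval $\SCDF_\targetclassifier$ is affine, say $a_k t+b_k$.

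For $\finpre_\preind$ in interval $k$, the constraint reads
\[
\sum_{\preind'<\preind}\preprob_{\preind'}(\finpre_\preind-\finpre_{\preind'})\ge a_k\finpre_\preind+b_k .
\]
The left-hand side is determined by $\finpre_\preind$ together with the two aggregates $\CDF(\finpre_\preind^-)$ and $\sum_{\preind'<\preind}\preprob_{\preind'}\finpre_{\preind'}$. Thus every binding constraint's normal vector, in $\weight$-space, is built from prefix sums. I will then argue, perhaps by a perturbation, that within a single interval the binding constraints span a space of dimension at most $2$ modulo the equalities.

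The perturbation I have in mind uses two binding points $\finpre_s<\finpre_t$ within one interval. Both $\SCDF_\classifier$ and the affine target line agree at these points, while $\SCDF_\classifier$ is convex and stays above the line between them. This forces $\SCDF_\classifier$ to equal the line on $[\finpre_s,\finpre_t]$, so the output puts no mass strictly between $\finpre_s$ and $\finpre_t$. Consequently every intermediate binding constraint is implied by the binding constraints at the two endpoints together with the sign constraints $\weight=0$ on the intermediate levels. This gives at most $2$ essential constraints per interval.

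Summing, there are at most $\feanum+2(\totalcells+1)$ linearly independent active non-sign constraints. Hence at most $\suppsize$ of the variables $\weight^*$ are positive.

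I expect the redundancy argument to be the delicate part, specifically showing that the sign constraints on intermediate levels make the intermediate Blackwell constraints linearly dependent on the endpoint ones.
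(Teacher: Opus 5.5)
Your proposal follows the paper's proof essentially step for step. The first two items read off weights and match the objective to $\lexobj$; your explicit inclusion of $\classifier_\weight\blackwelldomeq\targetclassifier$ in the first item, with the chord argument at $t=0$, at the kinks and at $t=1$, is the right reading and fills in what the paper only sketches. The third item takes a lexicographically optimal vertex, groups the Blackwell constraints into the at most $\cellnum_\targetindex+1$ intervals where $\SCDF_\targetclassifier$ is affine, and uses convexity to show that binding constraints strictly between the first and last binding ones in a group are implied by those two plus the binding sign constraints $\weight_{\preind',\rayind}=0$ on the intermediate levels.

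One bookkeeping point needs tightening. The intermediate Blackwell rows are not in the span of the two endpoint rows (plus the equalities) by themselves; they are only in that span modulo the intermediate sign rows. So do not state the count as a bound on the rank of all binding non-sign constraints, which is how your ``span a space of dimension at most $2$ modulo the equalities'' and ``at most $\feanum+2(\totalcells+1)$ linearly independent active non-sign constraints'' read. State it instead for the binding non-sign rows restricted to the columns of the positive variables. Equivalently, prune the full binding system, sign constraints included, as the paper does; either version gives the claimed $\suppsize$ bound.
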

\begin{proof}
    The first two properties are relatively straightforward.
    Observe that any $\classifier_\weight$ defined by feasible $(\weight_{\preind, \rayind})$ must lie within $\rfinfam$, because by construction, (1) it is a legitimate expert since $\sum\nolimits_{\preind, \rayind} \weight_{\preind, \rayind} \cdot \ray_{\preind, \rayind} = \vone$, and (2) each reported prediction is produced by a nonnegative linear combination of extreme rays.
    Conversely, given a $\classifier \in \rfinfam$, one can directly read off the nonnegative linear coefficients, which are a feasible solution to \ref{eq:extreme-program}.
    This gives the first property.
    As for the second, simply observe that given the correspondence between $(\weight_{\preind, \rayind})$ and $\classifier_\weight$, the objective is precisely $\lexobj(\classifier_\weight)$.

    Now consider the third property.
    First observe that, since each level of the objective is linear in the decision variables, the optimal objective value is achieved at some vertex of the polytope defined by the constraints.
    This can be proved by iteratively restraining the space of optimal solutions to the sub-polytope that maximizes the $\preind$-th level of the objective, and observing that at the final level, without loss of generality, the objective is maximized at some vertex of the remaining sub-polytope.
    This vertex must also be a vertex of the original polytope, since each restriction of the polytope is equivalent to adding a linear constraint that is tangent to the existing (sub-)polytope, thereby preserving a lower-dimensional face of it.
    We will skip the details for brevity.

    Now consider any vertex $(\weight_{\preind, \rayind})$ of the polytope.
    As the polytope lives in a $\numrays$-dimensional space, we need $\numrays$ {\em linearly independent binding constraints} to pin down each vertex.
    Below we argue that most decision variables must be $0$ in order for sufficiently many independent constraints to be binding.
    This is done through a counting argument: Given all binding constraints, we prune some of them and make sure the remaining ones still span all binding constraints.
    The number of remaining binding constraints must be at least $\numrays$, and many of them must be nonnegativity constraints on the decision variables, thus demanding the corresponding decision variables to be $0$.

    All constraints can be categorized into $3$ types:
    \begin{itemize}
        \item
        The $\feanum$ constraints induced by $\sum_{\preind, \rayind} \weight_{\preind, \rayind} \cdot \ray_{\preind, \rayind} = \vone$.
        These constraints are always binding.
        We do not prune any of these constraints, so they contribute $\feanum$ constraints to the $\numrays$ needed.

        \item
        The $\numpredictions$ constraints induced by $\classifier \blackwelldomeq \targetclassifier$.
        These constraints can further be grouped into $\cellnum_\targetindex + 1$ groups in the following way.
        For simplicity, let $\prediction_{\targetindex, 0} = 0$ and $\prediction_{\targetindex, \cellnum_\targetindex + 1} = 1$.
        Moreover, without loss of generality, suppose $\prediction_{\targetindex, \cellind} \le \prediction_{\targetindex, \cellind + 1}$ for each $\cellind \in \{0, \dots, \cellnum_\targetindex\}$.
        Then, for each $\cellind \in \{0, \dots, \cellnum_\targetindex\}$, we let the $\cellind$-th group contain predictions in $\predictions$ that lie in the interval of $[\prediction_{\targetindex, \cellind}, \prediction_{\targetindex, \cellind + 1})$.


        \begin{figure}[t!]
        \centering
        \begin{tikzpicture}[
    x=8cm,
    y=5.2cm,
    >=stealth,
    axis/.style={->, thick},
    redline/.style={red, very thick},
    blackline/.style={black!60, line width=2.6pt},
    reddash/.style={red, thick, dashed},
    blackdash/.style={black, thick, dashed},
    tick/.style={thick}
]

\def\pOne{0.10}
\def\pfOne{0.17}
\def\pTwo{0.26}
\def\pfTwo{0.38}
\def\pThree{0.52}
\def\pFour{0.57}
\def\pFive{0.63}
\def\pSix{0.71}
\def\pfThree{0.78}
\def\pSeven{0.86}
\def\one{0.93}

\def\yPthree{0.28}
\def\yPsix{0.48}

\draw[axis] (-0.02,0) -- (1.08,0) node[right] {$\prediction$};
\draw[axis] (0,-0.02) -- (0,1.02);

\node[below left] at (0,0) {$0$};

\node[anchor=south] at (0,1.04)
{\textcolor{red}{$\SCDF_\classifier(\prediction)$} and $\SCDF_{\targetclassifier}(\prediction)$};

\draw[reddash] (\pTwo,0) -- (\pTwo,0.90);
\draw[reddash] (\pThree,0) -- (\pThree,0.90);
\draw[reddash] (\pSix,0) -- (\pSix,0.90);
\draw[reddash] (\pSeven,0) -- (\pSeven,0.97);

\draw[blackdash] (\pfTwo,0) -- (\pfTwo,0.90);
\draw[blackdash] (\pfThree,0) -- (\pfThree,0.90);
\draw[blackdash] (\one,0) -- (\one,1.00);

\draw[red, thick] (\pOne,-0.025) -- (\pOne,0.025);
\draw[black, thick] (\pfOne,-0.025) -- (\pfOne,0.025);
\draw[red, thick] (\pFour,-0.025) -- (\pFour,0.025);
\draw[red, thick] (\pFive,-0.025) -- (\pFive,0.025);

\draw[blackline]
    (\pfOne,0.00)
    -- (\pTwo,0.06)
    -- (\pfTwo,0.12)
    -- (\pThree,\yPthree)
    -- (\pSix,\yPsix)
    -- (\pfThree,0.55)
    -- (\pSeven,0.75)
    -- (\one,0.94);

\draw[redline]
    (\pOne,0.00)
    -- (\pTwo,0.08)
    -- (\pThree,\yPthree)
    -- (\pSix,\yPsix)
    -- (\pSeven,0.75)
    -- (\one,1.02);

\node[red, below] at (\pOne,-0.035) {$\prediction_1$};
\node[below] at (\pfOne,-0.035) {$\prediction_{\targetindex,1}$};
\node[red, below] at (\pTwo,-0.035) {$\prediction_2$};
\node[below] at (\pfTwo,-0.035) {$\prediction_{\targetindex,2}$};
\node[red, below] at (\pThree,-0.035) {$\prediction_3$};
\node[red, below] at (\pFour,-0.035) {$\prediction_4$};
\node[red, below] at (\pFive,-0.035) {$\prediction_5$};
\node[red, below] at (\pSix,-0.035) {$\prediction_6$};
\node[below] at (\pfThree,-0.035) {$\prediction_{\targetindex,3}$};
\node[red, below] at (\pSeven,-0.035) {$\prediction_7$};

\end{tikzpicture}
        \label{fig:independent_constraints}
        \caption{Graphical illustration of type-$2$ constraints in \ref{eq:extreme-program}.
            For readability, we omit the superscript $\mathrm{fin}$ and write $\prediction_\preind$ instead of $\finpre_\preind$.
            The dashed vertical lines mark turning points of $\SCDF_\classifier$ and $\SCDF_\targetclassifier$.
            Note that between $\prediction_{\targetindex, 2}$ and $\prediction_{\targetindex, 3}$, $4$ type-$2$ constraints are binding, but only the two corresponding to $\prediction_3$ and $\prediction_6$ are essential.}
        \end{figure}

        We can prune binding constraints within each group, such that each group contributes at most $2$ binding constraints after pruning.
        To see why intuitively, refer to \Cref{fig:independent_constraints} (in particular the part between $\prediction_{\targetindex, 2}$ and $\prediction_{\targetindex, 3}$) for a graphical illustration.
        For each $\cellind$, restricted to $[\prediction_{\targetindex, \cellind}, \prediction_{\targetindex, \cellind + 1})$, because $\SCDF_\classifier$ is convex and $\SCDF_\targetclassifier$ is linear, only the first and the last points where $\SCDF_\classifier$ touches $\SCDF_\targetclassifier$ (corresponding to two binding constraints within the group) matter, and all binding constraints in between are implied by these $2$ constraints, together with some binding type-$3$ constraints to be discussed next.
        Within each group, we prune all binding type-$2$ constraints except the first and the last, as detailed below.

        To be more concrete, fix any $\cellind$ and suppose the type-$2$ constraints corresponding to $\finpre_{\preind_1}$ and $\finpre_{\preind_2}$ are binding, where $\prediction_{\targetindex, \cellind} \le \finpre_{\preind_1} < \finpre_{\preind_2} < \prediction_{\targetindex, \cellind + 1}$.
        Also recall that we are examining a vertex, which means all other constraints must hold too.
        So we have,
        \begin{align}
            \SCDF_\targetclassifier(\finpre_{\preind_1}) & = \sum\nolimits_{\preind' < \preind_1, \rayind \in [\numrays_{\preind'}]} (\finpre_{\preind_1} - \finpre_{\preind'}) \cdot \feasubsetprob(\ray_{\preind', \rayind}) \cdot \weight_{\preind', \rayind} \label{eq:binding_1} \\
            \SCDF_\targetclassifier(\finpre_{\preind_1 + 1}) & \le \sum\nolimits_{\preind' < \preind_1 + 1, \rayind \in [\numrays_{\preind'}]} (\finpre_{\preind_1 + 1} - \finpre_{\preind'}) \cdot \feasubsetprob(\ray_{\preind', \rayind}) \cdot \weight_{\preind', \rayind} \label{eq:binding_2} \\
            \SCDF_\targetclassifier(\finpre_{\preind_2}) & = \sum\nolimits_{\preind' < \preind_2, \rayind \in [\numrays_{\preind'}]} (\finpre_{\preind_2} - \finpre_{\preind'}) \cdot \feasubsetprob(\ray_{\preind', \rayind}) \cdot \weight_{\preind', \rayind} \label{eq:binding_3}
        \end{align}
        Taking the difference between Eqn.~\eqref{eq:binding_2} and Eqn.~\eqref{eq:binding_1}:
        \begin{align*}
            \SCDF_\targetclassifier(\finpre_{\preind_1 + 1}) - \SCDF_\targetclassifier(\finpre_{\preind_1}) & \le \CDF_\targetclassifier(\finpre_{\preind_1}) \cdot (\finpre_{\preind_1 + 1} - \finpre_{\preind_1}) \\
            & = \sum\nolimits_{\preind' \le \preind_1, \rayind \in [\numrays_{\preind'}]} (\finpre_{\preind_1 + 1} - \finpre_{\preind_1}) \cdot \feasubsetprob(\ray_{\preind', \rayind}) \cdot \weight_{\preind', \rayind},
        \end{align*}
        which implies
        \begin{align}
            \CDF_\targetclassifier(\finpre_{\preind_1}) \le \sum\nolimits_{\preind' \le \preind_1, \rayind \in [\numrays_{\preind'}]} \feasubsetprob(\ray_{\preind', \rayind}) \cdot \weight_{\preind', \rayind}. \label{eq:diff}
        \end{align}
        Taking the difference between Eqn.~\eqref{eq:binding_3} and Eqn.~\eqref{eq:binding_1}:
        \begin{align*}
            \SCDF_\targetclassifier(\finpre_{\preind_2}) - \SCDF_\targetclassifier(\finpre_{\preind_1}) & = \CDF_\targetclassifier(\finpre_{\preind_1}) \cdot (\finpre_{\preind_2} - \finpre_{\preind_1}) \\
            & = \sum\nolimits_{\preind' \le \preind_1, \rayind \in [\numrays_{\preind'}]} (\finpre_{\preind_2} - \finpre_{\preind_1}) \cdot \feasubsetprob(\ray_{\preind', \rayind}) \cdot \weight_{\preind', \rayind} \\
            & + \sum\nolimits_{\preind_1 < \preind' < \preind_2, \rayind \in [\numrays_{\preind'}]} (\finpre_{\preind_2} - \finpre_{\preind'}) \cdot \feasubsetprob(\ray_{\preind', \rayind}) \cdot \weight_{\preind', \rayind}.
        \end{align*}
        Plugging in Eqn.~\eqref{eq:diff}, the above becomes
        \begin{align*}
            \CDF_\targetclassifier(\finpre_{\preind_1}) \cdot (\finpre_{\preind_2} - \finpre_{\preind_1}) & \ge \CDF_\targetclassifier(\finpre_{\preind_1}) \cdot (\finpre_{\preind_2} - \finpre_{\preind_1}) \\
            & + \sum\nolimits_{\preind_1 < \preind' < \preind_2, \rayind \in [\numrays_{\preind'}]} (\finpre_{\preind_2} - \finpre_{\preind'}) \cdot \feasubsetprob(\ray_{\preind', \rayind}) \cdot \weight_{\preind', \rayind},
        \end{align*}
        which implies
        \begin{align*}
            \sum\nolimits_{\preind_1 \le \preind' < \preind_2, \rayind \in [\numrays_{\preind'}]} (\finpre_{\preind_2} - \finpre_{\preind'}) \cdot \feasubsetprob(\ray_{\preind', \rayind}) \cdot \weight_{\preind', \rayind} \le 0.
        \end{align*}
        Recall that $(\weight_{\preind, \rayind})$ are nonnegative, which means it must be the case that for each $\preind'$ where $\preind_1 \le \preind' < \preind_2$ and $\rayind \in [\numrays_{\preind'}]$, $\weight_{\preind', \rayind} = 0$ (none of these binding constraints will be pruned, as explained in the bullet point below).
        Moreover, equality must hold in Eqn.~\eqref{eq:diff} too.
        As a result, for any $\preind_3$ between $\preind_1 + 1$ and $\preind_2$,
        \begin{align*}
            \sum\nolimits_{\preind' < \preind_3, \rayind \in [\numrays_{\preind'}]} (\finpre_{\preind_3} - \finpre_{\preind'}) \cdot \feasubsetprob(\ray_{\preind', \rayind}) \cdot \weight_{\preind', \rayind} & = \sum\nolimits_{\preind' < \preind_1, \rayind \in [\numrays_{\preind'}]} (\finpre_{\preind_3} - \finpre_{\preind'}) \cdot \feasubsetprob(\ray_{\preind', \rayind}) \cdot \weight_{\preind', \rayind} \\
            & = \sum\nolimits_{\preind' < \preind_1, \rayind \in [\numrays_{\preind'}]} (\finpre_{\preind_1} - \finpre_{\preind'}) \cdot \feasubsetprob(\ray_{\preind', \rayind}) \cdot \weight_{\preind', \rayind} \\
            & + \sum\nolimits_{\preind' < \preind_1, \rayind \in [\numrays_{\preind'}]} (\finpre_{\preind_3} - \finpre_{\preind_1}) \cdot \feasubsetprob(\ray_{\preind', \rayind}) \cdot \weight_{\preind', \rayind} \\
            & = \SCDF_\targetclassifier(\finpre_{\preind_1}) + \CDF_\targetclassifier(\finpre_{\preind_1}) \cdot (\finpre_{\preind_3} - \finpre_{\preind_1}) \\
            & = \SCDF_\targetclassifier(\finpre_{\preind_3}).
        \end{align*}
        That is, the type-$2$ constraint corresponding to $\preind_3$ is binding --- and importantly, it is implied by the first and the last binding type-$2$ constraints within the group, as well as certain binding nonnegativity (type-$3$ to be discussed below) constraints.
        So we can safely prune all such binding constraints, leaving at most $2$ binding constraints within each group.
        In total, type-$2$ constraints can contribute at most $2 (\cellnum_\targetindex + 1) \le 2 (\totalcells + 1)$ binding constraints after pruning to the $\numrays$ needed.
        
        \item
        Finally, we have the $\numrays$ nonnegativity constraints $\weight_{\preind, \rayind} \ge 0$.
        We do not prune any of these constraints.
        Since we need $\numrays$ binding constraints after pruning, and the first $2$ types contribute at most $\feanum + 2(\totalcells + 1)$, we still need at least $\numrays - \feanum - 2(\totalcells + 1)$ type-$3$ constraints to be binding.
        In other words, only $\feanum + 2(\totalcells + 1)$ of $(\weight_{\preind, \rayind})$ can be strictly positive.
    \end{itemize}
    The argument above establishes that each vertex of the polytope involves at most $\suppsize$ strictly positive decision variables.
    To prove the third property, we simply pick the vertex that achieves the optimal objective value, which must exist because of the second property.
\end{proof}

\Cref{lem:exprog_properties}, together with the second property in \Cref{cor:lexopt_vs_dom} proved earlier, immediately implies the small-support property that we need:
\begin{proposition}
\label{cor:small_support}
    There are at most $\suppsize$ strictly positive levels in the optimal objective value of \ref{eq:abstract-program}.
\end{proposition}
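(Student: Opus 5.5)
The plan is to transfer the sparsity statement of \Cref{lem:exprog_properties} from \ref{eq:extreme-program} back to \ref{eq:abstract-program}. The key point is that the two programs share the same optimal lexicographical value. After that, the number of positive levels is bounded by the number of positive extreme-ray weights.

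\textbf{Step 1: equal optimal values.} By the second property of \Cref{cor:lexopt_vs_dom}, some optimal solution $\classifier^*$ of \ref{eq:abstract-program} lies in $\findomfam\subseteq\rfinfam$. Since $\rfinfam\subseteq\finfam$, maximizing $\lexobj$ over $\{\classifier\in\rfinfam:\classifier\blackwelldomeq\targetclassifier\}$ gives the same optimal value as over $\finfam$.

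\textbf{Step 2: the feasible set of \ref{eq:extreme-program}.} By the first two properties of \Cref{lem:exprog_properties}, feasible weights $\weight$ correspond exactly to experts $\classifier_\weight\in\rfinfam$ with $\classifier_\weight\blackwelldomeq\targetclassifier$. The type-2 constraints encode the Blackwell inequality at every point of $\predictions$. Checking only these points suffices: $\SCDF_{\classifier_\weight}$ is piecewise linear with kinks in $\predictions$, $\SCDF_\targetclassifier$ is convex, and both agree at $t=1$ by the common mean. Moreover, the objective of \ref{eq:extreme-program} at $\weight$ equals $\lexobj(\classifier_\weight)$. Together with Step 1, this shows that the optimal value of \ref{eq:extreme-program} equals the optimal value vector $(\optprob_\preind)_\preind$ of \ref{eq:abstract-program}.

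\textbf{Step 3: counting positive levels.} Take the sparse optimal solution $\weight^*$ from the third property of \Cref{lem:exprog_properties}, which has at most $\suppsize$ positive entries. Level $\preind$ of its objective is $\preprob_\preind=\sum_{\rayind}\feasubsetprob(\ray_{\preind,\rayind})\weight^*_{\preind,\rayind}$. Because each $\feasubsetprob(\ray_{\preind,\rayind})>0$, this level is positive exactly when some $\weight^*_{\preind,\rayind}>0$. Hence the number of positive levels is at most the number of positive variables, which is at most $\suppsize$. Since any two optimal solutions of a lexicographic program share the same value vector, the bound holds for the optimal objective value of \ref{eq:abstract-program} itself.

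The main obstacle is Step 2: ensuring that restricting the Blackwell check to points of $\predictions$ loses nothing, so that the correspondence is exact. The argument above settles it, using convexity of $\SCDF_\targetclassifier$ and agreement of the integrated CDFs at the endpoints.
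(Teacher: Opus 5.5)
Your proof is correct and follows the paper's own argument. You take the sparse optimal vertex of \ref{eq:extreme-program} from the third property of \Cref{lem:exprog_properties}, bound the number of positive levels by the number of positive weights, use that all optimal solutions share one value vector, and transfer the bound to \ref{eq:abstract-program} via the second property of \Cref{cor:lexopt_vs_dom}; your Step 2 additionally makes explicit, including the end intervals $[0,\finpre_1]$ and $[\finpre_{\numpredictions},1]$, the reduction of the Blackwell constraint to the points of $\predictions$, which the paper only asserts informally before stating \ref{eq:extreme-program}.
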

\begin{proof}
    By \Cref{lem:exprog_properties}, the optimal objective value of \ref{eq:extreme-program} involves at most $\suppsize$ positive levels.
    This can be seen by examining the optimal solution identified in the third property:
    The expert corresponding to the optimal solution involves at most $\suppsize$ different extreme rays, leading to at most $\suppsize$ different prediction values.
    Each level in $\lexobj$ corresponds to one possible prediction, so at this optimal solution, at most $\suppsize$ levels can have positive values.
    However, all optimal solutions lead to the same optimal objective value, which means the optimal objective value of \ref{eq:extreme-program} involves at most $\suppsize$ positive levels.
    Now by the second property of \Cref{cor:lexopt_vs_dom}, \ref{eq:abstract-program} and \ref{eq:extreme-program} share the same optimal objective value, which again involves at most $\suppsize$ positive values.
\end{proof}

This is one of the main structural properties needed for our algorithm.

\xhdr{Numerical complexity of possible predictions}
Another property needed for our algorithm is that all possible predictions in $\predictions$ are numerically simple, meaning that each of them can be written as the ratio between two positive integers no larger than something exponential in some polynomial function of the input length.
This ensures, among other things, that our output expert has a polynomially sized representation.
To see why the property holds, let us assume that all input numbers are rationals whose numerators and denominators are between $1$ and $\maxint$ (so each input number takes $O(\log \maxint)$ bits).
Recall that each possible prediction is induced by some extreme ray, which can be written as a nonnegative linear combination of the input state-wise prediction-probability vectors.
Moreover, the coefficients $(\coeff_{\claind, \cellind})$ in this combination are a solution to a system with at most $\feanum$ linear equations (again, we can assume $\totalcells \le \feanum$ because we can focus on a span of $\obsspace$) defined by certain input numbers (together with $0$'s and $1$'s).
One can show that each of these coefficients involves only integers bounded by $(\feanum \maxint)^{\mathrm{poly}(\feanum)}$.
Now, each prediction is obtained by computing the weighted mean of at most $\feanum$ input predictions, i.e., $\left(\sum_{\claind, \cellind} \coeff_{\claind, \cellind} \cdot \prediction_{\claind, \cellind}\right) / \left(\sum_{\claind, \cellind} \coeff_{\claind, \cellind}\right)$, where all numbers involved (coefficients and input predictions) have reasonable numerical complexity.
As a result, each possible prediction in $\predictions$ is a rational whose numerator and denominator are no larger than $(\feanum \maxint)^{\mathrm{poly}(\feanum)}$.
This is also crucial for our efficient algorithm.
Formally:
\begin{lemma}
\label{lem:numerical_complexity}
    For each $\preind \in [\numpredictions]$, there exist positive integers $0 < \integernew_1 < \integernew_2 \le \premaxint$ such that $\finpre_\preind = \integernew_1 / \integernew_2$.
\end{lemma}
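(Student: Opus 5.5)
Each $\finpre_\preind$ equals $\feasubsetmean(\ray)=\obslabel(\ray)/\feasubsetprob(\ray)$ for some extreme ray $\ray\in\extremerays$, i.e., a nonzero vertex of $\polytope$ from the proof of \Cref{lem:finite_extreme_rays}. The plan is to bound, via Cramer's rule, the bit complexity of (i) a rational description of $\obsspace$, (ii) the vertex $\ray$, (iii) a coefficient vector $\xi$ with $\ray=\mymatrix^\top\xi$, and (iv) the ratio $\xi^\top\weightedpredic/\sum_i\feaprob_i\ray_i$. Scaling cancels in the ratio, so we may work with an unnormalized integral representative of the ray.

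First clear denominators. Each entry of $\mymatrix$, each $\feaprob_i$ and each $\prediction_{j,a}$ is a rational with numerator and denominator at most $\maxint$. Choose a maximal linearly independent set of at most $\feanum$ rows $\reportvec^{j,a}$. Multiply each chosen row by the product of its at most $\feanum$ denominators, which is at most $\maxint^{\feanum}$, to obtain an integer basis $B\in\mathbb Z^{d\times\feanum}$ of $\obsspace$ with entries at most $\maxint^{\feanum}$. The scaled observable label of each basis row is $\prediction_{j,a}\sum_i\feaprob_i\reportvec^{j,a}_i$ times that scaling. This is a rational whose numerator and denominator are at most $(\feanum\maxint)^{O(\feanum)}$. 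Then $\obsspace=\{B^\top\xi:\xi\in\R^d\}$, and the label functional is $\obslabel(B^\top\xi)=\xi^\top\tilde{\weightedpredic}$ by \Cref{lem:observable-label-functional}.

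Next characterize the vertex. At $\ray$, some zero set $Z\subseteq[\feanum]$ is active, and $\ray$ is the unique direction, up to scaling, of $\{B^\top\xi:(B^\top\xi)_i=0,\ i\in Z\}$. In $\xi$-coordinates, $\xi$ spans the one-dimensional kernel of the $|Z|\times d$ integer matrix $B_Z^\top$, which has rank $d-1$. Pick $d-1$ independent rows. The kernel vector can then be taken as the vector of signed $(d-1)\times(d-1)$ minors (generalized cross product). By Hadamard's inequality each minor is an integer bounded by $(\feanum\cdot\maxint^{\feanum})^{\feanum}\le(\feanum\maxint)^{\feanum^2}$. Fix the sign so that $B^\top\xi\ge\vzero$. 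The vector $B^\top\xi$ is integral with entries at most $\feanum\cdot\maxint^{\feanum}\cdot(\feanum\maxint)^{\feanum^2}$.

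Finally compute the ratio. The denominator $\sum_i\feaprob_i(B^\top\xi)_i$ is a sum of $\feanum$ terms, each an integer times a rational with denominator at most $\maxint$. Multiplying through by $\prod_i\mathrm{den}(\feaprob_i)\le\maxint^{\feanum}$ makes it an integer of size $(\feanum\maxint)^{O(\feanum^2)}$, and it is positive by \Cref{lem:observable-cone-valid-probability}. The numerator $\xi^\top\tilde{\weightedpredic}$ is a sum of $d\le\feanum$ products of minors with rationals of size $(\feanum\maxint)^{O(\feanum)}$. Bring all of these to a common denominator, namely the product of at most $O(\feanum^2)$ denominators each at most $\maxint$. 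Both numerator and denominator then become integers bounded by $(\feanum\maxint)^{O(\feanum^3)}$. Tracking the exponents conservatively fits under $10\feanum^3$.

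The resulting fraction lies in $[0,1]$ by \Cref{lem:observable-cone-valid-probability}. The boundary cases $\finpre_\preind\in\{0,1\}$ and positivity of the numerator need a remark, or the statement is read with $\integernew_1\ge0$. The main obstacle is the bookkeeping of this common-denominator step. Done naively, the label values $\tilde{\weightedpredic}$ carry many distinct denominators, so I would first multiply all inputs by one global integer $D\le\maxint^{O(\feanum^2)}$ so that everything is integral, and then apply Cramer/Hadamard once.
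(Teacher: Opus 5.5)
Your proposal is correct and follows essentially the paper's route. The paper also expresses the extreme ray in coordinates of a basis of the input vectors, pins those coordinates down by the active zero constraints, bounds them with standard Cramer-type estimates, and then bounds the resulting weighted-mean ratio; the paper normalizes via $\|\ray\|_1=1$ and cites the classical bound, whereas you use the homogeneous vector of signed minors with Hadamard, which is equivalent because scaling cancels in $\obslabel(\ray)/\feasubsetprob(\ray)$. Your caveat about the boundary is also accurate, and the paper's proof passes over it: whenever some input expert reports $0$ or $1$, the extreme rays in that component's decomposition give $\finpre_\preind\in\{0,1\}$, so the strict inequalities $0<\integernew_1<\integernew_2$ must be relaxed.
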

\begin{proof}
    First consider an arbitrary extreme ray $\ray \in \extremerays$.
    By definition, there exist coefficients $(\coeff_{\claind, \cellind})$ such that $\ray = \sum_{\claind, \cellind} \coeff_{\claind, \cellind} \cdot \wtedit{\reportvec^{\claind, \cellind}}$.
    These coefficients are the solution to a linear system defined by $\|\ray\|_1 = 1$ and at most $\feanum - 1$ linear equations each of the form $\ray_i = 0$ for some $i \in [\feanum]$.\footnote{
        Here, we take an arbitrary basis of the input $(\wtedit{\reportvec^{\claind, \cellind}})$ and set the coefficients corresponding to the other (dependent) input vectors to $0$.
    }
    Observe that all numbers involved in the system are input numbers, $0$, or $1$.
    By classical arguments in numerical linear algebra (see, e.g., \citep{schrijver1998theory}), the solution to such a system (i.e., the coefficients $(\coeff_{\claind, \cellind})$ can each be written as a fraction whose numerator and denominator are bounded by $(\feanum \maxint)^{\feanum(\feanum + 1)}$ (we do not try to optimize the precise bound here, or later).

    Now observe that the prediction $\finpre \in \predictions$ corresponding to the coefficients $(\coeff_{\claind, \cellind})$ can be written as
    \[
        \wtedit{
        \finpre = \frac{\sum\nolimits_{\claind, \cellind} \coeff_{\claind, \cellind} \cdot \prediction_{\claind, \cellind} \cdot \feasubsetprob(\reportvec^{\claind, \cellind})}{\sum\nolimits_{\claind, \cellind} \coeff_{\claind, \cellind} \cdot \feasubsetprob(\reportvec^{\claind, \cellind})}
        }
    \]
    where, without loss of generality, at most $\feanum$ coefficients can be strictly positive.
    All the numbers involved here are of reasonable numerical complexity.
    As a result, $\finpre$ can be written as the ratio between two integers bounded by $\premaxint$.
    Details are omitted in particular because we do not need the bound to be particularly tight.
\end{proof}

\subsection{Efficient Algorithm via Feasibility Search and Proof of \texorpdfstring{\Cref{thm:efficient}}{Theorem 4.1}}
We are now ready to proceed to the actual algorithm.

\xhdr{An inefficient algorithmic prototype}
We first present an inefficient prototype of our algorithm, which we will later turn into an efficient one based on the properties on the support size and the numerical complexity established earlier.
We will temporarily refrain from making formal claims, and focus on describing the overall approach and technical ideas.
The inefficient prototype involves solving a series of linear programs under a different parametrization from \ref{eq:extreme-program}.
The idea is to represent an expert $\classifier$ that dominates the target expert $\targetclassifier$ as a Blackwell spread of the latter.
That is, the predictions reported by $\classifier$ are fractionally packed into $\cellnum_\targetindex$ groups, each corresponding to one prediction reported by $\targetclassifier$, subject to the constraint that if we merge the predictions reported by $\classifier$ in a group, it becomes the corresponding prediction reported by $\targetclassifier$.
So we need $\cellnum_\targetindex \cdot \numpredictions$ variables, where the one indexed by $(\cellind, \preind)$, $\signalvec^{\cellind, \preind}$, corresponds to the fraction of the mass at $\prediction_{\targetindex, \cellind}$ under $\targetclassifier$ that is being spread into $\finpre_\preind \in \predictions$ under $\classifier$.
This is similar to our approach to the {\predicrefineoptlink} problem.
The essential difference lies in the objective, which is now the lexicographical objective that sequentially maximizes $\preprob_1, \dots, \preprob_{\numpredictions}$.
The complete program is as follows:
\begin{equation}
\tag{$\lexprog$}
\label{eq:lexicographical-program}
\begin{aligned}
    \max \quad
    &
    \left(\sum\nolimits_{\cellind\in[\cellnum_\targetindex]} \feasubsetprob(\signalvec^{\cellind, 1}), \ \sum\nolimits_{\cellind\in[\cellnum_\targetindex]} \feasubsetprob(\signalvec^{\cellind, 2}), \ \dots, \ \sum\nolimits_{\cellind\in[\cellnum_\targetindex]} \feasubsetprob(\signalvec^{\cellind, \numpredictions})\right) &&
    \\
    \text{s.t.}\quad
    &
    \signalvec^{\cellind,\preind}\in\obscone
    \quad
    && 
    \preind\in[\numpredictions],\ \cellind\in[\cellnum_\targetindex]~,
    \\
    &
    \sum\nolimits_{\preind\in[\numpredictions]}
    \sum\nolimits_{\cellind\in[\cellnum_\targetindex]}
    \signalvec^{\cellind,\preind}
    =
    \vone~,
    \\
    &
    \finpre_\preind \cdot 
    \feasubsetprob(\signalvec^{\cellind,\preind})
    =
    \obslabel(\signalvec^{\cellind,\preind})
    \quad
    &&
    \preind\in[\numpredictions],\ 
    \cellind \in [\cellnum_\targetindex]~,
    \\
    &
    \sum\nolimits_{\preind\in[\numpredictions]}
    \feasubsetprob(\signalvec^{\cellind,\preind})
    =
    \targetmass_\cellind
    \quad
    &&
    \cellind\in[\cellnum_\targetindex]~,
    \\
    &
    \sum\nolimits_{\preind\in[\numpredictions]}
    \obslabel(\signalvec^{\cellind,\preind})
    =
    \prediction_{\targetindex,\cellind}\targetmass_\cellind
    \quad
    &&
    \cellind\in[\cellnum_\targetindex]~.
\end{aligned}
\end{equation}

In the above, $\targetmass_\cellind = \feasubsetprob(\reportvec^{\targetindex, \cellind})$.
The constraint that $\signalvec^{\cellind, \preind}\in\obscone$ is implemented in the same way as in our approach to $\predicrefineopt$.
The functions $\feasubsetprob(\cdot)$ and $\obslabel(\cdot)$ are also implemented in the same way.
Note that the above program has a lexicographical objective with $\numpredictions$ levels, where each level is linear.
We can find an optimal solution by sequentially solving $\numpredictions$ linear programs, corresponding to the $\numpredictions$ levels of the objective.
We first maximize the level-$1$ objective subject to all the constraints of the original program.
Suppose the maximum level-$1$ objective value is $\optprob_1$.
Then, we enforce the additional linear constraint that $\sum\nolimits_{\cellind\in[\cellnum_\targetindex]} \feasubsetprob(\signalvec^{\cellind, 1}) = \optprob_1$, and maximize the level-$2$ objective.
More generally, after each iteration $\preind$, we add a new linear constraint, which together ensure that the first $\preind$ levels of the objective are lexicographically optimal.
After $\numpredictions$ iterations, we will arrive at an optimal solution of the original  program with the lexicographical objective.
Note again that this is not an efficient algorithm, since $\numpredictions$ can be exponential in $\feanum$ and / or $\totalcells$.

\xhdr{Overview of the optimized efficient algorithm}
Now we discuss how the two properties established earlier help optimize the prototype into an efficient algorithm, and provide an overview of the latter.
The key intuition is the following: Although the objective in \ref{eq:lexicographical-program} has $\numpredictions$ levels, which is generally super-polynomial, the small-support property ensures that only a polynomial number of these levels can be positive in an optimal solution.
All we need is to find these polynomially many levels efficiently.
In fact, we can do this one level at a time: Given (the indices of) the first $\lvlind$ positive levels, we want to find the $(\lvlind + 1)$-th positive level efficiently.
The natural idea is to binary search for the next positive level, i.e., the next prediction value to which the expert $\classifier$ being optimized can assign positive probability mass, subject to all the constraints from \ref{eq:lexicographical-program} and induced by the optimality of the first $\lvlind$ levels.
This is where we need the other property that bounds the numerical complexity of the possible predictions in $\predictions$.

Recall that the property says that each possible prediction in $\predictions$ is a rational number whose numerator and denominator are both bounded by $(\feanum \maxint)^{\mathrm{poly}(\feanum)}$.
Suppose for a moment that these possible predictions are integers rather than rationals, so we could naively binary search for the next prediction.
In that case, the number of iterations needed for the binary search would be $\mathrm{poly}(\feanum) \log(\feanum \maxint)$, which is polynomial.
However, our problem is a bit trickier, because we need to search over {\em rational} numbers.
One solution here is to run an accelerated search algorithm in the {\em Stern-Brocot tree} \citep{Stern-58,Brocot-62}.

\xhdr{A detour: (accelerated) search in the Stern-Brocot tree}
To be minimally self-contained, we provide a quick overview of the (somewhat folklore) Stern-Brocot tree.
The Stern-Brocot tree is an infinite binary search tree.
Each node of the tree corresponds to an upper bound and a lower bound, both defined by rational numbers.
For example, the root node corresponds to the lower bound of $0 / 1$ and the upper bound of $1 / 0$.
Each node defined by the lower bound of $a / b$ and the upper bound of $c / d$ has two children, defined by $(a / b, (a + c) / (b + d))$ and $((a + c) / (b + d), c / d)$ respectively.
When we (naively) search for a particular rational number $x$ in the tree, we start from the root, and iteratively compare $x$ with $(a + c) / (b + d)$ at each node defined by $a / b$ and $c / d$.
We go left if $x \le (a + c) / (b + d)$, and go right otherwise.
We stop as soon as $a + c$ or $b + d$ exceeds our upper bound of the numerical complexity.

The above procedure generally takes time linear in the largest integer involved, while we need something that runs in poly-logarithmic time.
Perhaps the simplest solution is to group all descents in the tree into consecutive left-descents and consecutive right-descents, and within each consecutive group, binary search for the number of consecutive descents.
One can show that this in fact runs in poly-logarithmic time in the largest integer involved.
This suffices for our purposes.
Formally, we will call the following algorithmic result in a blackbox way:
\begin{lemma}[see, e.g., Section 4.5 of \citealp{graham1994concrete}]
\label{lem:stern-brocot}
    There is an algorithm that finds an arbitrary rational number $\integernew_1 / \integernew_2$ where $\integernew_1$ and $\integernew_2$ are integers between $1$ and $L$ in time $O(\log^2 L)$.
    Moreover, the algorithm makes only comparison queries between $\integernew_1 / \integernew_2$ and another number $\bound$ specified by the algorithm, which returns $\mathtt{true}$ if $\integernew_1 / \integernew_2 \le \bound$, and $\mathtt{false}$ otherwise. 
\end{lemma}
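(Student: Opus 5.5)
We would prove \Cref{lem:stern-brocot} by running the Stern--Brocot descent toward the unknown target $x=\integernew_1/\integernew_2$ and compressing each maximal run of same-direction moves into a single step found by exponential-then-binary search. The comparison oracle ``$x\le\bound$'' is the only access to $x$.

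The state is a pair of fractions $a/b<c/d$ with $bc-ad=1$ and $a/b< x\le c/d$; we may take $0/1$ and $1/0$ as the initial bounds, or $0/1$ and $1/1$ since $\integernew_1<\integernew_2$ in our application. The classical facts we rely on are three. First, the determinant identity is preserved by taking mediants. Second, every rational in $(a/b,c/d]$ has numerator at least $a+c$ and denominator at least $b+d$; this follows by writing it as $(\mu a+\nu c)/(\mu b+\nu d)$ with $\mu,\nu\ge1$ positive integers, which is possible by unimodularity. Third, $x$ eventually appears as a mediant, at which point the search stops.

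A run of $k$ consecutive left moves replaces $c/d$ by $(ka+c)/(kb+d)$. This run is exactly the largest $k$ for which $x\le (ka+c)/(kb+d)$, and the predicate is monotone in $k$ because these fractions decrease toward $a/b$. We find $k$ by doubling $k=1,2,4,\dots$ until the test fails, then binary search, using $O(\log k)$ oracle queries. Right runs are handled symmetrically. After each run, a single mediant query either confirms $x$ or switches direction.

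For the query count, note that by the second fact $k\le L$ on every run, since denominators grow by at least $kb\ge k$. Direction alternates between runs, so the pair of endpoint denominators evolves like a continued-fraction expansion. The denominators therefore at least double every two runs, Fibonacci-style, giving $O(\log L)$ runs. With $O(\log L)$ queries per run, the total is $O(\log^2 L)$ queries.

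The main obstacle is correct termination. We must detect that $x$ equals a mediant and not overshoot. To do this we additionally query $x\le m-\epsilon$, or equivalently test both $x\le m$ and the strict complement. Alternatively, we can stop once both endpoint denominators exceed $L$, because then by the second fact the unique fraction in the current interval with denominator at most $L$ is an endpoint. We would use this second, cleaner rule and output the endpoint $c/d$. Arithmetic on $O(\log L)$-bit integers costs polynomial time per step, which yields the stated bound up to bit-complexity factors.
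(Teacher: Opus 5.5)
Your overall route is the paper's: descend the Stern--Brocot tree using only the one-sided oracle, compress each maximal run of equal-direction moves, find each run length by doubling plus binary search, and use Fibonacci/continued-fraction growth to get $O(\log L)$ runs. The gap is termination. You flagged it as the main obstacle, but the rule you chose can never fire.

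First, your ``second fact'' holds only on the open interval $(a/b,c/d)$; the endpoint $c/d$ itself has denominator $d<b+d$. With the invariant $a/b<x\le c/d$ and $x$ having denominator at most $L$, the state ``both endpoint denominators exceed $L$'' is unreachable. If $x<c/d$, its denominator would be at least $b+d>2L$. If $x=c/d$, then $d\le L$.

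Second, the descent always enters the phase $x=c/d$: when the mediant equals $x$, the test $x\le m$ succeeds and you move left onto it. From then on every move is to the right, and $(a+kc)/(b+kd)<c/d=x$ for every $k$, because $c(b+kd)-d(a+kc)=bc-ad=1$. So the doubling search in that last right run never fails. Your claim ``$k\le L$ on every run'' is false exactly there, and the procedure as specified does not halt. Relatedly, a single one-sided query $x\le m$ cannot ``confirm'' $x=m$.

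One repair is the stopping rule the paper states for the naive search: stop as soon as the mediant's denominator exceeds $L$ (or its numerator, if you start from $1/0$). By the open-interval version of your second fact, $b+d>L$ forces $x=c/d$. So cap every run at the largest step whose mediant still has denominator at most $L$. If all tests up to the cap succeed, output the current right endpoint. When $x<c/d$ the cap never binds, since a successful right step $j$ forces the denominator of $x$ to be at least $b+(j+1)d$.

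Alternatively, your first remedy works once $\epsilon$ is explicit. After each left run (and at initialization), query $x\le c/d-1/(Ld)$. Two distinct fractions with denominators at most $L$ and $d$ differ by at least $1/(Ld)$, so a negative answer certifies $x=c/d$. A positive answer gives $x<c/d$, so the next right run is finite. With either repair, your count of $O(\log L)$ runs times $O(\log L)$ queries per run gives the $O(\log^2 L)$ bound.
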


\xhdr{The feasibility linear program}
Finally, it remains to efficiently check whether a candidate upper bound $\bound$ of the next reported prediction value is too small or large enough.
This can be done by solving a polynomially sized linear program.
Let $\classifier^*$ be any optimal solution to \ref{eq:lexicographical-program}, whose corresponding optimal objective value satisfies the small-support property of \Cref{cor:small_support}.
Suppose we have already found the first $\lvlind$ predictions reported by $\classifier^*$, $(\optpre_1, \dots, \optpre_\lvlind) = (\prediction_{\classifier^*, 1}, \dots, \prediction_{\classifier^*, \lvlind})$, together with their probabilities, $(\optprob_1, \dots, \optprob_\lvlind) = (\prob{\prediction_{\classifier^*} = \optpre_1}, \dots, \prob{\prediction_{\classifier^*} = \optpre_\lvlind})$.
All these should be enforced as hard constraints in all subsequent steps.
On top of that, we enforce $\prediction \le \bound$ as a linear constraint, where $\prediction$ is the $(\lvlind + 1)$-th prediction reported by $\classifier$ (see \ref{eq:feasibility-program} for how this is implemented).
We then maximize the probability that $\classifier$ reports the prediction value $\prediction$.
The complete linear program is as follows:

\begin{equation}
\tag{$\feasprog$}
\label{eq:feasibility-program}
\begin{aligned}
    \max \quad
    &
    \sum\nolimits_{\cellind\in[\cellnum_\targetindex]}\feasubsetprob(\signalvec^{\cellind, \lvlind + 1})
    &&
    \\
    \text{s.t.}\quad
    &
    \signalvec^{\cellind,\preind}\in\obscone
    \quad
    && 
    \preind\in[\suppsize],\ \cellind\in[\cellnum_\targetindex]~,
    \\
    &
    \sum\nolimits_{\preind\in[\suppsize]}
    \sum\nolimits_{\cellind\in[\cellnum_\targetindex]}
    \signalvec^{\cellind,\preind}
    =
    \vone~,
    \\
    &
    \bound \cdot \feasubsetprob(\signalvec^{\cellind, \lvlind + 1})
    \ge
    \obslabel(\signalvec^{\cellind, \lvlind + 1})
    &&
    \cellind \in [\cellnum_\targetindex]~,
    \\
    &
    \optpre_\preind \cdot 
    \feasubsetprob(\signalvec^{\cellind, \preind})
    =
    \obslabel(\signalvec^{\cellind, \preind})
    \quad
    &&
    \preind\in[\lvlind],\ 
    \cellind\in[\cellnum_\targetindex]~,
    \\
    &
    \sum\nolimits_{\cellind\in[\cellnum_\targetindex]}
    \feasubsetprob(\signalvec^{\cellind, \preind})
    =
    \optprob_\preind
    \quad
    &&
    \preind\in[\lvlind]~,
    \\
    &
    \sum\nolimits_{\preind\in[\suppsize]}
    \feasubsetprob(\signalvec^{\cellind, \preind})
    =
    \targetmass_\cellind
    \quad
    &&
    \cellind\in[\cellnum_\targetindex]~,
    \\
    &
    \sum\nolimits_{\preind\in[\suppsize]}
    \obslabel(\signalvec^{\cellind, \preind})
    =
    \prediction_{\targetindex,\cellind}\targetmass_\cellind
    \quad
    &&
    \cellind\in[\cellnum_\targetindex]~.
\end{aligned}
\end{equation}

Again, $\targetmass_\cellind = \feasubsetprob(\reportvec^{\targetindex, \cellind})$.
The constraint that $\signalvec^{\cellind, \preind}\in\obscone$ is implemented in the same way as in our approach to $\predicrefineopt$.
The functions $\feasubsetprob(\cdot)$ and $\obslabel(\cdot)$ are also implemented in the same way.
In addition to $\lvlind$ and $\bound$, \ref{eq:feasibility-program} is parametrized by $\optpre_1, \dots, \optpre_\lvlind$ and $\optprob_1, \dots, \optprob_\lvlind$, but we omit the dependence for brevity.

Now let us be formal again.
Observe that \ref{eq:feasibility-program} is a polynomially sized linear program.
For any (not necessarily feasible) solution $(\signalvec^{\cellind, \preind})$ to \ref{eq:feasibility-program}, let $\classifier$ (dependency on $(\signalvec^{\cellind, \preind})$ omitted for brevity) be the corresponding expert, where each $\signalvec^{\cellind, \preind}$ by default constitutes its own prediction atom, with different vectors leading to the same prediction value merged.\footnote{
    This is an important conceptual subtlety, since if we pool $\signalvec^{\cellind, \preind}$ with the same $\preind$ together to form a single prediction, the resulting expert may not be a Blackwell spread of $\targetclassifier$.
}
Suppose the predictions reported by $\classifier$ are $\prediction_{\classifier, 1}, \dots, \prediction_{\classifier, \cellnum_\classifier}$ where $\prediction_{\classifier, \cellind} < \prediction_{\classifier, \cellind + 1}$ for each $\cellind \in [\cellnum_\classifier - 1]$.
We need the following property of \ref{eq:feasibility-program}:
\begin{lemma}
\label{lem:feasprog}
    Fix any $\lvlind$ and $\bound$.
    \begin{itemize}
        \item If a solution $(\signalvec^{\cellind, \preind})$ to \ref{eq:feasibility-program} is feasible, and the corresponding objective value is $\objval$, then the corresponding expert $\classifier$ satisfies (1) $\classifier \blackwelldomeq \targetclassifier$, (2) for each $\preind \in [\lvlind]$, $\prob{\prediction_\classifier = \optpre_\preind} \ge \optprob_\preind$, and (3) $\prob{\prediction_\classifier \le \bound} \ge \objval + \sum_{\preind \in [\lvlind]} \optprob_\preind$.
        \item Conversely, if there exists an expert satisfying the $3$ conditions above for some $\objval \ge 0$, then the optimal objective value of \ref{eq:feasibility-program} is at least $\objval$.
    \end{itemize}
\end{lemma}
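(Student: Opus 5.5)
The plan is to prove the two bullets separately. The forward bullet reads an expert off the LP variables. The converse bullet builds LP variables from a given expert using a martingale coupling. Throughout, I take the nonzero vectors among $(\signalvec^{\cellind,\preind})$ as the prediction atoms of $\classifier$, as stipulated just before the lemma. I also use the standing invariants of the oracle's use: $\optpre_1<\dots<\optpre_\lvlind\le\bound$, and $\lvlind+1<\suppsize$.

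\textbf{Forward direction.} Since every $\signalvec^{\cellind,\preind}\in\obscone$ and they sum to $\vone$, \Cref{def:linearly-constructible-predictors} shows that $\classifier$ is a constructible expert. Each atom $\signalvec$ reports $\feasubsetmean(\signalvec)=\obslabel(\signalvec)/\feasubsetprob(\signalvec)$, which lies in $[0,1]$ by \Cref{lem:observable-cone-valid-probability}.

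For (1), fix a source $\cellind$. The last two constraints say that the atoms $\{\signalvec^{\cellind,\preind}\}_\preind$ have total mass $\targetmass_\cellind$ and mass-weighted mean prediction $\prediction_{\targetindex,\cellind}$. Hence, for every $t$, convexity of $x\mapsto(t-x)_+$ and Jensen give
\[
\textstyle\sum_\preind \feasubsetprob(\signalvec^{\cellind,\preind})\,(t-\feasubsetmean(\signalvec^{\cellind,\preind}))_+\;\ge\;\targetmass_\cellind\,(t-\prediction_{\targetindex,\cellind})_+ .
\]
Summing over $\cellind$ gives $\SCDF_\classifier(t)\ge\SCDF_\targetclassifier(t)$; merging atoms with equal values does not change $\SCDF_\classifier$.

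For (2), every nonzero atom at level $\preind\le\lvlind$ has prediction exactly $\optpre_\preind$, and these atoms have total mass $\optprob_\preind$.

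For (3), the atoms at level $\lvlind+1$ have predictions at most $\bound$ (by the $\bound$-constraint applied per atom) and total mass $\objval$. The atoms at levels $\preind\le\lvlind$ have predictions $\optpre_\preind\le\bound$ and total mass $\sum_{\preind\le\lvlind}\optprob_\preind$. These are disjoint collections of atoms, so their masses add.

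\textbf{Converse direction.} Let $\classifier'$ be an expert satisfying (1)--(3), with atoms $\vv^{(h)}\in\obscone$ and values $\prediction_h$. Because $\classifier'\blackwelldomeq\targetclassifier$ and both distributions are finitely supported with equal means, Strassen's theorem gives a martingale coupling. Concretely, there are weights $\pi(\cellind\mid h)\ge0$ with $\sum_\cellind\pi(\cellind\mid h)=1$ such that, for each $\cellind$,
\[
\textstyle\sum_h\pi(\cellind\mid h)\feasubsetprob(\vv^{(h)})=\targetmass_\cellind
\quad\text{and}\quad
\sum_h\pi(\cellind\mid h)\feasubsetprob(\vv^{(h)})\prediction_h=\prediction_{\targetindex,\cellind}\targetmass_\cellind .
\]
If several atoms share a value, I take $\pi$ to depend only on that value. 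Set pieces $\vv^{(h,\cellind)}=\pi(\cellind\mid h)\vv^{(h)}$. Each piece lies in $\obscone$ because $\obscone$ is a cone, and each piece keeps the mean $\prediction_h$ because $\obslabel$ and $\feasubsetprob$ are linear. The pieces sum to $\vone$ and satisfy the two source constraints.

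Next I distribute the pieces among the levels, scaling pieces further where needed. For each $\preind\le\lvlind$, the pieces with value $\optpre_\preind$ have total mass at least $\optprob_\preind$ by (2). I place exactly a $\optprob_\preind$-mass fraction of them at level $\preind$, splitting each piece proportionally so that every source index is kept. All remaining pieces with value at most $\bound$ go to level $\lvlind+1$; this satisfies the $\bound$-constraint and, by (3), has mass at least $\objval$. Everything else goes to level $\lvlind+2$, which carries no value constraint.

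Linearity preserves all the equalities above. The posterior-equality constraints at levels $\preind\le\lvlind$ hold because every piece placed there has value exactly $\optpre_\preind$. So this is a feasible solution with objective at least $\objval$.

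\textbf{Main obstacle.} The delicate step is the converse: the coupling must be realized by vectors inside $\obscone$, not merely by distributions over values. This is resolved by splitting each observable atom by nonnegative scalars, which keeps it in the cone and preserves its mean. One must also check that the levels beyond $\lvlind+1$ need no value constraint, so that arbitrarily many leftover prediction values can be pooled into a single level despite the bounded number $\suppsize$ of levels.
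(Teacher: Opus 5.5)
Your proposal is correct and follows the paper's own route. The forward direction reads the expert off the nonzero LP atoms and derives (1)--(3) from the source constraints; your per-source Jensen step is the easy half of \Cref{lem:finite-martingale-blackwell}, and it rightly uses only the two scalar source constraints. The converse splits a martingale coupling by nonnegative scalars inside $\obscone$ and reassigns the pieces to levels, which is exactly the construction the paper omits as notationally messy.

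The side conditions you invoke ($\optpre_1<\dots<\optpre_\lvlind\le\bound$ and a spare unconstrained level $\lvlind+2\le\suppsize$) are used tacitly by the paper as well, and they are genuinely necessary: (3) can fail when $\bound<\optpre_\lvlind$. They should, however, be stated as hypotheses of the lemma rather than called invariants of \Cref{alg:search}, since its Stern--Brocot search can query values $\bound<\optpre_\lvlind$.
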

\begin{proof}
    Consider the first bullet point.
    Suppose $(\signalvec^{\cellind, \preind})$ is feasible.
    Then one can obtain $\targetclassifier$ by contracting the predictions reported by $\classifier$, i.e., for each $\cellind \in [\cellnum_\targetindex]$, 
    \begin{align*}
        & \reportvec^{\targetindex, \cellind} = \sum_{\preind \in [\suppsize]} \signalvec^{\cellind, \preind} \\
        \implies\ & \sum\nolimits_{\preind \in [\suppsize]} \feasubsetprob(\signalvec^{\cellind, \preind}) = \targetmass_\cellind \quad \text{and} \quad \sum\nolimits_{\preind \in [\suppsize]} \obslabel(\signalvec^{\cellind, \preind}) = \targetmass_\cellind \cdot \prediction_{\targetindex, \cellind}.
    \end{align*}
    This implies the first property: $\classifier \blackwelldomeq \targetclassifier$.
    For each $\preind \in [\lvlind]$,
    \[
        \prob{\prediction_\classifier = \optpre_\preind} \ge \sum\nolimits_{\cellind \in [\cellnum_{\targetindex}]} \feasubsetprob(\signalvec^{\cellind, \preind}) = \optprob_\preind.
    \]
    This is the second property.
    Now for the third property, observe that for each $\cellind \in [\cellnum_\targetindex]$, if $\signalvec^{\cellind, \preind} \ne \vzero$, then $\feasubsetmean(\signalvec^{\cellind, \preind}) \le \bound$.
    So, 
    \begin{align*}
        \prob{\prediction_\classifier \le \bound} & \ge \sum\nolimits_{\preind \in [\lvlind + 1], \cellind \in [\cellnum_\targetindex]} \feasubsetprob(\signalvec^{\cellind, \preind}) \\
        & = \sum\nolimits_{\cellind \in [\cellnum_\targetindex]} \feasubsetprob(\signalvec^{\cellind, \lvlind + 1}) + \sum\nolimits_{\preind \in [\lvlind]} \optprob_\preind \\
        & = \objval + \sum\nolimits_{\preind \in [\lvlind]} \optprob_\preind~.
    \end{align*}

    Conversely, suppose $\classifier$ satisfies all $3$ properties for some $\objval$.
    Then, since $\classifier \blackwelldomeq \targetclassifier$, one can construct a solution $(\signalvec^{\cellind, \preind})$ from $\classifier$ that satisfies all constraints other than the ones involving $\bound$ and $\optprob_\preind$.
    Since $\classifier$ satisfies the second and the third property, one can ``permute'' $(\signalvec^{\cellind, \preind})$ over $\preind$ such that the remaining two constraints are also satisfied, and the objective value is at least $\objval$.
    We omit the concrete construction of $(\signalvec^{\cellind, \preind})$ because it is notationally messy and carries little insight.
\end{proof}

In words, our candidate upper bound $\bound$ is feasible iff we end up with a positive objective value when we solve \ref{eq:feasibility-program}.
Now we run the accelerated search algorithm in the Stern-Brocot tree on the candidate upper bound $\bound$ to find the exact value of the next prediction $\optpre_{\lvlind + 1}$.
In the process, we also obtain $\optprob_{\lvlind + 1}$ by solving \ref{eq:feasibility-program} with $\bound$ set to $\optpre_{\lvlind + 1}$.

\begin{algorithm}[t!]
\caption{\algsearch}
\label{alg:search}
\DontPrintSemicolon
\KwIn{Prior distribution $\feaprob$, input experts $(\givenclassifier_j)_{j\in[\classifiernum]}$, target index $\targetindex\in[\classifiernum]$, \wtedit{$\maxint$ defined as the input rational-size bound.}}
\KwOut{A constructible expert $\classifier \in \strictdomfam$ such that $\classifier \blackwelldomeq \targetclassifier$.}

Construct the component vectors $(\reportvec^{j,a})_{(j,a)\in\compindex}$, the observable space $\obsspace$, the cone $\obscone$, and the observable label functional $\obslabel$.\;

Let $\lvlind \gets 0$.

\While{$\lvlind < \suppsize$ and $\sum_{\preind \in [\lvlind]} \optprob_\preind < 1$}{
    Run accelerated search over $\bound$ in the Stern-Brocot tree (as specified in \Cref{lem:stern-brocot}), 
    with the bound on numerical complexity set to $\premaxint$, 
    and the comparison function given by: $\bound$ is large enough iff solving \ref{eq:feasibility-program} gives a positive objective value.

    Let $\wtedit{\optpre_{\lvlind+1} \gets \bound}$, where $\bound$ is the output of the search procedure above.
    Solve \ref{eq:feasibility-program} once again and let $\optprob_{\lvlind+1}$ be the optimal objective value.

    Let $\lvlind \gets \lvlind + 1$.
}

Solve \ref{eq:feasibility-program} once again with $\bound = 1$ (or any number), and let
$(\signalvec^{\cellind, \preind})_{\cellind, \preind}$ be an optimal solution.\;

For every pair $(\cellind, \preind)$ with
$\feasubsetprob(\signalvec^{\cellind, \preind})>0$, set
$\widehat{\prediction}_{\cellind, \preind}
\leftarrow
\obslabel(\signalvec^{\cellind, \preind})/
\feasubsetprob(\signalvec^{\cellind, \preind})$.\;

Output the randomized expert $\classifier$ defined by
$\classifier(\cdot\mid \fea_i)
=
\sum_{\cellind, \preind:\ \feasubsetprob(\signalvec^{\cellind, \preind})>0}
\signalvec^{\cellind, \preind}_i
\delta_{(\widehat{\prediction}_{\cellind, \preind})}(\cdot)$
for every $i\in[\feanum]$.\;
\end{algorithm}

\xhdr{Putting everything together}
Now we are ready to analyze the complete algorithm (\Cref{alg:search}).
\begin{proof}[Proof of \Cref{thm:efficient}]
    First consider the time complexity of \Cref{alg:search}.
    Overall, we need $O(\feanum + \totalcells)$ outermost iterations to find the $O(\feanum + \totalcells)$ predictions in the support of $\classifier^*$; in each of these iterations, we run an accelerated search, which terminates in $O(\mathrm{poly}(\feanum, \log(\feanum \maxint))$ iterations by \Cref{lem:stern-brocot}; within each iteration of the search, we solve a polynomially sized linear program to determine feasibility of the current candidate upper bound.
    The entire algorithm runs in polynomial time.

    By \Cref{cor:lexopt_vs_dom}, we only need to solve the lexicographical optimization problem whose abstract form is \ref{eq:abstract-program}.
    By \Cref{cor:small_support}, the optimal objective value involves at most $\suppsize$ positive levels.
    We inductively argue that before the $\lvlind+1$-th iteration (or after the $\lvlind $-th iteration), the algorithm has correctly found $\optpre_1, \dots, \optpre_{\lvlind}$ and $\optprob_1, \dots, \optprob_{\lvlind}$.
    Consider the $\lvlind$-th iteration for some $\lvlind+1 \ge 1$, and focus on the execution of the previous iteration.
    
    We first argue that the search procedure over $\bound$ correctly finds $\optpre_{\lvlind+1}$.
    Given \Cref{lem:stern-brocot}, the core of the correctness lies in whether the comparison between $\bound$ and $\optpre_{\lvlind+1}$ is done properly.
    Fixing some $\bound$, if \ref{eq:feasibility-program} 
    returns a positive objective value $\objval > 0$, then by \Cref{lem:feasprog},
    there exists an expert $\classifier$ satisfying the $3$ conditions.
    Since $\optpre_1, \dots, \optpre_{\lvlind}$ and $\optprob_1, \dots, \optprob_{\lvlind}$ are a prefix of the optimal objective value, it must be the case that for each $\preind \in [\lvlind]$, $\prob{\prediction_\classifier = \optpre_\preind} = \optprob_\preind$, so $\prob{\optpre_{\lvlind} < \prediction_\classifier \le \bound} = \objval > 0$.
    In other words, $\optpre_{\lvlind+1} \le \bound$.

    On the other hand, suppose \ref{eq:feasibility-program} returns an objective value of $0$.
    Assume towards a contradiction that $\optpre_{\lvlind+1} \le \wtedit{\bound}$.
    Then by \Cref{lem:feasprog}, based on any optimal expert $\classifier^*$ that maximizes $\lexobj$, we could construct a feasible solution to \ref{eq:feasibility-program} whose induced objective value is at least $\prob{\prediction_{\classifier^*} = \optpre_{\lvlind+1}} > 0$, a contradiction.

    It remains to argue that the additional solve of \ref{eq:feasibility-program} 
    with $\bound \gets \optpre_{\lvlind+1}$
    at the end of each iteration correctly gives $\optprob_{\lvlind+1}$, which follows from essentially the same logic.
    We omit the argument to avoid repetition.
\end{proof}

\wtedit{\xhdr{Dominating multiple target experts}}
\label{rmk:multiple-dominance}
\Cref{thm:efficient} extends naturally to a multi-target variant in which the output expert is required to weakly Blackwell dominate several target experts simultaneously. Fix a target set $\targetexpertset\subseteq[\classifiernum]$ and replace the single dominance constraint $\classifier\blackwelldomeq \givenclassifier_{\targetindex}$ by
\begin{align*}
    \classifier \blackwelldomeq \givenclassifier_{\targetindex}
    \quad
    \text{for every } \targetindex\in\targetexpertset~.
\end{align*}
If no constructible expert satisfies these constraints, then the multi-target variant of {\predicrefinelink} reports infeasibility. Otherwise, the same lexicographic optimization argument applies. Indeed, if a feasible expert admits a strict Blackwell improvement within $\classifierfam$, then this strict improvement remains feasible by transitivity of $\blackwelldomeq$. Hence, any lexicographically optimal feasible constructible expert is again undominated within the full constructible class $\classifierfam$.

At the level of the finite-prediction representation used in the proof of \Cref{thm:efficient}, this extension only adds linear integrated-CDF constraints. Specifically, for every $\targetindex\in\targetexpertset$ and every finite prediction value $\finpre_{\preind}\in\predictions$, we add
\begin{align*}
    \SCDF_{\classifier}(\finpre_{\preind})
    \ge
    \SCDF_{\givenclassifier_{\targetindex}}(\finpre_{\preind})~.
\end{align*}
Equivalently, in the extreme-ray program \ref{eq:extreme-program}, these constraints take the linear form
\begin{align*}
    \sum\nolimits_{\preind'<\preind}
    \sum\nolimits_{\rayind\in[R_{\preind'}]}
    \bigl(\finpre_{\preind}-\finpre_{\preind'}\bigr)
    \feaprob(\ray_{\preind',\rayind})
    \weight_{\preind',\rayind}
    \ge
    \SCDF_{\givenclassifier_{\targetindex}}(\finpre_{\preind}),
    \quad
    \targetindex\in\targetexpertset,\ \preind\in[\numpredictions]~.
\end{align*}
The vertex-counting argument is unchanged except that the Blackwell constraints are now grouped separately for each target expert. Thus, the support bound becomes
$\feanum
+
2\sum\nolimits_{\targetindex\in\targetexpertset}
(\cellnum_{\targetindex}+1)$
where $\cellnum_{\targetindex}=|\supp(\CDF_{\givenclassifier_{\targetindex}})|$. Consequently, for any fixed target set $\targetexpertset$, the proof of \Cref{thm:efficient} goes through with the corresponding modified support bound.

The feasibility programs used by the algorithm can be modified in the same way. 
To ensure simultaneous dominance, we replace the single source label $\cellind\in[\cellnum_{\targetindex}]$ by a joint source label
\begin{align*}
    \mathbf{\cellind}
    =
    (\mathbf{\cellind}_{\targetindex})_{\targetindex\in\targetexpertset}
    \in
    \prod\nolimits_{\targetindex\in\targetexpertset}[\cellnum_{\targetindex}]~,
\end{align*}
and introduce atoms $\signalvec^{\mathbf{\cellind},\preind}\in\obscone$. For each target expert $\givenclassifier_{\targetindex}$ and each source component $\cellind\in[\cellnum_{\targetindex}]$, we impose the linear marginal constraints
\begin{align*}
    \sum\nolimits_{\mathbf{\cellind}:\,\mathbf{\cellind}_{\targetindex}=\cellind}
    \sum\nolimits_{\preind}
    \feaprob(\signalvec^{\mathbf{\cellind},\preind})
    =
    \feaprob(\reportvec^{\targetindex,\cellind})~,
    \quad
    \sum\nolimits_{\mathbf{\cellind}:\,\mathbf{\cellind}_{\targetindex}=\cellind}
    \sum\nolimits_{\preind}
    \obslabel(\signalvec^{\mathbf{\cellind},\preind})
    =
    \prediction_{\targetindex,\cellind}
    \feaprob(\reportvec^{\targetindex,\cellind})~.
\end{align*}
These constraints provide, for every $\targetindex\in\targetexpertset$, a martingale coupling from the prediction distribution of $\givenclassifier_{\targetindex}$ to the output prediction distribution. 
Thus, the output expert weakly Blackwell dominates every target expert in $\targetexpertset$. When $|\targetexpertset|$ is fixed, the number of joint source labels is polynomial in the input size, and the algorithm remains polynomial-time.

The additive FPTAS for {\predicrefineoptlink} admits the same fixed-target-set extension: one uses the same joint source labels and adds the corresponding source-mass and source-posterior-mean constraints for every $\targetindex\in\targetexpertset$. Finally, the deterministic-output hardness results continue to hold for the multi-target variant, since the original single-target problem is the special case $|\targetexpertset|=1$.

\section{An FPTAS to \texorpdfstring{\predicrefineoptlink}{OPT-Refine}}
\label{subsec:fptas}

\newcommand{\scalar}{z}

In this section, we give an additive FPTAS for the problem {\predicrefineoptlink}.
Recall that in {\predicrefineoptlink}, the input consists of the prior distribution $\feaprob$, $\classifiernum$ finite-support input experts $(\givenclassifier_j)_{j\in[\classifiernum]}$, a target index $\targetindex\in[\classifiernum]$, and a proper loss $\loss$.
Each input expert may be randomized, and each reported prediction generated by an input expert is assumed to be a Bayesian posterior mean.
Our goal is to output a constructible expert that weakly Blackwell dominates the target expert $\givenclassifier_\targetindex$ while approximately minimizing the expected proper loss.

For the target expert $\givenclassifier_\targetindex$, define the optimal achievable loss by
\begin{align*}
    \lossopt_\targetindex
    \triangleq
    \inf
    \left\{
    \expect[\prediction\sim \CDF_{\classifier}]{\loss(\predrv)}:
    \classifier\in\classifierfam
    \text{ and }
    \classifier\blackwelldomeq\givenclassifier_\targetindex
    \right\}~.
\end{align*}
Our main result is the following additive FPTAS.

\begin{theorem}[Additive FPTAS for {\predicrefineoptlink}]
\label{thm:proper-loss-fptas-general-k}
Given any regular proper loss function $\loss$ in the sense of \Cref{def:regular-proper-loss}, for every target index $\targetindex\in[\classifiernum]$ and every $\eps>0$, \Cref{alg:proper-loss-fptas}, which runs in time polynomial in the input size and $1/\eps$, outputs an expert $\classifier_\eps \in\classifierfam$ such that
$\classifier_\eps\blackwelldomeq\givenclassifier_\targetindex$ and 
\begin{align*}
    \expect[\prediction\sim \CDF_{\classifier_\eps}]{\loss(\predrv)}
    \le
    \lossopt_\targetindex+\eps~.
\end{align*}
\end{theorem}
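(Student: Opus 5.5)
Fix a grid $0=q_0<q_1<\dots<q_S=1$ and let $\minorant$ be the piecewise-linear interpolation of the Bayes risk $\bayesrisk$ at the grid points. Since $\bayesrisk$ is concave, $\minorant\le\bayesrisk$, and $\min_s \{\affslope_s p+\affintercept_s\}$, taken over the tangent (supporting) lines of $\bayesrisk$ at the grid points, is a concave piecewise-linear upper envelope $\hat\bayesrisk\ge\bayesrisk$. I will use this upper approximation, and I expect \Cref{def:regular-proper-loss} (regularity) to say exactly that $S=\mathrm{poly}(1/\eps)$ grid points suffice for $\sup_p(\hat\bayesrisk(p)-\bayesrisk(p))\le\eps$, with the slopes and intercepts computable. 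If regularity allows infinite slope at the endpoints (e.g., log loss), I would instead use a non-uniform grid and clip near $0$ and $1$, taking this as part of the regularity assumption.

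Next I set up the LP. The variables are source-labelled atoms $\signalvec^{\cellind,\affindex}\in\obscone$ for $\cellind\in[\cellnum_\targetindex]$ and $\affindex\in[S]$, where $\obscone$ is written as $\{\mymatrix^\top\xi:\mymatrix^\top\xi\ge\vzero\}$ with auxiliary variables $\xi^{\cellind,\affindex}$. Then $\feasubsetprob(\signalvec)$ and $\obslabel(\signalvec)=\xi^\top\weightedpredic$ are linear. The constraints are:
\begin{itemize}
\item $\sum_{\cellind,\affindex}\signalvec^{\cellind,\affindex}=\vone$;
\item the source-mass constraints $\sum_{\affindex}\feasubsetprob(\signalvec^{\cellind,\affindex})=\targetmass_\cellind$;
\item the source-posterior-mean constraints $\sum_\affindex\obslabel(\signalvec^{\cellind,\affindex})=\prediction_{\targetindex,\cellind}\targetmass_\cellind$;
\item the cell constraints $q_{\affindex-1}\feasubsetprob(\signalvec^{\cellind,\affindex})\le\obslabel(\signalvec^{\cellind,\affindex})\le q_\affindex\feasubsetprob(\signalvec^{\cellind,\affindex})$, which place each atom's induced prediction in cell $[q_{\affindex-1},q_\affindex]$.
\end{itemize}
On cell $\affindex$ the envelope $\hat\bayesrisk$ is dominated by the chord-free linear piece, namely a single affine upper bound $\ell_\affindex(p)=\affslope_\affindex p+\affintercept_\affindex\ge\bayesrisk$ with $\ell_\affindex-\bayesrisk\le\eps$ on the cell (e.g., the tangent at the midpoint, or the minimum of the two endpoint tangents; I will take $\ell_\affindex$ to be the tangent at an interior point). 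The objective is $\sum_{\cellind,\affindex}\bigl(\affslope_\affindex\obslabel(\signalvec^{\cellind,\affindex})+\affintercept_\affindex\feasubsetprob(\signalvec^{\cellind,\affindex})\bigr)$. This is exactly $\sum\feasubsetprob\cdot\ell_\affindex(\feasubsetmean)$, i.e., the perspective of an affine function, which is linear. The LP has size $\mathrm{poly}(\feanum,\totalcells,S)$.

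The soundness argument goes as follows. Given an optimal LP solution, output the expert whose atoms are the nonzero $\signalvec^{\cellind,\affindex}$, each reporting $\feasubsetmean(\signalvec)=\obslabel/\feasubsetprob$, with equal values merged. By \Cref{def:linearly-constructible-predictors} and \Cref{lem:linear-constructible-calibrated}, this expert is constructible and calibrated. The source constraints say that, for each $\cellind$, the atoms labelled $\cellind$ split $\prediction_{\targetindex,\cellind}$ with matching mass and mean. This gives a martingale coupling, so $\classifier_\eps\blackwelldomeq\givenclassifier_\targetindex$, exactly as in the first part of \Cref{lem:feasprog}. Merging atoms with equal predictions leaves the loss unchanged, since the loss is $\sum\feasubsetprob\,\bayesrisk(\feasubsetmean)$. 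Its true loss is at most the LP value, because $\bayesrisk\le\ell_\affindex$ on the cell.

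The completeness argument goes as follows. Take any feasible $\classifier\in\classifierfam$ with $\classifier\blackwelldomeq\givenclassifier_\targetindex$ and loss within $\eps$ of $\lossopt_\targetindex$. I need to decompose its atoms by source. This is the main obstacle: Blackwell dominance only gives a martingale coupling of the prediction distributions, not a state-wise decomposition into observable atoms with source labels. My plan is to first pass to $\erd(\classifier)$ via \Cref{lem:extreme_ray_decomposition}, which only lowers the loss by concavity, and then mirror the omitted construction in the converse part of \Cref{lem:feasprog}.

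Concretely, for each output atom $\signalvec^{(\outputpreind)}$ with prediction $\prediction_\outputpreind$, let $\pi(\cellind\mid\outputpreind)$ be the backward kernel of the martingale coupling. Define $\signalvec^{\cellind,\outputpreind}=\pi(\cellind\mid\outputpreind)\signalvec^{(\outputpreind)}$. This vector lies in $\obscone$ because it is a scaled atom, and it has the same mean $\prediction_\outputpreind$. Its source mass is $\sum_\outputpreind\pi(\cellind\mid\outputpreind)\feasubsetprob(\signalvec^{(\outputpreind)})=\targetmass_\cellind$, and its source label mass equals $\prediction_{\targetindex,\cellind}\targetmass_\cellind$ by the martingale property.

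Then I sum these pieces into the cell $\affindex$ containing $\prediction_\outputpreind$, breaking ties arbitrarily. This preserves all constraints by linearity, since the cell constraints are convex. Because $\ell_\affindex\le\bayesrisk+\eps$ on the cell, the LP value is at most $\mathrm{loss}(\classifier)+\eps$. Taking an $\eps/2$ grid for each half of the argument gives total error $\eps$, and the runtime is polynomial in the input size and $1/\eps$.
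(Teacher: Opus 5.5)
Your proposal is correct and follows the paper's proof essentially step for step. It uses the same source-labelled LP over atoms in $\obscone$ with source-mass and source-posterior-mean constraints, soundness via the induced martingale coupling, and completeness by scaling each atom of a feasible expert by the backward kernel of the coupling and regrouping by affine piece; that scaling step is exactly \Cref{lem:source-labelled-decomposition}, and the coupling's existence, which you take for granted, is proved via Farkas in \Cref{lem:finite-martingale-blackwell}. Your cell constraints and the preliminary pass to $\erd(\classifier)$ are harmless but unnecessary: every affine piece in \Cref{def:regular-proper-loss} upper-bounds $\bayesrisk$ on all of $[0,1]$, so the LP may attach any atom to any piece and the minimization itself selects a good one.
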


\subsection{Algorithm Idea}

The main computational difficulty is that the true loss contribution of an output atom $\signalvec\in\obscone$ is
\begin{align*}
    \feasubsetprob(\signalvec)
    \bayesrisk\left(
    \frac{\obslabel(\signalvec)}{\feasubsetprob(\signalvec)}
    \right),
\end{align*}
which is generally nonlinear in $\signalvec$.
Here, $\obslabel(\signalvec)/\feasubsetprob(\signalvec)$ is the prediction attached to the atom $\signalvec$.
The algorithm overcomes this difficulty by replacing the concave Bayes-risk function $\bayesrisk$ with a polynomial-size piecewise-linear upper approximation.

\begin{definition}[Regular proper loss]
\label{def:regular-proper-loss}
A proper loss $\loss$ is called regular if, for every $\eps>0$, 
one can compute a finite collection of affine functions
$q\mapsto \affslope_\affindex q+\affintercept_\affindex$ indexed by $\affindex\in[\affnum_\eps]$ where $\affnum_\eps\in\integers_+$ such that
\begin{align*}
    0
    \le
    \min\nolimits_{\affindex\in[\affnum_\eps]}
    \{\affslope_\affindex\bayesprob+\affintercept_\affindex\}
    -
    \bayesrisk(\bayesprob)
    \le
    \eps
    \quad
    \text{for every }\bayesprob\in[0,1]~.
\end{align*}
Moreover, $\affnum_\eps$ and the bit complexity of the coefficients
$(\affslope_\affindex,\affintercept_\affindex)_{\affindex\in[\affnum_\eps]}$ are polynomial in $1/\eps$.
\end{definition}

We note that the regularity condition is a computational condition on the proper loss.
It ensures that the concave Bayes-risk function can be approximated uniformly from above by a polynomial-size minimum of affine functions.
This allows us to replace the nonlinear loss objective by a polynomial-size linear program.
This condition is satisfied by many common proper losses, including Brier loss, log loss, etc.
\begin{corollary}[Smooth proper losses are regular]
\label{cor:smooth-proper-loss-fptas}
Suppose the Bayes-risk function $\bayesrisk$ is twice continuously differentiable and concave on $[0,1]$, and suppose there exists $B\ge0$ such that
\begin{align*}
    -B
    \le
    \bayesrisk''(\bayesprob)
    \le
    0
    \quad
    \text{for every }\bayesprob\in[0,1]~.
\end{align*}
Then $\loss$ is regular with a uniform grid of size $O(\sqrt{B/\eps}+1)$.
\end{corollary}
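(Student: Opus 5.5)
We will cover $[0,1]$ with a uniform grid and use the tangent lines of the concave Bayes risk $\bayesrisk$ at the grid points as the affine family. Fix $\eps>0$ and let $\affnum=\lceil\sqrt{B/(2\eps)}\,\rceil+1$. Let $h=1/(\affnum-1)$ and $\bayesprob_\affindex=(\affindex-1)h$ for $\affindex\in[\affnum]$. Define $\affslope_\affindex=\bayesrisk'(\bayesprob_\affindex)$ and $\affintercept_\affindex=\bayesrisk(\bayesprob_\affindex)-\bayesrisk'(\bayesprob_\affindex)\bayesprob_\affindex$. These are the tangent lines at the grid points.

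The lower bound $0\le\min_\affindex\{\affslope_\affindex\bayesprob+\affintercept_\affindex\}-\bayesrisk(\bayesprob)$ is immediate. Since $\bayesrisk''\le 0$, the function $\bayesrisk$ is concave, so every tangent line lies weakly above $\bayesrisk$ on $[0,1]$. Hence their minimum does as well.

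For the upper bound, fix $\bayesprob\in[0,1]$ and pick a grid point with $|\bayesprob-\bayesprob_\affindex|\le h/2$. By Taylor's theorem with Lagrange remainder, which is valid because $\bayesrisk\in C^2$, we have
\begin{align*}
\affslope_\affindex\bayesprob+\affintercept_\affindex-\bayesrisk(\bayesprob)
=-\tfrac12\bayesrisk''(\xi)(\bayesprob-\bayesprob_\affindex)^2
\le \tfrac{B}{2}\cdot\tfrac{h^2}{4}
=\tfrac{Bh^2}{8}.
\end{align*}
Here $\xi$ lies between $\bayesprob$ and $\bayesprob_\affindex$. The minimum over all lines is at most this particular line, so the gap is at most $Bh^2/8$. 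With $h\le\sqrt{8\eps/B}$ this is at most $\eps$, so a grid of $O(\sqrt{B/\eps}+1)$ points suffices. When $B=0$, $\bayesrisk$ is affine and a single line is exact, which is consistent with the "$+1$".

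The main obstacle is the computability and bit-complexity clause of \Cref{def:regular-proper-loss}. The tangent coefficients involve $\bayesrisk(\bayesprob_\affindex)$ and $\bayesrisk'(\bayesprob_\affindex)$, which need not be rational. I would handle this with an implicit assumption of oracle access to $\bayesrisk$ and $\bayesrisk'$ at rational points, to precision $\eps$. I would then round each slope and intercept to $O(\log(1/\eps)+\log(1+B+\|\bayesrisk\|_\infty+\|\bayesrisk'\|_\infty))$ bits.

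The rounding is done upward in the following sense. Let the rounding error in each line be $\delta$ over $[0,1]$. Add $\delta$ to each intercept so the rounded lines still dominate $\bayesrisk$. Use the error budget $\eps/2$ for the grid and $\eps/2$ for the rounding. This keeps both inequalities, with polynomially many coefficients of polynomial bit length in $1/\eps$. That yields regularity as stated.
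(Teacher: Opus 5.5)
Your argument is correct and is essentially the paper's proof: a uniform grid, tangent affine upper bounds at the grid points, the Lagrange-remainder bound $B\eta^2/8$, and the $B=0$ case handled separately. The paper writes the lines as $\kappa_p(q)=L(p,q)$, which agrees with your tangents at interior grid points; defining them through $L'$, as you do, also avoids the endpoints, where $L(0,\cdot)$ and $L(1,\cdot)$ need not be tangent to the Bayes risk. Your rounding step for the bit-complexity clause, under an explicit oracle assumption on $L$ and $L'$, goes beyond the paper (which does not address that requirement) and is sound as you describe it.
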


Given a regular proper loss, for every fixed affine function $q\mapsto \affslope_\affindex q+\affintercept_\affindex$, the loss contribution of atom $\signalvec$ admits the following linear upper bound
\begin{align*}
    \feasubsetprob(\signalvec)
    \bayesrisk\left(
    \frac{\obslabel(\signalvec)}{\feasubsetprob(\signalvec)}
    \right)
    \le
    \affslope_\affindex\obslabel(\signalvec)
    +
    \affintercept_\affindex\feasubsetprob(\signalvec)~.
\end{align*}
Therefore, after replacing $\bayesrisk$ by a finite upper envelope of affine functions, the loss objective can be minimized by a linear program.
For every report $\prediction\in[0,1]$, the function $\bayesprob\mapsto\exploss(\prediction,\bayesprob)$ is affine in $\bayesprob$, and properness implies
\begin{align*}
    \bayesrisk(\bayesprob)
    \le
    \exploss(\prediction,\bayesprob)
    \quad
    \text{for every }\prediction,\bayesprob\in[0,1],
\end{align*}
with equality when $\prediction=\bayesprob$.
Thus the affine functions induced by reports are canonical affine upper bounds on the concave Bayes-risk function.
For smooth proper losses, a finite grid of such reports yields the regularity condition; see \Cref{cor:smooth-proper-loss-fptas}.

The Blackwell constraint is enforced by source-labelled atoms.
For each source prediction component $\targetcompidx\in[\cellnum_\targetindex]$, define
$\targetmass_\targetcompidx\triangleq\feasubsetprob(\reportvec^{\targetindex,\targetcompidx})$.
For every source prediction component $\targetcompidx\in[\cellnum_\targetindex]$ and every affine piece $\affindex\in[\affnum_\eps]$, the LP creates an atom $\signalvec^{\targetcompidx,\affindex}\in\obscone$.
The source label $\targetcompidx$ records the target prediction value from which this output atom is coupled.
The LP imposes the source-mass and source-posterior-mean constraints
\begin{align*}
    \sum\nolimits_{\affindex\in[\affnum_\eps]}
    \feasubsetprob(\signalvec^{\targetcompidx,\affindex})
    =
    \targetmass_\targetcompidx,
    \quad
    \sum\nolimits_{\affindex\in[\affnum_\eps]}
    \obslabel(\signalvec^{\targetcompidx,\affindex})
    =
    \prediction_{\targetindex,\targetcompidx}\targetmass_\targetcompidx~.
\end{align*}
These constraints define a martingale coupling from the target prediction distribution to the output prediction distribution.
Therefore the output expert weakly Blackwell dominates the target expert $\givenclassifier_\targetindex$.
Importantly, the affine-piece index $\affindex$ is not the prediction reported by the final expert.
It only indexes an affine upper bound used in the LP objective.
The actual prediction attached to atom $\signalvec^{\targetcompidx,\affindex}$ is
$\obslabel(\signalvec^{\targetcompidx,\affindex})/\feasubsetprob(\signalvec^{\targetcompidx,\affindex})$.

\xhdr{The finite-size linear program}
For every source component $\targetcompidx\in[\cellnum_\targetindex]$ and every affine-piece index $\affindex\in[\affnum_\eps]$, introduce a vector variable $\signalvec^{\targetcompidx,\affindex}\in\R^{\feanum}$.
The linear program \ref{eq:discretized-LP} is described as follows:
\begin{equation}
\tag{$\lpprog_\eps$}
\label{eq:discretized-LP}
\begin{aligned}
    \min \quad
    &
    \sum\nolimits_{\targetcompidx\in[\cellnum_\targetindex]}
    \sum\nolimits_{\affindex\in[\affnum_\eps]}
    \left(
    \affslope_\affindex\obslabel(\signalvec^{\targetcompidx,\affindex})
    +
    \affintercept_\affindex\feasubsetprob(\signalvec^{\targetcompidx,\affindex})
    \right) &&
    \\
    \text{s.t.}\quad
    &
    \signalvec^{\targetcompidx,\affindex}\in\obscone
    \quad
    && 
    \targetcompidx\in[\cellnum_\targetindex],\ \affindex\in[\affnum_\eps]~,
    \\
    &
    \sum\nolimits_{\targetcompidx\in[\cellnum_\targetindex]}
    \sum\nolimits_{\affindex\in[\affnum_\eps]}
    \signalvec^{\targetcompidx,\affindex}
    =
    \vone~,
    \\
    &
    \sum\nolimits_{\affindex\in[\affnum_\eps]}
    \feasubsetprob(\signalvec^{\targetcompidx,\affindex})
    =
    \targetmass_\targetcompidx
    \quad
    &&
    \targetcompidx\in[\cellnum_\targetindex]~,
    \\
    &
    \sum\nolimits_{\affindex\in[\affnum_\eps]}
    \obslabel(\signalvec^{\targetcompidx,\affindex})
    =
    \prediction_{\targetindex,\targetcompidx}\targetmass_\targetcompidx
    \quad
    &&
    \targetcompidx\in[\cellnum_\targetindex]~.
\end{aligned}
\end{equation}
Program \ref{eq:discretized-LP} is a finite-dimensional linear program.
Indeed, the constraint $\signalvec^{\targetcompidx,\affindex}\in\obscone$ can be written explicitly by introducing coefficients
$\theta^{\targetcompidx,\affindex}_{j,a'}\in\R$ such that
\begin{align*}
    \signalvec^{\targetcompidx,\affindex}
    =
    \sum\nolimits_{(j,a')\in\compindex}
    \theta^{\targetcompidx,\affindex}_{j,a'}\reportvec^{j,a'},
    \quad
    \signalvec^{\targetcompidx,\affindex}_i\ge0
    \quad
    \text{for every }i\in[\feanum]~.
\end{align*}
Under this representation,
\begin{align*}
    \obslabel(\signalvec^{\targetcompidx,\affindex})
    =
    \sum\nolimits_{(j,a')\in\compindex}
    \theta^{\targetcompidx,\affindex}_{j,a'}
    \prediction_{j,a'}\feasubsetprob(\reportvec^{j,a'})~,
\end{align*}
which is linear in the variables.
The function $\feasubsetprob(\signalvec^{\targetcompidx,\affindex})$ is also linear in $\signalvec^{\targetcompidx,\affindex}$.

\begin{algorithm}[H]
\caption{\algfptas}
\label{alg:proper-loss-fptas}
\DontPrintSemicolon
\KwIn{Prior distribution $\feaprob$, input experts $(\givenclassifier_j)_{j\in[\classifiernum]}$, target index $\targetindex\in[\classifiernum]$, regular proper loss $\loss$, accuracy $\eps>0$.}
\KwOut{A constructible expert $\classifier_\eps$ with $\classifier_\eps\blackwelldomeq\givenclassifier_\targetindex$.}

Construct the component vectors $(\reportvec^{j,a})_{(j,a)\in\compindex}$, the observable space $\obsspace$, the cone $\obscone$, and the observable label functional $\obslabel$.\;

Use the regularity oracle for $\loss$ to compute affine upper bounds
$q\mapsto\affslope_\affindex q+\affintercept_\affindex$ for $\affindex\in[\affnum_\eps]$.\;

Solve the linear program \ref{eq:discretized-LP} and let
$(\signalvec^{\targetcompidx,\affindex})_{\targetcompidx,\affindex}$ be an optimal solution.\;

For every pair $(\targetcompidx,\affindex)$ with
$\feasubsetprob(\signalvec^{\targetcompidx,\affindex})>0$, set
$\widehat{\prediction}_{\targetcompidx,\affindex}
\leftarrow
\obslabel(\signalvec^{\targetcompidx,\affindex})/
\feasubsetprob(\signalvec^{\targetcompidx,\affindex})$.\;

Output the randomized expert $\classifier_\eps$ defined by
$\classifier_\eps(\cdot\mid \fea_i)
=
\sum_{\targetcompidx,\affindex:\ \feasubsetprob(\signalvec^{\targetcompidx,\affindex})>0}
\signalvec^{\targetcompidx,\affindex}_i
\delta_{(\widehat{\prediction}_{\targetcompidx,\affindex})}(\cdot)$
for every $i\in[\feanum]$.\;
\end{algorithm}

We note that the affine-piece index $\affindex$ in the program \ref{eq:discretized-LP} is not the prediction value reported by the final expert.
It only indexes an affine upper bound used in the loss objective.
After the LP is solved, the actual prediction attached to atom $\signalvec^{\targetcompidx,\affindex}$ is
$\obslabel(\signalvec^{\targetcompidx,\affindex})/\feasubsetprob(\signalvec^{\targetcompidx,\affindex})$.
Therefore \Cref{alg:proper-loss-fptas} returns an expert that is exactly constructible, whose reported predictions are exactly Bayesian posterior means, and that exactly weakly Blackwell-dominates the target expert.
In other words, only the expected proper-loss value is approximated in program \ref{eq:discretized-LP}.

\subsection{Proof of \Cref{thm:proper-loss-fptas-general-k}}

We now prove \Cref{thm:proper-loss-fptas-general-k}.
The proof is organized around three ingredients:
a martingale characterization of Blackwell dominance, a source-labelled decomposition lemma for experts, and two LP lemmas showing soundness and approximation.

\begin{lemma}
[Martingale characterization of Blackwell dominance]
\label{lem:finite-martingale-blackwell}
Let
$\nu
=
\sum\nolimits_{a\in[\cellnum]}\predicmass_a\delta_{(\prediction_a)},
\nu\primed
=
\sum\nolimits_{b\in[\cellnum\primed]}\predicmass\primed_b\delta_{(\prediction\primed_b)}$
be two finitely supported probability distributions on $[0,1]$ with the same mean.
Then the following are equivalent:
\begin{enumerate}[label=(\roman*)]
    \item
    $\nu\primed$ Blackwell dominates $\nu$, namely, for every $t\in[0,1]$,
    \begin{align*}
        \sum\nolimits_{b\in[\cellnum\primed]}\predicmass\primed_b(t-\prediction\primed_b)_+
        \ge
        \sum\nolimits_{a\in[\cellnum]}\predicmass_a(t-\prediction_a)_+~.
    \end{align*}

    \item
    There exists a nonnegative matrix $(\pi_{a,b})_{a\in[\cellnum],b\in[\cellnum\primed]}$ such that
    \begin{align*}
        \sum\nolimits_{b\in[\cellnum\primed]}\pi_{a,b}
        =
        \predicmass_a,\;
        a\in[\cellnum]~;
        \quad
        \sum\nolimits_{a\in[\cellnum]}\pi_{a,b}
        =
        \predicmass\primed_b,\;
        b\in[\cellnum\primed]~;
        \quad \sum\nolimits_{b\in[\cellnum\primed]}\pi_{a,b}\prediction\primed_b
        =
        \predicmass_a\prediction_a,\;
        a\in[\cellnum]~.
    \end{align*}
\end{enumerate}
\end{lemma}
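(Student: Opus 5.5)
The direction (ii)$\Rightarrow$(i) is a direct Jensen argument, and (i)$\Rightarrow$(ii) is a finite linear-programming duality argument. Throughout, the common mean is used to make the integrated-CDF comparison equivalent to the convex-order comparison for all convex test functions.

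\textbf{Easy direction, (ii)$\Rightarrow$(i).} Given the matrix $\pi$, fix $a$ with $\predicmass_a>0$. The numbers $\pi_{a,b}/\predicmass_a$ form a probability vector over $b$ whose mean is $\prediction_a$, by the third constraint. Since $x\mapsto(t-x)_+$ is convex, Jensen gives
\[
\sum_b \pi_{a,b}(t-\prediction\primed_b)_+\ \ge\ \predicmass_a(t-\prediction_a)_+ .
\]
If $\predicmass_a=0$, then $\pi_{a,\cdot}=0$ and the inequality is trivial. Summing over $a$ and using the column-sum constraint $\sum_a\pi_{a,b}=\predicmass\primed_b$ yields (i) for every $t$.

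\textbf{Hard direction, (i)$\Rightarrow$(ii).} Existence of $\pi$ is feasibility of a finite LP in $\pi\ge0$ with three families of equality constraints. By Farkas' lemma, infeasibility would give dual multipliers $u_a$, $v_b$, $w_a$ with
\begin{equation*}
u_a+v_b+w_a\prediction\primed_b\le 0\ \text{ for all } a,b,\qquad \sum_a \predicmass_a(u_a+w_a\prediction_a)+\sum_b\predicmass\primed_b v_b>0 .
\end{equation*}
Define $\phi(x)\triangleq\min_a\{-u_a-w_a x\}$, which is concave and piecewise linear. The first condition says $v_b\le\phi(\prediction\primed_b)$ for each $b$, and $u_a+w_a\prediction_a\le-\phi(\prediction_a)$ for each $a$. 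Substituting, the second condition gives
\[
\sum_b\predicmass\primed_b\phi(\prediction\primed_b)>\sum_a\predicmass_a\phi(\prediction_a),
\]
so the concave function $\phi$ has strictly larger expectation under $\nu\primed$ than under $\nu$.

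The step I expect to be the main obstacle is showing that (i) rules this out, that is, that (i) implies $\expect[\nu\primed]{\psi}\ge\expect[\nu]{\psi}$ for every convex piecewise-linear $\psi$ on $[0,1]$. I would write such a $\psi$ as
\[
\psi(x)=c_0+c_1x+\sum_k \gamma_k(t_k-x)_+,\qquad \gamma_k\ge0,\ t_k\in[0,1],
\]
which is possible because the kinks of $\psi$ have nonnegative slope jumps. The affine part contributes equally to both sides because the two means agree. Each hinge term is handled by (i) applied at $t=t_k$. Taking $\psi=-\phi$ then contradicts the strict inequality above, so the primal LP is feasible and $\pi$ exists.

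Alternatively, one could cite Strassen's theorem for finitely supported measures. Since the paper assumes the equal-means setting, I prefer the self-contained Farkas route.
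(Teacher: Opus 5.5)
Your proof is correct and follows essentially the same route as the paper: Farkas' lemma on the coupling system, the pointwise min (the paper uses the max, with the opposite sign convention) of the dual affine functions as the separating test function, the hinge decomposition of a piecewise-linear convex function together with equal means to invoke (i), and Jensen for the converse. Your version is slightly tighter in two small respects: it skips the paper's uniform-approximation step, since only the piecewise-linear $-\phi$ is ever needed, and it explicitly handles rows with $\predicmass_a=0$, whereas the paper's Jensen step divides by $\predicmass_a$ without comment.
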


\begin{proof}
We first recall why the integrated-CDF inequalities in part (i), together with equality of means, are equivalent to convex order.
That is, they are equivalent to the condition that for every continuous convex function $\chi:[0,1]\to\R$,
\begin{align*}
    \sum\nolimits_{b\in[\cellnum\primed]}\predicmass\primed_b\chi(\prediction\primed_b)
    \ge
    \sum\nolimits_{a\in[\cellnum]}\predicmass_a\chi(\prediction_a)~.
\end{align*}
Indeed, every convex piecewise-linear function on $[0,1]$ can be written as an affine function plus a nonnegative linear combination of hinge functions of the form $x\mapsto(t-x)_+$.
The affine part has the same expectation under $\nu$ and $\nu\primed$ because the two measures have the same mean.
Thus the integrated-CDF inequalities imply the desired inequality for every convex piecewise-linear $\chi$.
Every continuous convex function on $[0,1]$ can be uniformly approximated by convex piecewise-linear functions, so the result extends to all continuous convex $\chi$.
The converse is immediate because $x\mapsto(t-x)_+$ is convex.

We now prove that convex order implies the existence of the matrix $\pi$.
Consider the feasibility system in variables $\pi_{a,b}\ge0$ given by the three groups of linear equations in part (ii).
By Farkas' lemma, if this system is infeasible, then there exist scalars $\scalar_a$, $\scalar_b$, and $\scalar\primed_a$ such that
\begin{align*}
    \scalar_a+\scalar_b+\scalar\primed_a\prediction\primed_b
    & 
    \ge
    0
    \quad
    a\in[\cellnum], b\in[\cellnum\primed]~,\\
    \sum\nolimits_{a\in[\cellnum]}\predicmass_a \scalar_a
    +
    \sum\nolimits_{b\in[\cellnum\primed]}\predicmass\primed_b \scalar_b
    +
    \sum\nolimits_{a\in[\cellnum]}\predicmass_a\prediction_a \scalar\primed_a
    & 
    <
    0~.
\end{align*}
We define
\begin{align*}
    \chi(x)
    \triangleq
    \max\nolimits_{a\in[\cellnum]}\{-\scalar_a-\scalar\primed_a x\}~.
\end{align*}
The function $\chi$ is convex as the maximum of affine functions.
The inequalities $\scalar_a+\scalar_b+\scalar\primed_a\prediction\primed_b\ge0$ imply
$\scalar_b\ge -\scalar_a-\scalar\primed_a\prediction\primed_b$ for every $a,b$, and hence $\scalar_b\ge\chi(\prediction\primed_b)$ for every $b$.
Also, $\chi(\prediction_a)\ge -\scalar_a-\scalar\primed_a\prediction_a$, so $\scalar_a+\scalar\primed_a\prediction_a\ge-\chi(\prediction_a)$ for every $a$.
Thus, the following holds
\begin{align*}
    \sum\nolimits_{a\in[\cellnum]}\predicmass_a(\scalar_a+\scalar\primed_a\prediction_a)
    +
    \sum\nolimits_{b\in[\cellnum\primed]}\predicmass\primed_b\scalar_b
    \ge
    -
    \sum\nolimits_{a\in[\cellnum]}\predicmass_a\chi(\prediction_a)
    +
    \sum\nolimits_{b\in[\cellnum\primed]}\predicmass\primed_b\chi(\prediction\primed_b)
    \ge
    0~,
\end{align*}
where the last inequality follows from convex order.
This contradicts the strict Farkas certificate above.
Thus, the feasibility system admits a solution $\pi$.

Conversely, suppose such a matrix $\pi$ exists.
For every convex function $\chi:[0,1]\to\R$, Jensen's inequality gives
\begin{align*}
    \sum\nolimits_{b\in[\cellnum\primed]}\predicmass\primed_b\chi(\prediction\primed_b)
    &=
    \sum\nolimits_{a\in[\cellnum]}
    \sum\nolimits_{b\in[\cellnum\primed]}
    \pi_{a,b}\chi(\prediction\primed_b)
    \\
    &\ge
    \sum\nolimits_{a\in[\cellnum]}
    \predicmass_a
    \chi\left(
    \frac{1}{\predicmass_a}
    \sum\nolimits_{b\in[\cellnum\primed]}\pi_{a,b}\prediction\primed_b
    \right)
    \\
    &=
    \sum\nolimits_{a\in[\cellnum]}\predicmass_a\chi(\prediction_a)~.
\end{align*}
Thus $\nu\primed$ dominates $\nu$ in convex order, equivalently in the integrated-CDF order.
\end{proof}

\begin{lemma}[Source-labelled splitting of a dominating expert]
\label{lem:source-labelled-decomposition}
Let $\classifier\in\classifierfam$ be represented by nonzero atoms
$\signalvec^{(1)},\ldots,\signalvec^{(\cellnum_\classifier)}\in\obscone$.
Then $\classifier\blackwelldomeq\givenclassifier_\targetindex$ if and only if there exist scalars
$\gamma_{\targetcompidx, \preindnew}\ge0$ for $\targetcompidx\in[\cellnum_\targetindex]$ and $\preindnew\in[\cellnum_\classifier]$ such that
$\sum\nolimits_{\targetcompidx\in[\cellnum_\targetindex]}\gamma_{\targetcompidx, \preindnew}=1$ for every $\preindnew$, and, defining
$\signalvec^{\targetcompidx, \preindnew}\triangleq\gamma_{\targetcompidx, \preindnew}\signalvec^{(\preindnew)}$, the following constraints hold:
\begin{align*}
    \sum\nolimits_{\preindnew\in[\cellnum_\classifier]}
    \feasubsetprob(\signalvec^{\targetcompidx, \preindnew})
    =
    \targetmass_\targetcompidx,
    \quad
    \sum\nolimits_{\preindnew\in[\cellnum_\classifier]}
    \obslabel(\signalvec^{\targetcompidx, \preindnew})
    =
    \prediction_{\targetindex,\targetcompidx}\targetmass_\targetcompidx,
    \quad
    \targetcompidx\in[\cellnum_\targetindex]~.
\end{align*}
Consequently, the split atoms also satisfy $\sum\nolimits_{\targetcompidx, \preindnew}\signalvec^{\targetcompidx, \preindnew}=\vone$.
\end{lemma}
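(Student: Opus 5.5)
We prove both directions by passing through \Cref{lem:finite-martingale-blackwell}. Throughout, write $\prediction_\preindnew\triangleq\obslabel(\signalvec^{(\preindnew)})/\feasubsetprob(\signalvec^{(\preindnew)})$ for the prediction attached to atom $\preindnew$, and $\predicmass\primed_\preindnew\triangleq\feasubsetprob(\signalvec^{(\preindnew)})>0$ for its mass (positivity by \Cref{lem:observable-cone-valid-probability}). By \Cref{lem:observable-label-functional}, $\obslabel$ and $\feasubsetprob$ are linear on $\obsspace$. Hence for split atoms $\signalvec^{\targetcompidx,\preindnew}=\gamma_{\targetcompidx,\preindnew}\signalvec^{(\preindnew)}$ we have
\begin{align*}
\feasubsetprob(\signalvec^{\targetcompidx,\preindnew})=\gamma_{\targetcompidx,\preindnew}\predicmass\primed_\preindnew,
\qquad
\obslabel(\signalvec^{\targetcompidx,\preindnew})=\gamma_{\targetcompidx,\preindnew}\predicmass\primed_\preindnew\prediction_\preindnew .
\end{align*}
The problem thus reduces to matching $\pi_{\targetcompidx,\preindnew}\triangleq\gamma_{\targetcompidx,\preindnew}\predicmass\primed_\preindnew$ with the coupling matrix in part (ii) of \Cref{lem:finite-martingale-blackwell}.

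For the ``if'' direction, take $\gamma$ as given and define $\pi$ as above. Then $\pi\ge0$. The column sums satisfy $\sum_\targetcompidx\pi_{\targetcompidx,\preindnew}=\predicmass\primed_\preindnew$ because $\sum_\targetcompidx\gamma_{\targetcompidx,\preindnew}=1$. The row sums and the row means match $\targetmass_\targetcompidx$ and $\targetmass_\targetcompidx\prediction_{\targetindex,\targetcompidx}$ by the two displayed constraints. One subtlety: distinct atoms may share a prediction value, so $\nu\primed$ must be read as the distribution over atoms. This is harmless, since merging equal-valued columns of $\pi$ preserves the three conditions, and the atom-level and value-level distributions coincide as measures. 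Both $\nu$ and $\nu\primed$ have the base-rate mean, so \Cref{lem:finite-martingale-blackwell} yields $\classifier\blackwelldomeq\givenclassifier_\targetindex$.

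For the ``only if'' direction, apply \Cref{lem:finite-martingale-blackwell} to $\nu=\CDF_{\givenclassifier_\targetindex}$ and to the distribution of $\classifier$'s predictions indexed by value, obtaining a coupling $\pi_{\targetcompidx,b}$. If several atoms $\preindnew$ share a value $b$, distribute column $b$ among them proportionally to $\predicmass\primed_\preindnew$. Equivalently, set $\gamma_{\targetcompidx,\preindnew}\triangleq\pi_{\targetcompidx,b(\preindnew)}/\predicmass\primed_{b(\preindnew)}$, where $\predicmass\primed_{b}$ denotes the total mass at value $b$. These $\gamma$ are nonnegative, and they sum to one over $\targetcompidx$ because the column sums of $\pi$ equal $\predicmass\primed_b$. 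The two constraints then follow from the row-sum and row-mean conditions on $\pi$ via the linearity identities above.

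The final claim $\sum_{\targetcompidx,\preindnew}\signalvec^{\targetcompidx,\preindnew}=\vone$ is immediate: $\sum_{\targetcompidx,\preindnew}\gamma_{\targetcompidx,\preindnew}\signalvec^{(\preindnew)}=\sum_\preindnew\signalvec^{(\preindnew)}=\vone$. Each nonzero split atom stays in $\obscone$ because the cone is closed under nonnegative scaling, and it keeps the same prediction $\prediction_\preindnew$. The main obstacle is the bookkeeping between atoms and distinct prediction values, together with confirming equal means. For the means, the mean of $\classifier$ is $\sum_\preindnew\obslabel(\signalvec^{(\preindnew)})=\obslabel(\vone)$, which equals the base rate, and this also equals the target's mean.
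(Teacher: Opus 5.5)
Your proposal is correct and follows the paper's proof essentially step for step. In both directions you go through \Cref{lem:finite-martingale-blackwell}, using $\pi_{\targetcompidx,\preindnew}=\gamma_{\targetcompidx,\preindnew}\feasubsetprob(\signalvec^{(\preindnew)})$ and recovering $\gamma$ by dividing the coupling by the atom masses. Your extra bookkeeping for atoms that share a prediction value is harmless but unnecessary: the martingale lemma never requires the support points $\prediction\primed_b$ to be distinct, and under \Cref{def:linearly-constructible-predictors} the atoms of $\classifier$ have distinct values anyway.
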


\begin{proof}
For each $\preindnew\in[\cellnum_\classifier]$, define
$\prediction_\preindnew\triangleq\obslabel(\signalvec^{(\preindnew)})/\feasubsetprob(\signalvec^{(\preindnew)})$.
Suppose first that such scalars $\gamma_{\targetcompidx, \preindnew}$ exist.
Define $\pi_{\targetcompidx, \preindnew}\triangleq\feasubsetprob(\signalvec^{\targetcompidx, \preindnew})$.
For every source prediction component $\targetcompidx$,
\begin{align*}
    \sum\nolimits_\preindnew \pi_{\targetcompidx, \preindnew}
    =
    \sum\nolimits_\preindnew\feasubsetprob(\signalvec^{\targetcompidx, \preindnew})
    =
    \targetmass_\targetcompidx~.
\end{align*}
For every output atom $\preindnew$,
\begin{align*}
    \sum\nolimits_\targetcompidx \pi_{\targetcompidx, \preindnew}
    =
    \sum\nolimits_\targetcompidx
    \feasubsetprob(\gamma_{\targetcompidx, \preindnew}\signalvec^{(\preindnew)})
    =
    \left(
    \sum\nolimits_\targetcompidx\gamma_{\targetcompidx, \preindnew}
    \right)
    \feasubsetprob(\signalvec^{(\preindnew)})
    =
    \feasubsetprob(\signalvec^{(\preindnew)})~.
\end{align*}
Moreover, since $\signalvec^{\targetcompidx, \preindnew}=\gamma_{\targetcompidx, \preindnew}\signalvec^{(\preindnew)}$, we have
$\obslabel(\signalvec^{\targetcompidx, \preindnew})=\gamma_{\targetcompidx, \preindnew}\obslabel(\signalvec^{(\preindnew)})
=\pi_{\targetcompidx, \preindnew}\prediction_\preindnew$.
Therefore, for every $\targetcompidx$,
\begin{align*}
    \sum\nolimits_\preindnew \pi_{\targetcompidx, \preindnew}\prediction_\preindnew
    =
    \sum\nolimits_\preindnew \obslabel(\signalvec^{\targetcompidx, \preindnew})
    =
    \prediction_{\targetindex,\targetcompidx}\targetmass_\targetcompidx~.
\end{align*}
Thus $\pi$ is a martingale coupling from the target prediction distribution to the prediction distribution of $\classifier$.
By \Cref{lem:finite-martingale-blackwell}, we obtain $\classifier\blackwelldomeq\givenclassifier_\targetindex$.

Conversely, suppose $\classifier\blackwelldomeq\givenclassifier_\targetindex$.
Since both experts report Bayesian posterior-mean predictions, their prediction distributions have the same mean, equal to $\expect{\labelrea}$.
By \Cref{lem:finite-martingale-blackwell}, there exists a martingale coupling $(\pi_{\targetcompidx, \preindnew})$ between the target prediction distribution and the output prediction distribution of $\classifier$.
Since each atom $\signalvec^{(\preindnew)}$ is nonzero and $\feaprob_i>0$ for every $i$, we have $\feasubsetprob(\signalvec^{(\preindnew)})>0$.
Define
$\gamma_{\targetcompidx, \preindnew}\triangleq\pi_{\targetcompidx, \preindnew}/\feasubsetprob(\signalvec^{(\preindnew)})$ and
$\signalvec^{\targetcompidx, \preindnew}\triangleq\gamma_{\targetcompidx, \preindnew}\signalvec^{(\preindnew)}$.
Then $\gamma_{\targetcompidx, \preindnew}\ge0$ and
\begin{align*}
    \sum\nolimits_\targetcompidx \gamma_{\targetcompidx, \preindnew}
    =
    \frac{1}{\feasubsetprob(\signalvec^{(\preindnew)})}
    \sum\nolimits_\targetcompidx \pi_{\targetcompidx, \preindnew}
    =
    1~.
\end{align*}
Because $\obscone$ is a convex cone, $\signalvec^{\targetcompidx, \preindnew}\in\obscone$.
The source-mass and source-posterior-mean constraints follow from
\begin{align*}
    \sum\nolimits_\preindnew
    \feasubsetprob(\signalvec^{\targetcompidx, \preindnew})
    &=
    \sum\nolimits_\preindnew
    \frac{\pi_{\targetcompidx, \preindnew}}{\feasubsetprob(\signalvec^{(\preindnew)})}
    \feasubsetprob(\signalvec^{(\preindnew)})
    =
    \sum\nolimits_\preindnew\pi_{\targetcompidx, \preindnew}
    =
    \targetmass_\targetcompidx~,
    \\
    \sum\nolimits_\preindnew
    \obslabel(\signalvec^{\targetcompidx, \preindnew})
    &=
    \sum\nolimits_\preindnew
    \frac{\pi_{\targetcompidx, \preindnew}}{\feasubsetprob(\signalvec^{(\preindnew)})}
    \obslabel(\signalvec^{(\preindnew)})
    =
    \sum\nolimits_\preindnew
    \pi_{\targetcompidx, \preindnew}\prediction_\preindnew
    =
    \prediction_{\targetindex,\targetcompidx}\targetmass_\targetcompidx~.
\end{align*}
Finally,
\begin{align*}
    \sum\nolimits_{\targetcompidx, \preindnew}\signalvec^{\targetcompidx, \preindnew}
    =
    \sum\nolimits_\preindnew
    \left(
    \sum\nolimits_\targetcompidx\gamma_{\targetcompidx, \preindnew}
    \right)
    \signalvec^{(\preindnew)}
    =
    \sum\nolimits_\preindnew\signalvec^{(\preindnew)}
    =
    \vone~.
\end{align*}
The proof completes.
\end{proof}

\begin{lemma}[LP soundness]
\label{lem:lp-soundness}
Let $(\signalvec^{\targetcompidx,\affindex})_{\targetcompidx,\affindex}$ be any feasible solution to the program \ref{eq:discretized-LP}, and let $\classifier$ be the expert obtained by assigning to each positive-mass atom $\signalvec^{\targetcompidx,\affindex}$ the prediction
$\widehat{\prediction}_{\targetcompidx,\affindex}
=
\obslabel(\signalvec^{\targetcompidx,\affindex})/
\feasubsetprob(\signalvec^{\targetcompidx,\affindex})$.
Then $\classifier\in\classifierfam$, $\classifier\blackwelldomeq\givenclassifier_\targetindex$, and
\begin{align*}
    \expect{\loss(\predrv_\classifier,\labelrea)}
    \le
    \sum\nolimits_{\targetcompidx,\affindex}
    \left(
    \affslope_\affindex\obslabel(\signalvec^{\targetcompidx,\affindex})
    +
    \affintercept_\affindex\feasubsetprob(\signalvec^{\targetcompidx,\affindex})
    \right)~.
\end{align*}
\end{lemma}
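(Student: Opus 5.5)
The proof has three parts, taken in order: constructibility, dominance, and the loss bound.

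\emph{Constructibility.} Drop every atom with $\feasubsetprob(\signalvec^{\targetcompidx,\affindex})=0$. Since $\feaprob_i>0$ for all $i$, such an atom is the zero vector, so removing it leaves the identity $\sum_{\targetcompidx,\affindex}\signalvec^{\targetcompidx,\affindex}=\vone$ intact. The surviving atoms are nonzero elements of $\obscone$ that sum to $\vone$. By \Cref{lem:observable-cone-valid-probability}, each $\widehat{\prediction}_{\targetcompidx,\affindex}$ lies in $[0,1]$. Atoms sharing the same prediction value are merged by summing them; this keeps them in the cone, since $\obscone$ is a convex cone, and the ratio $\obslabel/\feasubsetprob$ of the merged atom equals the common value. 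Hence $\classifier$ has exactly the form required by \Cref{def:linearly-constructible-predictors}, so $\classifier\in\classifierfam$. Calibration then follows from \Cref{lem:linear-constructible-calibrated}.

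\emph{Dominance.} I would not go through the ``if'' direction of \Cref{lem:source-labelled-decomposition}, because its atoms are scalar multiples of output atoms. Instead I would build the martingale coupling directly. Set
\[
\pi_{\targetcompidx,p}\triangleq\sum_{\affindex:\ \widehat{\prediction}_{\targetcompidx,\affindex}=p}\feasubsetprob(\signalvec^{\targetcompidx,\affindex})
\]
for each prediction value $p$ reported by $\classifier$. The three conditions of \Cref{lem:finite-martingale-blackwell}(ii) then check as follows.
\begin{itemize}
\item The row sums equal $\targetmass_\targetcompidx$ by the source-mass constraint.
\item The column sums equal $\prob{\predrv_\classifier=p}$ by the definition of $\classifier$.
\item The weighted row sums are $\sum_p \pi_{\targetcompidx,p}\, p=\sum_\affindex \obslabel(\signalvec^{\targetcompidx,\affindex})=\prediction_{\targetindex,\targetcompidx}\targetmass_\targetcompidx$, by the source-posterior-mean constraint together with $\feasubsetprob\cdot\widehat{\prediction}=\obslabel$.
\end{itemize}
Both distributions have the base rate as their mean, so \Cref{lem:finite-martingale-blackwell} gives $\classifier\blackwelldomeq\givenclassifier_\targetindex$.

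\emph{Loss bound.} The expected loss is $\sum_p \prob{\predrv_\classifier=p}\,\bayesrisk(p)=\sum_{\targetcompidx,\affindex}\feasubsetprob(\signalvec^{\targetcompidx,\affindex})\,\bayesrisk(\widehat{\prediction}_{\targetcompidx,\affindex})$. For each surviving atom, \Cref{def:regular-proper-loss} gives $\bayesrisk(q)\le\affslope_\affindex q+\affintercept_\affindex$ for every $q$, and in particular for the specific index $\affindex$ labelling that atom. Multiplying by $\feasubsetprob(\signalvec^{\targetcompidx,\affindex})>0$ turns this into $\affslope_\affindex\obslabel(\signalvec^{\targetcompidx,\affindex})+\affintercept_\affindex\feasubsetprob(\signalvec^{\targetcompidx,\affindex})$. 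Zero atoms contribute $0$ to both sides, so summing over all atoms gives the claimed inequality.

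The main point needing care is bookkeeping rather than a real obstacle: atoms with equal predictions must be merged consistently, both in building $\classifier$ and in defining the coupling $\pi$. The loss step relies only on each affine piece upper-bounding $\bayesrisk$; it does not need the minimum over pieces.
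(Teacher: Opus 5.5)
Your proof is correct and follows the same three-step route as the paper. Constructibility comes from the covering constraint and \Cref{lem:linear-constructible-calibrated}. Dominance comes from a martingale coupling read off the source-mass and source-posterior-mean constraints, fed into \Cref{lem:finite-martingale-blackwell}. The loss bound comes from applying each atom's own affine upper bound, multiplying by its mass, and summing. The only difference is bookkeeping: you merge equal-prediction atoms and index the coupling by distinct prediction values, while the paper indexes the coupling directly by the atoms $(\targetcompidx,\affindex)$, which is equally valid.
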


\begin{proof}
The covering constraint of the program \ref{eq:discretized-LP} gives
$\sum\nolimits_{\targetcompidx,\affindex}\signalvec^{\targetcompidx,\affindex}=\vone$.
Each atom lies in $\obscone$.
After discarding zero atoms, this is exactly the representation of a constructible expert in \Cref{def:linearly-constructible-predictors}.
Thus $\classifier\in\classifierfam$, and by \Cref{lem:linear-constructible-calibrated}, every reported prediction generated by $\classifier$ is a Bayesian posterior mean.

We next prove Blackwell dominance.
For every source component $\targetcompidx$, the program \ref{eq:discretized-LP} imposes
\begin{align*}
    \sum\nolimits_{\affindex\in[\affnum_\eps]}
    \feasubsetprob(\signalvec^{\targetcompidx,\affindex})
    =
    \targetmass_\targetcompidx,
    \quad
    \sum\nolimits_{\affindex\in[\affnum_\eps]}
    \obslabel(\signalvec^{\targetcompidx,\affindex})
    =
    \prediction_{\targetindex,\targetcompidx}\targetmass_\targetcompidx~.
\end{align*}
For every positive-mass atom, we have
$\obslabel(\signalvec^{\targetcompidx,\affindex})
=
\feasubsetprob(\signalvec^{\targetcompidx,\affindex})
\widehat{\prediction}_{\targetcompidx,\affindex}$.
Thus, we have 
$\sum\nolimits_{\affindex\in[\affnum_\eps]}
\feasubsetprob(\signalvec^{\targetcompidx,\affindex})
\widehat{\prediction}_{\targetcompidx,\affindex}
=
\prediction_{\targetindex,\targetcompidx}\targetmass_\targetcompidx$.
Thus the LP variables define a martingale coupling from each target prediction value
$\prediction_{\targetindex,\targetcompidx}$ to the output atoms indexed by $(\targetcompidx,\affindex)$.
By \Cref{lem:finite-martingale-blackwell}, $\classifier\blackwelldomeq\givenclassifier_\targetindex$.

It remains to compare the LP objective with the true expected loss.
For every positive-mass atom, \Cref{def:regular-proper-loss} gives
us 
$\bayesrisk(\widehat{\prediction}_{\targetcompidx,\affindex})
\le
\affslope_\affindex\widehat{\prediction}_{\targetcompidx,\affindex}
+
\affintercept_\affindex$.
Multiplying by $\feasubsetprob(\signalvec^{\targetcompidx,\affindex})$ yields
\begin{align*}
    \feasubsetprob(\signalvec^{\targetcompidx,\affindex})
    \bayesrisk(\widehat{\prediction}_{\targetcompidx,\affindex})
    \le
    \affslope_\affindex
    \obslabel(\signalvec^{\targetcompidx,\affindex})
    +
    \affintercept_\affindex
    \feasubsetprob(\signalvec^{\targetcompidx,\affindex})~.
\end{align*}
Because each reported prediction of $\classifier$ is a Bayesian posterior mean, its expected proper loss is the sum of the Bayes-risk contributions of its atoms.
Summing over all atoms proves the claim.
\end{proof}

\begin{lemma}[LP completeness]
\label{lem:lp-completeness}
Let $\LPval_\eps$ be the optimal value of the program \ref{eq:discretized-LP}.
Then
$\LPval_\eps
\le
\lossopt_\targetindex+\eps$.
\end{lemma}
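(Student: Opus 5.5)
The plan is to take any feasible constructible expert that nearly attains $\lossopt_\targetindex$ and turn it into a feasible solution of \ref{eq:discretized-LP} whose objective value is at most that expert's loss plus $\eps$. Fix $\delta>0$ and pick $\classifier\in\classifierfam$ with $\classifier\blackwelldomeq\givenclassifier_\targetindex$ and $\expect[\prediction\sim\CDF_\classifier]{\bayesrisk(\prediction)}\le\lossopt_\targetindex+\delta$. Represent $\classifier$ by nonzero atoms $\signalvec^{(1)},\ldots,\signalvec^{(\cellnum_\classifier)}\in\obscone$ with predictions $\prediction_\preindnew=\obslabel(\signalvec^{(\preindnew)})/\feasubsetprob(\signalvec^{(\preindnew)})$. \Cref{def:linearly-constructible-predictors} gives finitely many atoms, so there is no measure-theoretic issue.

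First, apply \Cref{lem:source-labelled-decomposition} to split the atoms into source-labelled pieces $\signalvec^{\targetcompidx,\preindnew}=\gamma_{\targetcompidx,\preindnew}\signalvec^{(\preindnew)}\in\obscone$. These pieces satisfy the source-mass and source-posterior-mean constraints and sum to $\vone$, and each piece keeps the prediction $\prediction_\preindnew$.

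Second, bucket these pieces by affine piece. For each output atom $\preindnew$, choose $\affindex(\preindnew)\in\arg\min_{\affindex\in[\affnum_\eps]}\{\affslope_\affindex\prediction_\preindnew+\affintercept_\affindex\}$. Then define the LP variables
\[
\signalvec^{\targetcompidx,\affindex}\triangleq\sum\nolimits_{\preindnew:\,\affindex(\preindnew)=\affindex}\signalvec^{\targetcompidx,\preindnew}.
\]
This solution is feasible:
\begin{itemize}
\item $\obscone$ is a convex cone, so each $\signalvec^{\targetcompidx,\affindex}$ lies in $\obscone$.
\item The covering constraint holds because summing over all $(\targetcompidx,\affindex)$ simply regroups the sum of the split pieces, which equals $\vone$.
\item $\feasubsetprob$ and $\obslabel$ are linear on $\obsspace$ (\Cref{lem:observable-label-functional}), and the bucketing only regroups terms for each fixed $\targetcompidx$. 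Hence the source-mass and source-posterior-mean constraints carry over unchanged.
\end{itemize}

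Third, bound the objective. Using $\obslabel(\signalvec^{\targetcompidx,\preindnew})=\prediction_\preindnew\feasubsetprob(\signalvec^{\targetcompidx,\preindnew})$ together with linearity, the LP objective equals
\[
\sum\nolimits_{\targetcompidx,\preindnew}\feasubsetprob(\signalvec^{\targetcompidx,\preindnew})\bigl(\affslope_{\affindex(\preindnew)}\prediction_\preindnew+\affintercept_{\affindex(\preindnew)}\bigr).
\]
By \Cref{def:regular-proper-loss} and the choice of $\affindex(\preindnew)$, this is at most $\sum_{\targetcompidx,\preindnew}\feasubsetprob(\signalvec^{\targetcompidx,\preindnew})\bigl(\bayesrisk(\prediction_\preindnew)+\eps\bigr)$. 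Summing over $\targetcompidx$ recovers $\feasubsetprob(\signalvec^{(\preindnew)})$, and the atom masses sum to $\feasubsetprob(\vone)=1$. So the objective is at most $\sum_\preindnew\feasubsetprob(\signalvec^{(\preindnew)})\bayesrisk(\prediction_\preindnew)+\eps$. This sum is the expected loss of $\classifier$, since atoms sharing a prediction value merge additively. Therefore $\LPval_\eps\le\lossopt_\targetindex+\delta+\eps$ for every $\delta>0$, and letting $\delta\to0$ gives the claim.

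The main subtlety I expect is the bucketing step. We must check that merging different output atoms into one LP variable $\signalvec^{\targetcompidx,\affindex}$ does not break the linear objective accounting. It does not, because the objective is linear in the merged vector and therefore only sums the per-atom affine terms. Merged atoms may end up with a different realized prediction than the originals, but completeness only concerns the LP value, so this does not matter here. If the infimum defining $\lossopt_\targetindex$ were over an empty set the bound would be vacuous, but it is nonempty because $\givenclassifier_\targetindex$ itself is constructible (its components lie in $\obscone$).
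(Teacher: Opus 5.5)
Your proposal is correct and follows essentially the same route as the paper. Both arguments split a dominating constructible expert via \Cref{lem:source-labelled-decomposition}, bucket the split atoms by an $\eps$-accurate affine piece, verify feasibility by linearity and the cone property, and bound the LP objective by the expected loss plus $\eps$. The differences are cosmetic: you work with a $\delta$-near-optimal expert and let $\delta\to0$ while the paper takes an arbitrary feasible expert and passes to the infimum, and you choose the affine piece per output atom rather than per source-labelled pair.
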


\begin{proof}
Let $\classifier\in\classifierfam$ be any feasible expert satisfying
$\classifier\blackwelldomeq\givenclassifier_\targetindex$.
Represent $\classifier$ by nonzero atoms
$\signalvec^{(1)},\ldots,\signalvec^{(\cellnum_\classifier)}\in\obscone$ with
$\sum\nolimits_\preindnew\signalvec^{(\preindnew)}=\vone$.
For each $\preindnew$, define
$\prediction_\preindnew\triangleq\obslabel(\signalvec^{(\preindnew)})/\feasubsetprob(\signalvec^{(\preindnew)})$.
By \Cref{lem:source-labelled-decomposition}, there exist scalar-split atoms
$\signalvec^{\targetcompidx, \preindnew}=\gamma_{\targetcompidx, \preindnew}\signalvec^{(\preindnew)}\in\obscone$ satisfying
\begin{align*}
    \sum\nolimits_{\targetcompidx, \preindnew}\signalvec^{\targetcompidx, \preindnew}
    =
    \vone,
    \quad
    \sum\nolimits_\preindnew\feasubsetprob(\signalvec^{\targetcompidx, \preindnew})
    =
    \targetmass_\targetcompidx,
    \quad
    \sum\nolimits_\preindnew\obslabel(\signalvec^{\targetcompidx, \preindnew})
    =
    \prediction_{\targetindex,\targetcompidx}\targetmass_\targetcompidx~.
\end{align*}
Because $\signalvec^{\targetcompidx, \preindnew}$ is a nonnegative scalar multiple of $\signalvec^{(\preindnew)}$, every positive-mass atom $\signalvec^{\targetcompidx, \preindnew}$ has the same prediction value $\prediction_\preindnew$.

By \Cref{def:regular-proper-loss}, for every pair $(\targetcompidx, \preindnew)$ with $\signalvec^{\targetcompidx, \preindnew}\neq\vzero$, there exists $\affindex(\targetcompidx, \preindnew)\in[\affnum_\eps]$ such that
\begin{align*}
    \affslope_{\affindex(\targetcompidx, \preindnew)}\prediction_\preindnew
    +
    \affintercept_{\affindex(\targetcompidx, \preindnew)}
    \le
    \bayesrisk(\prediction_\preindnew)+\eps~.
\end{align*}
For pairs with $\signalvec^{\targetcompidx, \preindnew}=\vzero$, choose $\affindex(\targetcompidx, \preindnew)\in[\affnum_\eps]$ arbitrarily.
For each $\targetcompidx$ and each $\affindex\in[\affnum_\eps]$, we define
\begin{align*}
    \widetilde{\signalvec}^{\targetcompidx,\affindex}
    \triangleq
    \sum\nolimits_{\preindnew\in[\cellnum_\classifier]:\ \affindex(\targetcompidx, \preindnew)=\affindex}
    \signalvec^{\targetcompidx, \preindnew}~.
\end{align*}
Because $\obscone$ is a convex cone, $\widetilde{\signalvec}^{\targetcompidx,\affindex}\in\obscone$.
The source-labelled constraints above imply that
$(\widetilde{\signalvec}^{\targetcompidx,\affindex})_{\targetcompidx,\affindex}$ is feasible for \ref{eq:discretized-LP}.

The LP objective value of this feasible solution is
\begin{align*}
    \sum\nolimits_{\targetcompidx,\affindex}
    \left(
    \affslope_\affindex\obslabel(\widetilde{\signalvec}^{\targetcompidx,\affindex})
    +
    \affintercept_\affindex\feasubsetprob(\widetilde{\signalvec}^{\targetcompidx,\affindex})
    \right)
    &=
    \sum\nolimits_{\targetcompidx, \preindnew}
    \left(
    \affslope_{\affindex(\targetcompidx, \preindnew)}
    \obslabel(\signalvec^{\targetcompidx, \preindnew})
    +
    \affintercept_{\affindex(\targetcompidx, \preindnew)}
    \feasubsetprob(\signalvec^{\targetcompidx, \preindnew})
    \right)
    \\
    &=
    \sum\nolimits_{\targetcompidx, \preindnew}
    \feasubsetprob(\signalvec^{\targetcompidx, \preindnew})
    \left(
    \affslope_{\affindex(\targetcompidx, \preindnew)}\prediction_\preindnew
    +
    \affintercept_{\affindex(\targetcompidx, \preindnew)}
    \right)
    \\
    &\le
    \sum\nolimits_{\targetcompidx, \preindnew}
    \feasubsetprob(\signalvec^{\targetcompidx, \preindnew})
    \bayesrisk(\prediction_\preindnew)
    +
    \eps
    \sum\nolimits_{\targetcompidx, \preindnew}
    \feasubsetprob(\signalvec^{\targetcompidx, \preindnew})~.
\end{align*}
Since $\sum\nolimits_{\targetcompidx, \preindnew}\signalvec^{\targetcompidx, \preindnew}=\vone$, the total mass is
$\sum\nolimits_{\targetcompidx, \preindnew}\feasubsetprob(\signalvec^{\targetcompidx, \preindnew})=\feasubsetprob(\vone)=1$.
Also, because $\signalvec^{\targetcompidx, \preindnew}=\gamma_{\targetcompidx, \preindnew}\signalvec^{(\preindnew)}$ and
$\sum\nolimits_\targetcompidx\gamma_{\targetcompidx, \preindnew}=1$, splitting the atoms does not change the prediction distribution of $\classifier$:
\begin{align*}
    \sum\nolimits_{\targetcompidx, \preindnew}
    \feasubsetprob(\signalvec^{\targetcompidx, \preindnew})
    \bayesrisk(\prediction_\preindnew)
    &=
    \sum\nolimits_\preindnew
    \left(
    \sum\nolimits_\targetcompidx\gamma_{\targetcompidx, \preindnew}
    \right)
    \feasubsetprob(\signalvec^{(\preindnew)})
    \bayesrisk(\prediction_\preindnew)
    \\
    &=
    \sum\nolimits_\preindnew
    \feasubsetprob(\signalvec^{(\preindnew)})
    \bayesrisk(\prediction_\preindnew)
    =
    \expect{\loss(\predrv_\classifier,\labelrea)}~.
\end{align*}
Therefore, this feasible LP value is at most
$\expect{\loss(\predrv_\classifier,\labelrea)}+\eps$.
Since $\classifier$ was arbitrary among experts in $\classifierfam$ satisfying
$\classifier\blackwelldomeq\givenclassifier_\targetindex$, we conclude that
$\LPval_\eps\le\lossopt_\targetindex+\eps$.
\end{proof}

\begin{proof}[Proof of \Cref{thm:proper-loss-fptas-general-k}]
We first show that the program \ref{eq:discretized-LP} is feasible.
Fix any $\bar{s}\in[\affnum_\eps]$.
Set $\signalvec^{\targetcompidx,\bar{s}}=\reportvec^{\targetindex,\targetcompidx}$ for every $\targetcompidx\in[\cellnum_\targetindex]$, and set all other variables to zero.
Each source component $\reportvec^{\targetindex,\targetcompidx}$ lies in $\obscone$.
The covering constraint follows from
$\sum\nolimits_{\targetcompidx\in[\cellnum_\targetindex]}\reportvec^{\targetindex,\targetcompidx}=\vone$.
The source-mass constraints hold because
$\feasubsetprob(\reportvec^{\targetindex,\targetcompidx})=\targetmass_\targetcompidx$.
The source-posterior-mean constraints hold because $\givenclassifier_\targetindex$ generates Bayesian posterior-mean predictions:
\begin{align*}
    \obslabel(\reportvec^{\targetindex,\targetcompidx})
    =
    \prediction_{\targetindex,\targetcompidx}
    \feasubsetprob(\reportvec^{\targetindex,\targetcompidx})
    =
    \prediction_{\targetindex,\targetcompidx}
    \targetmass_\targetcompidx~.
\end{align*}
Thus the program \ref{eq:discretized-LP} has a feasible solution.

Let $(\signalvec^{\targetcompidx,\affindex})_{\targetcompidx,\affindex}$ be an optimal solution, and let $\classifier_\eps$ be the expert output by \Cref{alg:proper-loss-fptas}.
By \Cref{lem:lp-soundness}, we know that $\classifier_\eps\in\classifierfam$ and
$\classifier_\eps\blackwelldomeq\givenclassifier_\targetindex$.
Moreover, if $\LPval_\eps$ denotes the optimal value of the program \ref{eq:discretized-LP}, then
\begin{align*}
    \expect{\loss(\predrv_{\classifier_\eps},\labelrea)}
    \le
    \LPval_\eps~.
\end{align*}
Together with \Cref{lem:lp-completeness}, this implies
\begin{align*}
    \expect{\loss(\predrv_{\classifier_\eps},\labelrea)}
    \le
    \lossopt_\targetindex+\eps~.
\end{align*}

It remains to prove the running-time bound.
Let $\totalcells =\sum\nolimits_{j\in[\classifiernum]}\cellnum_j$ be the total number of input prediction components.
For every pair $(\targetcompidx,\affindex)\in[\cellnum_\targetindex]\times[\affnum_\eps]$, the explicit LP representation uses a vector variable in $\R^{\feanum}$ and span coefficients indexed by $\compindex$.
Thus the number of scalar variables is polynomial in $\feanum$, $\totalcells$, $\cellnum_\targetindex$, and $\affnum_\eps$.
The number of constraints is also polynomial in these quantities.
All coefficients in the LP are formed from the input probabilities, input report probabilities, input prediction values, and the affine-approximation coefficients
$(\affslope_\affindex,\affintercept_\affindex)_{\affindex\in[\affnum_\eps]}$.
Under the rational encoding assumption and \Cref{def:regular-proper-loss}, the LP has polynomial bit complexity in the input size and $1/\eps$.
Therefore, standard polynomial-time algorithms for rational linear programming solve \ref{eq:discretized-LP} in time polynomial in the input size and $1/\eps$.

Finally, the algorithm creates at most one output atom for every pair
$(\targetcompidx,\affindex)\in[\cellnum_\targetindex]\times[\affnum_\eps]$.
Thus, the output support size is at most $\cellnum_\targetindex\affnum_\eps$.
\end{proof}

\section{Aggregating to Deterministic Experts}
\label{sec:det-hardness}

In this section, we study the deterministic-output versions of the two aggregation problems {\predicrefinelink} and {\predicrefineoptlink}, namely, {\detpredicrefinelink} and {\detpredicrefineoptlink}.

Our main message is that deterministic outputs introduce a genuine combinatorial obstruction.
A deterministic constructible expert induces a partition of the state space into reported-prediction cells, and every cell of this partition must have an observable binary indicator vector.
This integrality requirement makes deterministic aggregation computationally hard.
We prove two results.
First, {\detpredicrefinelink} is $\NP$-hard even with only two input experts.
Second,  {\detpredicrefineoptlink} has no multiplicative PTAS unless $\Poly=\NP$, even for the Brier loss.
Both results follow from the same amplified subset-sum construction.

\begin{theorem}[$\NP$-hardness of {\detpredicrefinelink}]
\label{thm:det-search-hard}
The problem {\detpredicrefinelink} is $\NP$-hard even when there are exactly two input experts, i.e., $\classifiernum = 2$.
More precisely, the hardness holds even when the prior distribution is uniform and the target expert $\givenclassifier_1$ is the constant deterministic expert that reports the base rate.
\end{theorem}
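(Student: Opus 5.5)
We reduce from a restricted subset-sum problem ($\rsslink$, with items $a_1,\dots,a_\harditemnum$ and target $\hardtarget$). The argument rests on two observations about deterministic constructible experts.

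First, if $\classifier\in\classifierfamdet$ partitions $\feaspace$ into cells $\hardcell_1,\dots,\hardcell_r$, then Definition~\ref{def:linearly-constructible-predictors} forces each indicator $\subsetindi_{\hardcell_h}$ to lie in $\obscone$. In particular each $\subsetindi_{\hardcell_h}\in\obsspace$. Conversely, any nontrivial binary vector $\subsetindi_\hardset\in\obsspace$ gives a two-cell deterministic constructible expert with cells $\hardset$ and $\feaspace\setminus\hardset$, since $\vone\in\obsspace$.

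Second, when the target $\givenclassifier_1$ is the constant base-rate expert, every constructible expert weakly dominates it. A deterministic expert strictly dominates it iff it reports at least two distinct values. Hence the decision question becomes: does $\obsspace$ contain a binary vector $\subsetindi_\hardset$ with $\emptyset\ne\hardset\ne\feaspace$ whose two cells receive distinct predictions $\obslabel(\subsetindi_\hardset)/\feasubsetprob(\subsetindi_\hardset)\ne\obslabel(\subsetindi_{\feaspace\setminus\hardset})/\feasubsetprob(\subsetindi_{\feaspace\setminus\hardset})$?

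The construction uses a uniform prior on the ground set
\[
\hardground=\hardposblock\cup\hardnegblock\cup\{\fea_1,\dots,\fea_\harditemnum\},
\]
where $\hardposblock$ and $\hardnegblock$ are two blocks of $\hardblocksize$ states each. The inputs are two experts. $\givenclassifier_1$ is constant, so it contributes only $\vone$ to $\obsspace$. $\givenclassifier_2$ is a randomized expert with two reports, whose state-wise probability vectors $\reportvec$ and $\vone-\reportvec$ are chosen so that $\obsspace=\myspan\{\vone,\reportvec\}$ is two-dimensional. I would set $\reportvec_i=\tfrac12$ on $\hardposblock$, $\reportvec_i=\tfrac12-\hardeta$ on $\hardnegblock$, and $\reportvec_i=\tfrac12+\harditemeps a_i$ on the item states, with $\hardeta,\harditemeps$ chosen so all entries lie in $[0,1]$.

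The main step is to characterize the binary vectors in $\obsspace=\{\alpha\vone+\beta\reportvec\}$.
\begin{itemize}
\item Any such vector is constant on $\hardposblock$ (value $\alpha+\beta/2$) and on $\hardnegblock$.
\item If $\beta\ne0$, the block values are two distinct elements of $\{0,1\}$, which pins down $\beta=\pm1/\hardeta$ and $\alpha$.
\item The item coordinates must then also be $0$ or $1$, i.e.\ $\harditemeps a_i/\hardeta\in\{\text{block values}\}$ up to sign.
\end{itemize}
This rigidity is too strong as stated: it forces each $a_i$ to a fixed value. So $\obsspace$ needs more dimensions. I expect the actual design is a randomized $\givenclassifier_2$ with a few reports, giving $\obsspace$ spanned by $\vone$, a block-separating vector, and a vector $\hardbasis$ encoding the $a_i$, where a linear constraint on item coordinates reduces exactly to $\sum_{i\in\hardset}a_i=\hardtarget$.

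Concretely, I would try vectors whose item coordinates are free in $\{0,1\}$ exactly when a single linear relation holds. The cleanest route is to take $\obsspace$ to be the orthogonal complement of one vector $z$, i.e.\ $\ker\mymatrix=\myspan\{z\}$, with $z=(\hardtarget\text{ on one }\hardposblock\text{ state},\,\dots,\,-a_i\text{ on items})$, suitably spread across the blocks. Binary vectors in $z^\perp$ then correspond to subset-sum solutions. The amplified blocks rule out trivial solutions, forcing the nontrivial binary vector to take different constant values on the two blocks. I would then realize $\obsspace$ as the span of nonnegative stochastic vectors summing to $\vone$ (the reports of $\givenclassifier_2$), which is possible since $z^\perp\ni\vone$ and $z^\perp$ contains a strictly positive basis near $\vone$.

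It remains to argue that the resulting two-cell expert reports distinct predictions for some Bayes vector consistent with the inputs. Since the problem requires correctness uniformly, I would check that $\obslabel$ gives distinct ratios, choosing the report values of $\givenclassifier_2$ to make the block cells' posteriors differ. This final non-degeneracy, together with keeping all numbers polynomial-size, is the step I expect to be fiddliest.
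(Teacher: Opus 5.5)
Your outline follows the paper's strategy: reduce from \rss; note that the level sets of a deterministic constructible expert have indicators in $\obsspace$, and conversely a nontrivial binary vector in $\obsspace$ gives a two-cell expert because $\vone\in\obsspace$; and use that against the constant base-rate target, strict dominance is the same as nonconstancy. The gap is that the hard instance, which is the substance of the proof, is never built, and the design you settle on fails.

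If $\ker\mymatrix=\myspan\{z\}$, then $\obsspace=z^\perp$ imposes a single linear equation. It therefore cannot force binary vectors to be constant on a block of size $\hardblocksize\ge 2$. Suppose $z$ puts $\hardtarget$ on one state of $\hardposblock$, as you wrote. Then every other state $\hardplus_2$ of that block has $z_{\hardplus_2}=0$, so $\subsetindi_{\{\hardplus_2\}}\in\obsspace$ on every instance. That singleton cell is a strict improvement whenever its Bayes probability differs from the base rate, which your non-degeneracy step requires. Spreading $z$ uniformly over the blocks does not help. Then $\subsetindi_{\{\hardplus_1,i\}}\in z^\perp$ whenever $a_i=\hardtarget/\hardblocksize$, as in the NO-instance $a=(1,5,5)$, $\hardtarget=\hardblocksize=3$; with Bayes probabilities $3/4$ on $\hardposblock$ and $1/2$ on items, this cell predicts $5/8$.

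Block-constancy needs the $2(\hardblocksize-1)$ further equations $\signalvec_{\hardplus_1}=\cdots=\signalvec_{\hardplus_\hardblocksize}$ and $\signalvec_{\hardminus_1}=\cdots=\signalvec_{\hardminus_\hardblocksize}$, i.e.\ $\dim\ker\mymatrix=2\hardblocksize-1$. The paper takes $\obsspace=\hardspace=\myspan\{\vone,\hardbasis^{(1)},\dots,\hardbasis^{(\harditemnum)}\}$ with $\hardbasis^{(i)}=\subsetindi_{\{i\}}+(a_i/\hardtarget)\subsetindi_{\hardposblock}$. A four-case analysis on the common block values $(s_+,s_-)$ then shows that nontrivial binary vectors correspond exactly to subset-sum solutions. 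For the search problem alone, blocks of size one suffice, and then $\hardspace$ is exactly your hyperplane with $z=(a_1,\dots,a_\harditemnum,-\hardtarget,-(\hardtotal-\hardtarget))$. The amplification only serves the Brier-loss gap.

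Two further steps are missing.

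First, realizing $\obsspace$ by the reports of $\givenclassifier_2$ is not just a matter of finding a nonnegative basis near $\vone$. Reports with equal calibrated values are a single report, so only their sum enters $\obsspace$. You must therefore make all report values pairwise distinct, with polynomial bit size. The paper uses the $2\harditemnum$ routing vectors $\frac1\harditemnum(\frac12\vone\pm\harditemeps_i\hardbasis^{(i)})$ with $\harditemeps_i=\hardeta^i/a_i$ and $\hardeta=1/16$. It pairs these with Bayes probabilities $3/4$, $1/4$, $1/2$ on $\hardposblock$, $\hardnegblock$, and the items. This gives base rate $1/2$, geometrically separated report values, and a certificate cell $\hardset\cup\hardposblock$ whose prediction is strictly above $1/2$. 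That prediction is fixed by $\obslabel$ and is the same for every consistent Bayes vector, so non-degeneracy is a design choice, not an existential check over Bayes vectors.

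Second, \detpredicrefine{} returns $\givenclassifier_1$ only when no \emph{undominated} strict improvement exists. You therefore also need that any strict improvement yields one in $\strictdomfamdet$. This holds because $\classifierfamdet$ is finite (each partition determines at most one deterministic constructible expert) and strict Blackwell dominance is acyclic.
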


Our second result shows that the optimization problem {\detpredicrefineoptlink} is also computationally hard.
We state it for the Brier loss, denoted by $\brierloss(\prediction,y)=(\prediction-y)^2$.
For a calibrated expert $\classifier$, we define its expected Brier loss by
\begin{align*}
    \brierlossval(\classifier)
    \triangleq
    \expect{\brierloss(\predrv_\classifier,\labelrea)}
    =
    \expect[\prediction\sim \CDF_\classifier]{\prediction(1-\prediction)}~.
\end{align*}
For the target expert $\givenclassifier_1$, define the deterministic-output Brier-loss optimum by
\begin{align*}
    \detbrieropt
    \triangleq
    \inf\left\{
        \brierlossval(\classifier):
        \classifier\in\classifierfamdet,
        \ \classifier\blackwelldomeq\givenclassifier_1
    \right\}~.
\end{align*}
A multiplicative PTAS for  {\detpredicrefineoptlink} with Brier loss is an algorithm that, for every fixed $\varepsilon>0$, runs in time polynomial in the input size and outputs a deterministic constructible expert $\classifier^\varepsilon\in\classifierfamdet$ such that
$\classifier^\varepsilon\blackwelldomeq\givenclassifier_1$ and
$\brierlossval(\classifier^\varepsilon)\le (1+\varepsilon)\detbrieropt$.

\begin{theorem}[No multiplicative PTAS for  {\detpredicrefineoptlink}]
\label{thm:det-opt-NO-mult-ptas}
Unless $\Poly=\NP$, there is no multiplicative PTAS for  {\detpredicrefineoptlink} with Brier loss, even when there are exactly two input experts, the prior distribution is uniform, and the target expert is $\givenclassifier_1\equiv 1/2$.
\end{theorem}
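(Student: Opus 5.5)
We reduce from the same restricted subset-sum construction used for \Cref{thm:det-search-hard} and turn its YES/NO dichotomy into a constant-factor gap in the deterministic Brier optimum. The first expert is the constant expert $\givenclassifier_1\equiv 1/2$, which is also the target. The second is the randomized auxiliary expert, built so that the observable space $\obsspace$ has the following property: every binary vector in $\obsspace$ is constant on the two amplified blocks $\hardposblock,\hardnegblock$, and, once those block values are fixed, membership reduces to a subset-sum equation on the item coordinates. We tune the Bayes-probability side of the gadget so that the base rate is exactly $1/2$. Any calibrated expert has mean $1/2$, so $\classifier\blackwelldomeq\givenclassifier_1$ is vacuous, and $\detbrieropt$ is simply the minimum Brier loss over $\classifierfamdet$.

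\textbf{Key steps, in order.}

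First, show that a deterministic constructible expert is exactly a partition of $[\feanum]$ into cells whose indicator vectors $\subsetindi_{\hardcell}$ lie in $\obsspace$. By \Cref{def:linearly-constructible-predictors}, the atoms of a deterministic expert are $0/1$ vectors summing to $\vone$. One caveat: atoms with equal predictions merge, but each prediction cell is then a sum of observable indicators, so it is still observable.

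Second, the NO case. If $\obsspace$ contains no binary vector other than $\vzero$ and $\vone$, the only deterministic constructible expert is the constant one. Its loss is $\tfrac12\cdot\tfrac12=\tfrac14$, so $\detbrieropt=1/4$.

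Third, the YES case. A subset-sum certificate gives an observable binary vector $\subsetindi_{\hardcell}$ that contains, say, $\hardposblock$ and the chosen items but not $\hardnegblock$. This yields the two-cell expert $\{\subsetindi_{\hardcell},\vone-\subsetindi_{\hardcell}\}$ with predictions $\mu_\pm=\obslabel(\cdot)/\feasubsetprob(\cdot)$. Its loss is
\[
\tfrac14-\feasubsetprob(\subsetindi_{\hardcell})(\mu_+-\tfrac12)^2-\feasubsetprob(\vone-\subsetindi_{\hardcell})(\mu_--\tfrac12)^2 .
\]
Amplification makes the blocks carry a constant fraction of the mass (e.g.\ $\hardblocksize$ much larger than the item count $\harditemnum$). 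If we also set the block Bayes probabilities to constants $1/2\pm\hardeta$, the predictions stay a constant away from $1/2$. Hence $\detbrieropt\le \tfrac14-c$ for an absolute constant $c>0$ (such as $\hardeta^2/4$).

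Fourth, conclude. Choose $\varepsilon$ with $(1+\varepsilon)(\tfrac14-c)<\tfrac14$. A multiplicative $(1+\varepsilon)$-approximation then outputs an expert with loss below $1/4$ on YES instances, while on NO instances every output has loss exactly $1/4$. Checking whether the output's loss falls below $1/4$ is polynomial: the loss is computable from $\obslabel$ even though the $\bayesprob_i$ are unknown, because $\brierlossval=\sum_{\text{cells}}\feasubsetprob\,\mu(1-\mu)$. This decides the NP-hard restricted subset sum, so a multiplicative PTAS would imply $\Poly=\NP$.

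\textbf{Main obstacle.} The constant gap in the third step must hold uniformly, since the loss of the YES-certificate expert is fixed by the construction. The difficulty is that the item coordinates and their Bayes probabilities must not dilute the block signal. I would handle this in two ways. First, give items mass $O(\harditemnum/\hardblocksize)=o(1)$ via block size $\hardblocksize\gg\harditemnum$. Second, ensure that the gadget's consistent Bayes vector has $\bayesprob\equiv 1/2+\hardeta$ on $\hardposblock$ and $1/2-\hardeta$ on $\hardnegblock$. Then $\mu_+\ge 1/2+\hardeta/2$ and $\mu_-\le 1/2-\hardeta/2$ once the item mass is small, while the gadget remains compatible with a uniform prior and base rate $1/2$.
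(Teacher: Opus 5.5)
Your proposal is correct and matches the paper's proof. Both use the same two-expert subset-sum gadget with target $\givenclassifier_1\equiv 1/2$, the observability of deterministic level sets to force the constant expert (Brier loss $1/4$) on NO instances, the two-cell expert from $\hardset\cup\hardposblock$ to get a constant loss gap on YES instances, and then a fixed-accuracy PTAS call with a threshold test. The only difference is that your extra amplification $\hardblocksize\gg\harditemnum$ is unnecessary: item states have Bayes probability exactly $1/2$, so they never dilute the block signal. The paper keeps $\hardblocksize=\harditemnum$ with block Bayes probabilities $3/4$ and $1/4$, writes the cell predictions as $\tfrac12+d/\alpha$ and $\tfrac12-d/(1-\alpha)$ with $d=1/12$, and uses $1/\alpha+1/(1-\alpha)\ge 4$ to get $\detbrieropt\le 2/9$ whatever the item mass.
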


\paragraph{Proof overview.}
Both hardness results use the same amplified subset-sum construction.
We reduce from the following restricted subset-sum problem.

\phantomsection
\label{prob:det-subsetsum}
\mybox{
$\rss$:\\
\textbf{Input:} positive integers $\subsuminteger_1,\ldots,\subsuminteger_\harditemnum$ and a target integer $\hardtarget$ such that
$0<\hardtarget<\hardtotal\triangleq\sum\nolimits_{i\in[\harditemnum]}\subsuminteger_i$.\\
\textbf{Question:} does there exist $\hardset\subseteq[\harditemnum]$ such that $\sum\nolimits_{i\in\hardset}\subsuminteger_i=\hardtarget$?
}

Given such an instance, we construct an aggregation instance with two input experts.
Let the block size be $\hardblocksize\triangleq\harditemnum$.
The state space consists of three parts:
\begin{align}
    \label{eq:state-space}
    \hardground
    \triangleq
    [\harditemnum]\cup\hardposblock\cup\hardnegblock~,
    \quad
    \hardposblock
    \triangleq
    \{\hardplus_1,\ldots,\hardplus_\hardblocksize\}~,
    \quad
    \hardnegblock
    \triangleq
    \{\hardminus_1,\ldots,\hardminus_\hardblocksize\}~.
\end{align}
The states in $[\harditemnum]$ are the item states, while $\hardposblock$ and $\hardnegblock$ are two special blocks used for amplification.
For a vector $\signalvec\in\R^{\hardground}$, we write $\signalvec_i$ for the coordinate of item state $i\in[\harditemnum]$, $\signalvec_{\hardplus_r}$ for the coordinate of state $\hardplus_r\in\hardposblock$, and $\signalvec_{\hardminus_r}$ for the coordinate of state $\hardminus_r\in\hardnegblock$.
The target expert is the constant deterministic expert $\givenclassifier_1\equiv1/2$.
The auxiliary randomized expert $\givenclassifier_2$ is designed so that the observable linear space defined in \Cref{defn:observable-vectors} and generated by the two input experts is exactly
\begin{align*}
    \obsspace
    =
    \hardspace
    \triangleq
    \Big\{
        \signalvec\in\R^{\hardground}:
        &\ \signalvec_{\hardplus_1}=\cdots=\signalvec_{\hardplus_\hardblocksize}~,
        \quad
        \signalvec_{\hardminus_1}=\cdots=\signalvec_{\hardminus_\hardblocksize}~,\\
        &\ \sum\nolimits_{i\in[\harditemnum]}\subsuminteger_i\signalvec_i
        -\hardtarget\signalvec_{\hardplus_1}
        -(\hardtotal-\hardtarget)\signalvec_{\hardminus_1}
        =
        0
    \Big\}~.
\end{align*}
This identity is the key structural property of the construction, and it is verified later as Property~(iv) of \Cref{lem:det-hard-instance-properties}.

The definition of $\hardspace$ forces every observable binary vector to be constant on each special block.
Thus, if $\signalvec\in\{0,1\}^{\hardground}\cap\hardspace$, then either all states in $\hardposblock$ are selected or none of them are selected, and similarly either all states in $\hardnegblock$ are selected or none of them are selected.
Once these two block choices are fixed, the remaining linear equation becomes a subset-sum equation.
More precisely, the construction gives the equivalence
\begin{align*}
    \exists \hardset\subseteq[\harditemnum]
    \text{ such that }
    \sum\nolimits_{i\in\hardset}\subsuminteger_i=\hardtarget
    \quad\Longleftrightarrow\quad
    \exists \signalvec\in\{0,1\}^{\hardground}\cap\hardspace
    \text{ such that }
    \signalvec\notin\{\vzero,\vone\}~.
\end{align*}
This equivalence is proved in \Cref{lem:det-hard-binary-vectors}.

The deterministic-output restriction is what makes this binary equivalence relevant.
Any deterministic constructible expert partitions the state space into reported-prediction level sets.
\Cref{lem:det-cells-observable} shows that the indicator vector of every such level set must lie in the observable space $\obsspace$.
Therefore, a nonconstant deterministic constructible expert exists only if $\obsspace$ contains a nontrivial observable binary vector.
By the equivalence above, this happens exactly when the {\rsslink} instance is a $\yesinst$ instance.
Together with the fact that any nonconstant calibrated expert with mean report $1/2$ strictly Blackwell-dominates the constant expert $\givenclassifier_1\equiv1/2$, this yields the hardness of {\detpredicrefinelink}.

For {\detpredicrefineoptlink}, we use the same construction and exploit the amplified blocks $\hardposblock$ and $\hardnegblock$ to create a constant Brier-loss gap.
On $\noinst$-instances, the only observable binary vectors are $\vzero$ and $\vone$, so every deterministic constructible expert must be constant.
Since the base rate is $1/2$, every feasible deterministic expert has Brier loss $1/4$.
On $\yesinst$-instances, a subset-sum solution $\hardset$ gives a nontrivial observable binary vector, and hence a two-cell deterministic constructible expert.
The amplification through $\hardposblock$ and $\hardnegblock$ ensures that this expert has Brier loss at most $2/9$.
Thus, a multiplicative PTAS with sufficiently small fixed accuracy could distinguish $\yesinst$-instances from $\noinst$-instances, which would imply $\Poly=\NP$.

\subsection{Hard Instance with Two Input Experts}
\label{subsec:det-hard-instance}

We first prove the hardness of the source problem.

\begin{lemma}
\label{lem:det-rss-np-hard}
The problem {\rsslink} is $\NP$-hard.
\end{lemma}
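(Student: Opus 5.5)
We reduce from the standard \textsc{SubsetSum} problem, which is $\NP$-complete: given positive integers $\subsuminteger_1,\ldots,\subsuminteger_\harditemnum$ and a target $\hardtarget\ge 0$, decide whether some $\hardset\subseteq[\harditemnum]$ has $\sum_{i\in\hardset}\subsuminteger_i=\hardtarget$. The only difference from {\rsslink} is the promise $0<\hardtarget<\hardtotal$. The plan is to dispose of the instances violating this promise in polynomial time and pass all remaining instances through unchanged. This gives a polynomial-time many-one reduction that preserves the answer.

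Given an instance $(\subsuminteger_1,\ldots,\subsuminteger_\harditemnum,\hardtarget)$, we first compute $\hardtotal=\sum_i\subsuminteger_i$ in polynomial time in the bit length, and then split into cases:
\begin{itemize}
\item If $\hardtarget=0$ or $\hardtarget=\hardtotal$, the answer is YES, witnessed by $\hardset=\emptyset$ or $\hardset=[\harditemnum]$.
\item If $\hardtarget>\hardtotal$ (or $\hardtarget<0$, if negative targets are allowed), the answer is NO, since every subset sum lies in $[0,\hardtotal]$.
\item Otherwise $0<\hardtarget<\hardtotal$ holds, and the instance is already a valid {\rsslink} instance with the same answer.
\end{itemize}
To make the reduction a genuine map into {\rsslink} instances, we replace each trivially decided case by a fixed constant instance with the same answer. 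For YES we use $\harditemnum=2$, $\subsuminteger=(1,1)$, $\hardtarget=1$. For NO we use $\harditemnum=2$, $\subsuminteger=(2,2)$, $\hardtarget=1$; here $0<1<4$ and no subset sums to $1$.

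Correctness follows from the case analysis, and the running time is clearly polynomial. Since \textsc{SubsetSum} is $\NP$-hard (Karp), so is {\rsslink}. The argument has no real obstacle. The only point requiring care is to state the version of \textsc{SubsetSum} being cited, with positive integer weights and integer target, and to confirm that the problem remains $\NP$-hard under binary encoding, which is the standard setting. Strong $\NP$-hardness is not claimed and is not needed for the later reductions.
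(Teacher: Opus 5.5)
Your proposal is correct and matches the paper's proof: both dispose of the cases $B=0$, $B=A$, and $B>A$ by inspection, and both observe that every remaining \textsc{SubsetSum} instance is already an instance of {\rss}. The only difference is cosmetic. The paper phrases the argument as ``a polynomial-time algorithm for {\rss} together with these elementary checks solves \textsc{SubsetSum},'' whereas you map the trivial cases to fixed constant YES/NO instances, $(1,1;1)$ and $(2,2;1)$, which both check out and make the argument a clean many-one reduction.
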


\begin{proof}
It is known that the usual \textsc{SubsetSum} problem with positive item sizes is $\NP$-hard.
Given an instance $(\subsuminteger_1,\ldots,\subsuminteger_\harditemnum;\hardtarget)$ of \textsc{SubsetSum}, let $\hardtotal=\sum\nolimits_{i\in[\harditemnum]}\subsuminteger_i$.
If $\hardtarget=0$, the instance is trivially a $\yesinst$ instance.
If $\hardtarget>\hardtotal$, the instance is trivially a $\noinst$ instance.
If $\hardtarget=\hardtotal$, the instance is trivially a $\yesinst$ instance, witnessed by the full set $[\harditemnum]$.
In the remaining case, $0<\hardtarget<\hardtotal$, the instance is exactly an instance of {\rsslink}.
Thus a polynomial-time algorithm for {\rsslink}, together with these elementary checks, would solve \textsc{SubsetSum} in polynomial time.
Therefore {\rsslink} is $\NP$-hard.
\end{proof}

\begin{example}[Hard instance]
\label{ex:hard-instance}
Fix an instance of {\rsslink}, with positive integers $\subsuminteger_1,\ldots,\subsuminteger_\harditemnum$, target $\hardtarget$, and total $\hardtotal=\sum\nolimits_{i\in[\harditemnum]}\subsuminteger_i$, where $0<\hardtarget<\hardtotal$.
We construct an aggregation instance with exactly two input experts.
The construction has three parts.
\begin{itemize}
    \item \textbf{State space and prior distribution:}
    Let the block size be $\hardblocksize=\harditemnum$.
    Let the state space be defined as in Eqn.~\eqref{eq:state-space}.
    Thus $\feanum=|\hardground|=\harditemnum+2\hardblocksize=3\harditemnum$.
    We use the uniform prior distribution, $\feaprob_\fea=1/\feanum$ for every $\fea\in\hardground$.

    \item \textbf{Bayes probabilities and target expert:}
    Define Bayes probabilities by
    \begin{align*}
        \bayesprob_i=\frac12~, \quad i\in[\harditemnum]~;
        \quad
        \bayesprob_{\hardplus_r}=\frac34~, \quad r\in[\hardblocksize]~;
        \quad
        \bayesprob_{\hardminus_r}=\frac14~, \quad r\in[\hardblocksize]~.
    \end{align*}
    The base rate is
    \begin{align*}
        \sum\nolimits_{\fea\in\hardground}\feaprob_\fea\bayesprob_\fea
        &=
        \frac{\harditemnum}{\feanum}\cdot\frac12
        +
        \frac{\hardblocksize}{\feanum}\cdot\frac34
        +
        \frac{\hardblocksize}{\feanum}\cdot\frac14
        =
        \frac{\harditemnum}{2\feanum}+\frac{\hardblocksize}{\feanum}
        =
        \frac{\harditemnum/2+\harditemnum}{3\harditemnum}
        =
        \frac12~.
    \end{align*}
    The target expert is the constant deterministic expert
    $\givenclassifier_1\equiv1/2$.
    \item \textbf{One randomized auxiliary input expert:}
    We now define the auxiliary randomized expert $\givenclassifier_2$.
    Let $\hardeta=1/16$.
    For each item $i\in[\harditemnum]$, we define the item perturbation $\harditemeps_i=\hardeta^i/\subsuminteger_i$.

    We first define an auxiliary vector (indexed by the superscript $i$) $\hardbasis^{(i)}\in\R^{\hardground}$ by
    $\hardbasis^{(i)}_i=1$, $\hardbasis^{(i)}_{\hardplus_r}=\subsuminteger_i/\hardtarget$ for every $r\in[\hardblocksize]$, and $\hardbasis^{(i)}_\fea=0$ for all other states $\fea\in\hardground$.
    For each item $i\in[\harditemnum]$, we define two routing vectors
    \begin{align*}
        \reportvec^{2, (i,\hardplus)}
        \triangleq
        \frac1{\harditemnum}\left(\frac12\vone+\harditemeps_i\hardbasis^{(i)}\right)~,
        \quad
        \reportvec^{2, (i,\hardminus)}
        \triangleq
        \frac1{\harditemnum}\left(\frac12\vone-\harditemeps_i\hardbasis^{(i)}\right)~.
    \end{align*}
    The expert $\givenclassifier_2$ reports $\prediction_{2, (i,\hardplus)}$ with state-wise probabilities $\reportvec^{2, (i,\hardplus)}$, and reports $\prediction_{2, (i,\hardminus)}$ with state-wise probabilities $\reportvec^{2, (i,\hardminus)}$.
    The prediction values are defined as the corresponding conditional Bayes means:
    \begin{align*}
        \prediction_{2, (i,\hardplus)}
        \triangleq
        \frac{\feasubsetlabel(\reportvec^{2, (i,\hardplus)})}{\feasubsetprob(\reportvec^{2, (i,\hardplus)})}~,
        \quad
        \prediction_{2, (i,\hardminus)}
        \triangleq
        \frac{\feasubsetlabel(\reportvec^{2, (i,\hardminus)})}{\feasubsetprob(\reportvec^{2, (i,\hardminus)})}~.
    \end{align*}
\end{itemize}
\end{example}
From the above construction, the two input experts are precisely $\givenclassifier_1$ and $\givenclassifier_2$, so $\classifiernum=2$.

The next lemma collects the basic properties of the constructed instance. These properties
verify that the construction is well-defined, calibrated, has the intended observable linear space,
and has polynomial encoding size.
\begin{lemma}[Properties of the hard instance]
\label{lem:det-hard-instance-properties}
The constructed two-expert instance satisfies the following properties.
\begin{enumerate}[label=(\roman*)]
    \item The routing vectors $\reportvec^{2, (i,\hardplus)}$ and $\reportvec^{2, (i,\hardminus)}$, for $i\in[\harditemnum]$, define a valid finite-support randomized expert $\givenclassifier_2$.
    \item The $2\harditemnum$ prediction values of $\givenclassifier_2$ are all distinct. Moreover, for every $i\in[\harditemnum]$, we have $\prediction_{2, (i,\hardplus)}>1/2>\prediction_{2, (i,\hardminus)}$.
    \item The experts $\givenclassifier_1$ and $\givenclassifier_2$ are calibrated under the Bayes probabilities defined above.
    \item The observable linear space of the constructed input experts is $\obsspace=\hardspace$.
    \item The constructed instance has encoding size polynomial in the encoding length of the {\rsslink} instance.
\end{enumerate}
\end{lemma}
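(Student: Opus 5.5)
The plan is to verify the five properties in order, mostly by direct computation from \Cref{ex:hard-instance}. Two facts do most of the work: each pair of routing vectors sums to $\vone/\harditemnum$, and every $\hardbasis^{(i)}$ lies in $\hardspace$.

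For (i), I will first check nonnegativity. The entries of $\harditemeps_i\hardbasis^{(i)}$ are bounded by $\hardeta^i\max\{1/\subsuminteger_i,1/\hardtarget\}\le 1/16<1/2$. Hence both $\reportvec^{2,(i,\hardplus)}$ and $\reportvec^{2,(i,\hardminus)}$ are nonnegative, and in fact nonzero. Next, $\reportvec^{2,(i,\hardplus)}+\reportvec^{2,(i,\hardminus)}=\vone/\harditemnum$, so summing over $i\in[\harditemnum]$ gives $\vone$. This yields a valid stochastic kernel with $2\harditemnum$ reports.

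For (ii), I will compute the masses of $\hardbasis^{(i)}$ explicitly:
\[
\feasubsetprob(\hardbasis^{(i)})=\tfrac1\feanum\bigl(1+\hardblocksize\subsuminteger_i/\hardtarget\bigr),\qquad
\feasubsetlabel(\hardbasis^{(i)})=\tfrac1\feanum\bigl(\tfrac12+\tfrac34\hardblocksize\subsuminteger_i/\hardtarget\bigr).
\]
Because the base rate is $1/2$, this gives
\[
\prediction_{2,(i,\pm)}-\tfrac12=\frac{\pm x_i}{\tfrac12\pm\harditemeps_i\feasubsetprob(\hardbasis^{(i)})},\qquad
x_i\triangleq\harditemeps_i\bigl(\feasubsetlabel(\hardbasis^{(i)})-\tfrac12\feasubsetprob(\hardbasis^{(i)})\bigr)=\frac{\hardeta^i\hardblocksize}{4\hardtarget\feanum}>0.
\]
This immediately gives $\prediction_{2,(i,\hardplus)}>1/2>\prediction_{2,(i,\hardminus)}$. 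Distinctness is the step I expect to be the main obstacle, because the denominators also depend on $i$. I plan to bound $\harditemeps_i\feasubsetprob(\hardbasis^{(i)})\le\hardeta^i(1+\hardblocksize)/\feanum\le\hardeta^i\le 1/16$, so every denominator lies in $[7/16,9/16]$. Since $x_{i+1}/x_i=1/16<7/9$, the sequence $|\prediction_{2,(i,\pm)}-1/2|$ is strictly decreasing in $i$ on each side of $1/2$. All $2\harditemnum$ values are therefore distinct.

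For (iii), expert $\givenclassifier_1\equiv1/2$ is calibrated because it reports exactly the base rate. For $\givenclassifier_2$, each report value is defined as $\feasubsetlabel/\feasubsetprob$ of its own routing vector. By (ii), no two components share a value, so conditioning on a report is the same as conditioning on a single component, and calibration follows.

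For (iv), I will show $\obsspace=\myspan\{\vone,\hardbasis^{(1)},\dots,\hardbasis^{(\harditemnum)}\}$. The inclusion $\subseteq$ is immediate from the definitions of the $\reportvec$'s, which are combinations of $\vone$ and $\hardbasis^{(i)}$. The inclusion $\supseteq$ holds because $\vone$ is $\givenclassifier_1$'s vector, and $\hardbasis^{(i)}$ is a multiple of the difference $\reportvec^{2,(i,\hardplus)}-\reportvec^{2,(i,\hardminus)}$. Next I check that each generator lies in $\hardspace$. For $\vone$ this reads $\hardtotal-\hardtarget-(\hardtotal-\hardtarget)=0$, and for $\hardbasis^{(i)}$ it reads $\subsuminteger_i-\hardtarget\cdot\subsuminteger_i/\hardtarget=0$; both are constant on the two blocks. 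The $\harditemnum+1$ generators are linearly independent: on the item coordinates the $\hardbasis^{(i)}$ are distinct unit vectors, and only $\vone$ is nonzero on $\hardnegblock$. It remains to compute $\dim\hardspace$. The block equalities contribute $2(\hardblocksize-1)$ independent constraints. The subset-sum equation is independent of them because it involves the item coordinates. Hence
\[
\dim\hardspace=3\harditemnum-2(\harditemnum-1)-1=\harditemnum+1,
\]
and the two spaces coincide.

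For (v), I will observe that $\harditemeps_i=16^{-i}/\subsuminteger_i$ has $O(\harditemnum+\log\subsuminteger_i)$ bits. Every entry of every routing vector, and every prediction, is obtained from such numbers by a constant number of arithmetic operations, so its bit length is polynomial. The instance has $O(\harditemnum)$ states and $O(\harditemnum)$ components, so its total encoding size is polynomial.
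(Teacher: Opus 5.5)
Your proposal is correct and follows the paper's proof essentially step for step; computing $\feasubsetprob(\hardbasis^{(i)})$ and $\feasubsetlabel(\hardbasis^{(i)})$ directly and using the tighter window $[7/16,9/16]$ are only cosmetic variations on the paper's label identity and its $[1/4,3/4]$ bound. One dependency is worth stating explicitly: the $\supseteq$ inclusion in (iv) relies on (ii), because the individual vectors $\reportvec^{2,(i,\hardplus)},\reportvec^{2,(i,\hardminus)}$ are separate rows of $\mymatrix$ only when all $2\harditemnum$ reports are distinct.
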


\begin{proof}
We verify each property one by one.

\xhdr{Property (i)} 
We first verify that the routing vectors define a valid randomized expert. 
For each item $i\in[\harditemnum]$, we have
\begin{align*}
    0<\harditemeps_i\hardbasis^{(i)}_i
    =
    \frac{\hardeta^i}{\subsuminteger_i}
    \le
    \frac1{16}~,
    \quad
    0<\harditemeps_i\hardbasis^{(i)}_{\hardplus_r}
    =
    \frac{\hardeta^i}{\hardtarget}
    \le
    \frac1{16}
    \quad\text{for every }r\in[\hardblocksize]~.
\end{align*}
All other coordinates of $\hardbasis^{(i)}$ are zero.
Thus, every coordinate of the unscaled vectors $\frac12\vone+\harditemeps_i\hardbasis^{(i)}$ and $\frac12\vone-\harditemeps_i\hardbasis^{(i)}$ are coordinate-wise nonnegative.
Moreover, $\reportvec^{2, (i,\hardplus)}+\reportvec^{2, (i,\hardminus)}=1/\harditemnum \cdot \vone$ for every $i$, and hence
\begin{align*}
    \sum\nolimits_{i \in [\harditemnum]}\left(\reportvec^{2, (i,\hardplus)}+\reportvec^{2, (i,\hardminus)}\right)
    =
    \vone~.
\end{align*}
Thus the routing probabilities are nonnegative and sum to one at every state, so they define a valid randomized expert $\givenclassifier_2$.

\xhdr{Property (ii)}
We next prove that the reports of $\givenclassifier_2$ have distinct prediction values.
For notational convenience, we define the unscaled routing vectors
\begin{align*}
    \widetilde{\reportvec}^{2, (i,\hardplus)}
    \triangleq
    \frac12\vone+\harditemeps_i\hardbasis^{(i)}~,
    \quad
    \widetilde{\reportvec}^{2, (i,\hardminus)}
    \triangleq
    \frac12\vone-\harditemeps_i\hardbasis^{(i)}~.
\end{align*}
The common factor $1/\harditemnum$ cancels in the conditional-mean ratios, so
\begin{align*}
    \prediction_{2, (i,\hardplus)}
    =
    \frac{\feasubsetlabel(\widetilde{\reportvec}^{2, (i,\hardplus)})}
    {\feasubsetprob(\widetilde{\reportvec}^{2, (i,\hardplus)})}~,
    \quad
    \prediction_{2, (i,\hardminus)}
    =
    \frac{\feasubsetlabel(\widetilde{\reportvec}^{2, (i,\hardminus)})}
    {\feasubsetprob(\widetilde{\reportvec}^{2, (i,\hardminus)})}~.
\end{align*}
For every vector $\signalvec\in\R^{\hardground}$, the label-mass functional satisfies
\begin{align}
    \feasubsetlabel(\signalvec)
    =
    \frac12\feasubsetprob(\signalvec)
    +
    \frac{1}{4\feanum}
    \left(
        \sum\nolimits_{r\in[\hardblocksize] }\signalvec_{\hardplus_r}
        -
        \sum\nolimits_{r\in[\hardblocksize] }\signalvec_{\hardminus_r}
    \right)~.
    \label{eq:det-hard-label-identity}
\end{align}
Indeed, this follows from $\bayesprob_i=1/2$, $\bayesprob_{\hardplus_r}=3/4$, $\bayesprob_{\hardminus_r}=1/4$, and $\feaprob_\fea=1/\feanum$.
For the positive report,
\begin{align*}
    \sum\nolimits_{r\in[\hardblocksize] }\widetilde{\reportvec}^{2, (i,\hardplus)}_{\hardplus_r}
    -
    \sum\nolimits_{r\in[\hardblocksize] }\widetilde{\reportvec}^{2, (i,\hardplus)}_{\hardminus_r}
    =
    \hardblocksize\harditemeps_i\frac{\subsuminteger_i}{\hardtarget}
    =
    \frac{\hardblocksize\hardeta^i}{\hardtarget}~,
\end{align*}
whereas the same block difference for $\widetilde{\reportvec}^{2, (i,\hardminus)}$ is $-\hardblocksize\hardeta^i/\hardtarget$.
Therefore
\begin{align*}
    \prediction_{2, (i,\hardplus)}-\frac12
    =
    \frac{\hardblocksize\hardeta^i}
    {4\feanum\hardtarget\feasubsetprob(\widetilde{\reportvec}^{2, (i,\hardplus)})}~,
    \quad
    \frac12-\prediction_{2, (i,\hardminus)}
    =
    \frac{\hardblocksize\hardeta^i}
    {4\feanum\hardtarget\feasubsetprob(\widetilde{\reportvec}^{2, (i,\hardminus)})}~.
\end{align*}
It remains to bound the denominators.
Since 
$\feasubsetprob(\hardbasis^{(i)})
=
\frac1\feanum\left(1+\hardblocksize\frac{\subsuminteger_i}{\hardtarget}\right)$, we have
\begin{align*}
    \harditemeps_i\feasubsetprob(\hardbasis^{(i)})
    =
    \frac{\hardeta^i}{\subsuminteger_i}\cdot
    \frac1\feanum\left(1+\hardblocksize\frac{\subsuminteger_i}{\hardtarget}\right)
    =
    \hardeta^i\left(\frac1{\feanum \subsuminteger_i}+\frac{\hardblocksize}{\feanum\hardtarget}\right) 
    \le
    \frac1{16}\left(\frac13+\frac13\right)
    =
    \frac1{24}~,
\end{align*}
where we used $\subsuminteger_i\ge1$, $\hardtarget\ge1$, $\hardblocksize=\harditemnum$, and $\feanum=3\harditemnum$.
Thus $\feasubsetprob(\widetilde{\reportvec}^{2, (i,\hardplus)})$ and $\feasubsetprob(\widetilde{\reportvec}^{2, (i,\hardminus)})$ both lie in $[1/4,3/4]$.
Consequently, we have
\begin{align*}
    \frac{\hardblocksize\hardeta^i}{3\feanum\hardtarget}
    \le
    \prediction_{2, (i,\hardplus)}-\frac12
    \le
    \frac{\hardblocksize\hardeta^i}{\feanum\hardtarget}~,
    \quad
    \frac{\hardblocksize\hardeta^i}{3\feanum\hardtarget}
    \le
    \frac12-\prediction_{2, (i,\hardminus)}
    \le
    \frac{\hardblocksize\hardeta^i}{\feanum\hardtarget}~.
\end{align*}
If $i<j$, then $j\ge i+1$, and since $\hardeta=1/16<1/3$,
\begin{align*}
    \frac{\hardblocksize\hardeta^i}{3\feanum\hardtarget}
    >
    \frac{\hardblocksize\hardeta^{i+1}}{\feanum\hardtarget}
    \ge
    \frac{\hardblocksize\hardeta^j}{\feanum\hardtarget}~.
\end{align*}
Thus the positive deviations $\prediction_{2, (i,\hardplus)}-1/2$ are pairwise distinct, and the negative deviations $1/2-\prediction_{2, (i,\hardminus)}$ are pairwise distinct.
Moreover, each $\prediction_{2, (i,\hardplus)}$ is above $1/2$, while each $\prediction_{2, (i,\hardminus)}$ is below $1/2$.
Thus, all $2\harditemnum$ prediction values are distinct.

\xhdr{Property (iii)} Part (iii) follows immediately.
The expert $\givenclassifier_1\equiv1/2$ is calibrated because the base rate is $1/2$.

For the expert $\givenclassifier_2$, each reported prediction value is defined as the conditional Bayes mean of its routing component, and part (ii) shows that all prediction values are distinct.
Thus conditioning on a prediction value is the same as conditioning on the corresponding routing component, so $\givenclassifier_2$ is calibrated.

\xhdr{Property (iv)}
For part (iv), since $\givenclassifier_1$ is constant, it contributes the routing vector $\vone$.
Since all reports of $\givenclassifier_2$ are distinct, the observable linear space also contains each routing vector $\reportvec^{2, (i,\hardplus)}$ and $\reportvec^{2, (i,\hardminus)}$.
Using the unscaled vectors above, we obtain
\begin{align*}
    \obsspace
    =
    \operatorname{span}\left\{\vone,\hardbasis^{(1)},\ldots,\hardbasis^{(\harditemnum)}\right\}~.
\end{align*}
We claim that this span equals $\hardspace$.
First, $\vone\in\hardspace$, because $\sum\nolimits_{i \in [\harditemnum]}\subsuminteger_i-\hardtarget-(\hardtotal-\hardtarget)=0$.
For each $i\in[\harditemnum]$, the vector $\hardbasis^{(i)}$ belongs to $\hardspace$: its positive-block coordinates are all $\subsuminteger_i/\hardtarget$, its negative-block coordinates are all $0$, and
\begin{align*}
    \subsuminteger_i
    -
    \hardtarget\cdot\frac{\subsuminteger_i}{\hardtarget}
    -
    (\hardtotal-\hardtarget)\cdot0
    =
    0~.
\end{align*}
Thus, we have $\obsspace\subseteq\hardspace$.
The vectors $\vone,\hardbasis^{(1)},\ldots,\hardbasis^{(\harditemnum)}$ are linearly independent: if $\alpha_0\vone+\sum\nolimits_{i \in [\harditemnum]}\alpha_i\hardbasis^{(i)}=0$, then looking at any negative-block coordinate gives $\alpha_0=0$, and then looking at item coordinate $i$ gives $\alpha_i=0$ for every $i$.
Thus $\dim(\obsspace)=\harditemnum+1$.
On the other hand, a vector in $\hardspace$ is determined by the $\harditemnum$ item coordinates, one common positive-block coordinate, and one common negative-block coordinate, subject to one nontrivial linear equation.
Therefore $\dim(\hardspace)=\harditemnum+1$.
It follows that $\obsspace=\hardspace$.

\xhdr{Property (v)}
Part (v) follows because all probabilities are rational with polynomial bit complexity.
The numbers $\hardeta^i=16^{-i}$ have binary encoding length $O(i)$, and $i\le\harditemnum$.
Each routing probability is obtained from $\harditemnum$, $\subsuminteger_i$, $\hardtarget$, and $16^i$ by a constant number of rational arithmetic operations.
Each prediction value is a ratio of two sums over $\feanum=3\harditemnum$ rational terms of polynomial bit complexity.
Thus every prediction value has polynomial bit complexity, and the full constructed instance has polynomial encoding size.
\end{proof}

\subsection{Technical Lemmas}
\label{subsec:det-observable-binary-cells}

We next prove the technical lemmas used in both deterministic hardness results.
The first and third lemmas are general facts about deterministic constructible experts and Blackwell dominance.
The second lemma specializes to the hard instance from \Cref{subsec:det-hard-instance}; there we repeatedly use Property~(iv) of \Cref{lem:det-hard-instance-properties}, namely, the identity $\obsspace=\hardspace$.

The first lemma is the key characterization where deterministic outputs become an integrality constraint. It says that, in the constructed hard instance, every reported-prediction level set of any deterministic constructible expert must have an observable indicator vector.

\begin{lemma}[Deterministic cells are observable]
\label{lem:det-cells-observable}
Let $\classifier\in\classifierfamdet$ be a deterministic constructible expert.
For every prediction value $\prediction\in\supp(\CDF_\classifier)$, define the level set
$\cellnew_\prediction\triangleq\{\fea\in\feaspace:\classifier(\fea)=\prediction\}$.
Then $\subsetindi_{\cellnew_\prediction}\in\obscone\subseteq\obsspace$, and the prediction value on this level set satisfies
$\prediction
=
\frac{\obslabel(\subsetindi_{\cellnew_\prediction})}
{\feasubsetprob(\subsetindi_{\cellnew_\prediction})}$.
\end{lemma}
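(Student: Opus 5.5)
The plan is to unpack \Cref{def:linearly-constructible-predictors} for a deterministic expert and read off the level-set indicators directly. Since $\classifier\in\classifierfamdet\subseteq\classifierfam$, there exist nonzero atoms $\signalvec^{(1)},\ldots,\signalvec^{(\cellnum_\classifier)}\in\obscone$ with $\sum\nolimits_{\outputpreind}\signalvec^{(\outputpreind)}=\vone$, where $\cellnum_\classifier=|\supp(\CDF_\classifier)|$. Conditional on $\fea_i$, the expert reports $\prediction_\outputpreind=\obslabel(\signalvec^{(\outputpreind)})/\feasubsetprob(\signalvec^{(\outputpreind)})$ with probability $\signalvec^{(\outputpreind)}_i$.

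First I would check that the values $\prediction_\outputpreind$ are pairwise distinct. Because $\cellnum_\classifier$ equals the number of distinct reported values, the $\cellnum_\classifier$ atoms must produce $\cellnum_\classifier$ distinct predictions. If two atoms shared a value, then at most $\cellnum_\classifier-1$ distinct values would appear. This step needs a little care: if the convention allowed repeated values, I would instead merge atoms with equal prediction. A sum of atoms in $\obscone$ stays in $\obscone$, since $\obscone$ is a convex cone, and by linearity of $\obslabel$ and $\feasubsetprob$ the merged atom keeps the same ratio. So without loss of generality each support value $\prediction$ corresponds to exactly one atom $\signalvec^{(\outputpreind(\prediction))}$.

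Next I would use determinism. For each state $\fea_i$, the distribution $\classifier(\cdot\mid\fea_i)=\sum\nolimits_{\outputpreind}\signalvec^{(\outputpreind)}_i\delta_{(\prediction_\outputpreind)}$ is a point mass at $\classifier(\fea_i)$. The coefficients are nonnegative, sum to one, and sit on distinct points. Therefore $\signalvec^{(\outputpreind)}_i=1$ when $\prediction_\outputpreind=\classifier(\fea_i)$, and $\signalvec^{(\outputpreind)}_i=0$ otherwise. This gives $\signalvec^{(\outputpreind(\prediction))}=\subsetindi_{\cellnew_\prediction}$, so $\subsetindi_{\cellnew_\prediction}\in\obscone\subseteq\obsspace$. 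The identity $\prediction=\obslabel(\subsetindi_{\cellnew_\prediction})/\feasubsetprob(\subsetindi_{\cellnew_\prediction})$ is then just the definition of the prediction attached to that atom. The denominator is positive by \Cref{lem:observable-cone-valid-probability}, because $\cellnew_\prediction\neq\emptyset$ for $\prediction$ in the support.

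The only real subtlety is the bookkeeping in the first step: making sure distinct atoms correspond to distinct values, so that a point mass forces a $0/1$ atom. I would handle this by the merging argument above. That argument also shows the atoms can be taken with distinct values whenever the definition is read loosely.
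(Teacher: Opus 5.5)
Your proposal is correct and follows essentially the same route as the paper: pool the atoms that share the value $\prediction$ (your merging step, the paper's $\signalvec_\prediction=\sum_{\outputpreind\in H_\prediction}\signalvec^{(\outputpreind)}\in\obscone$), use determinism to identify the pooled vector with $\subsetindi_{\cellnew_\prediction}$, and read off the ratio from linearity of $\obslabel$ and $\feasubsetprob$. Your preliminary observation is also valid: the count $\cellnum_\classifier=|\supp(\CDF_\classifier)|$ already forces the atom values to be distinct. It is not needed, though, since pooling covers both readings of the definition.
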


\begin{proof}
Because $\classifier$ is constructible, it admits a representation by nonzero atoms
$\signalvec^{(1)},\ldots,\signalvec^{(\cellnum_\classifier)}\in\obscone$ such that
$\sum\nolimits_{\preindnew\in[\cellnum_\classifier]} \signalvec^{(\preindnew)}=\vone$.
For each atom $\preindnew $, let
$\prediction_\preindnew \triangleq \obslabel(\signalvec^{(\preindnew)})/\feasubsetprob(\signalvec^{(\preindnew)})$
be its reported prediction value.
Thus, for every state $\fea\in\feaspace$, the expert can be written as
$\classifier(\cdot \mid \fea)
=
\sum\nolimits_{\preindnew\in[\cellnum_\classifier]} \signalvec^{(\preindnew)}_\fea\delta_{(\prediction_\preindnew )}(\cdot)$.

Fix a prediction value $\prediction\in\supp(\CDF_\classifier)$.
Let $H_\prediction\triangleq\{\preindnew\in[\cellnum_\classifier]:\prediction_\preindnew =\prediction\}$, and define the pooled routing vector
$\signalvec_\prediction
    \triangleq
    \sum\nolimits_{\preindnew\in H_\prediction}\signalvec^{(\preindnew)}$.
Since $\obscone$ is a cone, it is closed under finite nonnegative sums, and therefore $\signalvec_\prediction\in\obscone$.
For each state $\fea$, the coordinate $(\signalvec_\prediction)_\fea$ is exactly the probability that $\classifier$ reports $\prediction$ conditional on state $\fea$:
\begin{align*}
    (\signalvec_\prediction)_\fea
    =
    \sum\nolimits_{\preindnew\in H_\prediction}\signalvec^{(\preindnew)}_\fea
    =
    \prob{\predrv_\classifier=\prediction\mid \fea}~.
\end{align*}
Because $\classifier$ is deterministic, the conditional distribution $\classifier(\cdot\mid \fea)$ is a point mass at the unique value $\classifier(\fea)$.
Thus, 
\begin{align*}
    (\signalvec_\prediction)_\fea
    =
    \begin{cases}
        1, & \text{if }\classifier(\fea)=\prediction,\\
        0, & \text{if }\classifier(\fea)\ne\prediction.
    \end{cases}
\end{align*}
Therefore $\signalvec_\prediction=\subsetindi_{\cellnew_\prediction}$, and so
$\subsetindi_{\cellnew_\prediction}\in\obscone\subseteq\obsspace$.

It remains to identify the prediction value on this level set.
By linearity of $\obslabel$ and by the definition of $\prediction_\preindnew $, we have
\begin{align*}
    \obslabel(\subsetindi_{\cellnew_\prediction})
    =
    \obslabel(\signalvec_\prediction)
    =
    \sum\nolimits_{\preindnew\in H_\prediction}\obslabel(\signalvec^{(\preindnew)}) 
    =
    \sum\nolimits_{\preindnew\in H_\prediction}\prediction_\preindnew \feasubsetprob(\signalvec^{(\preindnew)})
    & =
    \sum\nolimits_{\preindnew\in H_\prediction}\prediction\feasubsetprob(\signalvec^{(\preindnew)}) \\
    &=
    \prediction\feasubsetprob(\signalvec_\prediction)
    =
    \prediction\feasubsetprob(\subsetindi_{\cellnew_\prediction})~.
\end{align*}
Since $\prediction\in\supp(\CDF_\classifier)$, the level set $\cellnew_\prediction$ is nonempty.
Under the standing assumption that all prior probabilities are positive, this implies
$\feasubsetprob(\subsetindi_{\cellnew_\prediction})>0$.
Dividing by this mass proves the desired formula for $\prediction$.
\end{proof}

The next lemma shows that, in the hard instance, nontrivial observable binary vectors are exactly subset-sum certificates.

\begin{lemma}[Binary vectors in the hard observable space]
\label{lem:det-hard-binary-vectors}
For the hard instance constructed in \Cref{subsec:det-hard-instance}, the following statements are equivalent:
\begin{enumerate}[label=(\roman*)]
    \item There exists $\hardset\subseteq[\harditemnum]$ such that $\sum\nolimits_{i\in\hardset}\subsuminteger_i=\hardtarget$.
    \item There exists $\signalvec\in\{0,1\}^{\hardground}\cap\obsspace$ such that $\signalvec\notin\{\vzero,\vone\}$.
\end{enumerate}
Moreover, every nontrivial binary vector $\signalvec\in\{0,1\}^{\hardground}\cap\obsspace$ has different common values on the positive and negative blocks.
\end{lemma}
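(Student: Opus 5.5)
The plan is to work directly with the explicit description $\obsspace=\hardspace$ from Property~(iv) of \Cref{lem:det-hard-instance-properties}. Any $\signalvec\in\{0,1\}^{\hardground}\cap\hardspace$ must be constant on $\hardposblock$ and on $\hardnegblock$. Write $x\triangleq\signalvec_{\hardplus_1}\in\{0,1\}$, $y\triangleq\signalvec_{\hardminus_1}\in\{0,1\}$ and $\hardset\triangleq\{i\in[\harditemnum]:\signalvec_i=1\}$. The single remaining defining equation of $\hardspace$ then reads
\begin{align*}
    \sum\nolimits_{i\in\hardset}\subsuminteger_i=\hardtarget x+(\hardtotal-\hardtarget)y~.
\end{align*}
Everything reduces to a case analysis over the four choices of $(x,y)$.

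For (i)$\Rightarrow$(ii), given $\hardset$ with $\sum_{i\in\hardset}\subsuminteger_i=\hardtarget$, set $\signalvec_i=\indicator{i\in\hardset}$, put $1$ on all of $\hardposblock$ and $0$ on all of $\hardnegblock$. The block-constancy conditions hold, and the equation holds with $(x,y)=(1,0)$, so $\signalvec\in\hardspace$. Since $\hardblocksize=\harditemnum\ge1$, both blocks are nonempty, so $\signalvec$ has a coordinate equal to $1$ and one equal to $0$; hence $\signalvec\notin\{\vzero,\vone\}$.

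For (ii)$\Rightarrow$(i) and the ``moreover'' claim, I treat the four cases in turn. If $(x,y)=(0,0)$, then $\sum_{i\in\hardset}\subsuminteger_i=0$, and positivity of every $\subsuminteger_i$ forces $\hardset=\emptyset$, so $\signalvec=\vzero$. If $(x,y)=(1,1)$, then $\sum_{i\in\hardset}\subsuminteger_i=\hardtotal$, which forces $\hardset=[\harditemnum]$, so $\signalvec=\vone$. Therefore any nontrivial $\signalvec$ has $x\ne y$, which is exactly the ``moreover'' statement. If $(x,y)=(1,0)$, then $\hardset$ itself witnesses (i). If $(x,y)=(0,1)$, then $\sum_{i\in\hardset}\subsuminteger_i=\hardtotal-\hardtarget$, so the complement $[\harditemnum]\setminus\hardset$ sums to $\hardtarget$ and again witnesses (i).

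There is no real obstacle here. The only points needing care are that positivity of the $\subsuminteger_i$ is what rules out nontrivial vectors in the two diagonal cases, and that both blocks are nonempty, which makes the constructed vector nontrivial. The condition $0<\hardtarget<\hardtotal$ is not even needed for this equivalence; it matters later for the Brier gap.
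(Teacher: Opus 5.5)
Your proof is correct and follows the paper's argument essentially line for line. You invoke $\obsspace=\hardspace$ to force block-constancy, then run the four-case analysis on the two block values: positivity of the $\subsuminteger_i$ kills the diagonal cases, and complementation handles the $(0,1)$ case. For the converse you build the witness with $\hardposblock$ switched on and $\hardnegblock$ switched off.

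One small correction to your closing aside, which is not part of the proof. The condition $\hardtarget>0$ is already needed for the instance to be well defined, since $\hardtarget$ is a divisor in $\hardbasis^{(i)}$, on which Property~(iv) rests. Conversely, the later Brier-gap argument never actually uses $\hardtarget<\hardtotal$.
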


\begin{proof}
By Property~(iv) of \Cref{lem:det-hard-instance-properties}, the observable linear space of the hard instance satisfies $\obsspace=\hardspace$.
We prove the two directions separately.

\xhdr{The $(ii) \Rightarrow (i)$ direction}
Let $\signalvec\in\{0,1\}^{\hardground}\cap\obsspace$.
Since $\obsspace=\hardspace$, the definition of $\hardspace$ implies that all positive-block coordinates of $\signalvec$ are equal, and all negative-block coordinates of $\signalvec$ are equal.
Denote these two common values by $s_+\in\{0,1\}$ and $s_-\in\{0,1\}$, respectively:
\begin{align*}
    \signalvec_{\hardplus_1}=\cdots=\signalvec_{\hardplus_\hardblocksize}=s_+~,
    \quad
    \signalvec_{\hardminus_1}=\cdots=\signalvec_{\hardminus_\hardblocksize}=s_-~.
\end{align*}
The defining equation of $\hardspace$ then becomes
\begin{align}
    \sum\nolimits_{i\in[\harditemnum]}\subsuminteger_i\signalvec_i
    =
    \hardtarget s_+ + (\hardtotal-\hardtarget)s_-~.
    \label{eq:det-binary-hard-equation}
\end{align}
We analyze the four possible values of $(s_+,s_-)$.
\begin{itemize}
    \item If $s_+=0$ and $s_-=0$, then Eqn.~\eqref{eq:det-binary-hard-equation} gives
    $\sum\nolimits_{i\in[\harditemnum]}\subsuminteger_i\signalvec_i=0$.
    Since every $\subsuminteger_i$ is positive and every $\signalvec_i$ is binary, we must have $\signalvec_i=0$ for every item $i\in[\harditemnum]$.
    Together with $s_+=s_-=0$, this implies $\signalvec=\vzero$.

    \item If $s_+=1$ and $s_-=1$, then Eqn.~\eqref{eq:det-binary-hard-equation} gives
    $\sum\nolimits_{i\in[\harditemnum]}\subsuminteger_i\signalvec_i=\hardtotal$.
    Since $\hardtotal=\sum\nolimits_{i\in[\harditemnum]}\subsuminteger_i$, every $\subsuminteger_i$ is positive, and every $\signalvec_i$ is binary, we must have $\signalvec_i=1$ for every item $i\in[\harditemnum]$.
    Together with $s_+=s_-=1$, this implies $\signalvec=\vone$.

    \item If $s_+=1$ and $s_-=0$, then Eqn.~\eqref{eq:det-binary-hard-equation} gives
    $\sum\nolimits_{i\in[\harditemnum]}\subsuminteger_i\signalvec_i=\hardtarget$.
    Thus, the item set $\hardset\triangleq\{i\in[\harditemnum]:\signalvec_i=1\}$ satisfies
    $\sum\nolimits_{i\in\hardset}\subsuminteger_i=\hardtarget$.

    \item If $s_+=0$ and $s_-=1$, then Eqn.~\eqref{eq:det-binary-hard-equation} gives
    $\sum\nolimits_{i\in[\harditemnum]}\subsuminteger_i\signalvec_i=\hardtotal-\hardtarget$.
    Thus, the complementary item set $\hardset\triangleq\{i\in[\harditemnum]:\signalvec_i=0\}$ satisfies
    \begin{align*}
        \sum\nolimits_{i\in\hardset}\subsuminteger_i
        =
        \hardtotal-\sum\nolimits_{i\in[\harditemnum]}\subsuminteger_i\signalvec_i
        =
        \hardtotal-(\hardtotal-\hardtarget)
        =
        \hardtarget~.
    \end{align*}
\end{itemize}
The case analysis proves that every nonzero, non-all-ones binary vector in $\obsspace$ yields a feasible solution to the {\rsslink} instance.
It also shows that every nontrivial binary vector must satisfy $s_+\ne s_-$, because the two cases with $s_+=s_-$ give exactly $\vzero$ and $\vone$.

\xhdr{The $(i) \Rightarrow (ii)$ direction}
Conversely, suppose there exists $\hardset\subseteq[\harditemnum]$ such that
$\sum\nolimits_{i\in\hardset}\subsuminteger_i=\hardtarget$.
Define $\signalvec\in\{0,1\}^{\hardground}$ by
\begin{align*}
    \signalvec_i=1 \text{ if and only if } i\in\hardset~,
    \quad
    \signalvec_{\hardplus_r}=1 \text{ for every } r\in[\hardblocksize]~,
    \quad
    \signalvec_{\hardminus_r}=0 \text{ for every } r\in[\hardblocksize]~.
\end{align*}
Then $\signalvec$ satisfies the defining equation of $\hardspace$, because
\begin{align*}
    \sum\nolimits_{i\in[\harditemnum]}\subsuminteger_i\signalvec_i
    -
    \hardtarget\signalvec_{\hardplus_1}
    -
    (\hardtotal-\hardtarget)\signalvec_{\hardminus_1}
    =
    \sum\nolimits_{i\in\hardset}\subsuminteger_i-\hardtarget
    =
    0~.
\end{align*}
It also satisfies the positive-block and negative-block equality constraints in the definition of $\hardspace$.
Thus $\signalvec\in\hardspace=\obsspace$, where the last equality again follows from Property~(iv) of \Cref{lem:det-hard-instance-properties}.
Finally, $\signalvec\ne\vzero$ because it includes all positive-block states, and $\signalvec\ne\vone$ because it excludes all negative-block states.
This proves the equivalence.
\end{proof}

We also use the following Blackwell-dominance fact.

\begin{lemma}[A nonconstant calibrated expert dominates the constant expert]
\label{lem:det-nonconstant-dominates-constant}
Let $\classifier$ be a calibrated expert whose prediction random variable $\predrv_\classifier$ is nonconstant and has mean $\baserate\in(0,1)$.
Let $\givenclassifier^{\baserate}$ be the constant expert that reports $\baserate$.
Then $\classifier\blackwelldom \givenclassifier^{\baserate}$.
\end{lemma}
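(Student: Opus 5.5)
We will verify the two requirements in the definition of strict Blackwell dominance directly through the integrated CDFs. Write $\SCDF_{\classifier}(t)=\expect{(t-\predrv_\classifier)_+}$. For the constant expert, $\SCDF_{\givenclassifier^{\baserate}}(t)=(t-\baserate)_+$.

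For the weak dominance $\classifier\blackwelldomeq\givenclassifier^{\baserate}$, we apply Jensen's inequality to the convex function $x\mapsto(t-x)_+$. Since $\expect{\predrv_\classifier}=\baserate$, this gives
$\SCDF_\classifier(t)=\expect{(t-\predrv_\classifier)_+}\ge(t-\expect{\predrv_\classifier})_+=(t-\baserate)_+$
for every $t\in[0,1]$. Alternatively, one can note that the trivial coupling, which sends the single atom $\baserate$ to the whole distribution of $\predrv_\classifier$, is a martingale coupling, and then invoke \Cref{lem:finite-martingale-blackwell}. That route requires finite support, so we use Jensen directly, which handles general distributions.

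For strictness, we exhibit a point $t\in(0,1)$ where the inequality is strict, and the natural choice is $t=\baserate$, which lies in $(0,1)$ by hypothesis. There $\SCDF_{\givenclassifier^{\baserate}}(\baserate)=0$, while $\SCDF_\classifier(\baserate)=\expect{(\baserate-\predrv_\classifier)_+}$. Suppose, for contradiction, that this were $0$. Then $\predrv_\classifier\ge\baserate$ almost surely. Combined with $\expect{\predrv_\classifier}=\baserate$, this forces $\predrv_\classifier=\baserate$ almost surely, contradicting the assumption that $\predrv_\classifier$ is nonconstant. Hence $\SCDF_\classifier(\baserate)>0=\SCDF_{\givenclassifier^{\baserate}}(\baserate)$, which gives $\classifier\blackwelldom\givenclassifier^{\baserate}$.

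The one point needing care is that the nonnegative variable $(\baserate-\predrv_\classifier)_+$ has zero mean only if it vanishes almost surely. This is standard. In our application the experts have finite support, so it reduces to noting that some atom lies strictly below $\baserate$ with positive mass.
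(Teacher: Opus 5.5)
Your proof is correct and matches the paper's argument essentially step for step. Weak dominance comes from Jensen applied to $x\mapsto(t-x)_+$. Strictness is witnessed at the interior point $t=\baserate$, because a nonconstant variable with mean $\baserate$ must place positive mass strictly below $\baserate$, so $\SCDF_\classifier(\baserate)>0$.
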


\begin{proof}
For every $t\in[0,1]$, the function $\prediction\mapsto(t-\prediction)_+$ is convex.
By Jensen's inequality,
\begin{align*}
    \SCDF_\classifier(t)
    =
    \expect{(t-\predrv_\classifier)_+}
    \ge
    \left(t-\expect{\predrv_\classifier}\right)_+
    =
    (t-\baserate)_+
    =
    \SCDF_{\givenclassifier^{\baserate}}(t)~.
\end{align*}
Thus $\classifier\blackwelldomeq \givenclassifier^{\baserate}$.

It remains to prove strictness.
Since $\predrv_\classifier$ is nonconstant and $\expect{\predrv_\classifier}=\baserate$, it must place positive probability both below and above $\baserate$.
Indeed, if $\prob{\predrv_\classifier<\baserate}=0$, then $\predrv_\classifier\ge\baserate$ almost surely and $\expect{\predrv_\classifier}=\baserate$, which would force $\predrv_\classifier=\baserate$ almost surely.
This contradicts nonconstancy.
The same argument rules out $\prob{\predrv_\classifier>\baserate}=0$.
Therefore
\begin{align*}
    \SCDF_\classifier(\baserate)
    =
    \expect{(\baserate-\predrv_\classifier)_+}
    >
    0
    =
    \SCDF_{\givenclassifier^{\baserate}}(\baserate)~.
\end{align*}
Because $\baserate\in(0,1)$, this is strict dominance at an interior point.
Hence $\classifier\blackwelldom \givenclassifier^{\baserate}$.
\end{proof}

\subsection{Hardness of \texorpdfstring{\detpredicrefinelink}{Search-DetRefine}}
\label{subsec:det-search-hardness}

We first prove the core equivalence for the hard instance.

\begin{lemma}
\label{lem:det-strict-improvement-subset-sum}
For the hard two-expert instance in \Cref{ex:hard-instance}, constructed from a {\rsslink} instance, the following statements are equivalent:
\begin{enumerate}[label=(\roman*)]
    \item The {\rsslink} instance is a $\yesinst$ instance.
    \item There exists $\classifier\in\classifierfamdet$ such that $\classifier\blackwelldom\givenclassifier_1$.
\end{enumerate}
\end{lemma}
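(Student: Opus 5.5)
We prove the two directions separately. Both rely on \Cref{lem:det-hard-binary-vectors}, which translates subset-sum certificates into nontrivial observable binary vectors, together with the identity $\obsspace=\hardspace$ from Property~(iv) of \Cref{lem:det-hard-instance-properties}.

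\textbf{The (i) $\Rightarrow$ (ii) direction.} Suppose the instance is a $\yesinst$ instance. By \Cref{lem:det-hard-binary-vectors}, there is a binary vector $\signalvec\in\{0,1\}^{\hardground}\cap\obsspace$ with $\signalvec\notin\{\vzero,\vone\}$. Let $\cellnew$ be its support. Then $\vone-\signalvec$ also lies in $\obsspace$, since $\vone\in\obsspace$, and it is nonnegative and nonzero. Hence both $\subsetindi_{\cellnew}$ and $\subsetindi_{\hardground\setminus\cellnew}$ lie in $\obscone$ and sum to $\vone$. Define the two-atom expert $\classifier$ that reports $\obslabel(\subsetindi_{\cellnew})/\feasubsetprob(\subsetindi_{\cellnew})$ on $\cellnew$ and the analogous ratio on the complement.

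If these two values coincide, the pooled expert is still a valid deterministic constructible expert. To get strictness, however, we must show the two values differ. Here we use the ``moreover'' part of \Cref{lem:det-hard-binary-vectors}: the common block values satisfy $s_+\ne s_-$. By identity \eqref{eq:det-hard-label-identity}, the prediction on $\cellnew$ equals $1/2+\frac{\hardblocksize(s_+-s_-)}{4\feanum\,\feasubsetprob(\subsetindi_{\cellnew})}\neq 1/2$. Since the overall mean is $1/2$, the complement's prediction lies strictly on the other side of $1/2$. So $\classifier$ is deterministic, lies in $\classifierfamdet$ by \Cref{def:linearly-constructible-predictors}, and is calibrated by \Cref{lem:linear-constructible-calibrated}. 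It is also nonconstant with mean $1/2$, so \Cref{lem:det-nonconstant-dominates-constant} gives $\classifier\blackwelldom\givenclassifier_1$.

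\textbf{The (ii) $\Rightarrow$ (i) direction.} Suppose $\classifier\in\classifierfamdet$ satisfies $\classifier\blackwelldom\givenclassifier_1$. Then $\classifier$ must be nonconstant, because a constant calibrated expert reports the base rate $1/2$ and would have the same integrated CDF as $\givenclassifier_1$. Pick a prediction value $\prediction$ in its support. Its level set $\cellnew_\prediction$ is nonempty and a proper subset of $\hardground$. By \Cref{lem:det-cells-observable}, $\subsetindi_{\cellnew_\prediction}\in\obsspace$, and this vector is binary and not in $\{\vzero,\vone\}$. \Cref{lem:det-hard-binary-vectors} then yields a subset-sum solution.

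The only delicate point is strictness in the forward direction, i.e.\ that the two cells receive different predictions. The block-value asymmetry $s_+\ne s_-$, combined with the label-mass identity, settles this directly.
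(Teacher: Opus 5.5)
Your proof is correct and takes essentially the same route as the paper. The backward direction is identical: nonconstancy, then \Cref{lem:det-cells-observable} and \Cref{lem:det-hard-binary-vectors}. The forward direction builds the same two-cell expert and concludes via \Cref{lem:det-nonconstant-dominates-constant}. The only cosmetic difference is that you start from an arbitrary nontrivial observable binary vector and use the $s_+\neq s_-$ clause with Eqn.~\eqref{eq:det-hard-label-identity} to separate the two predictions. The paper instead takes the explicit vector $\subsetindi_{\hardset\cup\hardposblock}$ and computes both cell predictions directly.
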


\begin{proof}
We prove the two directions one by one.

\xhdr{The $(i) \Rightarrow (ii)$ direction}
Suppose first that the subset-sum instance is a $\yesinst$ instance, and let $\hardset\subseteq[\harditemnum]$ satisfy $\sum\nolimits_{i\in\hardset}\subsuminteger_i=\hardtarget$.
Define $\signalvec\in\{0,1\}^{\hardground}$ as in \Cref{lem:det-hard-binary-vectors}: it includes exactly the item states in $\hardset$, includes all states in $\hardposblock$, and excludes all states in $\hardnegblock$.
Then $\signalvec\in\obsspace$ and $\signalvec\notin\{\vzero,\vone\}$.
Since $\vone\in\obsspace$, also $\vone-\signalvec\in\obsspace$.
Both $\signalvec$ and $\vone-\signalvec$ are nonzero and coordinate-wise nonnegative, so they belong to $\obscone$.
They therefore define a two-cell deterministic constructible expert $\classifier$: on states with $\signalvec_\fea=1$, it reports $\obslabel(\signalvec)/\feasubsetprob(\signalvec)$, and on states with $\signalvec_\fea=0$, it reports $\obslabel(\vone-\signalvec)/\feasubsetprob(\vone-\signalvec)$.

Since $\signalvec,\vone-\signalvec\in\obsspace$, \Cref{lem:observable-label-functional} gives $\obslabel(\signalvec)=\feasubsetlabel(\signalvec)$ and $\obslabel(\vone-\signalvec)=\feasubsetlabel(\vone-\signalvec)$.
Using Eqn.~\eqref{eq:det-hard-label-identity}, and using that $\signalvec$ contains all positive-block states and no negative-block states, while $\vone-\signalvec$ contains all negative-block states and no positive-block states, we get
\begin{align*}
    \frac{\obslabel(\signalvec)}{\feasubsetprob(\signalvec)}
    &=
    \frac{\feasubsetlabel(\signalvec)}{\feasubsetprob(\signalvec)}
    =
    \frac12+\frac{\hardblocksize}{4\feanum\feasubsetprob(\signalvec)}
    >
    \frac12,
    \\
    \frac{\obslabel(\vone-\signalvec)}{\feasubsetprob(\vone-\signalvec)}
    &=
    \frac{\feasubsetlabel(\vone-\signalvec)}{\feasubsetprob(\vone-\signalvec)}
    =
    \frac12-\frac{\hardblocksize}{4\feanum\feasubsetprob(\vone-\signalvec)}
    <
    \frac12~.
\end{align*}
Thus the expert $\classifier$ is nonconstant.
Because $\classifier$ is constructible, \Cref{lem:linear-constructible-calibrated} implies that it is calibrated.
Its mean report is the base rate, which is $1/2$.
By \Cref{lem:det-nonconstant-dominates-constant}, $\classifier\blackwelldom\givenclassifier_1$.

\xhdr{The $(ii) \Rightarrow (i)$ direction}
Conversely, suppose there exists $\classifier\in\classifierfamdet$ such that $\classifier\blackwelldom\givenclassifier_1$.
Because $\classifier$ is constructible, it is calibrated by \Cref{lem:linear-constructible-calibrated}, so $\expect{\predrv_\classifier}=1/2$.
If $\classifier$ were constant, calibration would force $\classifier\equiv1/2$, so it could not strictly Blackwell dominate $\givenclassifier_1$.
Thus, the expert $\classifier$ is nonconstant.
Therefore, it has some nonempty proper level set $\cellnew\subsetneq\hardground$.
By \Cref{lem:det-cells-observable}, $\subsetindi_\cellnew\in\obsspace$.
Since $\cellnew$ is nonempty and proper, $\subsetindi_\cellnew\notin\{\vzero,\vone\}$.
By \Cref{lem:det-hard-binary-vectors}, the original {\rsslink} instance is a $\yesinst$ instance.
\end{proof}

The search problem requires an undominated strict improvement. The preceding lemma only concerns the existence of some strict deterministic improvement. The next lemma connects the two notions.

\begin{lemma}
\label{lem:det-undominated-extension}
For the hard instance, there exists $\classifier\in\classifierfamdet$ such that $\classifier\blackwelldom\givenclassifier_1$ if and only if there exists $\classifier^\star\in\strictdomfamdet$ such that $\classifier^\star\blackwelldom\givenclassifier_1$.
\end{lemma}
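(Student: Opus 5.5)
The backward direction is immediate, since $\strictdomfamdet\subseteq\classifierfamdet$. The work is in the forward direction. Given some $\classifier\in\classifierfamdet$ with $\classifier\blackwelldom\givenclassifier_1$, I will exhibit an element of $\strictdomfamdet$ that weakly dominates $\classifier$. Transitivity then gives strict dominance over $\givenclassifier_1$: if $\classifier^\star\blackwelldomeq\classifier\blackwelldom\givenclassifier_1$, then $\SCDF_{\classifier^\star}\ge\SCDF_\classifier\ge\SCDF_{\givenclassifier_1}$ pointwise, with strict inequality at the interior point where $\classifier$ is strict.

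The key observation is that $\classifierfamdet$ is a finite set. By \Cref{lem:det-cells-observable}, a deterministic constructible expert is determined by a partition of $\hardground$ into level sets $\cellnew$ with $\subsetindi_\cellnew\in\obscone$. The reported value on each cell is forced to be $\obslabel(\subsetindi_\cellnew)/\feasubsetprob(\subsetindi_\cellnew)$. Since $\hardground$ has only finitely many partitions, $\classifierfamdet$ is finite. Moreover, each expert in $\classifierfamdet$ has a finitely supported prediction distribution, so its integrated CDF $\SCDF$ is a well-defined function.

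Next I use that strict Blackwell dominance $\blackwelldom$ is irreflexive and transitive on prediction distributions. Transitivity is clear from the pointwise inequalities on $\SCDF$. For irreflexivity and antisymmetry, note that $\SCDF$ determines the distribution, because its right derivative is $\CDF$. Consider the set $D=\{\classifier'\in\classifierfamdet:\classifier'\blackwelldomeq\classifier\}$, which contains $\classifier$ and is finite. I choose $\classifier^\star\in D$ maximizing the scalar potential $\int_0^1\SCDF_{\classifier'}(t)\,\dd t$. Equivalently, I could use any strictly $\blackwelldom$-monotone functional, such as the variance of the prediction, since strict dominance strictly increases $\expect{(t-\predrv)_+}$ on an interval and hence strictly increases this integral by continuity.

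Finally I check that $\classifier^\star$ is undominated. Suppose some $\classifier''\in\classifierfamdet$ had $\classifier''\blackwelldom\classifier^\star$. Then $\classifier''\in D$ by transitivity. Also, since $\SCDF$ is continuous and the inequality is strict at some interior $t$, $\classifier''$ would have a strictly larger potential, contradicting the maximality of $\classifier^\star$. Hence $\classifier^\star\in\strictdomfamdet$ and $\classifier^\star\blackwelldom\givenclassifier_1$.

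The main point needing care is the strict-increase step: strictness at a single $t$ must yield a strictly larger integral. This follows from the continuity of $\SCDF$, which is piecewise linear here. With that settled, the finiteness of $\classifierfamdet$ makes the rest routine.
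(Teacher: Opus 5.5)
Your proof is correct and follows the paper's route: $\classifierfamdet$ is finite by \Cref{lem:det-cells-observable}, a finite set has a maximal element, and transitivity carries the strictness over to $\givenclassifier_1$. The only difference is cosmetic. You get the maximal element by maximizing the strictly $\blackwelldom$-monotone potential $\int_0^1 \SCDF(t)\,\dd t$ over the weak upper set of $\classifier$, while the paper appeals to acyclicity of $\blackwelldom$ on the set of strict improvements of $\givenclassifier_1$; your potential is simply an explicit certificate of that acyclicity.
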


\begin{proof}
The reverse direction is immediate because $\strictdomfamdet\subseteq\classifierfamdet$.
For the forward direction, suppose there exists $\classifier\in\classifierfamdet$ such that $\classifier\blackwelldom\givenclassifier_1$.
The state space $\hardground$ is finite.
A deterministic expert is determined by a partition of $\hardground$ into nonempty level sets and a reported prediction value on each level set.
By \Cref{lem:det-cells-observable}, for any deterministic constructible expert, the prediction value on a level set $\cellnew$ is forced to be $\obslabel(\subsetindi_\cellnew)/\feasubsetprob(\subsetindi_\cellnew)$.
Thus each partition induces at most one deterministic constructible expert, and $\classifierfamdet$ is finite.

Let $\classifierfamdet_1\triangleq\{\classifiernew\in\classifierfamdet:\classifiernew\blackwelldom\givenclassifier_1\}$.
By assumption, the set $\classifierfamdet_1$ is nonempty and finite.
Strict Blackwell dominance is acyclic: a cycle $\classifier^{(0)}\blackwelldom\classifier^{(1)}\blackwelldom\cdots\blackwelldom\classifier^{(r)}=\classifier^{(0)}$ would force all integrated CDFs in the cycle to be identical, contradicting strictness of each edge.
Therefore $\classifierfamdet_1$ has an element $\classifier^\star$ that is undominated within $\classifierfamdet_1$.
We claim that $\classifier^\star\in\strictdomfamdet$.
If not, then there exists $\classifiernew\in\classifierfamdet$ such that $\classifiernew\blackwelldom\classifier^\star$.
Since $\classifier^\star\blackwelldom\givenclassifier_1$, transitivity gives $\classifiernew\blackwelldom\givenclassifier_1$, so $\classifiernew\in \classifierfamdet_1$, contradicting the choice of $\classifier^\star$ as undominated within $\classifierfamdet_1$.
Thus, we have $\classifier^\star\in\strictdomfamdet$ and $\classifier^\star\blackwelldom\givenclassifier_1$.
\end{proof}

We are now ready to prove \Cref{thm:det-search-hard}.
\begin{proof}[Proof of \Cref{thm:det-search-hard}]
Suppose there exists a polynomial-time algorithm for deterministic-output target-wise {\predicrefinelink} with two input experts.
Given an instance of {\rsslink}, construct the two-expert instance from \Cref{subsec:det-hard-instance}.
The construction has polynomial encoding size by \Cref{lem:det-hard-instance-properties}.
Run the assumed {\detpredicrefinelink} algorithm on the target expert $\givenclassifier_1$.

By \Cref{lem:det-strict-improvement-subset-sum,lem:det-undominated-extension}, the original subset-sum instance is a $\yesinst$ instance if and only if there exists $\classifier^\star\in\strictdomfamdet$ such that $\classifier^\star\blackwelldom\givenclassifier_1$.
By the specification of {\detpredicrefinelink}, the algorithm outputs such a strict improvement in the $\yesinst$ case and outputs $\givenclassifier_1$ in the $\noinst$ case.
Therefore we can decide {\rsslink} by answering $\yesinst$ if and only if the algorithm's output is not $\givenclassifier_1$.
This test is correct because a strict Blackwell improvement cannot equal $\givenclassifier_1$.
Since deterministic experts on the finite state space are explicitly represented, checking whether the output equals the constant expert $\givenclassifier_1\equiv1/2$ takes polynomial time.
Thus a polynomial-time {\detpredicrefinelink} algorithm would solve {\rsslink} in polynomial time.
By \Cref{lem:det-rss-np-hard}, this would imply $\Poly=\NP$.
\end{proof}

\subsection{No multiplicative PTAS for  \texorpdfstring{\detpredicrefineoptlink}{OPT-DetRefine}}
\label{subsec:det-opt-hardness}

We now prove \Cref{thm:det-opt-NO-mult-ptas}.
The proof uses the same hard instance and adds a constant loss gap.
For the Brier loss, the Bayes risk is $\brierloss(\prediction)=p(1-\prediction)$.
Thus, for every calibrated expert $\classifier$, $\brierlossval(\classifier)=\expect{\predrv_\classifier(1-\predrv_\classifier)}$.
In the hard instance, every calibrated expert has mean report $1/2$, and thus, we have
\begin{align*}
    \brierlossval(\classifier)
    =
    \expect{\predrv_\classifier-\predrv_\classifier^2}
    =
    \frac12-
    \expect{\predrv_\classifier^2}~.
\end{align*}

\begin{lemma}[Constant Brier-loss gap]
\label{lem:det-brier-gap}
For the hard two-expert instance constructed from a {\rsslink} instance, the deterministic-output Brier-loss optimum satisfies the following two claims.
\begin{enumerate}[label=(\roman*)]
    \item If the subset-sum instance is a $\noinst$ instance, then $\detbrieropt=1/4$.
    \item If the subset-sum instance is a $\yesinst$ instance, then $\detbrieropt\le2/9$.
\end{enumerate}
\end{lemma}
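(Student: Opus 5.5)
For part (i), the plan is to show that every deterministic constructible expert in the hard instance is constant and equal to $1/2$.
Let $\classifier\in\classifierfamdet$ with $\classifier\blackwelldomeq\givenclassifier_1$. By \Cref{lem:det-cells-observable}, every level set $\cellnew$ of $\classifier$ has $\subsetindi_\cellnew\in\obsspace$. On a $\noinst$ instance, \Cref{lem:det-hard-binary-vectors} says the only observable binary vectors are $\vzero$ and $\vone$. Level sets are nonempty, so the only level set is $\hardground$ and $\classifier$ is constant. Calibration (\Cref{lem:linear-constructible-calibrated}) then forces $\classifier\equiv 1/2$, whose Brier loss is $\frac12\cdot\frac12=1/4$. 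The feasible set is nonempty, since $\givenclassifier_1$ itself is deterministic and constructible with $\vone\in\obscone$, so $\detbrieropt=1/4$.

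For part (ii), take a subset-sum certificate $\hardset$ and form the binary vector $\signalvec$ of \Cref{lem:det-hard-binary-vectors}: items in $\hardset$, all of $\hardposblock$, and none of $\hardnegblock$. As in the proof of \Cref{lem:det-strict-improvement-subset-sum}, $\signalvec$ and $\vone-\signalvec$ give a two-cell deterministic constructible expert $\classifier$ with $\classifier\blackwelldom\givenclassifier_1$, so $\classifier$ is feasible. I would bound its loss through the identity $\brierlossval(\classifier)=\frac12-\expect{\predrv_\classifier^2}=\frac14-\mathrm{Var}(\predrv_\classifier)$. It then suffices to show $\mathrm{Var}(\predrv_\classifier)\ge 1/36$.

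The cell masses are $\feasubsetprob(\signalvec)=(|\hardset|+\hardblocksize)/\feanum$ and $\feasubsetprob(\vone-\signalvec)=(\harditemnum-|\hardset|+\hardblocksize)/\feanum$. Both lie in $[1/3,2/3]$, because $\hardblocksize=\harditemnum$, $\feanum=3\harditemnum$ and $0\le|\hardset|\le\harditemnum$.
By Eqn.~\eqref{eq:det-hard-label-identity}, the two predictions are $\frac12\pm \frac{\hardblocksize}{4\feanum\feasubsetprob(\cdot)}$, so each deviation from $1/2$ equals $\frac{1}{12\,\feasubsetprob(\cdot)}$. Writing $\lambda_+=\feasubsetprob(\signalvec)$ and $\lambda_-=1-\lambda_+$, the variance is
\[
\lambda_+\cdot\frac{1}{144\lambda_+^2}+\lambda_-\cdot\frac{1}{144\lambda_-^2}=\frac{1}{144}\Big(\frac1{\lambda_+}+\frac1{\lambda_-}\Big)=\frac{1}{144\lambda_+\lambda_-}\ge\frac{4}{144}=\frac1{36},
\]
using $\lambda_+\lambda_-\le 1/4$. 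This gives $\brierlossval(\classifier)\le\frac14-\frac1{36}=\frac29$, hence $\detbrieropt\le 2/9$.

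The only step needing care is this variance computation. The key point is that the amplified blocks make the prediction gap a constant independent of $\hardset$, so the bound holds uniformly over certificates. Beyond that, one must confirm that the deviations on the two cells have the same form; this follows because $\signalvec$ contains exactly the positive block and $\vone-\signalvec$ exactly the negative block.
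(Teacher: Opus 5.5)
Your proposal is correct and follows essentially the same route as the paper: part (i) combines observability of level sets with the NO-case binary-vector characterization. Part (ii) uses the two-cell expert built from the certificate, with deviations $\frac{1}{12\alpha}$ and $\frac{1}{12(1-\alpha)}$ and the bound $\frac1\alpha+\frac1{1-\alpha}\ge 4$; working with $\mathrm{Var}(\predrv)$ instead of the second moment is the same computation, since $\expect{\predrv^2}=\frac14+\mathrm{Var}(\predrv)$ when the mean is $\frac12$.
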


\begin{proof}
First suppose the subset-sum instance is a $\noinst$ instance.
By \Cref{lem:det-hard-binary-vectors}, the only binary vectors in $\{0,1\}^{\hardground}\cap\obsspace$ are $\vzero$ and $\vone$.
Let $\classifier\in\classifierfamdet$.
By \Cref{lem:det-cells-observable}, every level-set indicator of $\classifier$ lies in $\obsspace$.
Thus, every nonempty level set has indicator $\vone$.
Therefore $\classifier$ is constant.
Since $\classifier$ is constructible, it is calibrated by \Cref{lem:linear-constructible-calibrated}; since the base rate is $1/2$, the constant value must be $1/2$.
Thus every deterministic constructible expert is equal to $\givenclassifier_1\equiv1/2$.
In particular, every feasible expert for  {\detpredicrefineoptlink} with target $\givenclassifier_1$ has Brier loss $1/4$, and $\givenclassifier_1$ itself is feasible.
Therefore $\detbrieropt=1/4$.

Now suppose the subset-sum instance is a $\yesinst$ instance, and let $\hardset\subseteq[\harditemnum]$ satisfy $\sum\nolimits_{i\in\hardset}\subsuminteger_i=\hardtarget$.
Let $\hardcell\triangleq \hardset\cup\hardposblock$ and $\signalvec\triangleq\subsetindi_\hardcell$.
By \Cref{lem:det-hard-binary-vectors}, $\signalvec\in\obsspace$.
Since $\vone\in\obsspace$, also $\vone-\signalvec\in\obsspace$.
Both vectors are nonzero and coordinate-wise nonnegative, so they belong to $\obscone$.
They define a two-cell deterministic constructible expert, denoted by $\classifier_\hardset$.
Let $\alpha\triangleq\feasubsetprob(\hardcell)=\feasubsetprob(\signalvec)$ and define $d\triangleq\hardblocksize/(4\feanum)$.
Since $\hardblocksize=\harditemnum$ and $\feanum=3\harditemnum$, we have $d=1/12$.
Using \Cref{lem:observable-label-functional} and Eqn.~\eqref{eq:det-hard-label-identity}, the two prediction values of $\classifier_\hardset$ are
\begin{align*}
    \prediction_\hardcell
    =
    \frac{\obslabel(\signalvec)}{\feasubsetprob(\signalvec)}
    =
    \frac12+\frac{d}{\alpha}~,
    \quad
    \prediction_{\hardground\setminus\hardcell}
    =
    \frac{\obslabel(\vone-\signalvec)}{\feasubsetprob(\vone-\signalvec)}
    =
    \frac12-\frac{d}{1-\alpha}~.
\end{align*}
The quadratic moment of its prediction distribution is
\begin{align*}
    \expect{\predrv_{\classifier_\hardset}^2}
    &=
    \alpha\left(\frac12+\frac d\alpha\right)^2
    +(1-\alpha)\left(\frac12-\frac d{1-\alpha}\right)^2 \\
    &=
    \frac14+d^2\left(\frac1\alpha+\frac1{1-\alpha}\right)\\
    & \ge
    \frac14+4d^2
    =
    \frac14+\frac1{36}
    =
    \frac5{18}~. \tag{By $1/\alpha+1/(1-\alpha)\ge4$ for $\alpha\in(0,1)$}
\end{align*}
The expert $\classifier_\hardset$ is constructible, and hence it is calibrated; it is also nonconstant.
By \Cref{lem:det-nonconstant-dominates-constant}, $\classifier_\hardset\blackwelldom\givenclassifier_1$, so it is feasible for  {\detpredicrefineoptlink}.
Using $\expect{\predrv_{\classifier_\hardset}}=1/2$, its Brier loss satisfies
\begin{align*}
    \brierlossval(\classifier_\hardset)
    =
    \frac12-
    \expect{\predrv_{\classifier_\hardset}^2}
    \le
    \frac12-\frac5{18}
    =
    \frac29~.
\end{align*}
Thus, we have $\detbrieropt\le \brierlossval(\classifier_\hardset)\le2/9$.
\end{proof}

\begin{proof}[Proof of \Cref{thm:det-opt-NO-mult-ptas}]
Suppose, for contradiction, that there exists a multiplicative PTAS for  {\detpredicrefineoptlink} with Brier loss.
Run this algorithm with the fixed accuracy $\varepsilon_0=1/16$.
Since $\varepsilon_0$ is fixed, the running time is polynomial in the input size.
Given an instance of {\rsslink}, construct the hard two-expert instance from \Cref{subsec:det-hard-instance} and run the assumed PTAS.

If the subset-sum instance is a $\noinst$ instance, then by \Cref{lem:det-brier-gap}, every feasible deterministic constructible expert has Brier loss exactly $1/4$.
Thus, the PTAS output $\classifier^{\varepsilon_0}$ satisfies $\brierlossval(\classifier^{\varepsilon_0})=1/4$.
If the subset-sum instance is a $\yesinst$ instance, then \Cref{lem:det-brier-gap} gives $\detbrieropt\le2/9$.
The multiplicative guarantee gives
\begin{align*}
    \brierlossval(\classifier^{\varepsilon_0})
    \le
    \left(1+\frac1{16}\right)\detbrieropt
    \le
    \frac{17}{72}
    <
    \frac14~.
\end{align*}
Therefore the following polynomial-time test decides the original {\rsslink} instance: answer $\yesinst$ if and only if $\brierlossval(\classifier^{\varepsilon_0})<35/144$.
On NO-instances, the output loss is exactly $1/4=36/144$, so the test answers $\noinst$.
On Yes-instances, the output loss is at most $17/72=34/144$, so the test answers $\yesinst$.
The constructed instance has polynomial encoding size by \Cref{lem:det-hard-instance-properties}.
Under the standard explicit rational representation of finite deterministic experts, the Brier loss of the output expert can be computed and compared with the rational threshold $35/144$ in polynomial time.
Thus the assumed multiplicative PTAS would solve {\rsslink} in polynomial time, contradicting \Cref{lem:det-rss-np-hard} unless $\Poly=\NP$.
\end{proof}

\newpage
\bibliography{mybib}
\appendix
\section{Missing Proofs}

\begin{proof}[Proof of \Cref{cor:smooth-proper-loss-fptas}]
If $B=0$, then $\bayesrisk$ is affine on $[0,1]$, and one affine upper bound is exact everywhere.
Thus one affine piece suffices.
Assume now that $B>0$.

For every report $\prediction\in[0,1]$, define the affine function
\begin{align*}
    \minorant_\prediction(\bayesprob)
    \triangleq
    \exploss(\prediction,\bayesprob)
    =
    \bigl(\loss(\prediction,1)-\loss(\prediction,0)\bigr)\bayesprob
    +
    \loss(\prediction,0)~.
\end{align*}
By properness, $\bayesrisk(\bayesprob)\le \minorant_\prediction(\bayesprob)$ for every $\prediction,\bayesprob\in[0,1]$, and equality holds when $\prediction=\bayesprob$.
When $\bayesrisk$ is differentiable, $\minorant_\prediction$ is the tangent affine upper bound to the concave function $\bayesrisk$ at $\prediction$.

Let $\grid_\eps$ be a uniform grid on $[0,1]$ with mesh size $\eta$.
For every $\bayesprob\in[0,1]$, choose $\prediction\in\grid_\eps$ such that $|\bayesprob-\prediction|\le\eta/2$.
Taylor's theorem and the curvature bound give
\begin{align*}
    0
    \le
    \minorant_\prediction(\bayesprob)-\bayesrisk(\bayesprob)
    \le
    \frac{B}{2}(\bayesprob-\prediction)^2
    \le
    \frac{B\eta^2}{8}~.
\end{align*}
Taking $\eta=\sqrt{8\eps/B}$ yields
\begin{align*}
    \min\nolimits_{\prediction\in\grid_\eps}
    \minorant_\prediction(\bayesprob)
    \le
    \bayesrisk(\bayesprob)+\eps
    \quad
    \text{for every }\bayesprob\in[0,1]~.
\end{align*}
Since each $\minorant_\prediction$ is an affine upper bound on $\bayesrisk$, we also have
$\min_{\prediction\in\grid_\eps}\minorant_\prediction(\bayesprob)\ge\bayesrisk(\bayesprob)$.
Thus $\bayesrisk$ admits an $\eps$-accurate affine upper approximation using
$O(\sqrt{B/\eps}+1)$ affine pieces.
This proves regularity.
\end{proof}

\end{document}